\documentclass[a4paper,11pt,reqno]{amsart}
\usepackage[headings]{fullpage}
\usepackage[utf8]{inputenc}
\usepackage{mathrsfs}
\usepackage{dsfont}
\usepackage{hyperref}
\hypersetup{bookmarksdepth=3}
\usepackage{amsmath}
\newcommand\numberthis{\addtocounter{equation}{1}\tag{\theequation}}
\allowdisplaybreaks[4]
\usepackage{amssymb}
\usepackage{amsthm}
\usepackage{amsfonts}
\usepackage{amstext}
\usepackage{amsopn}
\usepackage{amsxtra}
\usepackage{mathrsfs}
\usepackage{dsfont}
\usepackage{esint}
\usepackage{enumitem}
\usepackage[capitalize]{cleveref}
\usepackage{braket}
\usepackage{mathtools}

\usepackage{subcaption}

\usepackage{tikz}
\usetikzlibrary{arrows}
\usepackage{tkz-tab}
\usetikzlibrary{math}
\usepackage{pgfplots}
\pgfplotsset{compat=1.18}

\usepackage[dvipsnames]{xcolor}
 
\usepackage{csquotes}
\usepackage[backend=bibtex,sorting=nty,citestyle=numeric-comp,bibstyle=ieee,maxbibnames=99,doi=false,isbn=false,url=false,eprint=false,dashed=true]{biblatex}
\AtEveryBibitem{\clearfield{month}}
\AtEveryBibitem{\clearfield{day}}
\theoremstyle{plain}
\newtheorem{theorem}{Theorem}[section]
\newtheorem{lemma}[theorem]{Lemma}
\newtheorem{proposition}[theorem]{Proposition}
\newtheorem{corollary}[theorem]{Corollary}

\theoremstyle{definition}

\newtheorem{assumption}[theorem]{Assumption}

\theoremstyle{remark}
\newtheorem{remark}[theorem]{Remark}

\numberwithin{equation}{section}

\newcommand{\bbB}{\mathbb{B}}
\newcommand{\C}{\mathbb{C}}
\newcommand{\bbH}{\mathbb{H}}
\newcommand{\R}{\mathbb{R}}

\newcommand{\bbV}{\mathbb{V}}
\newcommand{\Z}{\mathbb{Z}}

\newcommand{\ii}{\infty}
\newcommand\1{{\ensuremath {\mathds 1} }}
\renewcommand\phi{\varphi}

\newcommand{\cQ}{\mathcal{Q}}
\newcommand{\cV}{\mathcal{V}}

\newcommand{\cM}{\mathcal{M}}

\newcommand{\cE}{\mathcal{E}}
\newcommand{\cF}{\mathcal{F}}
\newcommand{\cN}{\mathcal{N}}

\newcommand{\cH}{\mathcal{H}}
\newcommand{\cI}{\mathcal{I}}
\newcommand{\cJ}{\mathcal{J}}
\newcommand{\cK}{\mathcal{K}}
\newcommand{\cL}{\mathcal{L}}

\newcommand{\tr}{{\rm Tr}\,}

\renewcommand{\geq}{\geqslant}
\renewcommand{\leq}{\leqslant}

\renewcommand{\tilde}{\widetilde}

\newcommand{\eps}{\varepsilon}
\usepackage{color}

\newcommand{\bx}{\mathbf{x}}

\newcommand{\by}{\mathbf{y}}

\newcommand{\loc}{\textrm{\normalfont loc}}

\newcommand{\dd}{\mathrm{d}}

\newcommand{\Supp}{\mathrm{Supp}}

\newcommand{\op}{\mathrm{op}}

\newcommand{\hc}{\mathrm{h.c.}}
\newcommand{\ui}{\mathbf{i}}
\newcommand{\diver}{\mathrm{div}}
\newcommand{\ren}{\textrm{\normalfont ren}}
\newcommand{\GP}{\textrm{\normalfont GP}}
\newcommand{\otimessym}{\otimes_{\textrm{s}}}

\renewcommand{\d}[1]{\ensuremath{\operatorname{d}\!{#1}}}
\newcommand{\sym}{\textmd{\normalfont s}}
\newcommand{\ext}{\textmd{\normalfont ext}}
\newcommand{\trap}{\textmd{\normalfont trap}}

\newcommand{\twoB}{\textmd{\normalfont 2B}}
\newcommand{\threeB}{\textmd{\normalfont 3B}}
\newcommand{\fourB}{\textmd{\normalfont 4B}}
\newcommand{\fiveB}{\textmd{\normalfont 5B}}

\newcommand{\mfH}{\mathfrak{H}}

\author[A. Triay]{Arnaud Triay}
\address{Department of Mathematics, LMU Munich, Theresienstrasse 39, 80333 Munich, Germany}
\email{tiray@math.lmu.de}
\author[F. L. A. Visconti]{François L. A. Visconti}
\address{Department of Mathematics, LMU Munich, Theresienstrasse 39, 80333 Munich, Germany}
\email{visconti@math.lmu.de}

\title[Derivation 3D quintic Gross--Pitaevskii equation]{Derivation of the 3D quintic Gross--Pitaevskii equation}

\date{}

\begin{document}
 
\maketitle
 
\begin{abstract}
	We study the time evolution of Bose--Einstein condensates with three-body interactions in the Gross--Pitaevskii regime. We show that Bose--Einstein condensation is preserved under many-body evolution and that the condensate wavefunction evolves according to the quintic Gross--Pitaevskii equation in $\R^3$, which is energy critical. In particular, we show that the effective coupling constant is universal and depends only on a three-body scattering hypervolume.
\end{abstract}
 
\tableofcontents

\section{Introduction and main result}

	In this paper, we derive the quintic Gross--Pitaevskii equation from the many-body Schrödinger equation with three-body interactions. We consider the limit of a large number of bosons in the Gross--Pitaevskii (GP) regime and prove that Bose--Einstein condensation (BEC) is preserved in time. In particular, we establish universality in the sense that the coupling constant of the Gross--Pitaevskii equation depends only on the three-body scattering hypervolume of the interaction potential and not on its microscopic details.

	The robustness of BEC in the presence of interactions is an archetypal example of propagation of chaos in quantum systems. This was first studied by Hepp \cite{Hepp74} in the context of the classical limit of quantum fields, extended by Ginibre and Velo \cite{GinibreVel79,GinibreVel79b} to singular interactions, and further developed by Spohn \cite{Spohn80} for more general systems. For particle systems, Bardos, Golse and Mauser \cite{BardosGolMau00} proved that, in the weak coupling regime, the evolution of factorised initial data is governed by the non-linear Schrödinger equation with a Hartree-type nonlinearity. Their method, based on ideas from \cite{Spohn80}, consists of studying the so-called BBGKY hierarchy, and was later extended to the singular case of Coulomb interactions by Erdös and Yau \cite{ErdoesYau01}.
	
	In dilute regimes---which closely reflect experimental setups---particle collisions are rare but strong, and interactions become delta-like, making the mathematical treatment intricate. The dilute limit of systems with two-body interactions was studied in seminal works by Erdös, Schlein and Yau \cite{ErdoesSchYau07,ErdoesSchYau07b,ErdoesSchYau09,ErdoesSchYau10}, where they proved the convergence of the many-body dynamics to the cubic Gross--Pitaevskii equation; the argument was later simplified in \cite{ChenHaiPavSei15}. Remarkably, they showed that the coupling constant depends only on the scattering length of the interaction potential. Until the present work, no such universality result had been proved for systems with three-body interactions.
	
	A lot is known about the dynamics of Bose gases with two-body interactions: we refer to \cite{BenediktOliSch15,BrenneckKro24,BrenneckSch19,Pickl11,RodniansSch09} for results on the rate of convergence of the density matrices, to \cite{CaraciOldSch24,GrillakiMacMar10,GrillakiMacMar11,LewinNamSch15,MitrouskPetPic19} for norm approximations, and to \cite{ChenLeeSch11,FroehlicKnoSch09,KnowlesPic10,Pickl15} for related results.
	
	\subsection{The model}
	
	We consider the dynamics of a gas of $N$ three-dimensional bosons interacting via a repulsive three-body potential $V_N$ of the form
	\begin{equation}
		\label{eq:gross_pitaevskii_scaling}
		V_N(x,y,z) = NV(N^{1/2}(x - y, x - z)),
	\end{equation}
	for some fixed function $V$. The system is described by the Hamiltonian
	\begin{equation}
		\label{eq:hamiltonian}
		H_N = \sum_{i=1}^{N} -\Delta_i + \sum_{1\leq i<j<k\leq N}V_N(x_i,x_j,x_k)
	\end{equation}
	acting on $\mfH^N$, the $N$-fold symmetric tensor product of the one-body Hilbert space $\mfH = L^2(\R^3)$.
	
	We study the evolution of initial data $\Psi_{N,0}$ exhibiting Bose--Einstein condensation in some one-body wavefunction $\varphi_0$, meaning that
	\begin{equation*}
		\label{eq:inital_data_bec}
		\gamma_N^{(1)} \xrightarrow[N\rightarrow\ii]{} \vert\varphi_0\rangle\langle\varphi_0\vert
	\end{equation*}
	in the trace-norm topology. Here, $\gamma_N^{(k)}$ denotes the $k$-particle reduced density matrix $\Psi_{N,0}$, which is defined as the integral operator whose kernel is
	\begin{equation}
		\label{eq:1_prdm_def}
		\gamma_N^{(k)}(\bx_k;\by_k) = \int_{\R^{3(N-k)}}\dd{\mathbf{z}_{N-k}}\Psi_{N,0}(\bx_k,\mathbf{z}_{N-k})\overline{\Psi_{N,0}(\by_k,\mathbf{z}_{N-k})}.
	\end{equation}	
	We also require that $\Psi_{N,0}$ satisfies the energy compatibility relation
	\begin{equation}
		\label{eq:inital_data_energy_compatibility}
		\langle\Psi_{N,0},H_N\Psi_{N,0}\rangle = N\cE^\GP[\varphi_0] + o(N),
	\end{equation}
	where $\cE^\GP$ denotes the \textit{Gross--Pitaevskii energy functional}
	\begin{equation}
		\label{eq:gross_pitaevskii_energy}
		\mathcal{E}^{\GP}[\varphi] = \int_{\R^3}\vert\nabla \varphi\vert^2 + \dfrac{b(V)}{6}\int_{\R^3}\vert \varphi\vert^6.
	\end{equation}
	Here, $b(V)$ is the \textit{three-body scattering hypervolume} of $V$ and it is defined by
	\begin{equation}
		\label{eq:three_body_scattering_hypervolume}
		b(V) = \int_{\R^6}Vf,
	\end{equation}
	where $f$ denotes the solution to the \textit{zero-energy scattering equation}
	\begin{equation}
		\label{eq:zero_energy_scattering_equation}
		(-\Delta_x - \Delta_y - \Delta_z)f(x - y,x - z) + (Vf)(x - y,x - z) = 0
	\end{equation}
	with boundary condition $f(\bx) \rightarrow 1$ as $\vert\bx\vert \rightarrow \ii$. See \cite{NamRicTri23} for more information on this. Note that the state $\varphi_0^{\otimes N}$ does not satisfy \eqref{eq:inital_data_energy_compatibility} as it lacks the appropriate short-range correlation structure; its energy is given, to the leading order, by $N\cE^\GP[\varphi_0]$ with $b(V)$ replaced by $\int V$. The validity of \eqref{eq:inital_data_energy_compatibility} for the ground state energy of trapped bosons was established in \cite{NamRicTri22,NamRicTri22b,NamRicTri23}. Moreover, the correction to the leading order was derived recently by Brooks \cite{Brooks25} and is of order $\sqrt{N}$. We refer to \cite{JungeVis24,NguyenRic24,NguyenRic25,Visconti26} for related works in the stationary setting.
	
	We show that the solution $\Psi_{N,t}$ to the many-body Schrödinger equation
	\begin{equation}
		\label{eq:schroedinger_eq_time_dependent}
		\ui\partial_t\Psi_{N,t} = H_N\Psi_{N,t}
	\end{equation}
	with initial data $\Psi_{N,0}$ exhibits Bose--Einstein condensation in the solution $\varphi_t$ to the quintic Gross--Pitaevskii equation
	\begin{equation}
		\label{eq:gross_pitaevskii_equation}
		\ui\partial_t\varphi_t = \Big(-\Delta + \dfrac{b(V)}{2}\vert\varphi_t\vert^4 - \mu_t\Big)\varphi_t
	\end{equation}
	with initial data $\varphi_0$. Said differently, we prove that the 1-particle reduced density matrix $\gamma_{N,t}^{(1)}$ of $\Psi_{N,t}$ satisfies
	\begin{equation*}
		\gamma_{N,t}^{(1)} \xrightarrow[N\rightarrow\infty]{} \vert\varphi_t\rangle\langle\varphi_t\vert.
	\end{equation*}
	As proved by Lieb and Seiringer in \cite{LiebSei02}, this implies, for all $k \geq 1$,
	\begin{equation*}
		\gamma_{N,t}^{(k)} \xrightarrow[N\rightarrow\infty]{} \vert\varphi_t^{\otimes k}\rangle\langle\varphi_t^{\otimes k}\vert.
	\end{equation*}
	The gauge parameter $\mu_t$ in \eqref{eq:gross_pitaevskii_equation} can be chosen freely and for convenience we take it to be
	\begin{equation*}
		\mu_t = \dfrac{b(V)}{3}\int_{\R^3}\vert\varphi_t\vert^6,
	\end{equation*}
	for which we have the compatibility of the energies
	\begin{equation}
		\label{eq:gross_pitaevskii_drift_term}
		\langle\ui\partial_t\varphi_t,\varphi_t\rangle = \int_{\R^3}\vert\nabla\varphi_t\vert^2 + \dfrac{b(V)}{2}\int_{\R^3}\vert\varphi_t\vert^6 - \mu_t = \cE^\GP[\varphi_t].
	\end{equation}
	The well-posedness of \eqref{eq:gross_pitaevskii_equation} in $\dot{H}^1(\R^3)$ was first proved for radial data by Bourgain \cite{Bourgain99} and Grillakis \cite{Grillaki00}, and later extended to general data by Colliander, Keel, Staffilani, Takaoka and Tao \cite{ColliandKeeStaTak08}; we collect important properties in Appendix \ref{sec:prop_gp_equation}.
	
	The dynamics of Bose gases with three-body interactions has already been studied in the weak-coupling regime, where the interaction potential is given by
	\begin{equation}
		\label{eq:weak_coupling_regime}
		V_{N,\beta}(x,y,z) = N^{6\beta - 2}V(N^\beta(x - y,x - z)),
	\end{equation}
	for some parameter $0 < \beta < 1/2$. In that case, a simple adaptation of our proof shows that the limiting equation is also a quintic nonlinear Schrödinger equation like \eqref{eq:gross_pitaevskii_equation}, but where the coupling constant $b(V)$ is replaced by the non-physical parameter $\int V$. This was first proved by Chen and Pavlović \cite{ChenPav11} for $0 < \beta < (4(d + 1))^{-1}$ in dimensions $d = 1,2$, and later by Chen and Holmer \cite{ChenHol19} for $0 < \beta < 1/9$ in dimension $d = 3$ with $H^1$ initial data. The dynamics of fluctuations around the condensate was studied by Chen \cite{Chen11} in the mean-field regime $\beta = 0$, and was later extended to $0 < \beta < 1/6$ by Nam and Salzmann \cite{NamSal19}. For further works, we refer to \cite{AdamiLee25,BabbRou26,Lee21,LiYao21,RoutSoh25,Xie15,Yuan15}.
	
	In the present paper, we study the critical case $\beta = 1/2$---corresponding to \eqref{eq:gross_pitaevskii_scaling}---which is known as the Gross--Pitaevskii regime. Our goal is to derive the Gross--Pitaevskii equation \eqref{eq:gross_pitaevskii_equation}, and to show that the coupling constant is universal in the sense that it depends only on the three-body scattering hypervolume $b(V)$ of $V$ defined in \eqref{eq:three_body_scattering_hypervolume}. The proof follows the overall strategy of \cite{BenediktOliSch15,BrenneckSch19} from the two-body case, and uses simplifying ideas from \cite{BrenneckKro24,Brooks24}. However, when adapting the method to the three-body case, we face important difficulties that have no counterpart in the two-body case. Specifically, we need to control some effective two-body interactions that arise when two excited particles collide with a particle in the condensate. It turns out that controlling these short-range pairwise interactions in terms of short-range three-body interactions is particularly challenging. In fact, by considering many pairs far away from each other, it is clear that three-body interactions alone are not sufficient. To solve this problem, we developed a new multi-scale localisation method, which we believe is interesting on its own; see Section~\ref{sec:effective_interaction_potentials_non_renormalised}. In particular, this method allows us to prove, for any radial nonincreasing and compactly supported potential $0\leq W\in L^{3/2}(\R^3)$ (or just compactly supported and bounded), the following stability of the second kind result:
	\begin{equation*}
		-C_\varepsilon M \leq \varepsilon\sum_{i = 1}^M-\Delta_i - \sum_{1\leq i<j\leq M}N^{1/2}W(N^{1/2}(x_i - x_j)) + \varepsilon\sum_{1\leq i<j<k\leq M}V_N(x_i - x_j,x_i - x_k),
	\end{equation*}
	for any $\varepsilon > 0$. More precise statements are made in Propositions \ref{prop:general_two_body_bound_renormalised} and \ref{prop:general_three_and_four_body_bound_renormalised}.

\subsection{Main result}

	Suppose that the interaction potential $V$ in \eqref{eq:gross_pitaevskii_scaling} satisfies the following assumptions.
	\begin{assumption}
		\label{ass:potential}
		The potential $V:\R^3\times\R^3\rightarrow \R$ is nonnegative, has compact support and satisfies the following \textit{three-body symmetry} properties
		\begin{equation}
			\label{eq:three_body_symmetry_properties}
			V(x,y) = V(y,x) \quad \textmd{and} \quad V(x-y,x-z) = V(y-x,y-z) = V(z-x,z-y).
		\end{equation}
		Moreover, assume that the function $x\in\R^3 \mapsto \Vert V(x,\cdot)\Vert_{L^1}$ is radially symmetric and decreasing and of class $L^{3/2}$. Fix $R > 0$ such that $\Supp V\subset B(0,R)$, and suppose that there exist $R_0, C_0> 0$ such that
		\begin{equation}
			\label{eq:three_body_potential_bounded_below_assumption}
			V \geq C_0 \quad \textmd{on $B(0,R_0)$.}
		\end{equation}
	\end{assumption}
	
	Under Assumption \ref{ass:potential}, the Hamiltonian $H_N$ leaves the bosonic space $\mfH^N$ invariant.

\begin{theorem}\label{th:main_result}
	Let $V$ satisfy Assumption~\ref{ass:potential}. Let $(\Psi_{N,0})_N$ be a sequence of normalised wavefunctions in $\mfH^N$, and let $\gamma_N^{(1)}$ denote the $1$-particle reduced density matrix of $\Psi_{N,0}$. Let $\varphi_0\in H^6(\R^3)$ be normalised and define
	\begin{equation*}
		\eps_{1,N} = \tr\big\vert\gamma_N^{(1)} - \vert\varphi_0\rangle\langle\varphi_0\vert\big\vert
	\end{equation*}
	and
	\begin{equation*}
		\eps_{2,N} = \big| N^{-1} \braket{\Psi_{N,0},H_N \Psi_{N,0}} - \mathcal E^{\GP}[\varphi_0] \big|.
	\end{equation*}
	
	Let $\Psi_{N,t} = e^{-\ui tH_N}\Psi_{N,0}$ be the solution to the Schrödinger equation \eqref{eq:schroedinger_eq_time_dependent} with initial data $\Psi_{N,0}$, and let $\gamma_{N,t}^{(1)}$ denote its $1$-particle density matrix. Let $\varphi_t$ be the solution to the time-dependent Gross--Pitaevskii equation \eqref{eq:gross_pitaevskii_equation} with initial data $\varphi_0$. Then for all $t\in\R$,
	\begin{equation}
		\label{eq:bec_main_bound}
		1 - \langle\varphi_t,\gamma_{N,t}^{(1)}\varphi_t\rangle \leq C\big(\varepsilon_{1,N} + \varepsilon_{2,N} + N^{-1/2}\big)\exp(c\vert t\vert),
	\end{equation} 
	for some constants $C,c > 0$ depending only on $\Vert\varphi_0\Vert_{H^6}$ and $V$.
\end{theorem}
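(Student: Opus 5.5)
The plan is to follow the Pickl-type counting-functional strategy of \cite{BenediktOliSch15,BrenneckSch19}, as simplified in \cite{BrenneckKro24,Brooks24}. The short-scale three-body correlation structure is built into a renormalised functional. Let $q_t = 1-|\varphi_t\rangle\langle\varphi_t|$. Let $f_N(\bx) = f(N^{1/2}\bx)$ be the rescaled solution of \eqref{eq:zero_energy_scattering_equation}, truncated at a scale $\ell \ll 1$, and set $\omega_N = 1-f_N$. We work on Fock space via the excitation map $U_{N,t}$ that factors out the condensate $\varphi_t$. We conjugate by a cubic Bogoliubov-type transformation $T_t = \exp(B_t - B_t^*)$ with
\[ B_t = \tfrac16\int \omega_N(x-y,x-z)\varphi_t(x)\varphi_t(y)\varphi_t(z)\, a^*_{x}a^*_{y}a^*_{z}\,(\cdot) \]
(suitably projected onto $q_t$ and cut off in particle number). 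The functional we control is
\[ \alpha_N(t) = N^{-1}\langle \Psi_{N,t}, \text{(number of excitations after undoing }T_t)\Psi_{N,t}\rangle + \text{energy correction}. \]
The first step is to show that $\alpha_N(t)$ is comparable to $1-\langle\varphi_t,\gamma^{(1)}_{N,t}\varphi_t\rangle$ up to $O(N^{-1/2})$. For the initial value we need $\alpha_N(0)\lesssim \eps_{1,N}+\eps_{2,N}+N^{-1/2}$. This follows from the energy compatibility \eqref{eq:inital_data_energy_compatibility} together with a lower bound of the renormalised energy of the form $H_N \gtrsim N\cE^\GP + (\text{kinetic energy of excitations}) - CN^{1/2}$, in the spirit of \cite{NamRicTri23}. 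This lower bound also gives a priori control of $\langle \cN\rangle$ and of the renormalised kinetic energy along the flow, since the energy is conserved.

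The second step is a Grönwall estimate. We compute $\frac{d}{dt}\alpha_N(t)$ using \eqref{eq:gross_pitaevskii_equation} and \eqref{eq:gross_pitaevskii_drift_term}. The generator $\mathcal{L}_t = (\ui\partial_t T_t^*)T_t + T_t^* U_{N,t}(H_N - N\cE^{\GP}[\varphi_t])U_{N,t}^* T_t$ is expanded in terms with $0,\dots,6$ creation and annihilation operators. We then check that the problematic terms cancel. The $N^{1/2}$-size term linear in excitations, $\sqrt N\, a^*(q(|\varphi|^4\varphi))$ with coupling $\int V_N$, is turned into the same term with coupling $b(V)$ by the scattering equation. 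It therefore cancels against the nonlinearity of the Gross--Pitaevskii equation, leaving an error controlled by $\|\varphi_t\|_{H^6}$-dependent norms times $N^{-1/2}$. The remaining contributions should be bounded by $C(\alpha_N(t)+N^{-1/2})$, or by something that is a total derivative. Grönwall's lemma then gives \eqref{eq:bec_main_bound}. The growth rate $c$ is uniform because Appendix \ref{sec:prop_gp_equation} gives global bounds on $\|\varphi_t\|_{H^6}$ and on $\|\varphi_t\|_{W^{k,\infty}}$; these are only needed exponentially in $t$, but we take $c$ to absorb them.

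\textbf{Main obstacle.} The hard part is the cubic commutator terms. In these, two excitations interact through an effective two-body potential of the form
\[ N^{1/2}W(N^{1/2}(x_i-x_j)),\qquad W(x)=\int V(x,y)\,\dd y\,|\varphi_t|^2 , \]
which arises when two excited particles meet a condensate particle. Such terms are not bounded by $\cN$, and for many well-separated pairs they are not bounded by the three-body interaction either. To control them we use the stability bound
\[ \eps\sum-\Delta_i - \sum_{i<j} N^{1/2}W_N + \eps\sum_{i<j<k} V_N \geq -C_\eps M, \]
as made precise in Propositions~\ref{prop:general_two_body_bound_renormalised} and \ref{prop:general_three_and_four_body_bound_renormalised}. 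It is proved by multi-scale localisation into boxes: inside a box, pairs are dominated by kinetic energy and triples, and the radial monotonicity of $\|V(x,\cdot)\|_{L^1}$ in Assumption~\ref{ass:potential} is used to compare potentials. Applied with $M\sim \cN$ excitations, this bounds these terms by $\eps(\text{kinetic}+\text{three-body}) + C_\eps \cN/N$. The kinetic and three-body parts are then absorbed using energy conservation and the a priori bounds from the first step. The four-, five- and six-operator error terms are handled the same way, together with the lower bound \eqref{eq:three_body_potential_bounded_below_assumption}.
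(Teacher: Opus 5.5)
There is a genuine gap: the step that controls the energy of the excitations along the flow fails. The rest of the architecture is right: the excitation map, the cubic renormalisation of three-body correlations, the cancellation of the $\sqrt N$ linear term through the scattering equation, and a box-localisation bound for the effective two-body potential.

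The lower bound you invoke, $H_N\gtrsim N\cE^\GP[\varphi_t]+(\text{kinetic energy of excitations})-CN^{1/2}$, is false here. There is no trap, $\inf\operatorname{spec}H_N=0<N\cE^\GP[\varphi_0]$, and $\varphi_t$ minimises nothing; the bounds of \cite{NamRicTri23} use that $\varphi$ is the trapped GP minimiser. What is true (Lemma~\ref{lemma:generator_estimate}) is $\cH_N\geq\tfrac12(\cK^\ren+\cV^\ren)-C(\cN^\ren+N^{1/2})$. This is coercivity only up to a negative multiple of the number of excitations.

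Two consequences follow. First, energy conservation gives no a priori control of $\langle\cN_+\rangle$; if it did, there would be nothing left to prove. Second, it does not control the renormalised kinetic energy at the needed precision. The quantities $\langle H_N\rangle_{\Psi_{N,t}}-N\cE^\GP[\varphi_t]$ and $\langle\cH_N\rangle_{\Phi_{N,t}}$ differ by $O(\langle\cN_+\rangle)$ plus the linear term $-2N\Re\langle\varphi_t,\gamma^{(1)}_{N,t}Q_t\,\ui\partial_t\varphi_t\rangle$. For $t>0$ this term is only bounded by $CN^{1/2}\langle\cN_+\rangle^{1/2}$. At $t=0$ it is controlled by the trace-norm quantity $\eps_{1,N}$, which is why the paper needs a trace norm there. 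The energy enters the bound on the growth of the excitations with an $O(1)$ constant (cf.\ \eqref{eq:generator_commutator_Nren_estimate}). Routing it through energy conservation therefore puts a term of order $\delta_t^{1/2}$ into the inequality for $\delta_t'$, where $\delta_t=\langle\cN_+\rangle_{\Phi_{N,t}}/N$. Grönwall then cannot give a bound linear in $\eps_{1,N}+\eps_{2,N}+N^{-1/2}$ with growth $e^{c|t|}$.

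The paper avoids this by putting the non-conserved fluctuation energy into the Grönwall functional itself, $\langle\cH_N+\cQ^\ren+C(\cN^\ren+N^{1/2})\rangle_{\Phi_{N,t}}$, and bounding $\partial_t\langle\cH_N+\cQ^\ren\rangle$ directly (Lemma~\ref{lemma:generator_commutator_estimates}). This is where the real work lies. Since $\partial_t\cK^\ren$ and $\partial_t\cV^\ren$ are not controlled by $\cK^\ren,\cV^\ren$, one needs the correction $\cQ^\ren$ and explicit order-$N$ cancellations:
\begin{itemize}
\item between $\partial_t\cL_3^{(1)}$ and $\ui[\cK+\cV_N,\cQ^\ren]$;
\item between $\partial_t\cL_0$ and $\ui[\cL_3^{(1)},\cQ^\ren]$;
\item between $\partial_t(\cL_1+\cL_4^{(1)})$ and $\ui[\cL_4^{(1)},\cQ^\ren]$.
\end{itemize}
These are Lemmas~\ref{lemma:generator_commutator_estimates_cubic_derivative}--\ref{lemma:generator_commutator_estimates_drift}. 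Your remark about ``something that is a total derivative'' does not supply them.

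A second step is also missing. Your localisation bound controls $\cV_\twoB$ by the bare $\cK+\cV_N$, whose expectation in $\Phi_{N,t}$ is of order $N$ because of the correlation structure. Applied this way it leaves an error of order $\varepsilon$ per particle that does not vanish as $N\to\infty$. What is needed is the renormalised bound \eqref{eq:general_two_body_bound_renormalised}. In Section~\ref{sec:effective_interaction_potentials_renormalised} it is obtained by conjugating the bare bound with $e^{K_t}$ and running a Grönwall argument in the conjugation parameter. Controlling the commutators of $K_t$ with $\cV_\twoB$ and $\cV_\fourB$ there requires additional three-, four- and five-body localisation estimates.

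Relatedly, for a cubic $T_t$ the conjugated generator is not a finite sum of terms with at most six creation and annihilation operators. The paper sidesteps this by working with $\cN^\ren,\cQ^\ren,\cK^\ren,\cV^\ren$ instead of conjugating.
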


Here are some remarks on the result.
\begin{enumerate}
	\item Supposing that $\varepsilon_{1,N},\varepsilon_{2,N} = \mathcal{O}(N^{-1/2})$, we get
	\begin{equation*}
		1 - \langle\varphi_t,\gamma_{N,t}^{(1)}\varphi_t\rangle \leq CN^{-1/2}\exp(c\vert t\vert),
	\end{equation*}
	meaning that Bose--Einstein condensation holds with rate $N^{-1/2}$. It was shown in \cite{Brooks25} that for the ground state of a tapped Bose gas both $\varepsilon_{1,N}$ and $\varepsilon_{2,N}$ are of order $N^{-1/2}$, meaning that we cannot hope to get a better rate of convergence than the present one without resolving the energy up to the next order.
	\item The radiality and monotonicity assumptions on the interaction potential are technical, and used only in the proof of Propositions \ref{prop:general_two_body_bound_renormalised} and \ref{prop:general_three_and_four_body_bound_renormalised}. In particular, they can be dropped by taking $x\in\R^3 \mapsto \Vert V(x,\cdot)\Vert_{L^1}$ to be of class $L^\ii$. 
	\item Our proof can be easily adapted - following for example \cite{BrenneckKro24,BrenneckSch19} - to deal with systems prepared in a trap that is turned off at $t = 0$. Namely, we consider initial data $\Psi_{N,0}$ that are approximate ground states of the Hamiltonian
	\begin{equation}
		\label{eq:hamiltonian_trapped}
		H_N^\trap = \sum_{i = 1}^N-\Delta_i + V^\ext(x_i) + \sum_{1 \leq i < j < k \leq N}V_N(x_i,x_j,x_k),
	\end{equation}
	for some external potential $0\leq V^\ext\in L_\loc^\ii(\R^3;\R)$ growing polynomially at infinity, and we consider initial data for the Gross--Pitaevskii equation \eqref{eq:gross_pitaevskii_equation} that are minimisers of the functional
	\begin{equation}
		\label{eq:gross_pitaevskii_equation_trapped}
		\cE_\GP^\trap[\varphi] = \int\vert\nabla\varphi\vert^2 + V^\ext\vert\varphi\vert^2 + \dfrac{b(V)}{6}\int\vert\varphi\vert^6.
	\end{equation}
	Then, Theorem \ref{th:main_result} remains valid if $\varepsilon_{1,N}$ and $\varepsilon_{2,N}$ are replaced with
		\begin{equation*}
		\tilde{\eps}_{1,N} = 1 - \langle\varphi_\GP,\gamma_N^{(1)}\varphi_\GP\rangle
	\end{equation*}
	and
	\begin{equation*}
		\tilde{\eps}_{2,N} = \big| N^{-1} \braket{\Psi_{N,0},H_N^\trap \Psi_{N,0}} - \mathcal E_\GP^\trap[\varphi_\GP] \big|,
	\end{equation*}
	where $\varphi_\GP$ is the ground state of $\cE_\GP^\trap$. It was proved in \cite{NamRicTri23} that approximate ground states of \eqref{eq:hamiltonian_trapped} satisfy $\tilde{\varepsilon}_{1,N},\tilde{\varepsilon}_{2,N} \rightarrow 0$. Moreover, under sufficient regularity and growth assumptions on $V^\ext$, the minimiser $\varphi_\GP$ belongs to $H^6(\mathbb{R}^{3})$.
\end{enumerate}
	
	\bigskip
	\noindent{\bf Organisation of the paper.} In Section \ref{sec:strategy_and_proof_main_theorem}, we set some notation, explain the main strategy and give the proof of Theorem \ref{th:main_result} assuming the validity of Proposition \ref{prop:gronwall_bound}, which is the main estimate used in the Grönwall argument. Section \ref{sec:ana_fluct} is devoted to the proof of Proposition~\ref{prop:gronwall_bound}. In Section \ref{sec:effective_interaction_potentials_non_renormalised}, we show the many-body estimates which are the new and central ingredients in our analysis. Finally, in Section \ref{sec:effective_interaction_potentials_renormalised}, we show that these inequalities also hold if we replace the kinetic and interaction terms by their renormalised counterparts. In Appendix \ref{sec:prop_gp_equation}, we recall some known properties of the solution to the Gross--Pitaevskii equation \eqref{eq:gross_pitaevskii_equation}. In Appendix \ref{app:op_bounds} we prove some operator bounds involving the correlation kernels. Lastly, in Appendix \ref{app:theta_coefficients} we give some technical estimates on the prefactors appearing in the reformulation of dynamics in the excitation Fock space.
	
	\bigskip
	\noindent{\bf Acknowledgements.} The authors thank P. T. Nam for valuable discussions. This work was partly funded by the Deutsche Forschungsgemeinschaft (DFG, German Research Foundation) through the TRR 352 Project ID	47090307. F. L. A. V. acknowledges partial support from the European Research Council through the ERC CoG RAMBAS Project Nr. 101044249. 

\section{Notation, strategy and proof of Theorem~\ref{th:main_result}}

\label{sec:strategy_and_proof_main_theorem}

In this section, we outline the proof of Theorem~\ref{th:main_result}. First, we rewrite the problem in terms of fluctuations around the condensate wavefunction. Then, we carefully analyse the generator of the fluctuation dynamics to prove a Grönwall estimate that allows us to bound the time evolution of the number of excitations. In order to be able to close this Grönwall bound, we need to remove correlations from the ground state of the system, which we do using a cubic transformation. To avoid the use of operator exponential expansions, we approximate this transformation using algebraic ideas.

\subsection{Notation and preliminaries}

We fix here some notation and refer to \cite{LewinNamSch15,LewinNamSerSol15} for precise definitions regarding the Fock space formalism.

Denote by $\Vert\cdot\Vert_p$ the usual norm on $L^p(\R^d)$ when there is no possible confusion on the dimension $d$. Moreover, denote by $\Vert\cdot\Vert_{L^pL^\infty}$ the norm defined by
\begin{equation*}
	\Vert f\Vert_{L^pL^\infty} = \sup_{x\in\R^3}\Vert f(x,\cdot)\Vert_{L^p},
\end{equation*}
for any function $f:\R^6\rightarrow \C$. Also, when there is no ambiguity we often omit integration variables, e.g. we write $\int f$ instead of $\int\d{}xf(x)$.

Given a function $F$ defined on $\R^6$ that satisfies the three-body symmetry \eqref{eq:three_body_symmetry_properties}, we will often abuse notation and write $F(x,y,x)$ to denote $F(x - y,x - z)$.

Let
\begin{equation*}
	\cF = \bigoplus_{k = 0}^\ii\mfH^k
\end{equation*}
be the Fock space constructed from the one-body space $\mfH$. Let $a^*(f)$ and $a(f)$ be the usual creation and annihilation operators on $\cF$, and let $a_x^*$ and $a_x$ denote the corresponding operator-valued distributions. These operators satisfy the canonical commutation relations (CCR)
\begin{equation}
	\label{eq:creation_annihilation_op_ccr}
	\begin{aligned}
		[a(f),a^*(g)] = \langle f,g\rangle, \quad &[a(f),a(g)] = [a^*(f),a^*(g)] = 0,\\
		[a_x,a_y^*] = \delta_{xy}, \quad &[a_x,a_y] = [a_x^*,a_y^*] = 0,
	\end{aligned}
\end{equation}
for all $f,g\in\mfH$ and $x,y\in\R^3$. Let $\cN$ denote the number operator on $\cF$. It can be written in terms of $a_x$ and $a_x^*$ as
\begin{equation*}
	\cN = \int_{\R^3}\d{}xa_x^*a_x \quad \textmd{on $\cF$}.
\end{equation*}
Let $\Omega$ denote the vacuum state $(1,0,\dots)\in\cF$. The second quantisations of the operators $-\Delta$ and $V_N$ are given by
\begin{equation}
	\label{eq:second_quantised_kinetic_and_interaction}
	\d{}\Gamma(-\Delta) = \int_{\R^3}a_x^*(-\Delta_x)a_x \quad \textmd{and} \quad \bbV_N = \int_{\R^9}V_N(x,y,z)a_x^*a_y^*a_z^*a_xa_ya_z.
\end{equation}

Let $\varphi_t$ be the solution of the Gross--Pitaevskii equation \eqref{eq:gross_pitaevskii_equation} and define the projection $Q_t = 1 - \vert\varphi_t\rangle\langle\varphi_t\vert$. Define also the space $\mfH_+(t)$ of one-body wavefunctions orthogonal to $\varphi_t$. In terms of the projection $Q_t$, this space is given by $\mfH_+(t) = Q_t\mfH$. Denote by $\mfH_+^k(t)$ the $k$-fold symmetric tensor product of $\mfH_+(t)$. Define the Fock space of excitations $\cF_+(t)$ and the truncated Fock space of excitations $\cF_+^{\leq N}(t)$ by
\begin{equation*}
	\cF_+(t) = \bigoplus_{k = 0}^\ii\mfH_+^k(t) \quad  \textmd{and} \quad \cF_+^{\leq N}(t) = \bigoplus_{k = 0}^N\mfH_+^k(t).
\end{equation*}
Since $\mfH_+(t)$ is a subset of $\mfH$, we have the inclusions
\begin{equation}
	\label{eq:fock_spaces_inclusion}
	\cF_+^{\leq N}(t) \subset \cF_+(t) \subset \cF.
\end{equation}
Let $\cN_+$ be the number operator on $\cF_+(t)$. This operator coincides with the restriction of $\cN$ to $\cF_+(t)$, meaning that
\begin{equation*}
	\cN_+ = \int_{\R^3}a_x^*a_x \quad \textmd{on $\cF_+(t)$.}
\end{equation*}
Moreover, due to the inclusion \eqref{eq:fock_spaces_inclusion}, the operator $\cN_+$ can also be viewed as an operator acting on $\cF$ defined by
\begin{equation*}
	\cN_+ = \cN - a^*(\varphi_t)a(\varphi_t) \quad \textmd{on $\cF$}.
\end{equation*}

\subsection{Factoring out the condensate}

Again, let $\varphi_t$ be the solution of the Gross--Pitaevskii equation \eqref{eq:gross_pitaevskii_equation}. As observed in \cite{LewinNamSch15,LewinNamSerSol15}, any wavefunction $\Psi\in \mfH^N$ can be uniquely represented as
\begin{equation*}
	\Psi = \sum_{k = 0}^N\psi_k\otimessym \varphi_t^{\otimes N - k},
\end{equation*}
for some family $(\psi_k)_{k = 0}^N$ of wavefunctions with $\psi_k\in\mfH_+^k(t)$. Here $\otimes_\sym$ denotes the symmetric tensor product. This observation allows us to define the unitary map
\begin{align*}
	U_N: \mfH^N &\rightarrow \mathcal{F}_+^{\leq N}(t)\\
	\Psi &\mapsto \psi_0\oplus\psi_1\oplus\cdots\oplus\psi_N.
\end{align*}
This map factors out the condensate described by the one-body wavefunction $\varphi_t$ and lets us focus on the excitations orthogonal to the condensate. As was shown in \cite{LewinNamSch15}, the action of $U_N$ on creation and annihilation operators can be computed explicitly and is given by
\begin{equation}
	\label{eq:excitation_map_action_creation_annihilation_op}
	\begin{aligned}[c]
		U_N a^*(\varphi_t)a(\varphi_t) U_N^* &= N - \mathcal{N}_+, \quad & U_N a^*(\varphi_t)a(g) U_N^* &= \sqrt{N - \mathcal{N}_+}a(g),\\
		U_Na^*(f)a(g) U_N^* &= a^*(f) a(g), \quad & U_N a^*(f)a(\varphi_t) U_N^* &= a^*(f) \sqrt{N - \mathcal{N}_+},
	\end{aligned}
\end{equation}
for all $f,g \in \mathfrak{H}_+(t)$. In view of these relations, it is natural to introduce the modified creation and annihilation operators
\begin{equation*}
	b^*(f) = a^*(Q_tf) \quad \textmd{and} \quad b(f) = a(Q_tf),
\end{equation*}
for all $f\in\mfH$, as well as the corresponding operator-value distributions $b_x^*$ and $b_x$ defined by
\begin{equation*}
	b^*(f) = \int_{\R^3}f(x)b^*_x \quad \textmd{and} \quad b(f) \int_{\R^3}\overline{f(x)}b_x.
\end{equation*}
These operators satisfy the canonical commutation relations
\begin{equation}
	\label{eq:creation_annihilation_op_modified_ccr}
	\begin{aligned}
		[b(f),b^*(g)] = \langle f,Q_tg\rangle, \quad &[b(f),b(g)] = [b^*(f),b^*(g)] = 0,\\
		[b_x,b_y^*] = Q_t(x,y), \quad &[b_x,b_y] = [b_x^*,b_y^*] = 0,
	\end{aligned}
\end{equation}
for all $f,g\in\mfH$ and $x,y\in\R^3$. Here, $Q_t(x,y) = \delta_{xy} - \overline{\varphi_t(y)}\varphi_t(x)$ denotes the integral kernel of $Q_t$. In terms of $b_x^*$ and $b_x$, the operator $\cN_+$ is given by
\begin{equation}
	\label{eq:number_excitation_op_int_form}
	\cN_+ = \int_{\R^3}b_x^*b_x \quad \textmd{on $\cF$}.
\end{equation}

Lastly, define the second quantised operators
\begin{equation}
	\label{eq:second_quantised_kinetic_and_interaction_mod}
	\cK = \int_{\R^3}b_x^*(-\Delta_x)b_x \quad \textmd{and} \quad \cV_N = \int_{\R^9}V_N(x,y,z)b_x^*b_y^*b_z^*b_xb_yb_z.
\end{equation}
As quadratic forms on $\cF_+^{\leq N}(t)$, they coincide with $\d{}\Gamma(-\Delta)$ and $\bbV_N$.

\subsection{Fluctuations around the condensate.} 

Given a solution $\Psi_{N,t}$ of the time-dependent Schrödinger equation \eqref{eq:schroedinger_eq_time_dependent}, it is natural to consider the fluctuation vector $\Phi_{N,t}$ defined by
\begin{equation*}
	\Phi_{N,t} = U_N\Psi_{N,t}.
\end{equation*}
This vector solves the Hamiltonian equation
\begin{equation}
	\label{eq:fluctuation_vector_equation}
	\ui\partial_t\Phi_{N,t} = \mathcal{H}_N\Phi_{N,t},
\end{equation}
where the generator $\cH_N$ of the fluctuation dynamics is given by
\begin{equation}
	\label{eq:generator_fluctuations}
	\mathcal{H}_N = (\ui\partial_t U_N)U_N^* + U_NH_NU_N^*.
\end{equation}
Moreover, the fluctuation vector $\Phi_{N,t}$ is such that
\begin{equation}
	\label{eq:number_excitation_op_rewritten_rdm}
	1 - \langle\varphi_t,\gamma_{N,t}^{(1)}\varphi_t\rangle = N^{-1}\langle\cN_+\rangle_{\Psi_{N,t}} = N^{-1}\langle\cN_+\rangle_{\Phi_{N,t}}.
\end{equation}
This results from
\begin{equation*}
	U_N\cN_+U_N^* = \cN_+,
\end{equation*}
which is itself a consequence of \eqref{eq:excitation_map_action_creation_annihilation_op}.

A naive approach to prove Theorem~\ref{th:main_result} would be to control the growth of the number of excitations $\cN_+$ using the Grönwall lemma, as was done in \cite{LewinNamSch15} for systems with two-body interactions in the mean-field regime. However, when evaluating the time derivative
\begin{equation*}
	\partial_t\langle\cN_+\rangle_{\Phi_{N,t}} = \langle\ui\left[\cH_N,\cN_+\right]\rangle_{\Phi_{N,t}} = \langle\ui\left[H_N,\cN_+\right]\rangle_{\Psi_{N,t}} - 2\Re\langle a^*(\partial_t\varphi_t)a(\varphi_t)\rangle_{\Psi_{N,t}},
\end{equation*}
we see that it contains several contributions of order $N$. These contributions result from short-range correlations contained in $\Psi_{N,t}$, as we now explain. Heuristically, we can think of $\Psi_{N,t}$ as having the form
\begin{equation*}
	\Psi_{N,t}  \approx \prod_{1 \leq i<j<k\leq N}f_{\lambda,N}(x_i - x_j,x_i - x_k)\prod_{i = 1}^N\varphi_t(x_i),
\end{equation*}
where $f_{\lambda,N}$ describes three-body correlations up to some distance $\lambda$. More precisely, the function $f_{\lambda,N}$ is chosen such that $f_{\lambda,N}(\bx) = f(N^{1/2}\bx)$ if $\vert \bx\vert \leq \lambda/2$ with $f$ solving \eqref{eq:zero_energy_scattering_equation}, and $f_{\lambda,N}(\bx) = 1$ if $\vert\bx\vert \geq \lambda$. See \eqref{eq:scattering_solution_def} for a precise definition. Then, basic computations show that, due to the short-range correlations $f_{\lambda,N}$, the orthogonal excitations of such states carry a large energy of order $\mathcal{O}(N)$, thus preventing us from controlling $\cN_+$ using a Grönwall argument. Hence, we would like to factor out these correlations to ensure that the state stays close to the uncorrelated state $\varphi_t^{\otimes N}$. In practice, we remove these correlations using an appropriate Bogoliubov transformation.

\subsection{Removing correlations} 

\label{subsec:removing_correlations}

Define the cubic kernel $k_t$ by
\begin{equation}
	\label{eq:cubic_kernel_def}
	k_t(x,y,z) = N^{3/2}\omega_{\lambda,N}(x,y,z)\varphi_t(x)\varphi_t(y)\varphi_t(z) \quad \textmd{with} \quad \omega_{\lambda,N} = 1 - f_{\lambda,N}.
\end{equation}
The three-body symmetry \eqref{eq:three_body_symmetry_properties} ensures that $k_t$ is symmetric. See Section~\ref{subsec:scattering_energy} for precise definitions and important properties of $f_{\lambda,N}$ and $\omega_{\lambda,N}$.

To figure out how to remove the correlations from the wavefunction $\Psi_{N,t}$, we approximate it by
\begin{align*}
	\Psi_{N,t} &\approx \prod_{1 \leq i< j< k\leq N}(1 - (1 - f_{\lambda,N}))(x_i,x_j,x_k)\prod_{i = 1}^N\varphi_t(x_i)\\
	&\approx \Big(1 - \sum_{1 \leq i< j<k\leq N}\omega_{\lambda,N}(x_i,x_j,x_k) + \dots\Big)\prod_{i = 1}^N\varphi_t(x_i)\\
	&\approx \exp\Big(-\dfrac{1}{6}\int_{\R^9}\d{}x\d{}y\d{}z\omega_{\lambda,N}(x,y,z)a_x^*a_y^*a_z^*a_xa_ya_z\Big)\prod_{i = 1}^N\varphi_t(x_i).
\end{align*}
The analysis of \cite{Brooks25,NamRicTri23} shows that, at the leading order, the correlation energy originates from collisions in which three condensate particles become excited (and vice versa). This indicates that the correlation structure can be incorporated in the product state $\varphi^{\otimes N}$ by applying the cubic Bogoliubov transformation
\begin{equation*}
	\exp\Big(-\dfrac{1}{6}\int_{\R^9}k_t(x,y,z)b_x^*b_y^*b_z^*\dfrac{a(\varphi_t)}{\sqrt{N}}\dfrac{a(\varphi_t)}{\sqrt{N}}\dfrac{a(\varphi_t)}{\sqrt{N}} - \hc\Big).
\end{equation*}
Let us rephrase this in terms of the fluctuation vector $\Phi_{N,t}$. Conjugating the operator inside the previous exponential by the excitation map $U_N$ and using the relations \eqref{eq:excitation_map_action_creation_annihilation_op} gives rise to the cubic operator $K_t:\cF_+^{\leq N}(t) \rightarrow\cF_+^{\leq N}(t)$ given by
\begin{equation*}
	K_t = -\dfrac{1}{6}\int_{\R^9}\big(k_t(x,y,z)b_x^*b_y^*b_z^*\Theta_N - \hc\big).
\end{equation*}
with $\Theta_N$ given by
\begin{equation}
	\label{eq:theta_coeff_cubic_transform}
	\Theta_N = \sqrt{1 - \dfrac{\cN_+}{N}}\sqrt{1 - \dfrac{\cN_+ + 1}{N}}\sqrt{1 - \dfrac{\cN_+ + 2}{N}}.
\end{equation}
Hence, the above heuristics translate to
\begin{equation}
	\label{eq:fluctuation_vector_approx_bog_vacuum}
	\Phi_{N,t} \approx e^{K_t}\Omega
\end{equation}
where $\Omega = (1,0,0,\dots)$ is the vacuum state on $\cF_+$.

The approximation \eqref{eq:fluctuation_vector_approx_bog_vacuum} suggests that we can remove correlations from $\Phi_{N,t}$ by applying the unitary transformation $e^{-K_t}$. In other words, we expect that the fluctuation vector $e^{-K_t}\Phi_{N,t}$ behaves essentially like the vacuum state $\Omega$, and that we might therefore be able to control its number of excitations. For technical reasons, however, we need to control not only the number of excitations but also the energy fluctuations of $e^{-K_t}\Phi_{N,t}$, which we describe next. The fluctuation dynamics of $e^{-K_t}\Phi_{N,t}$ is governed by
\begin{equation*}
	\ui\partial_t\left(e^{-K_t}\Phi_{N,t}\right) = \left(e^{-K_t}\cH_Ne^{K_t} + (\ui\partial_t e^{-K_t})e^{K_t}\right)e^{-K_t}\Phi_{N,t}.
\end{equation*}
Moreover, some heuristic computations show that, on states with few excitations,
\begin{equation}
	\label{eq:hamiltonian_main_contribution_extracted}
	e^{-K_t}\cH_Ne^{K_t} \approx \cK + \cV_N + N\cE^\GP[\varphi_t] - N\langle\varphi_t,\ui\partial_t\varphi_t\rangle,
\end{equation}
where $\cK$ and $\cV_N$ are the second quantised operators defined in \eqref{eq:second_quantised_kinetic_and_interaction_mod}, which both vanish when tested on $\Omega$. Hence, in view of \eqref{eq:fluctuation_vector_approx_bog_vacuum}, it is natural to try to control the fluctuations of the energy
\begin{equation*}
	\langle e^{-K_t}\cH_Ne^{K_t} + (\ui\partial_t e^{-K_t})e^{K_t}\rangle_{e^{-K_t}\Phi_{N,t}} - N\cE^\GP[\varphi_t] + N\langle\varphi_t,\ui\partial_t\varphi_t\rangle,
\end{equation*}
which simplifies to $\langle\cH_N + e^{K_t}(\ui\partial_t e^{-K_t})\rangle_{\Phi_{N,t}}$ due to \eqref{eq:gross_pitaevskii_drift_term}. Said differently, we would like to prove the Grönwall bound
\begin{equation*}
	\partial_t\langle\cH_N + e^{K_t}(\ui\partial_t e^{-K_t}) + e^{K_t}\cN_+e^{-K_t}\rangle_{\Phi_{N,t}} \lesssim \langle\cH_N + e^{K_t}(\ui\partial_t e^{-K_t}) + e^{K_t}\cN_+e^{-K_t}\rangle_{\Phi_{N,t}}.
\end{equation*}

\subsection{Approximating the cubic transformation.}

One of the main difficulties we face when trying to implement the above strategy is that the action of $e^{-K_t}(\cdot)e^{K_t}$ is not explicit and tedious to compute. The standard approach is to expand it using Duhamel's formula sufficiently many times so that the error becomes controllable in terms of $\cN_+, \cK$ and $\cV_N$, which usually requires computing many commutators. To solve this problem, we use ideas from \cite{BrenneckKro24,BrenneckBroCarOld24,Brooks24} to extract only the relevant terms without needing to use operator exponential expansions.

More precisely, to compute $e^{K_t}\cN_+e^{-K_t}$, we apply the Duhamel formula twice to get
\begin{equation*}
	e^{K_t}\cN_+e^{-K_t} = \cN_+ + [K_t,\cN_+] + \int_0^1\d{}s\int_0^s\d{}ue^{uK_t}[K_t,[K_t,\cN_+]]e^{-uK_t}.
\end{equation*}
Then, we use the CCR \eqref{eq:creation_annihilation_op_modified_ccr} and the properties of the cubic kernel $k_t$ to find
\begin{equation*}
	e^{K_t}\cN_+e^{-K_t} \approx \mathcal N_+ + \frac{1}{2} \int_{\R^{9}} k_t(x,y,z) b^*_x b^*_y b^*_z\Theta_N + \hc
\end{equation*}
up to a small error. Similarly, for $e^{K_t}\left(\ui\partial_te^{-K_t}\right)$ we get
\begin{equation*}
	e^{K_t}\left(\ui\partial_te^{-K_t}\right) = \int_0^1\d{}se^{sK_t}\ui\partial_tK_te^{-sK_t} \approx \ui\partial K_t.
\end{equation*}
This motivates the definition of the renormalised operators $\cN^\ren$ and $\cK^\ren$ by
\begin{equation}
	\label{eq:renormalised_number_operator_def}
	\mathcal N^\ren = \mathcal N_+ + \frac{1}{2} \int_{\R^{9}}k_t(x,y,z) b_x^* b_y^* b_z^*\Theta_N + \hc
\end{equation}
and
\begin{equation}
	\label{eq:renormalised_drift_operator_def}
	\mathcal Q^\ren = \dfrac{1}{6}\int_{\R^{9}} \ui (\partial_tk_t)(x,y,z) b^*_x b^*_y b^*_z\Theta_N + \hc
\end{equation}
Then, instead of proving a Grönwall bound on 
\begin{equation*}
	\langle\cH_N + e^{K_t}(\ui\partial_te^{-K_t}) + e^{K_t}\cN_+e^{-K_t}\rangle_{\Phi_{N,t}},
\end{equation*}
the key idea of \cite{BrenneckKro24} is to prove a Grönwall bound on
\begin{equation*}
	\langle\cH_N + \cQ^\ren + \cN^\ren\rangle_{\Phi_{N,t}}.
\end{equation*}
This is the content of the following proposition. The proof is given in Section~\ref{sec:ana_fluct}, and the subsequent sections are devoted to proving of technical results required for that proof.

\begin{proposition}[Main Grönwall estimate]
	\label{prop:gronwall_bound}
	Let $V$ satisfy Assumption~\ref{ass:potential} and take $\lambda = 2RN^{-1/3}$ in \eqref{eq:cubic_kernel_def}. Let $\Psi_{N,t}$ and $\varphi_t$ be solutions respectively of the Schrödinger equation \eqref{eq:schroedinger_eq_time_dependent} and of the time-dependent Gross--Pitaevskii equation \eqref{eq:gross_pitaevskii_equation} for some normalised initial data $\Psi_{N,0} \in \mfH^N$ and $\varphi_0\in H^6(\R^3)$. Then, there exist some constants $c$ and $C$ depending only on $V$ and $\cE^\GP[\varphi_0]$ such that, for all $t \in \R$,
	\begin{equation}
		\label{eq:number_excitations_bound}
		\cN_+ \leq \cH_N + \cQ^\ren + C(\cN^\ren + N^{1/2})
	\end{equation}
	on $\cF_+^{\leq N}(t)$, and
	\begin{equation}
		\label{eq:gronwall_estimate_main}
		\partial_t\langle\cH_N + \cQ^\ren + C(\cN^\ren + N^{1/2})\rangle_{\Phi_{N,t}} \leq c\langle\cH_N + \cQ^\ren + C(\cN^\ren + N^{1/2})\rangle_{\Phi_{N,t}}.
	\end{equation}
\end{proposition}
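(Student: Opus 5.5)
I would first write the generator $\cH_N$ from \eqref{eq:generator_fluctuations} explicitly in the excitation Fock space. Using the rules \eqref{eq:excitation_map_action_creation_annihilation_op}, the interaction $\bbV_N$ splits into terms of order zero to six in $b,b^*$. Each term carries a function of $\cN_+$, similar to $\Theta_N$. The term $(\ui\partial_tU_N)U_N^*$ produces the drift terms in $\partial_t\varphi_t$, which cancel the linear terms up to the gauge $\mu_t$. This gives
\begin{equation*}
	\cH_N = N\cE_0 + \cK + \cV_N + \mathcal{C}_N + \mathcal{L}_N + \mathcal{E}_N ,
\end{equation*}
where $\mathcal{C}_N$ is the cubic off-diagonal term
\begin{equation*}
	\frac{N^{3/2}}{6}\int V_N\,\varphi_t\varphi_t\varphi_t\, b^*b^*b^*\Theta_N+\hc
\end{equation*}
and $\mathcal{L}_N$ collects the quadratic, quartic and quintic terms with one or two condensate factors. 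In particular it contains effective two-body interactions of the type $N\int V_N(x,y,z)|\varphi_t(z)|^2 b_x^*b_y^*b_xb_y$. The key algebraic step is to use the scattering equation \eqref{eq:zero_energy_scattering_equation} for $f_{\lambda,N}$. With it, the commutator of $\cK+\cV_N$ with the cubic part of $\cN^\ren$ and $\cQ^\ren$ renormalises $\mathcal{C}_N$: the bare $\int V$ is replaced by $b(V)$, and the remainder has a kernel $N^{3/2}(Vf)_N\varphi^{\otimes 3}$ localised at scale $\lambda$. The choice $\lambda=2RN^{-1/3}$ makes $\|k_t\|_2 = O(N^{1/2})$ and keeps the Neumann-type error terms from the cutoff at order $N^{1/2}$. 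This is the source of the $N^{1/2}$ in the statement.

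For \eqref{eq:number_excitations_bound}, I would bound all non-positive parts of $\cH_N$ from below. The cubic term $\mathcal{C}_N$ plus the cubic part of $\cQ^\ren$ would be handled by Cauchy--Schwarz against $\cV_N$ and $\cN_+$: a bound of the form
\begin{equation*}
	\pm\mathcal{C}_N\le \delta\cV_N + C\delta^{-1}N^{-1}\cdot N^{3}\|V_N\|_1(\cN_++1),
\end{equation*}
refined using the correlation structure. The cross terms in $\cN^\ren$ are controlled by $\|k_t\|_2(\cN_++1)^{1/2}\lesssim N^{1/2}+\cN_+$. The dangerous terms are the effective two-body attractions in $\mathcal{L}_N$ that arise after renormalisation. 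For these I would invoke the many-body estimates of Propositions~\ref{prop:general_two_body_bound_renormalised} and \ref{prop:general_three_and_four_body_bound_renormalised}. They give a stability-of-the-second-kind bound
\begin{equation*}
	N^{1/2}W_N\text{-pair terms}\le \eps\cK+\eps\cV_N+C_\eps\cN_+ ,
\end{equation*}
and the same with $\cK,\cV_N$ replaced by their renormalised versions. Choosing $\eps$ small absorbs everything into $\cH_N+\cQ^\ren$, and the leftover $\cN_+$ is dominated by $C\cN^\ren+CN^{1/2}$ since $\cN_+\le 2\cN^\ren + CN^{1/2}$.

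For \eqref{eq:gronwall_estimate_main}, I would note that
\begin{equation*}
	\partial_t\langle A_t\rangle_{\Phi_{N,t}}=\langle\partial_tA_t+\ui[\cH_N,A_t]\rangle ,
\end{equation*}
and the $\cH_N$ part contributes only $\partial_t\cH_N$, which is explicit in $\partial_t\varphi_t$. Its terms are bounded by the same renormalised quantities, using the $H^6$ regularity of $\varphi_t$ from Appendix~\ref{sec:prop_gp_equation}, for example $\|\partial_t\varphi_t\|_\infty$ and $\|\partial_t^2\varphi_t\|_2$ uniformly bounded. The remaining pieces are $\partial_t\cQ^\ren$, which involves $\partial_t^2k_t$, and $\ui[\cH_N,\cN^\ren]+\partial_t\cN^\ren$. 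Here the definition of $\cQ^\ren$ is designed to cancel the $\ui\partial_tK$ contribution, and the commutator $[\cK+\cV_N,\tfrac12\int k\,b^*b^*b^*]$ again cancels the cubic term via the scattering equation. The remaining commutators are of the form already estimated in the first part, using the operator bounds of Appendix~\ref{app:op_bounds} and the $\Theta$-coefficient estimates of Appendix~\ref{app:theta_coefficients}.

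The main obstacle is the control of the effective two-body terms, where two excitations collide with a condensate particle. These have a non-definite sign and cannot be absorbed by $\cV_N$ alone. I expect all the difficulty to be concentrated in applying the multi-scale localisation bounds of Section~\ref{sec:effective_interaction_potentials_non_renormalised} in their renormalised form from Section~\ref{sec:effective_interaction_potentials_renormalised}. The rest is careful but standard bookkeeping.
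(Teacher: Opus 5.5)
There is a genuine gap in your proof of \eqref{eq:number_excitations_bound}. The cubic term $\mathcal C_N$ (the paper's $\cL_3^{(1)}$) is not an error term. Together with $\cK+\cV_N$ it lowers the energy by an order-$N$ amount, $\frac N6\big(\int V-b(V)\big)\Vert\varphi_t\Vert_6^6$, and this is exactly what compensates the positive order-$N$ constant $\cL_0-\frac N6b(V)\Vert\varphi_t\Vert_6^6$ in $\cH_N$. Cauchy--Schwarz against $\cV_N$ and $\cN_+$ can at best complete the square with the bare potential, $\cL_0+\cV_N+\mathcal C_N\gtrsim0$. That discards the whole correlation energy and yields only $\cH_N\geq-CN$.

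The inequality you display is in fact false. Since $\Vert V_N\Vert_{L^1(\R^6)}=N^{-2}\Vert V\Vert_1$, it reads $\pm\mathcal C_N\leq\delta\cV_N+C\delta^{-1}(\cN_++1)$. For bounded $V$, take $\Phi=\sqrt{1-\epsilon^2}\,\Omega+\epsilon\chi$, with $\chi$ the normalised vector proportional to $\mathcal C_N\Omega$. Then $\Vert\mathcal C_N\Omega\Vert\sim N$ and $\langle\chi,\cV_N\chi\rangle\sim N$, so $\langle\mathcal C_N\rangle_\Phi\sim\epsilon N$ while the right-hand side is $O(\delta\epsilon^2N+\delta^{-1})$. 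Choosing $\epsilon=N^{-1/2}$ gives a contradiction.

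The cubic parts of $\cQ^\ren$ and $\cN^\ren$ cannot rescue this. Their kernels have $L^2$ norm $O(1)$, not $O(N^{1/2})$ as you claim; the $N^{1/2}$ in the statement comes instead from, e.g., \eqref{eq:constant_contribution_close_GP} and \eqref{eq:bound_L21}. So these parts are $O(\cN_++1)$ by Lemma~\ref{lemma:number_operator_renormalised}.

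The missing idea is to complete the squares with the renormalised operators $c_x=b_x+\frac12\int k_t(x,y,z)b_y^*b_z^*$ and $d_{xyz}=b_xb_yb_z+k_t(x,y,z)$. That is, set $\cK^\ren=\int c_x^*(-\Delta_x)c_x\geq0$ and $\cV^\ren=\frac16\int V_Nd_{xyz}^*d_{xyz}\geq0$, and prove $\cH_N=\cK^\ren+\cV^\ren+\cE$ with $\pm\cE\leq\varepsilon(\cK^\ren+\cV^\ren)+C_\varepsilon(\cN^\ren+N^{1/2})$ (Lemma~\ref{lemma:generator_estimate}). Using the scattering equation \eqref{eq:truncated_scattering_equation_no_modified_laplacian}, the terms linear in $d^*_{xyz}$ coming from $\cK$, $\cV_N$ and $\cL_3^{(1)}$ carry $-V_Nf_N$, $-V_N\omega_N$ and $+V_N$, so they cancel. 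Together with $\cL_0$, the constants combine into $\frac{N^3}{6}\int(V_Nf_N)|\varphi_t(x)|^2|\varphi_t(y)|^2|\varphi_t(z)|^2=\frac N6b(V)\Vert\varphi_t\Vert_6^6+O(N^{1/2})$, which cancels the gauge term. The sum $\cL_1+\cL_4^{(1)}$ is matched with the linear term from \eqref{eq:gross_pitaevskii_equation} in the same way. Only after this do positivity of $\cK^\ren,\cV^\ren$ and Lemma~\ref{lemma:number_operator_renormalised} give \eqref{eq:number_excitations_bound}. Your remark that commutators of $\cK+\cV_N$ with the cubic parts of $\cN^\ren,\cQ^\ren$ renormalise the cubic term is relevant for the time derivative. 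It gives no fixed-time lower bound on $\cH_N$ unless you actually conjugate $\cH_N$ by $e^{K_t}$ and control the full expansion, which your plan does not do.

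The Grönwall step has the matching defect: $\partial_t\cH_N$ cannot be bounded term by term by renormalised quantities. Writing $b_xb_yb_z=d_{xyz}-k_t(x,y,z)$, $\partial_t\cL_3^{(1)}$ contains $\frac{N^{3/2}}{2}\int V_N(x,y,z)(\partial_t\varphi_t)(x)\varphi_t(y)\varphi_t(z)d^*_{xyz}+\hc$, which Cauchy--Schwarz only bounds by $\cV^\ren+CN$. Likewise, $\partial_t\cL_0$ and $\partial_t(\cL_1+\cL_4^{(1)})$ contain constants and linear terms with the bare $V_N$ that do not match the $b(V)$ of the Gross--Pitaevskii equation. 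The paper handles these by pairing $\partial_t\cL_3^{(1)}$ with $\ui[\cK+\cV_N,\cQ^\ren]$, $\partial_t\cL_0$ with $\ui[\cL_3^{(1)},\cQ^\ren]$, and $\partial_t(\cL_1+\cL_4^{(1)})$ with $\ui[\cL_4^{(1)},\cQ^\ren]$ (Lemmas~\ref{lemma:generator_commutator_estimates_cubic_derivative}--\ref{lemma:generator_commutator_estimates_drift}), which converts $V_N$ into $V_Nf_N$.

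Your list of remaining pieces omits $\ui[\cH_N,\cQ^\ren]$ altogether, yet this commutator is the very reason $\cQ^\ren$ is in the Grönwall functional. Rewriting $\partial_t\cL_3^{(1)}$ through $\partial_t\cK^\ren$ and $\partial_t\cV^\ren$ instead does not work, since these are not controlled by $\cK^\ren$ and $\cV^\ren$. Finally, all the commutator bounds come out in terms of $\cK^\ren+\cV^\ren$. To close the inequality for $\cH_N+\cQ^\ren+C(\cN^\ren+N^{1/2})$ you need $\cK^\ren+\cV^\ren\leq2\cH_N+C(\cN^\ren+N^{1/2})$, which is again Lemma~\ref{lemma:generator_estimate}.
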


Before proving Theorem~\ref{th:main_result}, we also introduce the following result, which relates the renormalised quantities $\cN^\ren$ and $\cQ^\ren$ to $\cN_+$.

\begin{lemma}[Properties of $\mathcal N^\ren$ and $\mathcal Q^\ren$] \label{lemma:number_operator_renormalised}
	Assume the same hypotheses as in Proposition~\ref{prop:gronwall_bound}. Define $\cN^\ren$ and $\cQ^\ren$ as in \eqref{eq:renormalised_number_operator_def} and \eqref{eq:renormalised_drift_operator_def}, and suppose that the parameter $\lambda$ in \eqref{eq:cubic_kernel_def} is sufficiently small. Then there exists a constant $C > 0$ depending only on $V$ and $\Vert\varphi_0\Vert_{H^6}$ such that, for all $t\in\R$,
	\begin{equation}
		\label{eq:number_operator_renormalised_estimate}
		C^{-1}(\cN_+ + 1) \leq \cN^\ren \leq C(\cN_+ + 1) \quad \textmd{and} \quad \pm\cQ^\ren \leq C(\cN^\ren + 1)
	\end{equation}
	on $\cF_+^{\leq N}(t)$.
\end{lemma}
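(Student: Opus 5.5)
The plan is to reduce both bounds in \eqref{eq:number_operator_renormalised_estimate} to one operator estimate for the cubic term
\begin{equation*}
	\cC_t(k) = \int_{\R^9}k(x,y,z)\,b_x^*b_y^*b_z^*\,\Theta_N + \hc,
\end{equation*}
taken with kernel $k = k_t$ for $\cN^\ren$ and $k = \partial_t k_t$ for $\cQ^\ren$. Since $\cN^\ren = \cN_+ + \frac12\cC_t(k_t)$, the two-sided bound $C^{-1}(\cN_+ + 1)\le \cN^\ren \le C(\cN_+ + 1)$ follows once I show
\begin{equation*}
	\pm\,\cC_t(k_t) \le \delta\,\cN_+ + C_\delta
\end{equation*}
on $\cF_+^{\leq N}(t)$, with $\delta<2$ (say $\delta = 1$) for $\lambda$ small. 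The second bound, $\pm\cQ^\ren \le C(\cN^\ren + 1)$, then follows from $\pm\cC_t(\partial_t k_t)\le C(\cN_+ + 1)$ together with the lower bound just obtained.

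The estimate on $\cC_t$ has three ingredients.

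\emph{Kernel bounds.} I first establish bounds on the kernels, using the properties of $\omega_{\lambda,N}$ from Section~\ref{subsec:scattering_energy}: $0\le\omega_{\lambda,N}\le 1$, $\omega_{\lambda,N}(\bx)\lesssim N^{-2}|\bx|^{-4}$, and support in $|\bx|\le\lambda$. With these I bound the partial $L^2$ norms of $k_t$, including $\sup_x\Vert k_t(x,\cdot,\cdot)\Vert_{L^2}$, the norm $\Vert k_t\Vert_{L^2(\R^9)}$, and the slices $\Vert k_t(x,y,\cdot)\Vert_{L^2}\lesssim N^{-1/2}|x-y|^{-1/2}|\varphi_t(x)\varphi_t(y)|$. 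The prefactors are $\Vert\varphi_t\Vert_\infty$ and $\Vert\varphi_t\Vert_{H^2}$, which are uniformly bounded in $t$ by the regularity results in Appendix~\ref{sec:prop_gp_equation} (this is where $\varphi_0\in H^6$ enters). For $\partial_t k_t$, the time derivative falls only on the factors $\varphi_t$, and $\partial_t\varphi_t$ is again controlled in $L^\infty\cap H^2$ by the GP equation and the $H^6$ bound, so the same estimates hold with a different constant.

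\emph{Operator bound.} I then estimate the expectation sector by sector. Writing $\cC_t(k)$ as $\int \dd x\, b_x^* B_x^*\Theta_N + \hc$ with $B_x^* = \int k(x,y,z)\, b_y^*b_z^*$, I apply Cauchy--Schwarz in $x$, use $\Theta_N\le1$, and use that $\Theta_N$ vanishes unless $\cN_+\le N-3$.

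\emph{Main obstacle.} A naive estimate gives $\Vert k\Vert_2\,\Vert(\cN_++1)^{1/2}\xi\Vert\,\Vert\cN_+\xi\Vert$, which is superlinear in $\cN_+$. To close at linear order I will split $\omega_{\lambda,N}$ at scale $|\bx|\sim\ell N^{-1/2}$. For the long-range tail, the slice bound above combined with the Hardy-type integrability $\int|\varphi_t(y)|^2|x-y|^{-1}\,\dd y<\infty$ should give a contribution that is small as $\lambda\to0$. For the short-range core, which has $L^2$ norm $O(1)$ but tiny support, I will pair each $b^*$ with the condensate factors $\varphi_t$ so that only one effective annihilation acts on $\xi$. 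I expect the lower bound $\cN^\ren\ge C^{-1}(\cN_++1)$ to be the delicate point. On the vacuum the cubic term has vanishing expectation, so an additive constant must come from how $\cN^\ren$ is used (with an explicit $+1$, as in \eqref{eq:number_excitations_bound}). I will check that the Grönwall argument only needs $\cN^\ren + 1\ge C^{-1}(\cN_++1)$, and prove the statement in that form if necessary.
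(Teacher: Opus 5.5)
Your reduction is fine, and your first step is the paper's. Cauchy--Schwarz in $x$ with $0\le\Theta_N\le1$ gives $\pm\mathcal{C}_t(k)\le\varepsilon\cN_++\varepsilon^{-1}\int\mathrm{d}x\,B_xB_x^*$, and the time derivative in $\cQ^\ren$ only falls on the factors $\varphi_t$. Your remark on the vacuum is also correct and exposes an imprecision in the statement. One has $\langle\Omega,\cN^\ren\Omega\rangle=0$, and on $\alpha\Omega+\beta\psi_3$ with $\psi_3\propto Q_t^{\otimes3}k_t$ the expectation of $\cN^\ren$ is even negative, of order $-\Vert k_t\Vert_2^2$ (order one, not small). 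What the argument yields, and what Proposition~\ref{prop:gronwall_bound} actually uses, is $\cN^\ren\ge\frac12\cN_+-C$, i.e.\ $\cN^\ren+C\ge C^{-1}(\cN_++1)$. So in general the additive constant must be some $C$ rather than $1$.

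The genuine gap is the step you flag as the main obstacle. The core/tail split does not address it, and no split is needed.

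\emph{What closes the estimate.} Normal-ordering $\int\mathrm{d}x\,B_xB_x^*$ gives three pieces:
\begin{itemize}
\item a constant bounded by $2\Vert k_t\Vert_2^2\le C$;
\item a quadratic term $\int\overline{k_t(x,y,z)}k_t(x,y,z')b_{z'}^*b_z$ (up to factors $Q_t$ of norm one);
\item a quartic term $\int\overline{k_t(x,y,z)}k_t(x,y',z')b_{y'}^*b_{z'}^*b_yb_z$.
\end{itemize}
The missing idea is to bound the last two by $\Vert T\Vert_\op^2\cN_+$ and $\Vert T\Vert_\op^2\cN_+^2$. Here $T:L^2(\R^6)\to L^2(\R^3)$ is $(Tf)(x)=\int k_t(x,y,z)f(y,z)\,\mathrm{d}y\,\mathrm{d}z$. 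Its Hilbert--Schmidt norm is of order one, but its operator norm is small: $\Vert T\Vert_\op\le C\lambda^{1/2}N^{-1/2}$ (Lemma~\ref{lemma:three_body_kernel_second_quantisation_estimates}). Together with $\cN_+\le N$, the quartic term is at most $C\lambda N^{-1}\cN_+^2\le C\lambda\cN_+$. Hence $\pm(\cN^\ren-\cN_+)\le\varepsilon\cN_++\varepsilon^{-1}C(1+\lambda\cN_+)$, which closes for $\lambda$ small.

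\emph{Why your ingredients do not get there.}
\begin{itemize}
\item Your slice bound is wrong. Since $\omega_{\lambda,N}(\bx)\lesssim N^{-2}|\bx|^{-4}$ on $\R^6$, one has $\Vert k_t(x,y,\cdot)\Vert_{L^2}\lesssim N^{-1/2}|x-y|^{-5/2}\mathds{1}(|x-y|\le\lambda)$, not $|x-y|^{-1/2}$.
\item With the correct exponent, your Hardy-type step (squaring and integrating in $y$) only returns $\sup_x\Vert k_t(x,\cdot,\cdot)\Vert_{L^2}=O(1)$. This is not small and leads back to an $O(1)\,\cN_+^2$ bound.
\item Smallness appears only if the slice norm is used linearly: Cauchy--Schwarz in $z$, then Young's inequality in $y$ with kernel $|u|^{-5/2}\mathds{1}(|u|\le\lambda)$. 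That kernel has $L^1$ norm of order $\lambda^{1/2}$, and this is exactly the proof of $\Vert T\Vert_\op\lesssim\lambda^{1/2}N^{-1/2}$.
\item ``Pairing each $b^*$ with the condensate factors'' is not an available mechanism. The $\varphi_t$'s in $k_t$ are just bounded weights, and every $b^*$ creates an excitation in $Q_t\mfH$.
\end{itemize}
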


\subsection{Proof of Theorem~\ref{th:main_result}}

Using Proposition~\ref{prop:gronwall_bound} and the Grönwall lemma, we obtain
\begin{equation*}
	\langle\cN_+\rangle_{\Phi_{N,t}} \leq ce^{c\vert t\vert}\langle\left(\cH_N + \cQ^\ren + C\cN^\ren\right)_{\vert t=0}\rangle_{\Phi_N(0)} + Ce^{c\vert t\vert}N^{1/2}.
\end{equation*}
On the one hand, thanks to Lemma~\ref{lemma:number_operator_renormalised} and the estimate
\begin{equation*}
	1 - \langle\varphi_0,\gamma_N^{(1)}\varphi_0\rangle \leq \varepsilon_{1,N},
\end{equation*}
we have
\begin{equation*}
	\langle\cN^\ren_{\vert t = 0}\rangle_{\Phi_N(0)} \leq C(1 + N\varepsilon_{1,N}) \quad \textmd{and} \quad \langle\cQ^\ren_{\vert t = 0}\rangle_{\Phi_N(0)} \leq C(1 + N\varepsilon_{1,N}).
\end{equation*}
On the other hand, using the identities
\begin{equation*}
	\langle(\cH_N)_{\vert t = 0}\rangle_{\Phi_N(0)} = \langle H_N\rangle_{\Psi_N(0)} + \langle((\ui\partial_t U_N)U_N^*)_{\vert t  = 0}\rangle_{\Phi_N(0)}
\end{equation*}
and
\begin{equation*}
	(\ui\partial_t U_N)U_N^* = a^*(\varphi_t)b(\ui\partial_t \varphi_t) - \big(\sqrt{N-\mathcal N_+} b(\ui\partial_t \varphi_t) + \hc\big) - \braket{\varphi_t,\ui\partial_t\varphi_t} (N-\mathcal N_+),
\end{equation*}
as well as \eqref{eq:number_excitation_op_rewritten_rdm} and $a(\varphi_t)\Phi_{N,t} = 0$, we deduce
\begin{multline*}
	\langle(\cH_N)_{\vert t = 0}\rangle_{\Phi_N(0)} = \langle H_N\rangle_{\Psi_N(0)} - N\langle\varphi_t,\ui\partial_t\varphi_t\rangle_{\vert t = 0} + N\langle\varphi_t,\ui\partial_t\varphi_t\rangle_{\vert t = 0}(1 - \langle\varphi_0,\gamma_N^{(1)}\varphi_0)\rangle\\
	-2N\Re\langle\varphi_0,\gamma_N^{(1)}Q_0(\ui\partial_t\varphi_t)_{\vert t = 0}\rangle.
\end{multline*}
As a result of \eqref{eq:gross_pitaevskii_drift_term}, the sum of the first two terms can be bounded by
\begin{equation*}
	\langle H_N\rangle_{\Psi_N(0)} - N\langle\varphi_t,\ui\partial_t\varphi_t\rangle_{\vert t = 0} = \langle H_N\rangle_{\Psi_N(0)} - \cE^\GP[\varphi_0] \leq N\varepsilon_{2,N},
\end{equation*}
and the third by
\begin{equation*}
	N\langle\varphi_t,\ui\partial_t\varphi_t\rangle_{\vert t = 0}(1 - \langle\varphi_0,\gamma_N^{(1)}\varphi_0\rangle) \leq CN\varepsilon_{1,N}.
\end{equation*}
Moreover, the fourth term satisfies
\begin{align*}
	\langle\varphi_0,\gamma_N^{(1)}Q_0(\ui\partial_t\varphi_t)_{\vert t = 0}\rangle &= \langle\varphi_0,\gamma_N^{(1)}(\ui\partial_t\varphi_t)_{\vert t = 0}\rangle - \langle\varphi_0,\gamma_N^{(1)}\varphi_0\rangle\langle\varphi_0,(\ui\partial_t\varphi_t)_{\vert t = 0}\rangle\\
	&= \langle\varphi_0,(\gamma_N^{(1)} - \vert\varphi_0\rangle\langle\varphi_0\vert)(\ui\partial_t\varphi_t)_{\vert t = 0}\rangle + (1 - \langle\varphi_0,\gamma_N^{(1)}\varphi_0\rangle)\langle\varphi_0,(\ui\partial_t\varphi_t)_{\vert t = 0}\rangle,
\end{align*}
which implies
\begin{equation*}
	\left\vert\langle\varphi_0,\gamma_N^{(1)}Q_0(\ui\partial_t\varphi_t)_{\vert t = 0}\rangle\right\vert \leq C\varepsilon_{1,N}.
\end{equation*}
Gathering the previous estimates yields
\begin{equation*}
	1 - \langle\varphi_t,\gamma_{N,t}^{(1)}\varphi_t\rangle \leq Ce^{c|t|}(\varepsilon_{1,N} + \varepsilon_{2,N} + N^{-1/2}).
\end{equation*}
This concludes the proof of Theorem~\ref{th:main_result}.

\section{Analysis of fluctuations}
\label{sec:ana_fluct}

This section is devoted to the proof of Proposition~\ref{prop:gronwall_bound} and Lemma~\ref{lemma:number_operator_renormalised}.

\subsection{Three-body scattering solution}

\label{subsec:scattering_energy}

Consider the three-body scattering hypervolume $b(V)$ defined in \eqref{eq:three_body_scattering_hypervolume}. It is convenient to rewrite the zero-energy scattering equation \eqref{eq:zero_energy_scattering_equation} in terms of $\omega = 1 - f$. Define the modified Laplacian
\begin{equation}
	\label{eq:modified_laplacian_def}
	-\Delta_\mathcal{M} = |\mathcal{M}\nabla_{\R^6}|^2 = -\diver\left(\mathcal{M}^2\nabla_{\R^6}\right),
\end{equation}
where the matrix $\cM:\R^3\times\R^3\rightarrow\R^3\times\R^3$ s given by
\begin{equation}
	\label{eq:modified_laplacian_matrix}
	\cM =
	\left(
	\dfrac{1}{2}
	\begin{pmatrix}
		2 & 1\\
		1 & 2
	\end{pmatrix}
	\right)^{1/2}
	=
	\dfrac{1}{2\sqrt{2}}
	\begin{pmatrix}
		\sqrt{3} + 1 & \sqrt{3} - 1\\
		\sqrt{3} - 1 & \sqrt{3} + 1
	\end{pmatrix}.
\end{equation}
Then, by changing the absolute coordinates $(x,y,z)$ to the relative coordinates $(x-y,x-z,(x+y+z)/3)$ and getting rid of the centre of mass $(x+y+z)/3$, we find that $\omega$ solves
\begin{equation}
	\label{eq:three_body_scattering_equation}
	-2\Delta_\cM\omega + V(\omega - 1) = 0
\end{equation}
on $\R^6$.

It was shown in \cite[Theorem 8]{NamRicTri23} that the function $\omega$ is unique, satisfies the three-body symmetry \eqref{eq:three_body_symmetry_properties}, and obeys the pointwise estimates
\begin{equation}
	\label{eq:three_body_scattering_solution_pointwise_estimate}
	0 \leq \omega(\bx)\leq 1, \quad \omega(\bx) \leq \dfrac{C}{1 + |\bx|^4} \quad \textmd{and} \quad |\nabla\omega(\bx)|\leq \dfrac{C}{1 + |\bx|^5},
\end{equation}
for all $\bx\in\R^6$, and for some constant $C$ that depends only on $V$.

For technical reasons, we need to truncate the solution $\omega$ of \eqref{eq:three_body_scattering_equation}. Let $\tilde{\chi}$ be a smooth function on $\R^3$ that satisfies $0 \leq \tilde{\chi} \leq 1$ with $\tilde{\chi}(x) = 1$ if $\vert x\vert \leq 1/4$ and $\tilde{\chi}(x)$ if $\vert x\vert \geq 1$. Define
\begin{equation*}
	\chi(x,y) = (\tilde{\chi}(x)\tilde{\chi}(y) + \tilde{\chi}(-x)\tilde{\chi}(y - x) + \tilde{\chi}(-y)\tilde{\chi}(x - y))/3.
\end{equation*}
Then, $\chi$ satisfies the three-body symmetry properties \eqref{eq:three_body_symmetry_properties} and is such that $0 \leq \chi\leq 1$ with $\chi(\bx) = 1$ if $\vert \bx\vert \leq 1/2$ and $\chi(\bx) = 0$ if $\vert \bx\vert \geq 1$. Define also
\begin{equation}
	\label{eq:scattering_solution_def}
	\begin{array}{c}
		\omega_N = \omega(N^{1/2}\cdot), \quad f_N = 1 - \omega_N, \quad \chi_\lambda = \chi(\lambda^{-1}\cdot),\\
		\omega_{\lambda,N} = \chi_\lambda\omega_N \quad \textmd{and} \quad f_{\lambda,N} = 1 - \omega_{\lambda,N}.
	\end{array}
\end{equation}
Then, for $\lambda \geq 2RN^{-1/2}$, the truncated scattering solution $\omega_{\lambda,N}$ solves, for almost all $x,y,z\in\R^3$, the \textit{modified zero-energy scattering equation}
\begin{equation}
	\label{eq:truncated_scattering_equation_no_modified_laplacian}
	\left(-\Delta_x - \Delta_y - \Delta_z\right)\omega_{\lambda,N}(x,y,z) = (V_Nf_N)(x,y,z) - N^{-2}\varepsilon_\lambda(x,y,z),
\end{equation}
with $\varepsilon_\lambda$ given by
\begin{equation}
	\label{eq:truncated_scattering_equation_error}
	\varepsilon_\lambda = 4N^{2}\mathcal{M}\nabla\omega_N\cdot\mathcal{M}\nabla\chi_\lambda + 2N^{2} \omega_N\Delta_\mathcal{M}\chi_\lambda.
\end{equation}
Lastly, again for technical reasons, we define
\begin{equation}
	\label{eq:truncated_scattering_equation_error_green_function}
	u_\lambda = (-2\Delta_\cM)^{-1}\varepsilon_\lambda.
\end{equation}
By definition of the modified Laplacian $-\Delta_\cM$, the potential $u_\lambda$ solves, for almost all $x,y,z\in\R^3$, the equation
\begin{equation}
	\label{eq:truncated_scattering_equation_error_green_function_solves}
	(-\Delta_x -\Delta_y - \Delta_z)u_\lambda(x,y,z) = \varepsilon_\lambda(x,y,z).
\end{equation}

\begin{lemma}[Properties of $\omega_{\lambda,N}, \varepsilon_\lambda$ and $u_\lambda$]
	\label{lemma:truncated_three_body_scattering_solution}
	Let $V$ satisfy Assumption~\ref{ass:potential}. Define $\omega_{\lambda,N},\varepsilon_\lambda$ and $u_\lambda$ as in \eqref{eq:scattering_solution_def}, \eqref{eq:truncated_scattering_equation_error} and \eqref{eq:truncated_scattering_equation_error_green_function}. Then, there exists a constant $C$ depending only on $V$ such that, for all $\bx\in\R^6$,
	\begin{equation}
		\label{eq:truncated_scattering_equation_pointwise_estimate}
		0\leq \omega_{\lambda,N}(\bx) \leq C\dfrac{\mathds{1}(\vert\bx\vert \leq \lambda)}{1 + |N^{1/2}\bx|^4}, \quad |\nabla\omega_{\lambda,N}(\bx)| \leq CN^{1/2}\dfrac{\mathds{1}(|\bx| \leq \lambda)}{1 + |N^{1/2}\bx|^5},
	\end{equation}
	\begin{equation}
		\label{eq:truncated_scattering_equation_error_pointwise_estimate}
		|\varepsilon_\lambda(\bx)| \leq C\lambda^{-2}\dfrac{\mathds{1}(\lambda/2\leq|\bx|\leq\lambda)}{|\bx|^4},
	\end{equation}
	\begin{equation}
		\label{eq:truncated_scattering_equation_error_green_function_pointwise_estimate}
		\vert u_\lambda(\bx)\vert \leq \dfrac{C}{\lambda^4 + \vert\bx\vert^4} \quad \textmd{and} \quad \vert\nabla u_\lambda(\bx)\vert \leq \dfrac{C}{\lambda^5 + \vert\bx\vert^5}.
	\end{equation}
	Consequently, the following estimates hold:
	\begin{equation}
		\label{eq:truncated_scattering_equation_Lp_estimates}
		\begin{array}{c}
			\Vert\omega_{\lambda,N}\Vert_{L^1} \leq C\lambda^2N^{-2}, \quad \Vert\omega_{\lambda,N}\Vert_{L^2} \leq CN^{-3/2},\\
			\Vert\nabla\omega_{\lambda,N}\Vert_{L^1} \leq C\lambda N^{-2}, \quad \Vert\nabla\omega_{\lambda,N}\Vert_{L^2} \leq CN^{-1},
		\end{array}
	\end{equation}
	\begin{equation}
		\label{eq:truncated_scattering_equation_LpLinf_estimates}
		\begin{array}{c}
			\Vert\omega_{\lambda,N}\Vert_{L^1L^\infty} \leq CN^{-3/2}, \quad \Vert\omega_{\lambda,N}\Vert_{L^{3/2}L^\infty} \leq CN^{-1},\\
			\Vert\omega_{\lambda,N}\Vert_{L^2L^\infty} \leq CN^{-3/4}, \quad \Vert\nabla\omega_{\lambda,N}\Vert_{L^1L^\ii} \leq CN^{-1},
		\end{array}
	\end{equation}
	\begin{equation}
		\label{eq:truncated_scattering_equation_error_Lp_estimates}
		\Vert\varepsilon_\lambda\Vert_{L^2} \leq C\lambda^{-3} \quad \textmd{and} \quad \Vert \nabla u_\lambda\Vert_{L^2} \leq C\lambda^{-2},
	\end{equation}
\end{lemma}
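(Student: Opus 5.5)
The pointwise bounds on $\omega$ and $\nabla\omega$ in \eqref{eq:three_body_scattering_solution_pointwise_estimate} (from \cite[Theorem 8]{NamRicTri23}) carry most of the weight; everything else is scaling and integration in $\R^6$.

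\textbf{Pointwise bounds.} Since $\omega_{\lambda,N}=\chi_\lambda\,\omega(N^{1/2}\cdot)$ with $0\le\chi_\lambda\le 1$ supported in $\{|\bx|\le\lambda\}$, the bound on $\omega$ in \eqref{eq:truncated_scattering_equation_pointwise_estimate} follows directly from \eqref{eq:three_body_scattering_solution_pointwise_estimate}. For the gradient we write
\begin{equation*}
\nabla\omega_{\lambda,N}=\chi_\lambda N^{1/2}(\nabla\omega)(N^{1/2}\cdot)+\omega_N\nabla\chi_\lambda .
\end{equation*}
The first term has the claimed form. The second is supported in $\lambda/2\le|\bx|\le\lambda$, where it is bounded by $C\lambda^{-1}(N^{1/2}|\bx|)^{-4}$. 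On that annulus, $\lambda^{-1}\le C|\bx|^{-1}$, so this is at most $CN^{1/2}(N^{1/2}|\bx|)^{-5}$. Since $N^{1/2}|\bx|\ge N^{1/2}\lambda/2\ge R\ge c$ there, this is in turn controlled by $CN^{1/2}/(1+|N^{1/2}\bx|^5)$. The same reasoning handles $\varepsilon_\lambda$. Both summands in \eqref{eq:truncated_scattering_equation_error} live on the annulus, because $\nabla\chi_\lambda$ and $\Delta_\cM\chi_\lambda$ vanish where $\chi_\lambda\equiv1$ or $0$. The first is bounded by $N^2\cdot N^{1/2}(N^{1/2}|\bx|)^{-5}\lambda^{-1}\le C|\bx|^{-4}\lambda^{-2}$, and the second by $N^2(N^{1/2}|\bx|)^{-4}\lambda^{-2}=|\bx|^{-4}\lambda^{-2}$. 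This gives \eqref{eq:truncated_scattering_equation_error_pointwise_estimate}.

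\textbf{Bounds on $u_\lambda$.} This is the step I expect to be the main obstacle. The operator $-2\Delta_\cM$ is a linear change of variables of the Laplacian on $\R^6$, so its Green's function is comparable to $|\bx-\by|^{-4}$, and its gradient to $|\bx-\by|^{-5}$. By scaling, $\varepsilon_\lambda(\bx)=\lambda^{-6}g(\bx/\lambda)$ with $|g(\mathbf{y})|\le C\,\mathds{1}(1/2\le|\mathbf{y}|\le1)$. Then $u_\lambda(\bx)=\lambda^{-4}U(\bx/\lambda)$ with $U=(-2\Delta_\cM)^{-1}g$, and $\nabla u_\lambda(\bx)=\lambda^{-5}(\nabla U)(\bx/\lambda)$.

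It therefore suffices to prove $|U(\mathbf{y})|\le C(1+|\mathbf{y}|)^{-4}$ and $|\nabla U(\mathbf{y})|\le C(1+|\mathbf{y}|)^{-5}$ for a bounded, compactly supported $g$. Boundedness of $U$ and $\nabla U$ follows because $|\mathbf{y}|^{-4}$ and $|\mathbf{y}|^{-5}$ are locally integrable in $\R^6$. The decay for $|\mathbf{y}|\ge2$ follows from the Green's function kernel bounds, since $g$ is supported in the unit ball. Rescaling back yields \eqref{eq:truncated_scattering_equation_error_green_function_pointwise_estimate}. No cancellation such as $\int\varepsilon_\lambda=0$ is needed.

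\textbf{Integral bounds.} These are integrations in polar coordinates in $\R^6$ or $\R^3$, where $\dd\bx=r^5\,\dd r$ or $r^2\,\dd r$ respectively.
\begin{itemize}
\item $\Vert\omega_{\lambda,N}\Vert_{L^1}\le\int_{r\le\lambda}r^5\,\dd r\,(1+N^2r^4)^{-1}\lesssim N^{-2}\lambda^2$.
\item $\Vert\omega_{\lambda,N}\Vert_{L^2}^2\lesssim N^{-3}\int_0^\infty s^5(1+s^4)^{-2}\,\dd s$, which gives $N^{-3/2}$.
\item $\Vert\nabla\omega_{\lambda,N}\Vert_{L^1}\lesssim N^{1/2}\int_{r\le\lambda}r^5(1+N^{5/2}r^5)^{-1}\,\dd r\lesssim\lambda N^{-2}$.
\item $\Vert\nabla\omega_{\lambda,N}\Vert_{L^2}^2\lesssim N\cdot N^{-3}\int s^5(1+s^5)^{-2}\,\dd s$, which gives $N^{-1}$.
\end{itemize}
For the mixed $L^pL^\infty$ norms we use
\begin{equation*}
\sup_y\,(1+N^2|(x,y)|^4)^{-1}\le(1+N^2|x|^4)^{-1},
\end{equation*}
and then integrate over $x\in\R^3$. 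This gives $\int(1+N^2|x|^4)^{-1}\,\dd x=CN^{-3/2}$. In the same way, the $L^{3/2}$ and $L^2$ norms give $N^{-1}$ and $N^{-3/4}$ respectively, since $4\cdot\tfrac32=6>3$ and $8>3$ make the integrals converge after scaling. The gradient bound $\int N^{1/2}(1+N^{5/2}|x|^5)^{-1}\,\dd x=CN^{-1}$ works the same way.

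Finally, the $L^2$ bounds on $\varepsilon_\lambda$ and $\nabla u_\lambda$. We have $\Vert\varepsilon_\lambda\Vert_2^2\lesssim\lambda^{-4}\int_{\lambda/2}^{\lambda}r^{-8}r^5\,\dd r\sim\lambda^{-6}$, so $\Vert\varepsilon_\lambda\Vert_2\lesssim\lambda^{-3}$. Similarly, $\Vert\nabla u_\lambda\Vert_2^2\lesssim\int(\lambda^5+r^5)^{-2}r^5\,\dd r\sim\lambda^{-4}$, so $\Vert\nabla u_\lambda\Vert_2\lesssim\lambda^{-2}$, which completes \eqref{eq:truncated_scattering_equation_error_Lp_estimates}.
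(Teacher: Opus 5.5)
Your proposal is correct and follows the paper's own argument. The pointwise bounds come from the product rule together with the decay of $\omega$ and $\nabla\omega$ from \cite{NamRicTri23}; you rightly make explicit the implicit requirement $\lambda \geq 2RN^{-1/2}$ that the gradient bound needs on the annulus. The bounds on $u_\lambda$ come from the Green's function $c\,|\mathcal{M}^{-1}(\bx-\by)|^{-4}$ of $-2\Delta_{\mathcal{M}}$; your rescaling to $\lambda=1$ spells out what the paper leaves as ``follows from''. The norm estimates are direct integrations.

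One remark on notation. The paper defines $\Vert f\Vert_{L^pL^\infty}=\sup_x\Vert f(x,\cdot)\Vert_{L^p}$, whereas you compute $\Vert\sup_y|f(\cdot,y)|\Vert_{L^p}$. Since your pointwise bound is symmetric in the two variables, your quantity dominates the paper's, so your estimates imply the stated ones.
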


\begin{proof}[Proof of Lemma~\ref{lemma:truncated_three_body_scattering_solution}]
	\sloppy That $\omega_{\lambda,N}$ solves \eqref{eq:truncated_scattering_equation_no_modified_laplacian} follows directly from the zero-energy scattering equation \eqref{eq:three_body_scattering_equation}, the definition of the modified Laplacian $-\Delta_\cM$ \eqref{eq:modified_laplacian_def}, and the fact that $\chi_\lambda = 1$ on $\Supp V(N^{1/2}\cdot)$. Moreover, the pointwise estimates \eqref{eq:truncated_scattering_equation_pointwise_estimate} result directly from $\Supp\chi_\lambda\subset B(0,\lambda)$, $|\nabla\chi_\lambda|\leq C\lambda^{-1}\mathds{1}(\lambda/2\leq |\bx|\leq\lambda)$ and \eqref{eq:three_body_scattering_solution_pointwise_estimate}. Similarly, the estimate \eqref{eq:truncated_scattering_equation_error_pointwise_estimate} results from the bounds $|\nabla\chi_\lambda|\leq C\lambda^{-1}\mathds{1}(\lambda/2\leq |\bx|\leq\lambda)$, $|\Delta\chi_\lambda|\leq C\lambda^{-2}\mathds{1}(\lambda/2\leq |\bx|\leq\lambda)$ and \eqref{eq:three_body_scattering_solution_pointwise_estimate}.
	
	We now prove the estimates \eqref{eq:truncated_scattering_equation_Lp_estimates}~and~\eqref{eq:truncated_scattering_equation_LpLinf_estimates}. The pointwise bound \eqref{eq:truncated_scattering_equation_pointwise_estimate} implies
	\begin{equation*}
		\Vert\omega_{\lambda,N}\Vert_{L^1} \leq CN^{-2}\int_{\R^6}\dfrac{\mathds{1}(\bx|\leq\lambda)}{|\bx|^4}\d{}\bx \leq C\lambda^2N^{-2},
	\end{equation*}
	which is precisely the first bound in \eqref{eq:truncated_scattering_equation_Lp_estimates}. The other three estimates in \eqref{eq:truncated_scattering_equation_Lp_estimates} follow similarly using additionally \eqref{eq:truncated_scattering_equation_error_pointwise_estimate}. Furthermore, the bound \eqref{eq:truncated_scattering_equation_pointwise_estimate} implies
	\begin{equation*}
		\int_{\R^3}\d{}y\omega_{\lambda,N}(x,y) \leq C\int_{\R^3}\dfrac{\d{}y}{1 + \vert N^{1/2}y\vert^4} \leq CN^{-3/2},
	\end{equation*}
	for all $x \in \R^3$, which yields the first bound in \eqref{eq:truncated_scattering_equation_LpLinf_estimates}. The remaining two bounds are proven analogously.
	
	It follows from the analysis of \cite[Section 2.2.]{NamRicTri23} that $u_\lambda$ can be written explicitly as
	\begin{equation*}
		u_\lambda(\bx) = \dfrac{\det\cM^{-1}}{8\vert\mathbb{S}^5\vert}\int_{\R^6}\d{}\by\dfrac{\varepsilon_\lambda(\by)}{\vert\cM^{-1}(\bx - \by)\vert^4},
	\end{equation*}
	where $\vert\mathbb{S}^5\vert$ denotes the surface area of the unit sphere in dimension $6$. Thus, $\nabla u_\lambda$ is given by
	\begin{equation*}
		\nabla u_\lambda(\bx) = -\dfrac{\det\cM^{-1}}{2\vert\mathbb{S}^5\vert}\int_{\R^6}\d{}\by\dfrac{\varepsilon_\lambda(\by)\cM^{-1}(\bx - \by)}{\vert\cM^{-1}(\bx - \by)\vert^6}.
	\end{equation*}
	The estimates \eqref{eq:truncated_scattering_equation_error_green_function_pointwise_estimate} follow from \eqref{eq:truncated_scattering_equation_error_pointwise_estimate}.
\end{proof}

\subsection{Analysis of $\cN^\ren$ and $\cQ^\ren$: proof of Lemma~\ref{lemma:number_operator_renormalised}}

By definition of $\cN^\ren$, we can write
\begin{equation*}
	\cN^\ren - \cN_+ = \dfrac{1}{2}\int_{\R^3}b_x^*\Big(\int_{\R^6}k_t(x,y,z)b_y^*b_z^*\Big)\Theta_N + \hc
\end{equation*}
Then, we use the Cauchy--Schwarz inequality, the estimate $0 \leq \Theta_N \leq 1$, the fact that $\Vert Q_t\Vert_\op =1$ and the relations \eqref{eq:creation_annihilation_op_modified_ccr} and \eqref{eq:number_excitation_op_int_form}, to deduce
\begin{align*}
	\pm(\cN^\ren - \cN_+) &\leq \varepsilon C\int_{\R^3}b_x^*b_x + \varepsilon^{-1}C\int_{\R^3}\Big(\int_{\R^6}\overline{k_t(x,y,z)}b_yb_z\Big)\Big(\int_{\R^6}k_t(x,y',z')b_{y'}^*b_{z'}^*\Big)\\
	&\leq
	\begin{multlined}[t]
		\varepsilon C\cN_+  + \varepsilon^{-1}C\Vert k_t\Vert_{L^2(\R^9)} + \varepsilon^{-1}C\int_{\R^{12}}\overline{k_t(x,y,z)}k_t(x,y,z')b_{z'}^*b_z\\
		+ \varepsilon^{-1}C\int_{\R^{15}}\overline{k_t(x,y,z)}k_t(x,y',z')b_{y'}^*b_{z'}^*b_yb_z,
	\end{multlined}
\end{align*}
for all $\varepsilon > 0$. Thanks to Hölder's inequality, and the estimates \eqref{eq:gross_pitaevskii_solution_estimate_H1_L6} and \eqref{eq:truncated_scattering_equation_Lp_estimates}, the second term can be bounded by
\begin{equation}
	\label{eq:cubic_kernel_estimate_L2}
	\Vert k_t\Vert_{L^2} \leq N^{3/2}\Vert \omega_{\lambda,N}\Vert_{L^2}\Vert\varphi_t\Vert_{L^6}^6 \leq C.
\end{equation}
Using Lemma~\ref{lemma:three_body_kernel_second_quantisation_estimates} to bound the remaining two terms, we obtain
\begin{equation*}
	\pm(\cN^\ren - \cN_+) \leq \varepsilon \cN_+ + \varepsilon^{-1}C(1 + \lambda\cN_+),
\end{equation*}
for all $\varepsilon > 0$. For $\lambda$ small enough, this yields the estimate on $\cN^\ren$ in \eqref{eq:number_operator_renormalised_estimate}. The bound for $\cQ^\ren$ follows similarly. This concludes the proof of Lemma~\ref{lemma:number_operator_renormalised}.

\subsection{Analysis of the generator of the fluctuations}

Conjugating \eqref{eq:hamiltonian_main_contribution_extracted} by $e^{K_t}(\cdot)e^{-K_t}$ and using \eqref{eq:gross_pitaevskii_drift_term}, we find
\begin{equation*}
	\cH_N \approx e^{K_t}\cK e^{-K_t} + e^{K_t}\cV_N e^{-K_t}.
\end{equation*}
This roughly says that $\cH_N$ is equal to the second quantisation of the Hamiltonian $H_N$, from which we removed the leading-order contributions. Just as for $\cN^\ren$ and $\cQ^\ren$, we use ideas from \cite{BrenneckKro24,BrenneckBroCarOld24,Brooks24} to extract the relevant terms from $e^{K_t}\cK e^{-K_t}$ and $e^{K_t}\cV_N e^{-K_t}$ in order to avoid using operator exponential expansions. Namely, we write
\begin{equation*}
	e^{K_t}\cK e^{-K_t} = \int_{\R^3}\d{}xe^{K_t}b_x^*e^{-K_t}(-\Delta_x)e^{K_t}b_xe^{-K_t},
\end{equation*}
and approximate $e^{K_t}b_xe^{-K_t}$ by
\begin{equation*}
	e^{K_t}b_xe^{-K_t} \approx b_x + [K_t,b_x] \approx b_x + \dfrac{1}{2}\int_{\R^6}\d{}y\d{}zk_t(x,y,z)b_y^*b_z^*,
\end{equation*}
and likewise for $b_x^*$. For $e^{K_t}\cV_Ne^{-K_t}$, we approximate $e^{K_t}b_xb_yb_ze^{-K_t}$ by
\begin{equation*}
	e^{K_t}b_xb_yb_ze^{-K_t} \approx b_xb_yb_z + [K_t,b_xb_yb_z] \approx b_xb_yb_z + k_t(x,y,z).
\end{equation*}
Rigorously, this corresponds to defining the renormalised annihilation operators
\begin{equation}
	\label{eq:renormalised_annihilation_op_one}
	c_x = b_x + \dfrac{1}{2}\int_{\R^6}\d{}y\d{}zk_t(x,y,z)b_y^*b_z^*
\end{equation}
and
\begin{equation}
	\label{eq:renormalised_annihilation_op_three}
	d_{xyz} = b_xb_yb_z + k_t(x,y,z).
\end{equation}
These are associated to the renormalised operators
\begin{equation}
	\label{eq:renormalised_hamiltonian_def}
	\cK^\ren = \int_{\R^3}c_x^*(-\Delta_x)c_x \quad \textmd{and} \quad \cV^\ren = \dfrac{1}{6}\int_{\R^9}V_N(x,y,z)d_{xyz}^*d_{xyz}.
\end{equation}
Notice that $\cK^\ren \geq 0$ and $\cV^\ren \geq 0$.
\begin{lemma}
	\label{lemma:generator_estimate}
	Assume the same hypotheses as in Proposition~\ref{prop:gronwall_bound}. Define $\cH_N$ as in \eqref{eq:generator_fluctuations}, and $\cK^\ren$ and $\cV^\ren$ as in \eqref{eq:renormalised_hamiltonian_def}. Then
	\begin{equation}
		\label{eq:generator_estimate}
		\mathcal H_N = \mathcal K^\ren + \mathcal V^\ren + \mathcal E,
	\end{equation}
	with $\mathcal E$ satisfying
	\begin{equation}
		\label{eq:generator_estimate_error}
		\pm \mathcal E \leq \varepsilon \mathcal{K}^\ren + \varepsilon\mathcal{V}^\ren + C_{\varepsilon}\mathcal{N}^\ren + C_\varepsilon N^{1/2},
	\end{equation}
	for all $\varepsilon>0$. The constant $C_\varepsilon$ depends only on $\varepsilon, V$ and $\Vert\varphi_0\Vert_{H^6}$.
\end{lemma}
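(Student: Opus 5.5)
We will prove the decomposition by computing $\cH_N$ explicitly in second quantised form, expanding $\cK^\ren$ and $\cV^\ren$, and identifying all differences as error terms.

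\textbf{Step 1: explicit form of $\cH_N$.} Using \eqref{eq:excitation_map_action_creation_annihilation_op} and the formula for $(\ui\partial_tU_N)U_N^*$ from the proof of Theorem~\ref{th:main_result}, we write $\cH_N=\sum_{j=0}^{6}\cL_j$. Here $\cL_j$ collects the terms with $j$ excitation operators $b^\#$, with coefficients that are functions of $\cN_+$: the square roots $\sqrt{N-\cN_+-\ell}$ and their products. Concretely:
\begin{itemize}[label={}]
\item $\cL_0=\tfrac{N^{2}}{6}\langle\varphi_t^{\otimes3},V_N\varphi_t^{\otimes3}\rangle+N\|\nabla\varphi_t\|^2-N\langle\varphi_t,\ui\partial_t\varphi_t\rangle+\dots$;
\item $\cL_1$ is linear in $b^\#$ and mixes the GP equation with the interaction;
\item $\cL_2$ consists of $\cK$ plus quadratic terms with kernels $N^2 (V_N*|\varphi_t|^4)$-type, both $b^*b$ and $b^*b^*$;
\item $\cL_3$ contains the dominant cubic term $\tfrac16\int N^{3/2}V_N\varphi_t^{\otimes3}b^*_xb^*_yb^*_z\Theta_N+\hc$ and also $b^*b^*b$ terms;
\item $\cL_4,\cL_5$ come from two, respectively one, condensate factors;
\item $\cL_6=\cV_N$.
\end{itemize}

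\textbf{Step 2: expand the renormalised operators.} Expanding $\cV^\ren$ gives
\[
\cV^\ren=\cV_N+\tfrac16\int V_Nk_tb^*b^*b^*+\hc+\tfrac16\int V_N|k_t|^2 .
\]
Expanding $\cK^\ren$ gives $\cK$, the cross term $\tfrac12\int(-\Delta_x k_t)b^*b^*b^*+\hc$, and a term quartic in $b^\#$ that we bound separately. By the scattering equation \eqref{eq:truncated_scattering_equation_no_modified_laplacian}, the combination
\[
-\tfrac12(\Delta_x+\Delta_y+\Delta_z)k_t+\tfrac16V_Nk_t
\]
reproduces, up to terms involving derivatives of $\varphi_t$ and $\varepsilon_\lambda$, exactly $\tfrac16N^{3/2}V_N\varphi_t^{\otimes3}$. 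The $\omega$ part cancels against $V_Nf_N$. Hence the dominant cubic term in $\cL_3$ is absorbed into $\cK^\ren+\cV^\ren$. Likewise the constant $\tfrac16\int V_N|k_t|^2$ together with $\cL_0$ turns $\int V$ into $b(V)$, and \eqref{eq:gross_pitaevskii_drift_term} makes the remaining constant vanish up to $O(N^{1/2})$. Here we also use $N^2\int V_N f_N=b(V)$ and the $L^p$ bounds of Lemma~\ref{lemma:truncated_three_body_scattering_solution} with $\lambda=2RN^{-1/3}$.

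\textbf{Step 3: bound the remainder $\cE$.} We estimate each leftover term by Cauchy--Schwarz in the form
\[
\pm(A^*B+\hc)\le\varepsilon A^*A+\varepsilon^{-1}B^*B .
\]
Linear and quadratic terms are bounded by $C(\cN_++1)$ using the regularity of $\varphi_t$ (Appendix~\ref{sec:prop_gp_equation}). The $\varepsilon_\lambda$ cubic term is controlled via $u_\lambda$ and integration by parts against $\cK$, using \eqref{eq:truncated_scattering_equation_error_Lp_estimates}. Errors from replacing $\sqrt{N-\cN_+}$ by $\Theta_N$ cost $\cN_+/N$ factors. Quartic $b^*b^*bb$ terms from $\cL_4$, and the quartic piece of $\cK^\ren$, contain effective two-body potentials of the form $N^{1/2}W(N^{1/2}(x-y))$ with $W\approx\int V(\cdot,z)\,\dd z$. $\cL_5$ is a $b^*b^*b^*bb$-type term bounded by $\varepsilon\cV_N+C_\varepsilon\cN_+$ directly.

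Finally we convert $\cK,\cV_N$ back to $\cK^\ren,\cV^\ren$. The differences are again controlled by $\varepsilon(\cK^\ren+\cV^\ren)+C(\cN_++1)$ via Lemma~\ref{lemma:three_body_kernel_second_quantisation_estimates} and Appendix~\ref{app:op_bounds}. We then pass from $\cN_+$ to $\cN^\ren$ by Lemma~\ref{lemma:number_operator_renormalised}.

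\textbf{Main obstacle.} The hard terms are the effective two-body interactions $\int N^{1/2}W(N^{1/2}(x-y))|\varphi_t(z)|^2b_x^*b_y^*b_xb_y$, which arise when two excitations meet a condensate particle. These cannot be dominated by $\cV_N$ alone. For them we invoke the many-body stability bounds of Section~\ref{sec:effective_interaction_potentials_non_renormalised} in their renormalised form, Propositions~\ref{prop:general_two_body_bound_renormalised} and~\ref{prop:general_three_and_four_body_bound_renormalised}. These give
\[
\pm(\cdot)\le\varepsilon\cK^\ren+\varepsilon\cV^\ren+C_\varepsilon(\cN^\ren+N^{1/2}),
\]
where the $N^{1/2}$ accounts for the residual constant.
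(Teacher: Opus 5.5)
Your overall architecture matches the paper: expand $\cK^\ren$ and $\cV^\ren$, cancel the cubic term via the scattering equation, and use the effective-interaction bounds of Propositions~\ref{prop:general_two_body_bound_renormalised}--\ref{prop:general_three_and_four_body_bound_renormalised}. But there are genuine gaps. The proposal rests on a false step and overlooks two terms that carry the real content.

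\textbf{The conversion step is false.} The operators $\cK^\ren-\cK$ and $\cV^\ren-\cV_N$ contain constants of order $N$. For instance $\langle\Omega,\cK^\ren\Omega\rangle=\frac12\|\nabla_1\otimes Q_t^{\otimes2}k_t\|^2\sim N$ and $\langle\Omega,\cV^\ren\Omega\rangle=\frac16\int V_N|k_t|^2\sim N$, while $\cK\Omega=\cV_N\Omega=\cN_+\Omega=0$. So no bound $\pm(\cK-\cK^\ren)\le\varepsilon(\cK^\ren+\cV^\ren)+C_\varepsilon(\cN_++1)$ can hold for small $\varepsilon$. The correct comparison is \eqref{eq:kinetic_plus_interaction_bdd_renormalised_op}, $\cK+\cV_N\le C(\cK^\ren+\cV^\ren+N)$. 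Hence any error you bound by $\varepsilon(\cK+\cV_N)$ costs $\varepsilon N$, which is not $C_\varepsilon N^{1/2}$. Non-renormalised operators may only appear with an $O(N^{-1/2})$ prefactor, as in Lemma~\ref{lemma:generator_estimate_kinetic_cubic_interaction}. Otherwise the renormalised operators must be produced directly, in two places:
\begin{itemize}
\item Write $b_xb_yb_z=d_{xyz}-k_t(x,y,z)$ before applying Cauchy--Schwarz, for $\cL_5$, the $\Theta$-corrections and $\cL_4^{(1)}$. Note that Cauchy--Schwarz on $\cL_5$ gives $\varepsilon\cV_N+C_\varepsilon\cV_\twoB$, not $C_\varepsilon\cN_+$.
\item In the $\varepsilon_\lambda$ term, integrate by parts against $c_x^*$ rather than $b_x^*$. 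Against $\cK$ you get at best $N^{2/3}$, since $N^{-1}\|\nabla u_\lambda\|_2^2\sim N^{1/3}$.
\end{itemize}

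\textbf{Two terms are not bounded by $C(\cN_++1)$.}
\begin{itemize}
\item \emph{The linear part.} After inserting the GP equation, the linear part of $\cH_N$ has kernel $\approx\frac{N^{1/2}}{2}\bigl(\int V-b(V)\bigr)|\varphi_t|^4\varphi_t$. Its $L^2$-norm is of order $N^{1/2}$, so Cauchy--Schwarz gives only $\cN_++CN$. The missing idea is that $\cL_4^{(1)}$, the $b^*bbb$ term you do not list, hides a linear term. Substituting $b_xb_yb_z=d_{xyz}-k_t$ produces $-\frac{N^{5/2}}{2}\int(V_N\omega_N)(x,y,z)\varphi_t(x)|\varphi_t(y)|^2|\varphi_t(z)|^2b_x^*+\hc$. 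This turns $V_N$ into $V_Nf_N$ in $\cL_1$, which matches $b(V)$ up to $O(N^{-1/2})$ by \eqref{eq:constant_contribution_close_GP_Linf_estimate}; this is Lemma~\ref{lemma:generator_estimate_drift}.
\item \emph{The pairing term $\cL_2^{(1)}$.} Its kernel $\eta_t$ has $\|\eta_t\|_2^2\sim N^{3/2}$. It requires the bound for quadratic Hamiltonians, $\pm\cL_2^{(1)}\le\varepsilon\cK+C\cN_++C_\varepsilon N^{1/2}$, which holds because $\tr(\eta_t(-\Delta)^{-1}\eta_t^*)\sim N^{1/2}$. By the point above, $\cK$ must then be replaced by $\cK^\ren$ through conjugation with $e^{K_t}$ and a Duhamel--Gr\"onwall argument (Section~\ref{subsec:quadratic_term_estimate}). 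This term is a main source of the $N^{1/2}$ in the lemma.
\end{itemize}

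\textbf{Constant bookkeeping.} The constant in $\cL_0$ carries the prefactor $N^3/6$, not $N^2/6$. Also, $\cL_0-\frac16\int V_N|k_t|^2$ only yields $\int V_N(1-\omega_N^2)$. To reach $\int V_Nf_N$ you also need the order-$N$ constant $-\frac12\|\nabla\otimes Q_t^{\otimes2}k_t\|^2\approx-\frac{N^3}{6}\int V_N\omega_Nf_N|\varphi_t|^6$. It comes from normal-ordering the quartic part of $\cK^\ren$, which you had planned to bound as an error.
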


\begin{lemma}\label{lemma:generator_commutator_estimates}
	Assume the same hypotheses as in Proposition~\ref{prop:gronwall_bound}. Define $\cH_N$ as in \eqref{eq:generator_fluctuations}, and $\cN^\ren$ and $\cQ^\ren$ as in \eqref{eq:renormalised_number_operator_def}. Then
	\begin{equation}
		\label{eq:generator_commutator_Qren_estimate}
		\pm\partial_t \langle \mathcal H_N + \mathcal Q^\ren \rangle_{\Phi_{N,t}} \leq C\langle \mathcal H_N + \mathcal N^\ren\rangle_{\Phi_{N,t}} + CN^{1/2},
	\end{equation}
	and
	\begin{equation}
		\label{eq:generator_commutator_Nren_estimate}
		\pm \ui[\cN^\ren, \cH_N] \leq C \cH_N + C\cN^\ren + CN^{1/2},
	\end{equation}
	for some constant $C$ that depends only on $V$ and $\Vert\varphi_0\Vert_{H^6}$.
\end{lemma}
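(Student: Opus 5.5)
The plan is to prove both estimates by expanding the relevant time derivatives and commutators into explicit sums of second-quantised monomials in $b_x^\#$. Each monomial will be bounded by $\varepsilon(\cK^\ren + \cV^\ren) + C_\varepsilon\cN^\ren + C_\varepsilon N^{1/2}$, and Lemma~\ref{lemma:generator_estimate} will convert $\cK^\ren + \cV^\ren$ back into $\cH_N$. From \eqref{eq:generator_estimate}--\eqref{eq:generator_estimate_error} with $\varepsilon = 1/2$ we get
\[
\cK^\ren + \cV^\ren \leq 2\cH_N + C\cN^\ren + CN^{1/2},
\]
so it suffices to bound everything by $\cK^\ren$, $\cV^\ren$, $\cN^\ren$ and $N^{1/2}$. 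Lemma~\ref{lemma:number_operator_renormalised} lets us pass freely between $\cN^\ren$ and $\cN_+ + 1$.

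For \eqref{eq:generator_commutator_Qren_estimate}, I use $\ui\partial_t\Phi_{N,t} = \cH_N\Phi_{N,t}$ to write
\[
\partial_t\langle\cH_N + \cQ^\ren\rangle_{\Phi_{N,t}} = \langle\partial_t\cH_N + \partial_t\cQ^\ren + \ui[\cH_N,\cQ^\ren]\rangle_{\Phi_{N,t}}.
\]
I first write $\cH_N$ explicitly via \eqref{eq:excitation_map_action_creation_annihilation_op}, as a polynomial in $b^\#$, $\sqrt{N-\cN_+}$ and $\varphi_t$ (with terms of degree $0$ through $6$). Its $t$-dependence enters only through $\varphi_t$, $Q_t$ and $\ui\partial_t\varphi_t$, so $\partial_t\cH_N$ is the same kind of polynomial with one $\varphi_t$ replaced by $\partial_t\varphi_t$, or $\ui\partial_t\varphi_t$ by $\partial_t^2\varphi_t$. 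The dangerous part is the cubic term
\[
N^{3/2}\int V_N\,\partial_t(\varphi_t\varphi_t\varphi_t)\,b^*b^*b^*\Theta_N + \hc,
\]
which is not controllable alone. The structure of $\cQ^\ren$ is designed to cancel it. In $\ui[\cH_N,\cQ^\ren]$, the commutator of $\cK + \cV_N$ with $\int\ui\partial_tk_t\,b^*b^*b^*$ produces
\[
\int\big[(-\Delta_x-\Delta_y-\Delta_z) + V_N\big](\partial_tk_t)\,b^*b^*b^*.
\]
By the scattering equation \eqref{eq:truncated_scattering_equation_no_modified_laplacian}, this equals $N^{3/2}\int V_Nf_N\,\partial_t(\varphi\varphi\varphi)\,b^*b^*b^*$ plus errors. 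These errors are derivative terms hitting $\varphi_t$ or $\partial_t\varphi_t$, the $N^{-2}\varepsilon_\lambda$ term, and terms with $\omega_{\lambda,N}$ multiplying $\partial_t\varphi$. The same $k_t$-renormalisation that produced $d_{xyz}$ and $\cV^\ren$ also converts $V_N$ into $V_Nf_N$ in $\partial_t\cH_N$, so the leading terms cancel. The remainders are estimated with Lemma~\ref{lemma:truncated_three_body_scattering_solution}, Cauchy--Schwarz, and the bounds on $\|\partial_t^j\varphi_t\|_{H^2}$, $\|\partial_t^j\varphi_t\|_{L^\infty}$ for $j\le 2$ from Appendix~\ref{sec:prop_gp_equation}; this is where $\varphi_0\in H^6$ is used. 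The term $\partial_t\cQ^\ren$ involves $\partial_t^2k_t$ and is bounded like $\cQ^\ren$ in Lemma~\ref{lemma:number_operator_renormalised}.

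For \eqref{eq:generator_commutator_Nren_estimate}, I insert \eqref{eq:generator_estimate} and compute $\ui[\cN^\ren,\cdot]$ of each term. The idea is that $\cN^\ren \approx e^{K_t}\cN_+e^{-K_t}$, and conjugation by $e^{K_t}$ maps $b_x$ to approximately $c_x$ and $b_xb_yb_z$ to approximately $d_{xyz}$. Hence $[\cN^\ren,c_x] \approx -c_x$ and $[\cN^\ren,d_{xyz}] \approx -3d_{xyz}$ up to explicit errors. Therefore $[\cN^\ren,\cK^\ren]$ and $[\cN^\ren,\cV^\ren]$ vanish to leading order. The errors are quadratic expressions in $k_t$ and $b^\#$ (e.g.\ $k_t$ contracted with $k_t$, or $\Theta_N - 1$ factors), bounded by $C(\cK^\ren + \cV^\ren + \cN^\ren + N^{1/2})$. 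For $[\cN^\ren,\cE]$, I go back to the explicit monomials constituting $\cE$. Only the terms that do not conserve particle number contribute, and commuting them with $\cN_+$ or with the cubic part of $\cN^\ren$ produces monomials of the same type as those in $\cE$, with the same bounds.

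I expect the main obstacle to be the error terms containing effective two-body interactions, in which $V_N$ or $V_Nf_N$ is contracted with one condensate factor $\varphi_t$ or $\partial_t\varphi_t$. These give operators like $\int N^{1/2}W(N^{1/2}(x-y))\,b_x^*b_y^*b_xb_y$, which cannot be controlled by $\cV^\ren$ alone. For these I would invoke the many-body localisation bounds of Propositions~\ref{prop:general_two_body_bound_renormalised} and \ref{prop:general_three_and_four_body_bound_renormalised}, in their renormalised form. I would first apply them with $W(x) = \|V(x,\cdot)\|_{L^1}$, multiplied by $\|\partial_t^j\varphi_t\|_\infty$, to obtain the bound by $\varepsilon(\cK^\ren+\cV^\ren) + C_\varepsilon(\cN^\ren + N^{1/2})$.
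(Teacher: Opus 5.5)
You have the cubic cancellation (the paper's Lemma~\ref{lemma:generator_commutator_estimates_cubic_derivative}), but the $V_N$ versus $V_Nf_N$ bookkeeping is wrong, and this hides a genuine gap. Since $(-\Delta_x-\Delta_y-\Delta_z)\omega_{\lambda,N}=V_Nf_N-N^{-2}\varepsilon_\lambda$ and $\omega_{\lambda,N}=\omega_N$ on $\Supp V_N$, you get $(-\Delta_x-\Delta_y-\Delta_z+V_N)\omega_{\lambda,N}=V_N-N^{-2}\varepsilon_\lambda$. So $\ui[\cK+\cV_N,\cQ^\ren]$ produces $V_N$, not $V_Nf_N$, and it cancels the cubic part of $\partial_t\cL_3^{(1)}$ directly. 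Nothing in that step converts $V_N$ into $V_Nf_N$.

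The conversion is needed elsewhere. $\partial_t\cH_N$ contains the c-number $\partial_t\cL_0\approx\frac{N^3}{6}\int V_N\,\partial_t(|\varphi_t(x)|^2|\varphi_t(y)|^2|\varphi_t(z)|^2)\approx\frac N6\big(\int V\big)\partial_t\|\varphi_t\|_6^6$, which is of order $N$. The GP terms only supply $-\frac N6 b(V)\partial_t\|\varphi_t\|_6^6$. Since $b(V)<\int V$ and $\partial_t\|\varphi_t\|_6^6\neq0$ in general, the leftover is $O(N)$. It cannot be absorbed into $C\langle\cH_N+\cN^\ren\rangle+CN^{1/2}$, as one sees by testing on $e^{K_t}\Omega$, where the right-hand side is $O(N^{1/2})$. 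Likewise, $\partial_t\cL_1$ leaves a linear term with an $N^{1/2}$ prefactor.

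In the paper this is repaired by the full contractions in two further commutators. The contraction in $\ui[\cL_3^{(1)},\cQ^\ren]$ gives $-\frac{N^3}{6}\int V_N\omega_N\partial_t(\cdots)$, and the one in $\ui[\cL_4^{(1)},\cQ^\ren]$ gives $-\frac{N^{5/2}}{2}\int V_N\omega_N\partial_t(\cdots)b_x^*+\hc$ (Lemmas~\ref{lemma:generator_commutator_estimates_cubic_commutator} and~\ref{lemma:generator_commutator_estimates_drift}). Only after this do \eqref{eq:constant_contribution_close_GP_Linf_estimate} and its time-derivative analogue reduce everything to $O(N^{1/2})$.

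You never treat these commutators, and your substitute (``the same $k_t$-renormalisation converts $V_N$ into $V_Nf_N$ in $\partial_t\cH_N$'') cannot work. Writing $b^*b^*b^*=d^*-\bar k_t$ in $\partial_t\cL_3^{(1)}$ does create $-\frac{N^3}{6}\int V_N\omega_N\partial_t(\cdots)$. However, the resulting $d^*$-term is only bounded by $\varepsilon\cV^\ren+\varepsilon^{-1}CN$. Cancelling it against $\ui[\cK+\cV_N,\cQ^\ren]$ requires the same rewriting there, which produces exactly the opposite constant. Differentiating the decomposition of Lemma~\ref{lemma:generator_estimate} instead brings in $\partial_t\cK^\ren$ and $\partial_t\cV^\ren$, which are not controlled by $\cK^\ren$ and $\cV^\ren$.

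For \eqref{eq:generator_commutator_Nren_estimate}, your route differs from the paper's, which only says it parallels the first estimate. You commute $\cN^\ren$ with $\cK^\ren+\cV^\ren+\cE$ using $[\cN^\ren,c_x]\approx-c_x$ and $[\cN^\ren,d_{xyz}]\approx-3d_{xyz}$. This is a sensible idea, because Lemma~\ref{lemma:generator_estimate} has already converted the constant and linear terms inside $\cE$.

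However, your claim that commuting the monomials of $\cE$ with $\cN_+$ or with the cubic part of $\cN^\ren$ gives ``monomials of the same type with the same bounds'' fails term by term. For example, the remainder of $\cL_4^{(1)}$ sits in $\cE$ as $\frac N2\int V_N\varphi_t(y)\varphi_t(z)d^*_{xyz}b_x\Theta_4^{(1)}+\hc$. Here $[\cN_+,d^*_{xyz}b_x]=2d^*_{xyz}b_x-3\overline{k_t(x,y,z)}b_x$ creates a linear term with an $N^{1/2}$ prefactor. That term cancels only against the full contraction with the cubic part of $\cN^\ren$. You must therefore commute with $\cN^\ren$ as a whole and keep $\cE$ in its renormalised form. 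Note also that the cubic part of $\cN^\ren$ does not commute with the number-conserving terms, so those contribute as well.
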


The first step in the proofs of both Lemmas~\ref{lemma:generator_estimate}~and~\ref{lemma:generator_commutator_estimates} is to rewrite $\cH_N$, which we recall is given by
\begin{equation*}
	\mathcal{H}_N = (\ui\partial_t U_N)U_N^* + U_NH_NU_N^*.
\end{equation*}
On the one hand, it was shown in \cite{LewinNamSch15} that
\begin{equation}
	\label{eq:excitation_map_drift}
	(\ui\partial_t U_N)U_N^* = a^*(\varphi_t)b(\ui\partial_t \varphi_t) - \big(\sqrt{N-\mathcal N_+} b(\ui\partial_t \varphi_t) + \hc\big) - \braket{\varphi_t,\ui\partial_t\varphi_t} (N-\mathcal N_+).
\end{equation}
On the other hand, a tedious but straightforward computation relying on the relations \eqref{eq:excitation_map_action_creation_annihilation_op} yields
\begin{equation}
	U_N H_N U_N^* = \cK + \cV_N + \big(\sqrt{N-\mathcal N_+} b(-\Delta \varphi_t) + \hc\big) + \Vert\nabla \varphi_t\Vert_2^2 (N - \mathcal N_+) +  \sum_{j=0}^5 \mathcal L_j, \label{eq:hamiltonian_conjugation_excitation_map}
\end{equation}
with $\cK$ and $\cV_N$ defined in \eqref{eq:second_quantised_kinetic_and_interaction_mod}, and $\cL_0,\dots,\cL_5$ given by
\begin{align*}
	\mathcal{L}_0 &= \Theta_0\dfrac{N^3}{6} \int_{\R^9} V_N (x,y,z) |\varphi_t(x)|^2 |\varphi_t(y)|^2 |\varphi_t(z)|^2,\\
	\mathcal L_1 &=  \Theta_1\dfrac{N^{5/2}}{2}\int_{\R^9} V_N(x,y,z)|\varphi_t(y)|^2 |\varphi_t(z)|^2 \overline{\varphi_t(x)}b_x + \hc,\\
	\mathcal{L}_2 &= \mathcal{L}_2^{(1)} + \mathcal{L}_2^{(2)} + \mathcal{L}_2^{(3)},\\
	\mathcal{L}_2^{(1)} &= \Theta_2^{(1)}\dfrac{N^2}{2}\int_{\R^9} V_N(x,y,z)|\varphi_t(z)|^2 \overline{\varphi_t(x) \varphi_t(y)} b_xb_y + \hc,\\
	\mathcal{L}_2^{(2)} &= \Theta_2^{(2)}\dfrac{N^2}{2}\int_{\R^9} V_N(x,y,z) |\varphi_t(y)|^2|\varphi_t(z)|^2 b_x^* b_x, \\
	\mathcal{L}_2^{(3)} &= \Theta_2^{(3)}N^2\int_{\R^9} V_N(x,y,z) \varphi_t(x)\overline{\varphi_t(y)}|\varphi_t(z)|^2 b^*_x b_y, \\
	\mathcal L_3 &= \mathcal L_3^{(1)} + \mathcal L_3^{(2)} + \mathcal L_3^{(3)},\\
	\mathcal L_3^{(1)} &= \Theta_3^{(1)}\dfrac{N^{3/2}}{6}\int_{\R^9} V_N (x,y,z) \overline{\varphi_t(x)\varphi_t(y) \varphi_t(z)} b_xb_yb_z + \hc, \\
	\mathcal L_3^{(2)} &= \Theta_3^{(2)}N^{3/2}\int_{\R^9} V_N(x,y,z) |\varphi_t(z)|^2 \overline{\varphi_t(y)} b^*_x b_x b_y + \hc, \\
	\mathcal L_3^{(3)} &= \Theta_{3}^{(3)}\dfrac{N^{3/2}}{2}\int_{\R^9}  V_N(x,y,z) \varphi_t(x)\overline{\varphi_t(y)}\overline{\varphi_t(z)} b^*_xb_y b_z + \hc,\\
	\mathcal L_4 &= \mathcal L_4^{(1)} + \mathcal L_4^{(2)} + \mathcal L_4^{(3)},\\
	\mathcal L_4^{(1)} &= \Theta_4^{(1)}\frac{N}{2}\int_{\R^9} V_N(x,y,z) \overline{\varphi_t(y) \varphi_t(z)} b^*_xb_xb_yb_z + \hc, \\
	\mathcal L_4^{(2)} &= \Theta_4^{(2)}\frac{N}{2}\int_{\R^9} V_N(x,y,z) |\varphi_t(z)|^2 b^*_xb^*_y b_xb_y,\\
	\mathcal L_4^{(3)} &= \Theta_4^{(3)}N\int_{\R^9} V_N(x,y,z) \varphi_t(y)\overline{\varphi_t(z)} b^*_xb^*_y b_xb_z, \\
	\mathcal L_5 &= \Theta_5\dfrac{N^{1/2}}{2}\int_{\R^9} V_N(x,y,z) \overline{\varphi_t(z)} b^*_xb^*_yb_xb_yb_z + \hc
\end{align*}
We refer to \cite[Appendix B]{LewinNamSch15} for more detail on the derivation of \eqref{eq:hamiltonian_conjugation_excitation_map}. Though the coefficients $\Theta_j$ and $\Theta_j^{(i)}$ in the previous expressions are functions of $\cN_+$, we omit this dependence from the notation for readability. Moreover, these coefficients can be determined explicitly, but since the exact expressions do not play a role in the analysis, we give them only in Appendix~\ref{app:theta_coefficients}. The only relevant properties of these coefficients is that they all satisfy
\begin{equation}
	\label{eq:theta_coef_estimate_almost_1}
	\vert\Theta_j^{(i)} - 1\vert \leq C(\cN_+ + 1)/N,
\end{equation}
and
\begin{equation}
	\label{eq:theta_coef_estimate_difference}
	\Theta_N\vert\Theta_j^{(i)}(\cN_+ + p) - \Theta_j^{(i)}(\cN_+)\vert \leq C_p/N.
\end{equation}
for any nonnegative integer $p$ (and likewise for $\Theta_0, \Theta_1$ and $\Theta_5$). Recall that $\Theta_N$ was defined in \eqref{eq:theta_coeff_cubic_transform}. Moreover, the coefficient $\Theta_N$ is such that
\begin{equation}
	\label{eq:theta_coef_cubic_transform_almost_1}
	\vert\Theta_N - 1\vert \leq C(\cN_+ + 1)/N
\end{equation}
and
\begin{equation}
	\label{eq:theta_coef_cubic_transform_difference}
	\vert\Theta_N(\cN_+ + p) - \Theta_N(\cN_+)\vert \leq C_p/N,
\end{equation}
for any nonnegative integer $p$.

Before giving the proofs of Lemmas~\ref{lemma:generator_estimate}~and~\ref{lemma:generator_commutator_estimates}, we say a few words about \eqref{eq:hamiltonian_conjugation_excitation_map} and we introduce important estimates. Although the last sum in \eqref{eq:hamiltonian_conjugation_excitation_map} contains many terms, most of term do not play a role at the main order. In fact, the only terms for which we need to identify leading-order cancellations are $\cL_0, \cL_1, \cL_3^{(1)}$ and $\cL_4^{(1)}$. While no cancellations are required to control the remaining terms, they still present a significant challenge. Notably, $\cL_4^{(2)}$ is an effective two-body interaction term whose control in terms of $\cN_+, \cK$ and $\cV_N$ is particularly subtle. Moreover, when evaluating the commutators $\left[\cQ^\ren,\cH_N\right]$ and $\left[\cN^\ren,\cH_N\right]$, some effective three-body and four-body interaction terms appear, which are again nontrivially controlled by $\cN_+, \cK$ and $\cV_N$. Both of these difficulties are specific to the three-body case - there is no counterpart in the two-body setting. Controlling these terms is the subject of the following two propositions, which is one of the key novelties of this work. The proof of Propositions \ref{prop:general_two_body_bound_renormalised} and \ref{prop:general_three_and_four_body_bound_renormalised} occupies the entirety of Sections~\ref{sec:effective_interaction_potentials_non_renormalised}~and~\ref{sec:effective_interaction_potentials_renormalised}.

\begin{proposition}[Estimate on effective two-body potentials]
	\label{prop:general_two_body_bound_renormalised}
	Assume the same hypotheses as in Proposition~\ref{prop:gronwall_bound}. Define
	\begin{equation*}
		\cV_\twoB = \int_{\R^6}N^{1/2}\Vert V(\cdot,N^{1/2}(x - y))\Vert_{L^1}b^*_xb^*_y b_xb_y.
	\end{equation*}
	Then, for any $\varepsilon > 0$, the estimates 
	\begin{equation}
		\label{eq:general_two_body_bound_non_renormalised}
		\cV_\twoB \leq \varepsilon\cK + \varepsilon\cV_N + C_\varepsilon\cN_+
	\end{equation}
	and
	\begin{equation}
		\label{eq:general_two_body_bound_renormalised}
		\cV_\twoB \leq \varepsilon\cK^\ren + \varepsilon\cV^\ren + C_\varepsilon(\cN^\ren + N^{1/2})
	\end{equation}
	hold on $\cF_+^{\leq N}(t)$ for some constant $C_\varepsilon$ depending only on $\varepsilon$ and $V$, and additionally on $\Vert\varphi_0\Vert_{H^6}$ for \eqref{eq:general_two_body_bound_renormalised}.
\end{proposition}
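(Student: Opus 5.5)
The plan is to prove \eqref{eq:general_two_body_bound_non_renormalised} first, as a many-body operator inequality on each fixed $M$-particle sector. Then I will deduce \eqref{eq:general_two_body_bound_renormalised} by comparing $\cK,\cV_N,\cN_+$ with their renormalised versions.

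\emph{Scaling.} Write $W(x)=\Vert V(\cdot,x)\Vert_{L^1}$ and $W_N(x)=N^{1/2}W(N^{1/2}x)$, so that $\cV_\twoB$ is the second quantisation of $\sum_{i<j}W_N(x_i-x_j)$. The naive bound is too weak. Sobolev and Hölder give, for each pair, $W_N\lesssim\Vert W_N\Vert_{L^{3/2}}(-\Delta)=N^{-1/2}\Vert W\Vert_{3/2}(-\Delta)$. Summing over pairs then yields $\cV_\twoB\lesssim N^{-1/2}\cN_+\cK$, which on $\cF_+^{\leq N}$ loses a factor $N^{1/2}$. This loss comes from many excited particles sitting in the same region of size $N^{-1/2}$. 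In that situation the three-body term $\cV_N$ must pay for it.

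\emph{Multiscale decomposition.} Since $W$ is radial and nonincreasing, I will use a Fefferman--de la Llave type layer-cake decomposition
\begin{equation*}
	W=\int_0^\infty \1_{B(0,r)}\,\d{}\mu(r),\qquad \mu\geq 0.
\end{equation*}
This reduces the problem to indicator potentials $N^{1/2}\1(|x_i-x_j|\leq rN^{-1/2})$, one scale $\ell=rN^{-1/2}$ at a time. For a fixed scale I will average over translations of a lattice of cubes of side $\sim\ell$. Every pair at distance at most $\ell$ then lies in a common cube with probability bounded below. This bounds the interaction by an average, over lattices, of sums over cubes $Q$ of $N^{1/2}\,n_Q(n_Q-1)$, where $n_Q$ is the number of particles in $Q$.

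\emph{Estimate in a single cube.} Split the one-body space in $Q$ into the constant (Neumann ground) mode and its orthogonal complement, which has Neumann gap $\gtrsim\ell^{-2}$.
\begin{itemize}
	\item Pairs with at least one particle in an excited mode are controlled by the localised kinetic energy. Sobolev on the cube with constant $\ell^{-2}$, applied to a potential with $\Vert\cdot\Vert_{L^{3/2}}\lesssim N^{1/2}\ell^2$, gives a bound of order $\varepsilon\,\cK_Q+C_\varepsilon\,n_Q$ for small scales $r\lesssim 1$.
	\item Pairs in the constant mode contribute about $N^{1/2}n_Q^2$ when $\ell\sim N^{-1/2}$.
	\item For the constant mode, the three-body potential gives $\gtrsim N\,n_Q^3\ell^{-6}\int V(N^{1/2}\cdot)\cdot\ell^{-3}\cdot\ell^{9}\sim N n_Q^3$ at $\ell\sim N^{-1/2}$. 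Here I use $V\geq C_0$ on $B(0,R_0)$ and the fact that all triples in a cube of side $\lesssim R_0N^{-1/2}$ interact.
	\item The AM--GM inequality $N^{1/2}n^2\leq \varepsilon Nn^3+\varepsilon^{-1}n$ closes the estimate at this scale.
\end{itemize}

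\emph{Intermediate and large scales.} Larger $r$ carry the weight $d\mu(r)$, whose integrability follows from $W\in L^{3/2}$ and the monotonicity of $W$. At these scales I interpolate: I localise further into subcubes of size $R_0N^{-1/2}$ and use superadditivity of $n\mapsto n^3$ to convert many particles in a large cube into a three-body energy. The IMS or Neumann localisation errors, of order $\ell^{-2}\cN_+$, are absorbed using $\cK$. Summing over scales with a $\mu$-integrable weight yields \eqref{eq:general_two_body_bound_non_renormalised}. I expect this step to be the main obstacle. The constant-mode projection does not commute nicely with the three-body interaction. I would first handle it by bounding the three-body energy from below by its value on the projected state plus kinetic errors, via a Poincaré inequality on the cube.

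\emph{Renormalised bound.} Since $c_x=b_x+\frac12\int k_t b^*b^*$, expanding gives
\begin{equation*}
	\cK\leq 2\cK^\ren+C\int\Big|\nabla_x\int k_t(x,y,z)b_y^*b_z^*\Big|^2,
\end{equation*}
and similarly $\cV_N\leq 2\cV^\ren+C\int V_N|k_t|^2$. The error terms are handled as follows.
\begin{itemize}
	\item $\int V_N|k_t|^2\lesssim N^{3}\Vert V_N\omega^2_{N}\Vert_{L^1}\lesssim N^{1/2}$, which is the source of the $N^{1/2}$ in the statement.
	\item The gradient error is controlled by $\Vert\nabla\omega_{\lambda,N}\Vert_{L^2}\leq CN^{-1}$ from Lemma~\ref{lemma:truncated_three_body_scattering_solution}, together with the kernel estimates of Lemma~\ref{lemma:three_body_kernel_second_quantisation_estimates}. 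This gives $C(N^{1/2}+\cN_+)$, possibly after using the scattering equation \eqref{eq:truncated_scattering_equation_no_modified_laplacian} to cancel the part of $-\Delta\omega$ proportional to $V_Nf_N$ against $\cV^\ren$ cross terms.
\end{itemize}
Finally, $\cN_+\leq C\cN^\ren$ by Lemma~\ref{lemma:number_operator_renormalised}, which gives \eqref{eq:general_two_body_bound_renormalised}.
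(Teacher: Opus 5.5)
Both halves of your argument have a genuine gap.

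\emph{The non-renormalised bound.} Your single-cube estimate at the interaction scale $\ell\sim N^{-1/2}$ does not close.

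First, in the bullet on excited pairs, the local Poincar\'e--Sobolev bound for a fixed excited particle $i$ must be summed over all partners $j$ in the cube. This produces roughly $r^2N^{-1/2}(n_Q-1)\,\nabla_i^*\mathds{1}_Q(x_i)\nabla_i$, not $\varepsilon\cK_Q+C_\varepsilon n_Q$. The partner multiplicity $n_Q$ is unbounded, and it is the whole difficulty.

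More seriously, your AM--GM step uses a constant-mode three-body energy $\gtrsim Nn_Q^3$. That energy is in fact $\propto Nn_Q(n_Q-1)(n_Q-2)$, so it vanishes for $n_Q=2$. A cube of side $\sim N^{-1/2}$ holding two particles in the constant mode carries pair energy $\sim N^{1/2}$, and nothing in your scheme pays for it: there is no three-body energy, the constant mode has no local kinetic energy, and $C_\varepsilon n_Q=2C_\varepsilon$ is $O(1)$. Cubes with $n_Q\geq 3$ at scale $\lesssim R_0N^{-1/2}$ are actually harmless, since there $N^{1/2}n_Q(n_Q-1)$ is bounded pointwise by $N^{-1/2}$ times the local three-body energy. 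It is the isolated pairs that need a new idea.

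The paper's idea is to localise instead in mesoscopic boxes $\Lambda_r^{(3\ell_1)}$ with $\ell_1=(\varepsilon N^{-1})^{1/3}\gg N^{-1/2}$. It then splits according to whether $M_r^{(3\ell_1)}$ exceeds $C(\ell_1/\ell_3)^3\sim\varepsilon N^{1/2}$, where $\ell_3=R_0N^{-1/2}$.
\begin{itemize}
\item Above this threshold, $(M_r^{(3\ell_1)})^2\leq C(\ell_3/\ell_1)^3(M_r^{(3\ell_1)})^3$. Lemmas~\ref{lemma:convexity_estimate_number_particles_box} and~\ref{lemma:three_body_estimate_box} then convert this into three-body energy at scale $\ell_3$.
\item Below the threshold, each particle has at most $\sim\varepsilon N^{1/2}$ partners, so the Poincar\'e--Sobolev part costs only $\varepsilon\cK$.
\item Also below the threshold, the constant--constant part is diluted by $(\ell_2/\ell_1)^3\sim\varepsilon^{-1}N^{-1/2}$. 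This disposes of the case $M_r=2$. For $M_r\geq 3$, what remains is absorbed by kinetic plus three-body energy via the ground-state bound of Lemma~\ref{lemma:gse_three_body_potential}.
\end{itemize}

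\emph{The renormalised bound.} The error terms in your comparison are of order $N$, not $N^{1/2}$.
\begin{itemize}
\item Since $\Vert V_N\omega_N^2\Vert_{L^1(\R^6)}=N^{-2}\Vert V\omega^2\Vert_{L^1}$, one has $\int V_N|k_t|^2\simeq N\Vert V\omega^2\Vert_{L^1}\Vert\varphi_t\Vert_6^6$.
\item The bound $\Vert\nabla\omega_{\lambda,N}\Vert_{L^2}\leq CN^{-1}$ that you quote gives $\Vert\nabla_xk_t\Vert_{L^2}^2\sim N^3\cdot N^{-2}=N$. This is the constant produced by normal-ordering $\int|\nabla_x\int k_tb^*b^*|^2$.
\end{itemize}
These are exactly the order-$N$ correlation energies that replace $\int V$ by $b(V)$, and no finer bookkeeping removes them. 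On the correlated state $e^{K_t}\Omega$, the renormalised operators are small; this is what $c_x\approx e^{K_t}b_xe^{-K_t}$ and $d_{xyz}\approx e^{K_t}b_xb_yb_ze^{-K_t}$ are designed for. On the same state, $\cK+\cV_N$ is of order $N$. Hence $\cK+\cV_N\leq C(\cK^\ren+\cV^\ren+\cN^\ren+N^{1/2})$ is false, and chaining it with \eqref{eq:general_two_body_bound_non_renormalised} can at best give an error $\varepsilon N$.

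The missing ingredient is that $\cV_\twoB$ itself barely feels the cubic correlations. The paper conjugates \eqref{eq:general_two_body_bound_non_renormalised} by the unitary $e^{K_t}$ and proves three estimates through Duhamel expansions and Gr\"onwall:
\begin{itemize}
\item $e^{K_t}\cN_+e^{-K_t}\leq C(\cN^\ren+1)$;
\item $e^{K_t}(\cK+\cV_N)e^{-K_t}\leq C(\cN^\ren+\cK^\ren+\cV^\ren+N^{1/2})$, where the order-$N$ constants generated by the conjugation are exactly those contained in $\cK^\ren$ and $\cV^\ren$;
\item crucially, the lower bound $e^{K_t}\cV_\twoB e^{-K_t}\geq C\cV_\twoB-CN^{-1/2}(\cN^\ren+\cK^\ren+\cV^\ren+N)$.
\end{itemize}
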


\begin{proposition}[Further estimates on effective interaction potentials]
	\label{prop:general_three_and_four_body_bound_renormalised}
	Assume the same hypotheses as in Proposition~\ref{prop:gronwall_bound}. Define
	\begin{equation*}
		\tilde{\cV}_\twoB = \int_{\R^{6}}\mathds{1}(\vert x - y\vert \leq \lambda)b_x^*b_y^*b_xb_y,
	\end{equation*}
	\begin{equation*}
		\cV_\threeB = \int_{\R^{9}}\Vert V(\cdot,N^{1/2}(x - y))\Vert_{L^1}\mathds{1}(\vert x - z\vert \leq \lambda)b_x^*b_y^*b_z^*b_xb_yb_z
	\end{equation*}
	and
	\begin{equation*}
		\cV_\fourB = \int_{\R^{12}}N^{-1/2}\Vert V(\cdot,N^{1/2}(x - y))\Vert_{L^1}\mathds{1}(\vert x - z\vert \leq \lambda)\mathds{1}(\vert x - u\vert \leq \lambda)b_x^*b_y^*b_z^*b_u^*b_xb_yb_zb_u,
	\end{equation*}
	for any $\lambda$ satisfying
	\begin{equation*}
		0 < \lambda \leq cN^{-1/3},
	\end{equation*}
	for some nonnegative constant $c$. Then, the estimates
	\begin{equation}
		\label{eq:general_two_three_four_body_bound_non_renormalised}
		\tilde{\cV}_\twoB \leq C(\cN_+ + \cK + \cV_N), \quad \cV_\threeB \leq C(\cN_+ + \cK + \cV_N), \quad \cV_\fourB \leq C(\cN_+ + \cK + \cV_N)
	\end{equation}
	and
	\begin{equation}
		\label{eq:general_four_body_bound_renormalised}
		\cV_\fourB \leq C(\cN^\ren + \cK^\ren + \cV^\ren + N^{1/2})
	\end{equation}
	hold on $\cF_+^{\leq N}(t)$ for some constant $C$ depending only on $c, V$, and additionally on $\Vert\varphi_0\Vert_{H^6}$ for \eqref{eq:general_four_body_bound_renormalised}.
\end{proposition}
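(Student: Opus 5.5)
We have to prove Proposition \ref{prop:general_three_and_four_body_bound_renormalised}. We first prove the non-renormalised bounds \eqref{eq:general_two_three_four_body_bound_non_renormalised}. These are statements about $M$-particle operators: on each sector $\mfH_+^M(t)$ with $M\leq N$, each bound is a many-body inequality in first quantisation. We also want constants independent of $M$, since $\cN_+$ appears on the right with coefficient $C$. So in each sector we will prove
\begin{equation*}
	\sum_{i<j}\mathds{1}(|x_i-x_j|\leq\lambda) \leq C\Big(M + \sum_i -\Delta_i + \sum_{i<j<k}V_N(x_i,x_j,x_k)\Big),
\end{equation*}
and the analogous inequalities for the three- and four-body weights.

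\textbf{Multi-scale localisation.} Partition $\R^3$ into cubes of side $\ell\sim\lambda$. We use the IMS-type localisation, or simply the superadditivity of Neumann Laplacians, to reduce to counting particles per cube. Two particles interacting through $\tilde{\cV}_\twoB$ lie in neighbouring cubes, so if $n_Q$ denotes the number of particles in the enlarged cube $Q$, the left-hand side is at most $\sum_Q n_Q^2$. In a cube containing $n$ particles, the operator
\begin{equation*}
	\sum_i -\Delta^{\mathrm{Neu}}_i + \sum_{i<j<k} V_N
\end{equation*}
should have ground energy at least $c\, n^3 N\ell^6\cdot\ell^{-9}\cdot N^{-3}\cdot(\text{scattering factor})$ when $n$ is large. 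This is the Dyson-lemma-type lower bound from \cite{NamRicTri23}: energy $\gtrsim b(V)N^{-2} n^3/\ell^6$. With $\ell\sim N^{-1/3}$ this is $\gtrsim n^3$, and $n^3 \geq n^2$ once $n$ is large. For bounded $n$ we use $n^2\leq Cn$, which sums to $C M=C\cN_+$.

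The hard part is to get this cube lower bound with the Neumann kinetic energy and with no loss in $n$. I would first try a Dyson lemma, trading part of the kinetic energy for a softer potential $U$ of range $\ell$ and strength about $N^{-2}\ell^{-6}$. Then, for $n$ larger than a fixed constant, a simple Temple or first-order bound on the softened Hamiltonian gives $\gtrsim n^3$, using the lower bound \eqref{eq:three_body_potential_bounded_below_assumption}. For the multi-scale part, cubes at different scales handle the $N^{1/2}$-range weight $\Vert V(\cdot,N^{1/2}(x-y))\Vert_{L^1}$ inside $\cV_\threeB$ and $\cV_\fourB$.

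\textbf{Three- and four-body terms.} For $\cV_\threeB$, the third particle lies within $\lambda$ of the pair, so the number of such $z$ is bounded by the cube count $n_Q$. This gives
\begin{equation*}
	\cV_\threeB\lesssim \sum_Q n_Q\,(\text{two-body term in } Q),
\end{equation*}
where the two-body term is $\cV_\twoB$ restricted to $Q$. We bound this by combining Proposition~\ref{prop:general_two_body_bound_renormalised} inside each cube with the $n^3$ lower bound. Similarly $\cV_\fourB$ carries an extra factor $n_Q N^{-1/2}$. We absorb it using $n_Q\lesssim$ cube energy and the fact that the two-body weight integrates to $N^{-1}$, so $N^{-1/2}n_Q^2\cdot n_Q^{2}N^{-1}$-type bounds are controlled by $n_Q^3$ as long as $n_Q\lesssim N^{1/2}$. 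Larger occupations are controlled directly by the $n^3$ energy.

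\textbf{Renormalised bound.} To deduce \eqref{eq:general_four_body_bound_renormalised}, we compare $\cK,\cV_N$ with $\cK^\ren,\cV^\ren$. Expanding $c_x$ and $d_{xyz}$ gives
\begin{equation*}
	\cK\leq 2\cK^\ren + 2\int\Big|\nabla_x\!\int k_t\, b^*b^*\Big|^2 \quad\text{and}\quad \cV_N\leq 2\cV^\ren+2\int V_N|k_t|^2 ,
\end{equation*}
together with cross terms. The error terms are bounded by $C(\cN_++1)\cdot N^{1/2}$-level constants using Lemma~\ref{lemma:truncated_three_body_scattering_solution}: $\Vert\nabla\omega_{\lambda,N}\Vert_{L^2}\leq CN^{-1}$, so $N^3\Vert\nabla\omega\Vert_2^2\sim N$, which is too large. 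We therefore cannot compare crudely. Instead we apply the inequality directly to $\cV_\fourB$ sandwiched so that the renormalisation cancels: we write $b$ in terms of $c$ and treat the $k_t$ contributions via Lemma~\ref{lemma:number_operator_renormalised}, using the small support ($\lambda$) of $k_t$ to gain the factor $N^{1/2}$. I expect this last comparison, and the uniform-in-$n$ cube lower bound, to be the main obstacles.
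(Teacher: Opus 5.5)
Your reduction of the non-renormalised bounds to cube occupation numbers has the right architecture: a threshold, with the dilute $n^3$ lower bound below it. But the step you flag as the main obstacle does not close along your route. Dyson/Temple-type bounds of the form $\gtrsim N^{-2}n^3\ell^{-6}$ require diluteness, $n\lesssim(\ell N^{1/2})^3\sim N^{1/2}$; this is the hypothesis of Lemma~\ref{lemma:gse_three_body_potential}, taken from \cite[Proposition 4]{NamRicTri22b}. On $\cF_+^{\leq N}(t)$, however, a cube may hold up to $N$ particles, and saying that larger occupations are ``controlled directly by the $n^3$ energy'' is circular. The missing idea is that above a threshold $n\geq C(\ell/\ell_3)^3$, with $\ell_3\sim N^{-1/2}$, no kinetic energy is needed. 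By pigeonhole and Jensen over sub-cubes of side $\ell_3$, one has pointwise $n^3\lesssim(\ell/\ell_3)^6\sum_s n_s(n_s-1)(n_s-2)$. The sum counts triples within distance $\ell_3$, so it is $\lesssim N^{-1}\cV_N$, which gives $n^3\lesssim\cV_N$ in the cube (Lemmas~\ref{lemma:convexity_estimate_number_particles_box} and~\ref{lemma:three_body_estimate_box}).

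Your handling of the range-$N^{-1/2}$ pair weight inside $\cV_\threeB$ and $\cV_\fourB$ (``the two-body weight integrates to \dots'') is an expectation heuristic, not an operator inequality. Clustered pairs pay no three-body energy, so that weight can only be controlled through kinetic energy. What works is this: below the threshold $n_Q<CN^{1/2}$, bound the multiplication-operator prefactors $n_Q$, resp.\ $N^{-1/2}n_Q^2$, pointwise by $CN^{1/2}$. This reduces both terms to $C\cV_\twoB$, hence to \eqref{eq:general_two_body_bound_non_renormalised}. Above the threshold, use the counting argument; for $\cV_\fourB$ this also needs a second split at scale $N^{-1/2}$, Young's inequality and $\cN_+\leq N$.

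The more serious gap is the renormalised bound \eqref{eq:general_four_body_bound_renormalised}. You rightly note that $\cK,\cV_N$ cannot be compared with $\cK^\ren,\cV^\ren$ up to $O(N^{1/2})$. But ``writing $b$ in terms of $c$'' does not fix this. The $c_x$ are not canonical, so your first-quantised inequality does not transfer to them. Expanding the eighth-order operator $\cV_\fourB$ also leaves terms you could only bound by $\cK$, not $\cK^\ren$.

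The missing idea is to conjugate the operator inequality $\cV_\fourB\leq C(\cN_++\cK+\cV_N)$ by the unitary $e^{K_t}$, which preserves $\cF_+^{\leq N}(t)$. The inequality then holds exactly for the canonical operators $e^{K_t}b_xe^{-K_t}$, which are close to $c_x$. It remains to prove three comparison estimates by Duhamel expansions and Grönwall:
\begin{enumerate}
\item $e^{K_t}\cN_+e^{-K_t}\leq C(\cN^\ren+1)$;
\item $e^{K_t}(\cK+\cV_N)e^{-K_t}\leq C(\cN^\ren+\cK^\ren+\cV^\ren+N^{1/2})$, where the order-$N$ contributions of the first and second commutators are matched, via the scattering equation, with those contained in $\cK^\ren$ and $\cV^\ren$;
\item a lower bound of $e^{K_t}\cV_\fourB e^{-K_t}$ by $\cV_\fourB$ minus controllable errors.
\end{enumerate}
In the third estimate, the commutator $[K_t,\cV_\fourB]$ generates five-body operators. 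These need the additional many-body bound of Lemma~\ref{lemma:general_five_body_potential_bound}, which your plan does not anticipate.
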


\begin{remark}
	Though we do not need it, the estimates on $\tilde{\cV}_\twoB$ and $\cV_\threeB$ can be proven with renormalised operators in the right-hand side by following the same proof strategy as for \eqref{eq:general_two_body_bound_renormalised} and \eqref{eq:general_four_body_bound_renormalised}.
\end{remark}

\subsection{Proof of Lemma \ref{lemma:generator_estimate}}

Recall that properties of the solution $\varphi_t$ of the Gross--Pitaevskii equation \eqref{eq:gross_pitaevskii_equation} are given in Appendix~\ref{sec:prop_gp_equation}. We will often use the regularity estimates \eqref{eq:gross_pitaevskii_solution_estimate_H1_L6}, \eqref{eq:gross_pitaevskii_solution_estimate_H4} and \eqref{eq:gross_pitaevskii_solution_estimate_Linf}, and we do not mention this every time to reduce clutter.

Thanks to the relations \eqref{eq:excitation_map_drift}~and~\eqref{eq:hamiltonian_conjugation_excitation_map}, the facts that $\varphi_t$ solves \eqref{eq:gross_pitaevskii_equation} and that $a(\varphi_t) = 0$ on $\cF_+$, we have
\begin{equation*}
	\mathcal{H}_N = \cK + \cV_N - \dfrac{b(V)}{2}\big(\sqrt{N - \cN_+}b(\vert\varphi_t\vert^4\varphi_t) + \hc\big) -\dfrac{b(V)}{6}\Vert\varphi_t\Vert_6^6(N - \cN_+) + \sum_{j = 0}^5\mathcal{L}_j 
\end{equation*}
in the quadratic form sense on $\cF_+^{\leq N}$. Since all the estimates in this section should be interpreted in this sense, we mention this only once. We begin by bounding all the $\cL_j$ terms that do not contain a contribution of order $N$. We claim that
\begin{equation}
	\label{eq:bound_L21}
	\pm\cL_2^{(1)} \leq \varepsilon\cK^\ren + \varepsilon\cV^\ren + C_\varepsilon\cN^\ren + C_\varepsilon N^{1/2},
\end{equation}
for all $\varepsilon > 0$. Since the proof of this inequality is somewhat tedious and unrelated to the rest of the proof, we postpone it to Section~\ref{subsec:quadratic_term_estimate}. In short, this estimate is a consequence of a bound on quadratic Hamiltonians, namely \cite[Theorem 2]{NamNapSol16}, which we conjugate by an appropriate Bogoliubov transformation to obtain a bound with the renormalised operator $\cK^\ren$. Next, we apply the Cauchy--Schwarz inequality to get
\begin{equation*}
	\pm \mathcal{L}_2^{(2)}, \pm \mathcal{L}_2^{(3)} \leq C\mathcal N_+.
\end{equation*}
Define $\cV_\twoB$ as in Proposition~\ref{prop:general_two_body_bound_renormalised}. Then, using the Cauchy--Schwarz inequality, we find
\begin{equation*}
	\pm \mathcal{L}_3^{(2)}, \pm \mathcal{L}_3^{(3)}, \pm \mathcal{L}_4^{(2)}, \pm \mathcal{L}_4^{(3)} \leq C\mathcal V_\twoB + C\mathcal N_+.
\end{equation*}
To bound $\mathcal{L}_5$, we write
\begin{equation*}
	\cL_5 = \dfrac{N^{1/2}}{2}\int_{\R^9}\big(V_N(x,y,z)\varphi_t(z)d_{xyz}^*b_xb_y\Theta_5 - (V_N\overline{k_t})(x,y,z)\varphi_t(z)b_xb_y\Theta_5\big) + \hc,
\end{equation*}
where we recall that $d_{xyz}$ was defined in \eqref{eq:renormalised_annihilation_op_three}. Bounding the first term by $\varepsilon_1\cV^\ren + C_{\varepsilon_1}\cV_\twoB$ and the second one just as in \eqref{eq:bound_L21}, we obtain
\begin{equation*}
	\pm\mathcal{L}_5 \leq \varepsilon\cK^\ren + \varepsilon\cV^\ren + C_\varepsilon\cV_\twoB + C_\varepsilon\cN^\ren + C_\varepsilon N^{1/2},
\end{equation*}
for all $\varepsilon > 0$. The remaining terms we have to bound are $\cK, \cL_0, \cL_1, \cL_3^{(1)}, \cL_4^{(1)}$ and $\cV_N$, for which we need to identify cancellations. We summarise these cancellations in the following two lemmas.
\begin{lemma}
	\label{lemma:generator_estimate_kinetic_cubic_interaction}
	Assume the same hypotheses as in Proposition~\ref{prop:gronwall_bound}. Then,
	\begin{equation}
		\label{eq:generator_estimate_kinetic_cubic_interaction}
		\cL_0 + \cK + \cL_3^{(1)} + \cV_N = \dfrac{N^3}{6}\int_{\R^9}(V_Nf_N)(x,y,z)\vert\varphi_t(x)\vert^2\vert\varphi_t(y)\vert^2\vert\varphi_t(z)\vert^2 + \cK^\ren + \cV^\ren + \cE,
	\end{equation}
	for some $\cE$ that satisfies
	\begin{equation*}
		\pm\cE \leq \varepsilon\cK^\ren + \varepsilon\cV^\ren + C_\varepsilon\cN^\ren + C_\varepsilon N^{-1/2}(\cK + \cV_N + N),
	\end{equation*}
	for all $\varepsilon > 0$.
\end{lemma}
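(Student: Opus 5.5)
We will prove the identity by expanding the renormalised operators $\cK^\ren$ and $\cV^\ren$ and matching their terms with $\cL_0+\cK+\cL_3^{(1)}+\cV_N$.

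\textbf{Expanding $\cV^\ren$.} From the definition of $d_{xyz}$ we get
\begin{equation*}
	\cV^\ren = \cV_N + \dfrac{1}{6}\int V_N\big(\overline{k_t}\,b_xb_yb_z + \hc\big) + \dfrac{1}{6}\int V_N|k_t|^2,
\end{equation*}
using the symmetry of $V_N$ and $k_t$ (recall that $\cV_N$ carries the same $1/6$ normalisation after symmetrisation).

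\textbf{Expanding $\cK^\ren$.} From the definition of $c_x$ we get
\begin{equation*}
	\cK^\ren = \cK + \Big(\dfrac12\int (-\Delta_x k_t)(x,y,z)\, b_x^*b_y^*b_z^* + \hc\Big) + \dfrac14\int \overline{(-\Delta_x k_t)}(x,y,z)k_t(x,y',z')\,b_yb_zb^*_{y'}b^*_{z'}.
\end{equation*}
Normal ordering the last term produces a constant $\tfrac12\langle k_t,-\Delta_x k_t\rangle$, a quadratic part and a quartic part. After symmetrising in $x,y,z$, the cubic term becomes $\tfrac16\int (-\Delta_x-\Delta_y-\Delta_z)k_t\, b^*b^*b^* + \hc$.

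\textbf{Cubic terms and the scattering equation.} When the Laplacians fall on $\omega_{\lambda,N}$, we invoke \eqref{eq:truncated_scattering_equation_no_modified_laplacian}. This gives
\begin{equation*}
	(-\Delta_x-\Delta_y-\Delta_z)k_t = N^{3/2}\big(V_Nf_N - N^{-2}\varepsilon_\lambda\big)\varphi_t^{\otimes3} + (\text{terms with derivatives on }\varphi_t).
\end{equation*}
The $N^{3/2}V_N\varphi_t^{\otimes 3}$ part is exactly $\cL_3^{(1)}$ (up to the $\Theta$ factors). The $-N^{3/2}V_N\omega_N\varphi_t^{\otimes3}$ part cancels the cross term $\tfrac16\int V_N\overline{k_t}\,bbb$ from $\cV^\ren$, since $\omega_{\lambda,N}=\omega_N$ on $\Supp V_N$ for $\lambda\ge 2RN^{-1/2}$. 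The remaining cubic terms must go into $\cE$:
\begin{itemize}
	\item the $\varepsilon_\lambda$ term, of size $N^{-1/2}\|\varepsilon_\lambda\|_2$;
	\item the terms with one or two derivatives on $\varphi_t$;
	\item the differences $\Theta_3^{(1)}-\Theta_N$ or $\Theta_3^{(1)}-1$.
\end{itemize}
For these we use Cauchy--Schwarz of the form $\pm(\int F\, b^*b^*b^* + \hc)\le \delta\,\cN_+\ldots$. We control $\int F b^*_xb^*_yb^*_z$ by $\|F\|_{L^2}$ times $(\cN_++1)^{3/2}$, and use $\cN_+\le N$ to trade powers of $\cN_+$ for the $N^{-1/2}(\cK+\cV_N+N)$ allowance. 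For the $\varepsilon_\lambda$ term we instead integrate by parts with $u_\lambda$, using \eqref{eq:truncated_scattering_equation_error_green_function_solves} to move one derivative onto $b^*$. This gives a bound by $\|\nabla u_\lambda\|_2\,\cK^{1/2}$, which is small because $\lambda=2RN^{-1/3}$. The $\Theta$ differences contribute $\cN_+/N$, which we absorb using \eqref{eq:theta_coef_estimate_almost_1}.

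\textbf{Constants.} The constant from $\cL_0$, together with $-\tfrac16\int V_N|k_t|^2$ and the normal-ordering constant $-\tfrac12\langle k_t,-\Delta k_t\rangle$, combines as follows. Since $V_N|k_t|^2$ and $\langle k_t,(-\Delta)k_t\rangle$ pair via the scattering equation, the total becomes
\begin{equation*}
	\dfrac{N^3}{6}\int V_N\big(1-2\omega_N+\omega_N^2\big)|\varphi_t|^{\otimes 6} + \dfrac{N^3}{6}\int \omega_N V_Nf_N|\varphi_t|^{\otimes 6} = \dfrac{N^3}{6}\int V_Nf_N|\varphi_t|^{\otimes6},
\end{equation*}
up to errors. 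These errors come from derivatives of $\varphi_t$, from $\varepsilon_\lambda$, and from $\Theta_0-1$. They are $O(N^{1/2})$ via \eqref{eq:truncated_scattering_equation_Lp_estimates}, using for example $N^3\|\nabla\omega_{\lambda,N}\|_{L^1}\lesssim \lambda N\sim N^{2/3}$; this must be checked carefully, and the $\lambda\omega$ terms need the precise bounds.

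\textbf{Quadratic and quartic leftovers from $\cK^\ren$ (main obstacle).} The quadratic part $\int\!\!\int \overline{\nabla_xk_t}\nabla_xk_t\, b^*_{z'}b_z$ we bound by Lemma~\ref{lemma:three_body_kernel_second_quantisation_estimates}-type estimates. The quartic part $\int \overline{\nabla_x k_t}(x,y,z)\nabla_xk_t(x,y',z')\,b^*_{y'}b^*_{z'}b_yb_z$ is an effective two-body interaction of strength $N^3|\nabla\omega_{\lambda,N}|^2$, and bounding it is the main difficulty.
\begin{itemize}
	\item First, use Cauchy--Schwarz in the form $|\nabla_xk_t(x,y,z)|^2$ integrated in $x$, which gives a kernel in $(y,z)$ of size $N^3\int|\nabla\omega|^2(x,y,z)\,\d x$.
	\item This kernel is supported on $|y-z|\le\lambda$. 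It should be controlled by $\tilde{\cV}_\twoB$-type operators with an $L^\infty$ factor $\sim N^{3}\cdot N\cdot N^{-3/2}$.
	\item We would compare it to $\cV_\twoB$ and $\tilde\cV_\twoB$ and apply Proposition~\ref{prop:general_two_body_bound_renormalised}, Proposition~\ref{prop:general_three_and_four_body_bound_renormalised} and Lemma~\ref{lemma:number_operator_renormalised}. This yields $\varepsilon\cK^\ren+\varepsilon\cV^\ren+C_\varepsilon\cN^\ren+C_\varepsilon N^{1/2}$, with the $N^{1/2}$ written as $N^{-1/2}\cdot N$.
\end{itemize}
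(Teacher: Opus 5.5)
Your algebraic skeleton is correct and matches the paper's. Expanding $\cK^\ren$ and $\cV^\ren$, and using the modified scattering equation and $\omega_{\lambda,N}=\omega_N$ on $\Supp V_N$, the cubic terms cancel and the constants add up to $\frac{N^3}{6}\int V_Nf_N|\varphi_t|^{\otimes 6}$. You cancel the cubic terms directly rather than via $b_xb_yb_z=d_{xyz}-k_t$, which is equivalent. The gap is in the error estimates, which are the actual content of the lemma; several of the tools you propose fail quantitatively.

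(i) Cubic remainders with derivatives on $\varphi_t$. A Hilbert--Schmidt bound $\Vert F\Vert_{L^2}(\cN_++1)^{3/2}$ combined with $\cN_+\le N$ cannot work. Already for $F=N^{3/2}\omega_{\lambda,N}\,(\Delta\varphi_t)\otimes\varphi_t\otimes\varphi_t$ one has $\Vert F\Vert_{L^2}\sim 1$, which yields $N^{1/2}(\cN_++1)$. For the $\nabla_x\omega_{\lambda,N}\cdot\nabla\varphi_t$ term it is worse, since $\Vert F\Vert_{L^2}\sim N^{1/2}$. What is needed:
\begin{itemize}
\item split off a single $b_x^*$ by Cauchy--Schwarz;
\item control the remaining pair operator through the \emph{operator norm} $\lesssim\lambda^{1/2}N^{-1/2}$ of the kernel as a map $L^2(\R^6)\to L^2(\R^3)$ (Lemma~\ref{lemma:three_body_kernel_second_quantisation_estimates}), which gives $C(\cN_++1)$;
\item for the gradient term, first integrate by parts onto $b_x^*$, then bound the resulting quartic term by $\lambda^2N^{-1/2}\tilde{\cV}_\twoB$ using the $L^1$ and $L^1L^\infty$ bounds on $\omega_{\lambda,N}$ and Proposition~\ref{prop:general_three_and_four_body_bound_renormalised}.
\end{itemize}

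(ii) The $\varepsilon_\lambda$ term. Here $N^{-1/2}\Vert\nabla u_\lambda\Vert_{L^2}\sim N^{-1/2}\lambda^{-2}=N^{1/6}$ is not small. The Cauchy--Schwarz step produces a constant $\varepsilon^{-1}N^{-1}\lambda^{-4}=\varepsilon^{-1}N^{1/3}$, so $\varepsilon$ cannot be taken of order $N^{-1/2}$. Then $\varepsilon\cK$ is not an admissible error; only $\varepsilon\cK^\ren$ and $N^{-1/2}\cK$ are. This is why the paper puts the derivative on $c_x^*$ and must treat the extra term $\cJ_\Delta^{(4,3)}$. Moreover, the quartic remainder then has a kernel built from $\nabla u_\lambda$, which is neither localised nor integrable on $\R^6$ ($|\nabla u_\lambda|\sim|\bx|^{-5}$). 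It only becomes controllable because one inserts $\1(|y-z|\le\lambda)$, which is free on $\Supp\varepsilon_\lambda$, \emph{before} substituting $\varepsilon_\lambda=-2\Delta_xu_\lambda$, and then uses $\tilde{\cV}_\twoB$.

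(iii) The step you flag as the main obstacle does not go through as described. The kernel $N^3\int|\nabla_x\omega_{\lambda,N}|^2\,\dd x$ has height $N^{5/2}$ on $|y-z|\lesssim N^{-1/2}$ (your own $L^\infty$ factor). The corresponding two-body operator therefore behaves like $N^2\cV_\twoB$, which nothing in the admissible error controls. Discarding the $x$-integration is what loses. Instead:
\begin{itemize}
\item use a Schur-type Cauchy--Schwarz with weights $\sup_x\Vert k_t(x,\cdot)\Vert_{L^1(\R^6)}\lesssim\lambda^2N^{-1/2}$ and the $\int\dd x$ of the other kernel;
\item in the term where both derivatives hit $\omega_{\lambda,N}$, trade the second derivatives for $V_Nf_N$ and $\varepsilon_\lambda$ via the scattering equation;
\item the $V_Nf_N$ part has range $N^{-1/2}$ and is bounded by $\lambda^2\cV_\twoB+N^{-1/2}\tilde{\cV}_\twoB$, and the $\varepsilon_\lambda$ part by $N^{-1/2}\tilde{\cV}_\twoB$.
\end{itemize}

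Only the non-renormalised bounds \eqref{eq:general_two_body_bound_non_renormalised} and \eqref{eq:general_two_three_four_body_bound_non_renormalised} may be used here. Section~\ref{sec:effective_interaction_potentials_renormalised} proves the renormalised ones by re-using this very analysis, so closing with \eqref{eq:general_two_body_bound_renormalised} and Lemma~\ref{lemma:number_operator_renormalised}, as you suggest, would be circular. That is why the error is stated with $C_\varepsilon N^{-1/2}(\cK+\cV_N+N)$.

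Two smaller points:
\begin{itemize}
\item Your sample constant $N^3\Vert\nabla\omega_{\lambda,N}\Vert_{L^1}\sim N^{2/3}$ exceeds $N^{1/2}$. What is needed is the product bound $N^3\Vert\omega_{\lambda,N}\nabla\omega_{\lambda,N}\Vert_{L^1}\lesssim N^{1/2}$.
\item The term $(\Theta_3^{(1)}-1)$ times the cubic operator cannot be absorbed by Cauchy--Schwarz against $\cV_N$. One first writes $b_xb_yb_z=d_{xyz}-k_t$ to get $\varepsilon\cV^\ren+C_\varepsilon(\cN_++1)$.
\end{itemize}
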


\begin{remark}
	The analysis of $\cK$ and $\cV_N$ can easily be adapted to show the weaker estimate
	\begin{equation}
		\label{eq:kinetic_plus_interaction_bdd_renormalised_op}
		\pm(\cK + \cV_N) \leq C(\cK^\ren + \cV^\ren + N).
	\end{equation}
	Hence, the error in Lemma~\ref{lemma:generator_estimate_kinetic_cubic_interaction} can be simplified to
	\begin{equation*}
		\pm\cE \leq \varepsilon\cK^\ren + \varepsilon\cV^\ren + C_\varepsilon\cN^\ren + C_\varepsilon N^{1/2}.
	\end{equation*}
	Since this will play a role when proving Propositions~\ref{prop:general_two_body_bound_renormalised}~and~\ref{prop:general_three_and_four_body_bound_renormalised}, we mention that the proof of \eqref{eq:kinetic_plus_interaction_bdd_renormalised_op} does \textit{not} require the usage of the bounds \eqref{eq:general_two_body_bound_renormalised} and \eqref{eq:general_four_body_bound_renormalised}.
\end{remark}

\begin{lemma}
	\label{lemma:generator_estimate_drift}
	Assume the same hypotheses as in Proposition~\ref{prop:gronwall_bound}. Then,
	\begin{equation*}
		\cL_1 + \cL_4^{(1)} = \dfrac{N^{5/2}}{2}\int_{\R^9}\left((V_Nf_N)(x,y,z)\varphi_t(x)\vert\varphi_t(y)\vert^2\vert\varphi_t(z)\vert^2b_x + \hc \right) + \cE,
	\end{equation*}
	for some $\cE$ that satisfies
	\begin{equation*}
		\pm\cE \leq \varepsilon\cK^\ren + C_\varepsilon\cN^\ren,
	\end{equation*}
	for all $\varepsilon > 0$.
\end{lemma}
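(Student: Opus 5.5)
Our plan is to insert the renormalised three-body annihilation operator $d_{xyz}$ into $\cL_4^{(1)}$. Its cubic part $b_xb_yb_z$ should then cancel most of $\cL_4^{(1)}$, and its scalar part $k_t$ should combine with $\cL_1$ to replace $V_N$ by $V_Nf_N$.

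Concretely, writing $b_yb_z = b_xb_yb_z$ composed appropriately, we use $b_xb_yb_z = d_{xyz} - k_t(x,y,z)$ inside $\cL_4^{(1)}$. Normal ordering is $b_x^*b_xb_yb_z$, so we write
\begin{equation*}
	\cL_4^{(1)} = \frac{N}{2}\int_{\R^9}V_N(x,y,z)\overline{\varphi_t(y)\varphi_t(z)}\,b_x^*\big(d_{xyz} - k_t(x,y,z)\big)\Theta_4^{(1)} + \hc
\end{equation*}
The $k_t$ contribution equals
\begin{equation*}
	-\frac{N^{5/2}}{2}\int (V_N\omega_{\lambda,N})(x,y,z)\,\overline{\varphi_t(x)}\,|\varphi_t(y)|^2|\varphi_t(z)|^2\, b_x^*\,\Theta_4^{(1)} + \hc,
\end{equation*}
using $k_t = N^{3/2}\omega_{\lambda,N}\varphi_t^{\otimes 3}$. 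On $\Supp V_N$ we have $\chi_\lambda=1$, so $\omega_{\lambda,N}=\omega_N$ there. Adding $\cL_1$ and taking adjoints, $V_N(1-\omega_N) = V_Nf_N$ yields the claimed main term, up to the discrepancies $\Theta_1 - 1$ and $\Theta_4^{(1)} - 1$. The relative sign of the $\hc$ must be checked against the convention in the statement; we regard it as bookkeeping.

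\textbf{Errors from the $\Theta$ coefficients.} We must bound terms of the form
\begin{equation*}
	N^{5/2}\int (V_Nf_N)\,\varphi_t\,|\varphi_t|^4\, b_x\,(\Theta-1) + \hc
\end{equation*}
The function $g(x)=N^{5/2}\int V_Nf_N(x,y,z)|\varphi_t(y)|^2|\varphi_t(z)|^2\,dy\,dz$ is bounded in $L^2$ by $C N^{1/2}$, since $\int V_N f_N \sim N^{-2}$ up to the factor $N^3$. By \eqref{eq:theta_coef_estimate_almost_1}, Cauchy--Schwarz gives the bound $\|g\|_2\,\|\cN_+^{1/2}\|\,(\cN_++1)/N$. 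Using $\cN_+\le N$ on $\cF_+^{\le N}$, this is at most $C N^{-1/2}(\cN_++1)^{3/2}\le C(\cN_++1)$. Lemma~\ref{lemma:number_operator_renormalised} then converts this to $C\cN^\ren$.

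\textbf{The $d$-term (main obstacle).} We must control
\begin{equation*}
	\frac{N}{2}\int V_N\,\overline{\varphi_t(y)\varphi_t(z)}\, b_x^*\, d_{xyz}\,\Theta_4^{(1)}.
\end{equation*}
The allowed error is only $\varepsilon\cK^\ren + C_\varepsilon\cN^\ren$, with no $\cV^\ren$. A naive Cauchy--Schwarz would produce $\cV^\ren$ paired with $N^2\int V_N\, b_x^*b_x$, which is of order $N^{3/2}\cN_+$ and far too large. Our first attempt is therefore to substitute $d_{xyz} = b_x b_yb_z + k_t$ back only partly.

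Instead we use the scattering equation. We write
\begin{equation*}
	V_Nf_N = (-\Delta_x-\Delta_y-\Delta_z)\omega_{\lambda,N} + N^{-2}\varepsilon_\lambda,
\end{equation*}
and integrate by parts in $y,z$, moving derivatives onto $b_yb_z$ together with $\varphi_t$. Here we rewrite $d_{xyz}$ through $c$-operators: $b_y$ differs from $c_y$ by a quadratic term in $k_t$, so $\nabla_y b_y$ may be expressed via $\nabla c_y$. The resulting kinetic pieces are bounded by $\varepsilon\cK^\ren$ times an $L^2$ norm of $\nabla\omega_{\lambda,N}$, namely $N\|\nabla\omega_{\lambda,N}\|_{L^1L^\infty}\le C$ by \eqref{eq:truncated_scattering_equation_LpLinf_estimates}, together with $C_\varepsilon\cN^\ren$. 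The $\varepsilon_\lambda$ and $u_\lambda$ remainders are handled with \eqref{eq:truncated_scattering_equation_error_Lp_estimates} and $\lambda = 2RN^{-1/3}$. This integration-by-parts step, with its attendant commutator corrections and $\Theta$ shifts (controlled by \eqref{eq:theta_coef_estimate_difference}), is where we expect the main difficulty.
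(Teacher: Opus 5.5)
Your reduction agrees with the paper's up to the $d$-term. You insert $b_xb_yb_z=d_{xyz}-k_t(x,y,z)$ into $\cL_4^{(1)}$, use $\omega_{\lambda,N}=\omega_N$ on $\Supp V_N$ to combine the scalar part with $\cL_1$ into $V_Nf_N$, and absorb $\Theta_1-1$, $\Theta_4^{(1)}-1$ via \eqref{eq:theta_coef_estimate_almost_1} and $\cN_+\le N$. (Minor slip: the $k_t$-part carries $\varphi_t(x)$, not $\overline{\varphi_t(x)}$, next to $b_x^*$.)

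The gap is in the $d$-term, and it comes from a miscalculation. For fixed $x$, the substitution $u=N^{1/2}(x-y)$, $v=N^{1/2}(x-z)$ gives $\int V_N(x,y,z)\,\mathrm{d}y\,\mathrm{d}z=N^{-2}\Vert V\Vert_{L^1}$. Hence
\[
	\sup_{x\in\R^3}N^2\int_{\R^6}V_N(x,y,z)\vert\varphi_t(y)\vert^2\vert\varphi_t(z)\vert^2\,\mathrm{d}y\,\mathrm{d}z\leq\Vert V\Vert_{L^1(\R^6)}\Vert\varphi_t\Vert_\infty^4\leq C,
\]
which is $O(1)$, not $N^{3/2}$. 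So the naive Cauchy--Schwarz step you rejected works directly, with $\Theta_4^{(1)}$ on the left as in the definition of $\cL_4^{(1)}$:
\[
	\pm\Big(\frac{N}{2}\int_{\R^9}V_N(x,y,z)\overline{\varphi_t(y)\varphi_t(z)}\,\Theta_4^{(1)}b_x^*d_{xyz}+\hc\Big)\leq\delta\int_{\R^9}V_N\,d_{xyz}^*d_{xyz}+C\delta^{-1}\cN_+=6\delta\,\cV^\ren+C\delta^{-1}\cN_+ .
\]
This is exactly the paper's argument. The paper therefore proves $\pm\cE\leq\varepsilon\cV^\ren+C_\varepsilon\cN_+$. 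The $\cK^\ren$ in the displayed statement appears to be a typo for $\cV^\ren$, and this is harmless, since the error in Lemma~\ref{lemma:generator_estimate} allows $\varepsilon\cV^\ren$.

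The integration-by-parts route you propose instead is not carried out, and as sketched it would not yield a bound free of interaction terms.
\begin{enumerate}
	\item The $d$-term carries $V_N$, not $V_Nf_N$, so the scattering equation covers only part of it.
	\item After moving a derivative onto the field operators, you are still left with products such as $b_x^*b_x(\nabla b_y)b_z$ with $\vert(x-y,x-z)\vert\leq\lambda$.
	\item Three annihilation operators at nearby points require three-body interaction control. In the paper this always goes through $\cV_N$ or $\cV^\ren$, e.g.\ via Propositions~\ref{prop:general_two_body_bound_renormalised} and~\ref{prop:general_three_and_four_body_bound_renormalised}.
	\item $\cK^\ren$ and $\cN^\ren$ alone control such terms only with extra factors of $\cN_+$. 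The prefactor $N\Vert\nabla\omega_{\lambda,N}\Vert_{L^1L^\infty}\leq C$ you cite does not address this.
\end{enumerate}
So your claimed bound $\varepsilon\cK^\ren+C_\varepsilon\cN^\ren$ for these pieces is unsupported. The fix is simply the Cauchy--Schwarz estimate above.
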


\subsubsection{Analysis of $\cK, \cL_3^{(1)}$ and $\cV_N$: proof of Lemma~\ref{lemma:generator_estimate_kinetic_cubic_interaction}}

By definition of $\cV^\ren$, and since $\omega_{\lambda,N}$ coincides with $\omega_N$ on the support of $V_N$, we have
\begin{multline}
	\label{eq:interaction_term_renormalised}
	\cV_N = \cV^\ren - \dfrac{N^{3/2}}{6}\int_{\R^9}\left((V_N\omega_N)(x,y,z)\varphi_t(x)\varphi_t(y)\varphi_t(z)d_{xyz}^* + \hc\right)\\
	+ \dfrac{N^3}{6}\int_{\R^9}(V_N\omega_N^2)(x,y,z)\vert\varphi_t(x)\vert^2\vert\varphi_t(y)\vert^2\vert\varphi_t(z)\vert^2.
\end{multline}
Similarly, using additionally that $\Theta_3^{(1)}$ satisfies \eqref{eq:theta_coef_estimate_almost_1}, we write
\begin{multline}
	\label{eq:cubic_term_renormalised}
	\cL_3^{(1)} = \dfrac{N^{3/2}}{6}\int_{\R^9}V_N(x,y,z)\varphi_t(x)\varphi_t(y)\varphi_t(z)d_{xyz}^* + \hc\\
		-\dfrac{N^3}{3}\int_{\R^9}(V_N\omega_N)(x,y,z)\vert\varphi_t(x)\vert^2\vert\varphi_t(y)\vert^2\vert\varphi_t(z)\vert^2 + \cE_3^{(1)},
\end{multline}
for some $\cE_3^{(1)}$ that satisfies
\begin{equation*}
	\pm\cE_3^{(1)} \leq \varepsilon\cV^\ren + C_\varepsilon\cN_+,
\end{equation*}
for all $\varepsilon > 0$. Furthermore, by definition of $\cK^\ren$ we have
\begin{multline}
	\label{eq:kinetic_term_rewritten}
	\cK = \cK^\ren - \dfrac{1}{2}\int_{\R^9}\left((-\Delta_xk_t(x,y,z))b_x^*b_y^*b_z^* + \hc\right)\\
	-\dfrac{1}{4}\int_{\R^{15}}\overline{k_t(x,y,z)}(-\Delta_x)k_t(x,y',z')b_yb_zb_{y'}^*b_{z'}^*.
\end{multline}
We claim that this implies
\begin{multline*}
	\cK = \cK^\ren - \dfrac{N^{3/2}}{6}\int_{\R^9}\left((V_Nf_N)(x,y,z)\varphi_t(x)\varphi_t(y)\varphi_t(z)d_{xyz}^* + \hc\right) \numberthis \label{eq:kinetic_term_renormalised}\\
	+ \dfrac{N^3}{6}\int_{\R^9}(V_Nf_N\omega_N)(x,y,z)\vert\varphi_t(x)\vert^2\vert\varphi_t(y)\vert^2\vert\varphi_t(z)\vert^2 + \cE_\Delta,
\end{multline*}
for some $\cE_\Delta$ satisfying
\begin{equation*}
	\pm\cE_\Delta \leq \varepsilon\cK^\ren + C_\varepsilon N^{-1/2}(\cN_+ + \cK + \cV_N + N),
\end{equation*}
for all $\varepsilon > 0$. Combining \eqref{eq:interaction_term_renormalised},\eqref{eq:cubic_term_renormalised} and \eqref{eq:kinetic_term_renormalised}, and using the identity $1 - \omega_N = f_N$ we obtain
\begin{multline}
	\label{eq:interaction_cubic_kinetic_term_renormalised}
	\cK + \cL_3^{(1)} + \cV_N = -\dfrac{N^3}{6}\int_{\R^9}(V_N\omega_N)(x,y,z)\vert\varphi_t(x)\vert^2\vert\varphi_t(y)\vert^2\vert\varphi_t(z)\vert^2 + \cK^\ren + \cV^\ren\\
	+ \cE_3^{(1)} + \cE_\Delta,
\end{multline}
which when summed with $\cL_0$ gives \ref{eq:generator_estimate_kinetic_cubic_interaction}.

Let us prove \eqref{eq:kinetic_term_renormalised}. We start by analysing the second term in \eqref{eq:kinetic_term_rewritten}. By symmetry, and since $\omega_{\lambda,N}$ solves \eqref{eq:truncated_scattering_equation_no_modified_laplacian}, we can expand
\begin{equation*}
	-\dfrac{1}{2}\int_{\R^9}(-\Delta_xk_t(x,y,z))b_x^*b_y^*b_z^* + \hc = \cI_\Delta^{(1)} + \cI_\Delta^{(2)} + \cI_\Delta^{(3)} + \cI_\Delta^{(4)},
\end{equation*}
with $\cI_\Delta^{(1)}, \cI_\Delta^{(2)}, \cI_\Delta^{(3)}$ and $\cI_\Delta^{(4)}$ given by
\begin{align*}
	\cI_\Delta^{(1)} &= -\dfrac{N^{3/2}}{6}\int_{\R^9}(V_Nf_N)(x,y,z)\varphi_t(x)\varphi_t(y)\varphi_t(z)b_x^*b_y^*b_z^* + \hc,\\
	\cI_\Delta^{(2)} &= -\dfrac{N^{3/2}}{2}\int_{\R^9}\omega_{\lambda,N}(x,y,z)(-\Delta\varphi_t)(x)\varphi_t(y)\varphi_t(z)b_x^*b_y^*b_z^* + \hc,\\
	\cI_\Delta^{(3)} &= N^{3/2}\int_{\R^9}(\nabla_x\omega_{\lambda,N}(x,y,z))\cdot(\nabla\varphi_t)(x)\varphi_t(y)\varphi_t(z)b_x^*b_y^*b_z^* + \hc
\end{align*}
and
\begin{equation*}
	\cI_\Delta^{(4)} = \dfrac{N^{-1/2}}{6}\int_{\R^9}\varepsilon_\lambda(x,y,z)\varphi_t(x)\varphi_t(y)\varphi_t(z)b_x^*b_y^*b_z^* + \hc
\end{equation*}
Next, we shown that $\cI_\Delta^{(2)}, \cI_\Delta^{(3)}$ and $\cI_\Delta^{(4)}$ are errors.
\\

\noindent
\textit{Analysis of $\cI_\Delta^{(2)}$.} Using the Cauchy--Schwarz inequality and Lemma~\ref{lemma:three_body_kernel_second_quantisation_estimates} just as we did in the proof of Lemma~\ref{lemma:number_operator_renormalised}, we can show
\begin{equation*}
	\pm\cI_\Delta^{(2)} \leq C(\cN_+ + 1).
\end{equation*}
\\

\noindent
\textit{Analysis of $\cI_\Delta^{(3)}$.} We use an integration by parts to write
\begin{multline*}
	\cI_\Delta^{(3)} = N^{3/2}\int_{\R^3}\d{}xb_x^*\int_{\R^6}\omega_{\lambda,N}(x,y,z)(-\Delta\varphi_t)(x)\varphi_t(y)\varphi_t(z)b_y^*b_z^* + \hc\\
	-N^{3/2}\int_{\R^3}\d{}x(\nabla_xb_x^*)\int_{\R^6}\omega_{\lambda,N}(x,y,z)(\nabla\varphi_t)(x)\varphi_t(y)\varphi_t(z)b_y^*b_z^* + \hc
\end{multline*}
The first term can again be bounded using Lemma~\ref{lemma:three_body_kernel_second_quantisation_estimates}. Applying the Cauchy--Schwarz inequality and the identity \eqref{eq:second_quantised_kinetic_and_interaction_mod} to bound the second term, we find
\begin{multline*}
	\pm\cI_\Delta^{(3)} \leq C(\cN_+ + 1) + \varepsilon\cK\\
	+ \varepsilon^{-1}CN^3\int_{\R^{15}}\omega_{\lambda,N}(x,y,z)\omega_{\lambda,N}(x,y',z')\vert\nabla\varphi_t(x)\vert^2\varphi_t(y)\varphi_t(z)\overline{\varphi_t(y')\varphi_t(z')}b_{y'}b_{z'}b_y^*b_z^* + \hc,
\end{multline*}
for all $\varepsilon > 0$. Then, we use the CCR \eqref{eq:creation_annihilation_op_modified_ccr} to put the last term in normal order, and we use the bounds \eqref{eq:truncated_scattering_equation_Lp_estimates} and \eqref{eq:truncated_scattering_equation_LpLinf_estimates} to deduce
\begin{equation*}
	\pm\cI_\Delta^{(3)} \leq C(\cN_+ + 1) + \varepsilon\cK + \varepsilon^{-1}C(1 + \lambda^2N^{-1/2}\cN_+ + \lambda^2N^{-1/2}\tilde{\cV}_\twoB),
\end{equation*}
for all $\varepsilon > 0$, where $\tilde{\cV}_\twoB$ is defined as in Proposition~\ref{prop:general_three_and_four_body_bound_renormalised}. Here, we used that by definition of the cut-off contained in $\omega_{\lambda,N}$ we have $\vert y - z\vert \leq \lambda$ and likewise for $y'$ and $z'$. Thanks to Proposition~\ref{prop:general_three_and_four_body_bound_renormalised} and the fact that $\lambda$ is of order $N^{-1/3}$, we obtain
\begin{equation*}
	\pm\cI_\Delta^{(3)} \leq C\cN_+ + CN^{-1/2}(\cK + \cV_N + N).
\end{equation*}
\\

\noindent
\textit{Analysis of $\cI_\Delta^{(4)}$.} To bound $\cI_\Delta^{(4)}$ we will use \eqref{eq:truncated_scattering_equation_error_green_function_solves} to substitute $\varepsilon_\lambda(x,y,z)$ with $-\Delta_xu_\lambda(x,y,z)$ and integrate by parts to make good use of the kinetic operator $\cK$. However, when performing this substitution, we want to keep the information that $y$ and $z$ are at a distance less than $\lambda$ from one another, which is contained in $\varepsilon_\lambda$ but not in $u_\lambda$. Namely, write
\begin{align*}
	\cI_\Delta^{(4)} &= \dfrac{N^{-1/2}}{6}\int_{\R^9}\varepsilon_\lambda(x,y,z)\1(\vert y - z\vert \leq \lambda)\varphi_t(x)\varphi_t(y)\varphi_t(z)b_x^*b_y^*b_z^* + \hc\\
	&= \dfrac{N^{-1/2}}{6}\int_{\R^9}((-\Delta_x - \Delta_y - \Delta_z)u_\lambda(x,y,z))\1(\vert y - z\vert \leq \lambda)\varphi_t(x)\varphi_t(y)\varphi_t(z)b_x^*b_y^*b_z^* + \hc
\end{align*}
Recall that we use the short-hand notation $u_\lambda(x,y,z) = u_\lambda(x - y,x - z)$. Hence, we have the identity
\begin{equation*}
	(-\Delta_x - \Delta_y - \Delta_z)u_\lambda(x - y, x - z) = -2\Delta_xu_\lambda(x - y,x - z)
\end{equation*}
Thus, thanks to an integration by parts and the definition of $c_x^*$, we get
\begin{equation*}
	\cI_\Delta^{(4)} = \cJ_\Delta^{(4,1)} + \cJ_\Delta^{(4,2)} + \cJ_\Delta^{(4,3)},
\end{equation*}
with $\cJ_\Delta^{(4,1)}, \cJ_\Delta^{(4,2)}$ and $\cJ_\Delta^{(4,3)}$ given by
\begin{align*}
	\cJ_\Delta^{(4,1)} &= \dfrac{N^{-1/2}}{3}\int_{\R^9}(\nabla_x u_\lambda(x,y,z))\1(\vert y - z\vert \leq \lambda)\varphi_t(x)\varphi_t(y)\varphi_t(z)\nabla c_x^*b_y^*b_z^* + \hc,\\
	\cJ_\Delta^{(4,2)} &= \dfrac{N^{-1/2}}{3}\int_{\R^{9}}(\nabla_xu_\lambda(x,y,z))\1(\vert y - z\vert \leq \lambda)(\nabla\varphi_t)(x)\varphi_t(y)\varphi_t(z)b_x^*b_y^*b_z^* + \hc,
\end{align*}
and
\begin{equation*}
	\cJ_\Delta^{(4,3)} = \dfrac{N^{-1/2}}{6}\int_{\R^{15}}(\nabla_x u_\lambda(x,y,z))\1(\vert y - z\vert \leq \lambda)\varphi_t(x)\varphi_t(y)\varphi_t(z)\overline{\nabla_xk_t(x,y',z')}b_{y'}b_{z'}b_y^*b_z^* + \hc
\end{equation*}
The term $\cJ_\Delta^{(4,1)}$ can be bounded by following similar steps as for $\cI_\Delta^{(3)}$. Namely, we use the Cauchy--Schwarz inequality to separate $\nabla_xc_x^*$ from $b_y^*b_z^*$, which yields $\cK^\ren$ and a term proportional to $b_{y'}b_{z'}b_y^*b_z^*$. Then, we apply the CCR \eqref{eq:creation_annihilation_op_modified_ccr} to normal order $b_{y'}b_{z'}b_y^*b_z^*$ and use the estimate \eqref{eq:truncated_scattering_equation_error_Lp_estimates} to bound the simpler terms. Doing so, we are left with
\begin{align*}
	\pm\cJ_\Delta^{(4,1)} &\leq \varepsilon\cK^\ren + \varepsilon^{-1}C(\lambda^{-4}N^{-1} + \lambda^{-1}N^{-1}\cN_+)\\
	&\phantom{\leq} + \varepsilon^{-1}C N^{-1}\int_{\R^{15}}
	\begin{multlined}[t]
		(\nabla_x u_\lambda(x,y,z))(\nabla_x u_\lambda(x,y',z'))\1(\vert y - z\vert \leq \lambda)\1(\vert y' - z'\vert \leq \lambda)\\
		\times \vert\varphi_t(x)\vert^2\varphi_t(y)\overline{\varphi_t(y')}\varphi_t(z)\overline{\varphi_t(z')}b_y^*b_z^*b_{y'}b_{z'} + \hc,
	\end{multlined}
\end{align*}
for all $\varepsilon > 0$. Next, we use the Cauchy--Schwarz inequality, the estimates \eqref{eq:truncated_scattering_equation_error_Lp_estimates}, and Proposition~\ref{prop:general_three_and_four_body_bound_renormalised} to deduce
\begin{multline*}
	\pm N^{-1}\int_{\R^{15}}
	\begin{multlined}[t]
		(\nabla_x u_\lambda(x,y,z))(\nabla_x u_\lambda(x,y',z'))\1(\vert y - z\vert \leq \lambda)\1(\vert y' - z'\vert \leq \lambda)\vert\varphi_t(x)\vert^2\\
		\times\varphi_t(y)\overline{\varphi_t(y')}\varphi_t(z)\overline{\varphi_t(z')}b_y^*b_z^*b_{y'}b_{z'} + \hc
	\end{multlined}\\
	\leq CN^{-1}\lambda^{-1}\int_{\R^6}\1(\vert y - z\vert \leq \lambda)b_y^*b_z^*b_yb_z \leq CN^{-1}\lambda^{-1}(\cK + \cV_N + \cN_+),
\end{multline*}
which implies
\begin{equation*}
	\pm\cJ_\Delta^{(4,1)} \leq \varepsilon\cK^\ren + \varepsilon^{-1}CN^{-1}\lambda^{-1}(\cN_+ + \cK + \cV_N + \lambda^{-3}),
\end{equation*}
for all $\varepsilon > 0$. Following the same strategy to bound $\cJ_\Delta^{(4,2)}$, we find
\begin{equation*}
	\pm\cJ_\Delta^{(4,2)} \leq C\cN_+ + CN^{-1}\lambda^{-1}(\cN_+ + \cK + \cV_N + \lambda^{-3}).
\end{equation*}
Lastly, to estimate $\cJ_\Delta^{(4,3)}$ we once more use the CCR \eqref{eq:creation_annihilation_op_modified_ccr}. The two terms coming from the commutator are easily bounded using the estimate $\vert \nabla u_\lambda\vert \leq C\lambda^{-5}$ and the $L^1$-norm estimates \eqref{eq:truncated_scattering_equation_Lp_estimates}, which gives
\begin{multline*}
	\cJ_\Delta^{(4,3)} = \dfrac{N^{-1/2}}{6}\int_{\R^{15}}(\nabla_x u_\lambda(x,y,z))\1(\vert y - z\vert \leq \lambda)\varphi_t(x)\varphi_t(y)\varphi_t(z)\overline{\nabla_xk_t(x,y',z')}b_y^*b_z^*b_{y'}b_{z'} + \hc\\
	+ \cE_\Delta^{(1)},
\end{multline*}
for some $\cE_\Delta^{(1)}$ satisfying
\begin{equation*}
	\pm\cE_\Delta^{(1)} \leq CN^{-1}\lambda^{-1}(\cN_+ + \lambda^{-3}).
\end{equation*}
Thanks to the Cauchy--Schwarz inequality, the remaining term can be split as
\begin{multline*}
	\pm\dfrac{N^{-1/2}}{6}\int_{\R^{15}}(\nabla_x u_\lambda(x,y,z))\1(\vert y - z\vert \leq \lambda)\varphi_t(x)\varphi_t(y)\varphi_t(z)\overline{\nabla_xk_t(x,y',z')}b_y^*b_z^*b_{y'}b_{z'} + \hc\\
	\leq \varepsilon CN(\Vert\omega_{\lambda,N}\Vert_{L^1} + \Vert\nabla\omega_{\lambda,N}\Vert_{L^1})\int_{\R^9}\vert \nabla_xu_\lambda(x,y,z)\vert^{1-\eta}\1(\vert y - z\vert \leq \lambda)b_y^*b_z^*b_yb_z\\
	+ \varepsilon^{-1} CN(\Vert\omega_{\lambda,N}\Vert_{L^1L^\infty} + \Vert\nabla\omega_{\lambda,N}\Vert_{L^1L^\infty})\Vert \nabla u_{\lambda}\Vert_{L^{1+\eta}}^{1+\eta}\int_{\R^6}\1(\vert y' - z'\vert \leq \lambda)b_{y'}^*b_{z'}^*b_{y'}b_{z'},
\end{multline*}
for any $\varepsilon,\eta > 0$. Taking $\eta = 3/10$ so that the integrations can be carried, using the estimates \eqref{eq:truncated_scattering_equation_error_green_function_pointwise_estimate}, \eqref{eq:truncated_scattering_equation_Lp_estimates} and \eqref{eq:truncated_scattering_equation_LpLinf_estimates}, and Proposition~\ref{prop:general_three_and_four_body_bound_renormalised}, and then optimising over $\eta$, we get
\begin{multline*}
	\pm\dfrac{N^{-1/2}}{6}\int_{\R^{15}}(\nabla_x u_\lambda(x,y,z))\1(\vert y - z\vert \leq \lambda)\varphi_t(x)\varphi_t(y)\varphi_t(z)\overline{\nabla_xk_t(x,y',z')}b_y^*b_z^*b_{y'}b_{z'} + \hc\\
	\leq CN^{-1/2}(\cN_+ + \cK + \cV_N).
\end{multline*}
In summary, we have shown that
\begin{equation*}
	\pm\cI_\Delta^{(4)} \leq \varepsilon\cK^\ren + C_\varepsilon\cN_+ + C_\varepsilon N^{-1/2}(\cK + \cV_N) + C_\varepsilon N^{1/3},
\end{equation*}
for all $\varepsilon > 0$.

Gathering the previous estimates and using that $\lambda = 2RN^{-1/3}$, we obtain
\begin{multline}
	\label{eq:kinetic_term_renormalised_first_term_rewritten}
	-\dfrac{1}{2}\int_{\R^9}\left((-\Delta_x)k_t(x,y,z)b_x^*b_y^*b_z^* + \hc\right) = \dfrac{N^3}{3}\int_{\R^9}(V_Nf_N\omega_N)(x,y,z)\vert\varphi_t(x)\vert^2\vert\varphi_t(y)\vert^2\vert\varphi_t(z)\vert^2\\
	-\dfrac{N^{3/2}}{6}\int_{\R^9}\left((V_Nf_N)(x,y,z)\varphi_t(x)\varphi_t(y)\varphi_t(z)d_{xyz}^* + \hc\right) + \cE_\Delta^{(2)},
\end{multline}
for some $\cE_\Delta^{(2)}$ satisfying
\begin{equation*}
	\pm \cE_\Delta^{(2)} \leq \varepsilon\cK^\ren + C_\varepsilon N^{-1/2}(\cN_+ + \cK + \cV_N + N),
\end{equation*}
for all $\varepsilon > 0$.
\\

Finally, we analyse the last term in \eqref{eq:kinetic_term_rewritten}. We use the commutation relations \eqref{eq:creation_annihilation_op_modified_ccr} to expand
\begin{multline}
	\label{eq:kinetic_term_renormalised_second_term}
	- \dfrac{1}{4}\int_{\R^{15}}\overline{k_t(x,y,z)}(-\Delta_x)k_t(x,y',z')b_yb_zb_{y'}^*b_{z'}^*\\
	= -\dfrac{1}{2}\Vert\nabla\otimes Q_t^{\otimes 2}k_t\Vert_{L^2}^2 -\int_{\R^{12}}\overline{(\nabla\otimes Q_t\otimes \1k_t)(x,y,z)}\nabla_xk_t(x,y,z')b_{z'}^*b_z\\
	-\dfrac{1}{4}\int_{\R^{15}}\overline{k_t(x,y,z)}(-\Delta_x)k_t(x,y',z')b_{y'}^*b_{z'}^*b_yb_z
\end{multline}
The first two terms are easily dealt with, using in particular the scattering equation \eqref{eq:three_body_scattering_equation}, and we find
\begin{multline*}
	-\dfrac{1}{2}\Vert\nabla\otimes Q_t^{\otimes 2}k_t\Vert_{L^2}^2 -\int_{\R^{12}}\overline{(\nabla\otimes Q_t\otimes \1k_t)(x,y,z)}\nabla_xk_t(x,y,z')b_{z'}^*b_z\\
	= -\dfrac{N^3}{6}\int_{\R^9}(V_N\omega_Nf_N)(x,y,z)\vert\varphi_t(x)\vert^2\vert\varphi_t(y)\vert^2\vert\varphi_t(z)\vert^2 + \cE_\Delta^{(3)},
\end{multline*}
for some $\cE_\Delta^{(3)}$ that satisfies
\begin{equation*}
	\pm\cE_\Delta^{(3)} \leq C\cN_+.
\end{equation*}
The only difficulty in the bounding the remaining term is when the two derivatives hit the function $\omega_{\lambda,N}$ contained in the kernel $k_t(x,y',z')$. Since $\omega_{\lambda,N}$ solves \eqref{eq:three_body_scattering_equation}, this can be rewritten as
\begin{multline*}
	-\dfrac{N^{3/2}}{4}\int_{\R^{15}}\overline{k_t(x,y,z)}(-\Delta_x\omega_{\lambda,N}(x,y,z))\varphi_t(x)\varphi_t(y')\varphi_t(z')b_{y'}^*b_{z'}^*b_yb_z\\
	\begin{multlined}[t]
		= -CN^{3/2}\int_{\R^{15}}\overline{k_t(x,y,z)}(V_Nf_N)(x,y',z')\varphi_t(x)\varphi_t(y')\varphi_t(z')b_{y'}^*b_{z'}^*b_yb_z\\
		-CN^{-1/2}\int_{\R^{15}}\overline{k_t(x,y,z)}\varepsilon_\lambda(x,y',z')\varphi_t(x)\varphi_t(y')\varphi_t(z')b_{y'}^*b_{z'}^*b_yb_z.
	\end{multlined}
\end{multline*}
Both terms can be bounded using the Cauchy--Schwarz inequality, the estimates \eqref{eq:truncated_scattering_equation_Lp_estimates}, \eqref{eq:truncated_scattering_equation_LpLinf_estimates} and \eqref{eq:truncated_scattering_equation_error_Lp_estimates}, and applying Propositions~\ref{prop:general_two_body_bound_renormalised}~and~\ref{prop:general_three_and_four_body_bound_renormalised}. The reason we can use the former proposition when bounding some of the errors is that the range of $V_Nf_N$ is proportional to $N^{-1/2}$ (compared to $\lambda$ for $\omega_{\lambda,N}$ or $\varepsilon_\lambda$). All and all, we find
\begin{multline}
	\label{eq:kinetic_term_renormalised_second_term_rewritten}
	- \dfrac{1}{4}\int_{\R^{15}}\overline{k_t(x,y,z)}(-\Delta_x)k_t(x,y',z')b_yb_zb_{y'}^*b_{z'}^*\\
	= -\dfrac{N^3}{6}\int_{\R^9}(V_N\omega_Nf_N)(x,y,z)\vert\varphi_t(x)\vert^2\vert\varphi_t(y)\vert^2\vert\varphi_t(z)\vert^2 + \cE_\Delta^{(4)},
\end{multline}
for some $\cE_\Delta^{(4)}$ that satisfies
\begin{equation*}
	\pm\cE_\Delta^{(4)} \leq C\cN_+ + CN^{-7/12}(\cK + \cV_N).
\end{equation*}
This concludes the proof of Lemma~\ref{lemma:generator_estimate_kinetic_cubic_interaction}.

\subsubsection{Analysis of $\cL_1$ and $\cL_4^{(1)}$: proof of Lemma~\ref{lemma:generator_estimate_drift}}

Thanks to the estimate \eqref{eq:theta_coef_estimate_almost_1} and the Cauchy--Schwarz inequality, we have
\begin{equation*}
	\mathcal{L}_1 = \dfrac{N^{5/2}}{2}\int_{\R^9}\left(V_N(x,y,z)\vert\varphi_t(y)\vert^2\vert\varphi_t(z)\vert^2\varphi_t(x)b_x^* + \hc\right) + \cE_1,
\end{equation*}
with $\cE_1$ satisfying
\begin{equation*}
	\pm\cE_1 \leq C\cN_+.
\end{equation*}
Similarly, we have
\begin{align*}
	\mathcal{L}_4^{(1)} &=
	\begin{multlined}[t]
		\dfrac{N}{2}\int_{\R^9}V_N(x,y,z)\varphi_t(y)\varphi_t(z)d_{xyz}^*b_x\Theta_4^{(1)} + \hc\\
		-\dfrac{N^{5/2}}{2}\int_{\R^9}(V_N\omega_N)(x,y,z)\overline{\varphi_t(x)}\vert\varphi_t(y)\vert^2\vert\varphi_t(z)\vert^2b_x\Theta_4^{(1)} + \hc
	\end{multlined}\\
	&= -\dfrac{N^{5/2}}{2}\int_{\R^9}\left((V_N\omega_N)(x,y,z)\overline{\varphi_t(x)}\vert\varphi_t(y)\vert^2\vert\varphi_t(z)\vert^2b_x + \hc\right) + \cE_4^{(1)},
\end{align*}
with $\cE_4^{(1)}$ satisfying
\begin{equation*}
	\pm\cE_4^{(1)} \leq \varepsilon\cV^\ren + C_\varepsilon\cN_+,
\end{equation*}
for any $\varepsilon > 0$. Therefore,
\begin{equation*}
	\cL_1 + \cL_4^{(1)} = \dfrac{N^{5/2}}{2}\int_{\R^9}\left((V_Nf_N)(x,y,z)\varphi_t(x)\vert\varphi_t(y)\vert^2\vert\varphi_t(z)\vert^2b_x^* + \hc\right) + \cE_1 + \cE_4^{(1)}.
\end{equation*}
This concludes the proof of Lemma~\ref{lemma:generator_estimate_drift}.

\subsubsection{Conclusion of the proof of Lemma~\ref{lemma:generator_estimate}}

Applying Lemmas~\ref{lemma:generator_estimate_kinetic_cubic_interaction}~and~\ref{lemma:generator_estimate_drift}, gathering the estimates on the error terms, bounding $\cV_\twoB$ using Proposition~\ref{prop:general_two_body_bound_renormalised}, and using that $\lambda = 2RN^{-1/3}$, we obtain
\begin{multline*}
	\cH_N = \dfrac{N^{5/2}}{2}\int_{\R^9}(V_Nf_N)(x,y,z)\varphi_t(x)\vert\varphi_t(y)\vert^2\vert\varphi_t(z)\vert^2b_x^* - \dfrac{N^{1/2}}{2}b(V)b^*(\vert\varphi_t\vert^4\varphi_t) + \hc\\
	+ \dfrac{N^3}{6}\int_{\R^9}(V_Nf_N)(x,y,z)\vert\varphi_t(x)\vert^2\vert\varphi_t(y)\vert^2\vert\varphi_t(z)\vert^2 - \dfrac{N}{6}b(V)\Vert\varphi_t\Vert_6^6 + \cK^\ren + \cV^\ren + \cE,
\end{multline*}
for some $\cE$ satisfying \eqref{eq:generator_estimate}. Finally, the desired estimate \eqref{eq:generator_estimate_error} follows from the estimates
\begin{equation}
	\label{eq:constant_contribution_close_GP}
	\pm\Big(\dfrac{N^3}{6}\int_{\R^9}(V_Nf_N)(x,y,z)\vert\varphi_t(x)\vert^2\vert\varphi_t(y)\vert^2\vert\varphi_t(z)\vert^2 - \dfrac{N}{6}b(V)\Vert\varphi_t\Vert_6^6\Big) \leq CN^{1/2}
\end{equation}
and
\begin{multline}
	\label{eq:linear_terms_close}
	\pm\Big(\dfrac{N^{5/2}}{2}\int_{\R^9}(V_Nf_N)(x,y,z)\varphi_t(x)\vert\varphi_t(y)\vert^2\vert\varphi_t(z)\vert^2b_x^* - \dfrac{N^{1/2}}{2}b(V)b^*(\vert\varphi_t\vert^4\varphi_t) + \hc\Big)\\
	\leq C(\cN_+ + 1).
\end{multline}
Both \eqref{eq:constant_contribution_close_GP} and \eqref{eq:linear_terms_close} follow from the estimate
\begin{equation}
	\label{eq:constant_contribution_close_GP_Linf_estimate}
	\Big\Vert N^2\int_{\R^6}\d{}y\d{}z(V_Nf_N)(\cdot,y,z)\vert\varphi_t(y)\vert^2\vert\varphi_t(z)\vert^2 - b(V)\vert\varphi_t(\cdot)\vert^4\Big\Vert_{L^\infty(\R^3)} \leq CN^{-1/2}.
\end{equation}
To prove it, we use a change of variables and the bound
\begin{equation*}
	\left\vert\vert\varphi_t(x + N^{-1/2}z)\vert^2 - \vert\varphi_t(x)\vert^2\right\vert = \Big\vert\int_0^1\d{}s\dfrac{\d{}}{\d{} s}\vert\varphi_t(x + sN^{-1/2}z)\vert^2\Big\vert \leq CN^{-1/2}\vert z\vert,
\end{equation*}
and then repeat this with the variable $y$ instead of $z$. This concludes the proof of Lemma~\ref{lemma:generator_estimate}.

\subsection{Proof of Lemma~\ref{lemma:generator_commutator_estimates}}

We frequently use the regularity estimates \eqref{eq:gross_pitaevskii_solution_estimate_H1_L6}, \eqref{eq:gross_pitaevskii_solution_estimate_H4} and \eqref{eq:gross_pitaevskii_solution_estimate_Linf} without mentioning them to avoid repetitions. Moreover, most estimates and identities here should be interpreted in the quadratic form sense on $\cF_+^{\leq N}$, and we therefore do not say this each time. 

Using that $\Phi_{N,t}$ solves \eqref{eq:fluctuation_vector_equation}, we write
\begin{equation}
	\label{eq:generator_fluctuation_time_derivative}
	\partial_t\langle\cH_N + \cQ^\ren\rangle_{\Phi_{N,t}} = \langle\partial_t(\cH_N + \cQ^\ren) + \ui[\cH_N,\cQ^\ren]\rangle_{\Phi_{N,t}}.
\end{equation}
Recall that
\begin{multline}
	\label{eq:generator_fluctuation_rewritten_repeat}
	\mathcal{H}_N = \cK + \cV_N + a^*(\varphi_t)b(\ui\partial_t \varphi_t) - \dfrac{b(V)}{2}\big(\sqrt{N - \cN_+}b(\vert\varphi_t\vert^4\varphi_t) + \hc\big)\\
	-\dfrac{b(V)}{6}\Vert\varphi_t\Vert_6^6(N - \cN_+) + \sum_{j = 0}^5\mathcal{L}_j,
\end{multline}
with $\cL_1,\dots,\cL_5$ given below \eqref{eq:hamiltonian_conjugation_excitation_map}. Just as in the proof of Lemma~\ref{lemma:generator_estimate}, the terms for which we need to identify a cancellation at the main order are $\cK, \cL_0, \cL_1, \cL_3^{(1)}, \cL_4^{(1)}$ and $\cV_N$; the other terms are errors that can be bounded directly.

We start by explaining how to bound the error terms from $\partial_t(\cH_N + \cQ^\ren)$. The time derivatives of $\cL_2^{(1)}$, $\cL_2^{(2)}$, $\cL_2^{(3)}$, $\cL_3^{(2)}$, $\cL_3^{(3)}$, $\cL_4^{(2)}$, $\cL_4^{(3)}$ and $\cL_5$ are bounded more or less the same way as in the previous section, and the time derivative of $\cQ^\ren$ is bounded just as in the proof of Lemma~\ref{lemma:number_operator_renormalised}. To see why this is true, we distinguish between three cases. Firstly, the time derivative can hit one the wavefunctions $\varphi_t$ in the kernel of these operators, in which case the term is bounded as in the proof of Lemma~\ref{lemma:generator_estimate} using the higher regularity estimate \eqref{eq:gross_pitaevskii_solution_estimate_Linf}. Secondly, the time derivative can hit one of the modified creation and annihilation operators, in which case we use the identity
\begin{equation}
	\label{eq:modified_creation_annihilation_op_derivative}
	\partial_tb_x = -\partial_t\varphi_t(x)a(\varphi_t) - \varphi_t(x)a(\partial_t\varphi_t) = -\varphi_t(x)b(\partial_t\varphi_t)
\end{equation}
on $\cF_+^{\leq N}$ (and likewise for $\partial_tb_x^*$). Then, it is easy to see that such terms are bounded by $\cN_+ + \cV_\twoB + \cV^\ren$, and therefore by $\cN^\ren + \cK^\ren + \cV^\ren$ from Proposition~\ref{prop:general_two_body_bound_renormalised}. Thirdly, the time derivative can hit one of the prefactors $\Theta_j^{(i)}(\cN_+)$. These terms however vanish when tested on both sides by elements of $\cF_+^{\leq N}$ due to the identity $\partial_t\cN_+ = -a^*(\partial_t\varphi_t)a(\varphi_t) - a^*(\varphi_t)a(\partial_t\varphi_t)$; to see why this identity implies the claim one can for instance use a series expansion of $\Theta_j^{(i)}$.

Unlike in the proof of Lemma~\ref{lemma:generator_estimate} where contributions of order $N$ were extracted from $\cK$ and $\cV_N$, the terms $\partial_t\cK$ and $\partial_t\cV_N$ are both errors, as we now show. As an immediate consequence of \eqref{eq:modified_creation_annihilation_op_derivative}, we have
\begin{equation*}
	\pm\partial_t\cK = \mp(b^*(-\Delta\varphi_t)b(\partial_t\varphi_t) + \hc) \leq C\cN_+.
\end{equation*}
Furthermore, using again \eqref{eq:modified_creation_annihilation_op_derivative} and the identity
\begin{equation}
	\label{eq:renormalised_annihilation_op_three_rewritten}
	b_xb_yb_z = d_{xyz} - k_t(x,y,z),
\end{equation}
we find
\begin{multline*}
	\partial_t\cV_N = -\dfrac{1}{2}\int_{\R^9}\left(V_N(x,y,z)\varphi_t(z)d_{xyz}^*b_xb_yb(\partial_t\varphi_t) + \hc\right)\\
	+ \dfrac{1}{2}\int_{\R^9}(V_Nk_t)(x,y,z)\varphi_t(z)b_xb_yb(\partial_t\varphi_t) + \hc.
\end{multline*}
Thanks to the Cauchy--Schwarz inequality, the bound
\begin{equation*}
	\pm b^*(\partial_t\varphi_t)b(\partial_t\varphi_t) \leq C\cN_+
\end{equation*}
and Proposition~\ref{prop:general_two_body_bound_renormalised}, we deduce
\begin{equation*}
	\pm\partial_t\cV_N \leq C(\cN_+ + \cV_\twoB + \cV^\ren) \leq C(\cN^\ren + \cK^\ren + \cV^\ren + N^{1/2}).
\end{equation*}

Having bounded the error terms from $\partial_t(\cH_N + \cQ^\ren)$, we are left with $\partial_t\cL_0, \partial_t\cL_1, \partial_t\cL_3^{(1)}, \partial_t\cL_4^{(1)}$ and $[\cH_N,\cQ^\ren]$. Because the analysis of the error terms from $[\cH_N,\cQ^\ren]$ is tedious but rather straightforward, we postpone it to the end of the present proof and focus first on the leading-order terms. Recall that in the proof of Lemma~\ref{lemma:generator_estimate} a contribution of order $N$ was extracted from $\cL_3^{(1)}$, and that this term was combined with $\cK$ and $\cV_N$ to make $\cK^\ren$ and $\cV^\ren$ appear (see \eqref{eq:interaction_cubic_kinetic_term_renormalised}). Though the same procedure can be done with $\partial_t\cK, \partial_t\cL_3^{(1)}$ and $\partial_t\cV_N$, this would not yield the correct bound. Namely, we would obtain the time derivative of the leading-order contribution from \eqref{eq:interaction_cubic_kinetic_term_renormalised}, as well as the time derivatives of $\cK^\ren$ and $\cV^\ren$. The former is not a problem, since it would cancel out with the time derivative of $\cL_0 - b(V)\Vert\varphi_t\Vert_6^6N/6$ (see \eqref{eq:generator_fluctuation_rewritten_repeat}). The latter is however a problem since $\partial_t\cK^\ren$ and $\partial_t\cV^\ren$ cannot be controlled in terms of $\cK^\ren$ and $\cV^\ren$. Instead, the correct way to proceed is to identity a cancellation between $\partial_t\cL_3^{(1)}$ and the commutator $[\cK + \cV_N,\cQ^\ren]$, which is why it is important to include $\cQ^\ren$ in \eqref{eq:generator_fluctuation_time_derivative}, even though it is bounded by $\cN^\ren$. In a similar way, it is crucial to combine $\partial_t\cL_0$ with $[\cL_3^{(1)},\cQ^\ren]$ and $\partial_t\cL_1 + \partial_t\cL_4^{(1)}$ with $[\cL_4^{(1)},\cQ^\ren]$. We formulate these cancellations more precisely in the following three lemmas.
\begin{lemma}
	\label{lemma:generator_commutator_estimates_cubic_derivative}
	Assume the same hypotheses as in Proposition~\ref{prop:gronwall_bound}. Then,
	\begin{equation*}
		\pm\left(\partial_t\cL_3^{(1)} + \ui[\cK + \cV_N,\cQ^\ren]\right) \leq C(\cN^\ren + \cK^\ren + \cV^\ren + N^{1/2}).
	\end{equation*}
\end{lemma}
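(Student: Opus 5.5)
We will compute both sides explicitly and match the leading cubic terms; whatever is left over is handled by Propositions~\ref{prop:general_two_body_bound_renormalised} and~\ref{prop:general_three_and_four_body_bound_renormalised}.

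\textbf{Step 1: the time derivative of $\cL_3^{(1)}$.} Using \eqref{eq:modified_creation_annihilation_op_derivative} and \eqref{eq:theta_coef_estimate_almost_1}, we write
\begin{equation*}
	\partial_t\cL_3^{(1)} = \dfrac{N^{3/2}}{6}\int_{\R^9}V_N(x,y,z)\,\partial_t\big(\overline{\varphi_t(x)\varphi_t(y)\varphi_t(z)}\big)b_xb_yb_z + \hc + \cE,
\end{equation*}
where $\cE$ collects two kinds of terms:
\begin{itemize}
	\item those where $\partial_t$ hits a $b$, which produce $\varphi_t(z)b_xb_yb(\partial_t\varphi_t)$ and are bounded by Cauchy--Schwarz in terms of $\cN_+ + \cV_\twoB$;
	\item those where $\partial_t$ hits $\Theta_3^{(1)}$, which vanish on $\cF_+^{\leq N}$ as explained above.
\end{itemize}
By Proposition~\ref{prop:general_two_body_bound_renormalised}, $\cE$ is an admissible error.

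\textbf{Step 2: the commutator $\ui[\cK,\cQ^\ren]$.} By the CCR \eqref{eq:creation_annihilation_op_modified_ccr}, up to $\hc$ this commutator equals
\begin{equation*}
	-\dfrac{1}{6}\int_{\R^9}\big((-\Delta_x-\Delta_y-\Delta_z)\partial_tk_t\big)\,b_x^*b_y^*b_z^*\Theta_N + \hc
\end{equation*}
plus terms coming from $\varphi_t$ in $Q_t$, which are bounded by $\cN_+$.

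We then expand $\partial_t k_t = N^{3/2}\omega_{\lambda,N}\,\partial_t(\varphi_t^{\otimes 3})$. When all derivatives fall on $\omega_{\lambda,N}$, the scattering equation \eqref{eq:truncated_scattering_equation_no_modified_laplacian} gives $N^{3/2}(V_Nf_N)\partial_t(\varphi_t^{\otimes3})$. All other terms are treated exactly as $\cI_\Delta^{(2)},\cI_\Delta^{(3)},\cI_\Delta^{(4)}$ in the proof of Lemma~\ref{lemma:generator_estimate_kinetic_cubic_interaction}, now with $\partial_t\varphi_t$ and its derivatives in place of $\varphi_t$. This requires the $H^6$ regularity: we need $\|\nabla\partial_t\varphi_t\|_\infty$ and $\|\Delta\partial_t\varphi_t\|_\infty$ bounded. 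The $\cI_\Delta^{(4)}$-type term again uses $u_\lambda$ and an integration by parts to generate $c_x^*$ and $\cK^\ren$, together with Proposition~\ref{prop:general_three_and_four_body_bound_renormalised} for $\tilde\cV_\twoB$.

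\textbf{Step 3: the commutator $\ui[\cV_N,\cQ^\ren]$.} Commuting $b_xb_yb_z$ past $b^*_{x'}b^*_{y'}b^*_{z'}$ gives:
\begin{itemize}
	\item a fully contracted term $\int V_N\,\partial_tk_t\,b_x^*b_y^*b_z^*$-type, of the form $-\frac{N^{3/2}}{6}\int V_N\omega_N\partial_t(\overline{\varphi_t^{\otimes3}})b_xb_yb_z$ up to sign and $\hc$;
	\item partially contracted terms of the form $\int V_N(x,y,z)b^*_xb^*_yb^*_z b_xb_y\,\partial_tk_t(z,\cdot,\cdot)b^*b^*$ and similar.
\end{itemize}

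\textbf{Step 4: cancellation of the leading terms.} The leading parts of Steps 1--3 combine as
\begin{equation*}
	V_N - V_Nf_N - V_N\omega_N = 0,
\end{equation*}
using $f_N + \omega_N = 1$ and $\omega_{\lambda,N} = \omega_N$ on $\Supp V_N$. Here we choose the sign conventions so that the $b$-form of Step 1 is conjugate to the $b^*$-forms of Steps 2--3; the $\hc$ in each term makes this consistent. This cancellation is precisely why $\cQ^\ren$ was included.

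\textbf{Step 5: the remaining errors.} The partially contracted terms from Step 3 are the main obstacle. To handle them, we rewrite $b^*_xb^*_yb^*_z = d^*_{xyz} - \overline{k_t}$ so that one factor of $V_N^{1/2}d$ is absorbed into $\varepsilon\cV^\ren$ via Cauchy--Schwarz. The complementary factor then involves a kernel $N^{3/2}\omega_{\lambda,N}\partial_t\varphi_t^{\otimes3}$ of range $\lambda$ coupled to $V_N$, giving effective three- and four-body operators of exactly the form $\cV_\threeB$ and $\cV_\fourB$. These are controlled by \eqref{eq:general_four_body_bound_renormalised}, and by the analogous renormalised bound for $\cV_\threeB$ mentioned in the Remark; alternatively, $\cV_\threeB$ is bounded via \eqref{eq:general_two_three_four_body_bound_non_renormalised} combined with \eqref{eq:kinetic_plus_interaction_bdd_renormalised_op}. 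The $\overline{k_t}$ pieces produce $V_N\omega_N$-weighted quadratic and $\cV_\twoB$-type terms, bounded as for $\cL_2^{(1)}$ and by Proposition~\ref{prop:general_two_body_bound_renormalised}. The differences $\Theta_N - 1$ arising from reordering are handled by \eqref{eq:theta_coef_cubic_transform_almost_1} and \eqref{eq:theta_coef_cubic_transform_difference}. Finally, all occurrences of $\cN_+$ are replaced by $\cN^\ren$ using Lemma~\ref{lemma:number_operator_renormalised}.
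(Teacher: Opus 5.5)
Your proposal follows the paper's proof in all essentials. You use the same three computations: $\partial_t\cL_3^{(1)}$; the kinetic commutator handled like $\cI_\Delta^{(1)},\dots,\cI_\Delta^{(4)}$ in Lemma~\ref{lemma:generator_estimate_kinetic_cubic_interaction} with $\partial_t\varphi_t$ in place of $\varphi_t$; and the full and partial contractions in $[\cV_N,\cQ^\ren]$. You use the same identity $V_N - V_Nf_N - V_N\omega_N = 0$. You treat the five- and seven-operator terms the same way, via $b^*_xb^*_yb^*_z = d^*_{xyz} - \overline{k_t(x,y,z)}$, Cauchy--Schwarz against $\cV^\ren$, and Propositions~\ref{prop:general_two_body_bound_renormalised} and~\ref{prop:general_three_and_four_body_bound_renormalised}.

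The only organisational difference is where the cancellation happens. You cancel the three leading cubic pieces directly in the form $\int(\cdots)\,b^*_xb^*_yb^*_z\Theta_N$. The paper instead rewrites each of them as a $d^*_{xyz}$-term plus a constant, and then checks that both the $d^*$-coefficients and the constants (proportional to $V_N\omega_N(f_N+\omega_N-1)$) vanish. Your bookkeeping is leaner, and the cancellation is even exact: $\Theta_3^{(1)}$ is literally equal to $\Theta_N$, both stand to the right of $b^*b^*b^*$, and $\cK+\cV_N$ commute with $\cN_+$.

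That last observation is also why you should not replace $\Theta_3^{(1)}$ by $1$ in Step~1, nor rely on ``$\Theta_N-1$ differences'' in Step~5. Consider a leftover of the form $N^{3/2}\int V_N\,\partial_t(\varphi_t^{\otimes 3})\,b^*_xb^*_yb^*_z(\Theta_N-1)+\hc$. Estimated by Cauchy--Schwarz on $b^*b^*b^*$, it costs $\cV_N$, which is only bounded by $2\cV^\ren+CN$. It becomes an admissible error (at most $\cV^\ren + C(\cN_++1)$) only after writing $b^*b^*b^* = d^*-\overline{k_t}$. This is exactly what the paper's $d^*$-form provides, and the simplest fix is to keep $\Theta_3^{(1)}$.

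There are also two smaller inaccuracies in Step~5.
\begin{itemize}
\item The $\overline{k_t}$-pieces of the partial contractions include a $b^*b$ term. It is bounded by a small multiple of $\cN_+$, so no $\cL_2^{(1)}$-type argument is needed.
\item They also include a $b^*b^*bb$ term which, besides $\cV_\twoB$, involves the range-$\lambda$ operator $\tilde{\cV}_\twoB$. This is not covered by Proposition~\ref{prop:general_two_body_bound_renormalised}. It is controlled by \eqref{eq:general_two_three_four_body_bound_non_renormalised} and \eqref{eq:kinetic_plus_interaction_bdd_renormalised_op} only because it carries a prefactor $N^{-1/2}$. The same $N^{-1/2}$ prefactor in front of $\cV_\threeB$ is what makes your non-renormalised alternative for $\cV_\threeB$ work.
\end{itemize}
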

\begin{lemma}
	\label{lemma:generator_commutator_estimates_cubic_commutator}
	Assume the same hypotheses as in Proposition~\ref{prop:gronwall_bound}. Then,
	\begin{equation*}
		\partial_t\cL_0 + \ui[\cL_3^{(1)},\cQ^\ren] = \dfrac{N^3}{6}\int_{\R^9}(V_Nf_N)(x,y,z)\partial_t(\vert\varphi_t(x)\vert^2\vert\varphi_t(y)\vert^2\vert\varphi_t(z)\vert^2) + \tilde{\cE},
	\end{equation*}
	for some $\tilde{\cE}$ that satisfies
	\begin{equation*}
		\pm\tilde{\cE} \leq C\cN^\ren + CN^{-1/2}(\cK^\ren + \cV^\ren + CN).
	\end{equation*}
\end{lemma}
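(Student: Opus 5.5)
We compute the commutator $\ui[\cL_3^{(1)},\cQ^\ren]$ explicitly with the CCR \eqref{eq:creation_annihilation_op_modified_ccr}, isolate its constant (fully contracted) part, and show that this constant part cancels $\partial_t\cL_0$ up to the term $\frac{N^3}{6}\int (V_Nf_N)\,\partial_t(|\varphi_t|^2|\varphi_t|^2|\varphi_t|^2)$. Everything else is bounded by Cauchy--Schwarz together with Propositions~\ref{prop:general_two_body_bound_renormalised} and~\ref{prop:general_three_and_four_body_bound_renormalised}.

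\textbf{Step 1: the commutator.} Write $\cL_3^{(1)} = A + A^*$ with
\[
A = \Theta_3^{(1)}\frac{N^{3/2}}{6}\int V_N\,\overline{\varphi_t\varphi_t\varphi_t}\,b_xb_yb_z,
\qquad
\cQ^\ren = B + B^*,
\qquad
B = \frac16\int \ui\,\partial_tk_t\, b^*b^*b^*\Theta_N .
\]
The terms $[A,B^*]$ and $[A^*,B]$ have a creation/annihilation balance of $\pm 6$. They vanish up to commutators of the $\Theta$ coefficients, which by \eqref{eq:theta_coef_estimate_difference} and \eqref{eq:theta_coef_cubic_transform_difference} carry an extra $N^{-1}$. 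Those remainders are handled by Cauchy--Schwarz against $\cV^\ren$ and $\cN^\ren$, after writing $b_xb_yb_z = d_{xyz}-k_t$ as in \eqref{eq:renormalised_annihilation_op_three_rewritten}.

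The main term is $[A,B]$, and it splits by the number of contractions.
\begin{itemize}
\item \emph{Three contractions.} These produce the constant
\[
\ui\frac{N^{3/2}}{6}\int V_N\,\overline{\varphi_t(x)\varphi_t(y)\varphi_t(z)}\;\ui\,\partial_t k_t(x,y,z),
\]
up to $Q_t$-projections, whose effect is $O(1)$ since $\Vert k_t\Vert_{L^2}\le C$ by \eqref{eq:cubic_kernel_estimate_L2}.
\item \emph{Two contractions.} These give one-body terms $b^*b$ with kernel $N^{3/2}\int V_N\,\overline{\varphi\varphi}\,\partial_tk_t$. By $\Vert V_N\omega_N\Vert_{L^1L^\infty}$-type bounds this kernel is bounded, so these terms are $\le C\cN_+$.
\item \emph{One contraction.} These give $b^*b^*bb$ with kernel essentially $N^{3/2}V_N\varphi\,\partial_tk_t$. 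They are controlled by $\cV_\twoB$ and hence, by \eqref{eq:general_two_body_bound_renormalised}, by $\cK^\ren+\cV^\ren+\cN^\ren+N^{1/2}$. The claim allows only $N^{-1/2}(\cK^\ren+\cV^\ren)$, so here we must gain a factor $N^{-1/2}$; it should come from the extra smallness of $\partial_t k_t$ integrated against $V_N$ in the $L^1L^\infty$ norms \eqref{eq:truncated_scattering_equation_LpLinf_estimates}.
\item \emph{No contraction.} The sextic term is $\int V_N\overline{\varphi^3}\,\partial_tk_t\, b^*b^*b^*bbb$ with a $b^*$ triple at separated points. It is bounded by Cauchy--Schwarz using $\cV_\threeB$ and $\cV_\fourB$, then \eqref{eq:general_four_body_bound_renormalised}, which again gives prefactors $N^{-1/2}$.
\end{itemize}

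\textbf{Step 2: the constant cancellation.} Since $\partial_tk_t = N^{3/2}\omega_{\lambda,N}\,\partial_t(\varphi_t\varphi_t\varphi_t)$ and $\omega_{\lambda,N}=\omega_N$ on $\Supp V_N$, the constant from Step 1 equals
\[
-\frac{N^3}{6}\cdot 2\Re\int (V_N\omega_N)\,\overline{\varphi^3}\,\partial_t(\varphi^3)
= -\frac{N^3}{6}\int V_N\omega_N\,\partial_t(|\varphi|^2|\varphi|^2|\varphi|^2).
\]
The sign is fixed by carefully tracking the $\ui$ factors together with the $\hc$ contribution of $[A^*,B^*]$. On the other side, $\partial_t\cL_0 = \frac{N^3}{6}\int V_N\,\partial_t(|\varphi|^6)$ plus a term from $\partial_t\Theta_0$, and that term vanishes on $\cF_+^{\leq N}$ exactly as argued before the lemma. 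Adding the two and using $1-\omega_N=f_N$ yields the stated leading term. The replacements $\Theta\to1$ cost $N^{-1}(\cN_++1)\cdot N$, i.e.\ $C\cN^\ren$ by Lemma~\ref{lemma:number_operator_renormalised}. The $Q_t$ corrections in the contractions are $O(\cN_+ + 1)$.

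\textbf{Main obstacle.} The hard part is the one- and zero-contraction terms in Step 1, which must be bounded with the small prefactor $N^{-1/2}$ rather than $O(1)$. I would first try to rewrite $b_{x}b_{y}b_{z}$ in $A$ as $d_{xyz}-k_t$, so that $\cV^\ren$ appears, and then estimate the remaining $k_t$-pieces through $\tilde\cV_\twoB$, $\cV_\threeB$ and $\cV_\fourB$ combined with $\Vert\omega_{\lambda,N}\Vert_{L^2L^\infty}\le CN^{-3/4}$. The point to check is whether the power counting in these bounds indeed leaves $N^{-1/2}$.
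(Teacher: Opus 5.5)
Your extraction of the leading order is correct and matches the paper. The triple contraction in $[\tilde{\cL}_3^{(1)},(\tilde{\cQ}^\ren)^*]$, with $\Theta_3^{(1)}\Theta_N$ replaced by $1$ at cost $C(\cN_++1)$, gives $-\frac{N^3}{6}\int V_N\omega_N\,\partial_t(|\varphi_t(x)|^2|\varphi_t(y)|^2|\varphi_t(z)|^2)$. This combines with $\partial_t\cL_0$ via $1-\omega_N=f_N$, and the double-contraction term is harmless (the paper gets $C\lambda\cN_+$).

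The gap is in the remaining error terms, which carry the whole content of the bound $C\cN^\ren+CN^{-1/2}(\cK^\ren+\cV^\ren+N)$. There you either leave the factor $N^{-1/2}$ as a hope or invoke a mechanism that does not work.

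\emph{Sextic terms.} The commutator of the annihilation triple $\tilde{\cL}_3^{(1)}$ with the creation triple $(\tilde{\cQ}^\ren)^*$ has no zero-contraction part: the normal-ordered $b^*b^*b^*bbb$ cancels exactly. Sextic terms arise only from moving $\Theta_3^{(1)}$ and $\Theta_N$ past the triples. This is the paper's $\tilde{\cI}_3^{(2)}$, which also contains your $[A,B^*]$ and $[A^*,B]$. These terms are controllable only because of the factor $O(N^{-1})$ from \eqref{eq:theta_coef_estimate_difference}, which your plan for the sextic term does not use. Moreover, the two triples sit at unrelated points, so $\cV_\threeB$ and $\cV_\fourB$ are not the relevant objects. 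The paper writes the created triple as $\int b^*_{x'}b^*_{y'}A_{x'y'}$ with $A_{x'y'}=\int\partial_tk_t(x',y',z')\,b^*_{z'}\,\mathrm{d}z'$ and applies Cauchy--Schwarz. It bounds $\int A_{x'y'}A_{x'y'}^*$ by the operator-norm estimate of Lemma~\ref{lemma:three_body_kernel_second_quantisation_estimates}, which is of order $\lambda N^{-1}$; the bound $\Vert k_t\Vert_{L^2}\le C$ only gives $O(1)$ there, which is not enough. The other side is bounded by $\tilde{\cV}_\twoB$.

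\emph{Quartic (one-contraction) term.} $\cV_\twoB$ cannot be the controlling operator here. After contracting, say, $x$ with $x'$, the created pair $(y',z')$ comes from $\omega_{\lambda,N}(x,y',z')$ and is only $\lambda$-close, not $N^{-1/2}$-close. The paper applies Cauchy--Schwarz with weight $V_N(x,y,z)\,\omega_{\lambda,N}(x,y',z')$:
\begin{itemize}
\item On the annihilated pair, $\Vert\omega_{\lambda,N}\Vert_{L^1}\le C\lambda^2N^{-2}$ together with $\int V_N(x,y,z)\,\mathrm{d}x=N^{-1/2}\Vert V(\cdot,N^{1/2}(y-z))\Vert_{L^1}$ gives $C\lambda^2\cV_\twoB$.
\item On the created pair, integrating $\omega_{\lambda,N}$ over $x$ at fixed $y'-z'$ gives $CN^{-1/2}\tilde{\cV}_\twoB$.
\end{itemize}
Then Proposition~\ref{prop:general_two_body_bound_renormalised}, the first bound in \eqref{eq:general_two_three_four_body_bound_non_renormalised} and \eqref{eq:kinetic_plus_interaction_bdd_renormalised_op}, together with $\lambda^2\sim N^{-2/3}\le N^{-1/2}$, produce the stated prefactor.

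Without these ingredients, the best your sketch gives for the quartic and sextic remainders is an $O(1)$ multiple of quantities like $\cK+\cV_N+\tilde{\cV}_\twoB$. That is an error of order $N$, not the claimed $N^{1/2}$.
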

\begin{lemma}
	\label{lemma:generator_commutator_estimates_drift}
	Assume the same hypotheses as in Proposition~\ref{prop:gronwall_bound}. Then,
	\begin{multline*}
		\partial_t\cL_1 + \partial_t\cL_4^{(1)} + \ui[\cL_4^{(1)},\cQ^\ren] = \dfrac{N^{5/2}}{2}\int_{\R^9}\left((V_Nf_N)(x,y,z)\partial_t(\varphi_t(x)\vert\varphi_t(y)\vert^2\vert\varphi_t(z)\vert^2)b_x^* + \hc\right)\\
		- \dfrac{N^{5/2}}{2}\int_{\R^9}\left((V_Nf_N)(x,y,z)\vert\varphi_t(x)\vert^2\vert\varphi_t(y)\vert^2\vert\varphi_t(z)\vert^2b^*(\partial_t\varphi_t) + \hc\right) + \tilde{\cE},
	\end{multline*}
	for some $\tilde{\cE}$ that satisfies
	\begin{equation*}
		\pm\tilde{\cE} \leq C(\cN^\ren + \cK^\ren + \cV^\ren + N^{1/2}).
	\end{equation*}
\end{lemma}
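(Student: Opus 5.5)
The plan is to compute the three operators $\partial_t\cL_1$, $\partial_t\cL_4^{(1)}$ and $\ui[\cL_4^{(1)},\cQ^\ren]$ separately. In each I isolate the pieces that are linear in $b^\sharp$ and carry the prefactor $N^{5/2}$. The aim is to show that these pieces add up to the two displayed terms, with $V_N\omega_N$ and $V_N$ combining into $V_Nf_N$, and that everything else is an error.

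\textbf{Step 1: $\partial_t\cL_1$.} When the time derivative hits a prefactor $\Theta_1$, the resulting term vanishes on $\cF_+^{\leq N}$, as explained above. When it hits the wavefunctions in the kernel, it gives $\frac{N^{5/2}}{2}\int V_N\,\partial_t(\varphi_t(x)|\varphi_t(y)|^2|\varphi_t(z)|^2)b_x^*+\hc$, up to $C\cN_+$ by \eqref{eq:theta_coef_estimate_almost_1}. When it hits $b_x^*$, I use \eqref{eq:modified_creation_annihilation_op_derivative}, $\partial_tb^*_x=-\overline{\varphi_t(x)}\,b^*(\partial_t\varphi_t)$. This produces $-\frac{N^{5/2}}{2}\int V_N|\varphi_t(x)|^2|\varphi_t(y)|^2|\varphi_t(z)|^2 b^*(\partial_t\varphi_t)+\hc$. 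These are exactly the displayed terms, except that $V_N$ stands where $V_Nf_N$ should be.

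\textbf{Step 2: $\partial_t\cL_4^{(1)}$ and the commutator.} As in the proof of Lemma~\ref{lemma:generator_estimate_drift}, I insert \eqref{eq:renormalised_annihilation_op_three_rewritten}. This splits $\cL_4^{(1)}$ into a part containing $d_{xyz}$ and the linear term $-\frac{N^{5/2}}{2}\int V_N\omega_N\,\overline{\varphi_t(x)}|\varphi_t(y)|^2|\varphi_t(z)|^2 b_x\Theta_4^{(1)}+\hc$. The time derivative of the linear term is handled as in Step~1. It gives the same two structures with $-V_N\omega_N$ in place of $V_N$, so adding Step~1 turns $V_N$ into $V_Nf_N$.

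In the $d$-part, the time derivative can hit three things:
\begin{itemize}
\item If it hits $\varphi_t(y)\varphi_t(z)$ or the $b$'s, Cauchy--Schwarz and \eqref{eq:modified_creation_annihilation_op_derivative} bound the result by $\cV^\ren+\cV_\twoB+\cN_+$. Proposition~\ref{prop:general_two_body_bound_renormalised} then converts this into $C(\cN^\ren+\cK^\ren+\cV^\ren+N^{1/2})$.
\item If it hits $k_t$ inside $d_{xyz}$, it produces the dangerous term $\frac N2\int V_N\overline{\varphi_t(y)\varphi_t(z)}\,b_x^*\,\partial_tk_t(x,y,z)\Theta_4^{(1)}+\hc$. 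This term is of order $N^{5/2}$.
\end{itemize}
To cancel the dangerous term I compute $\ui[\cL_4^{(1)},\cQ^\ren]$ with the CCR \eqref{eq:creation_annihilation_op_modified_ccr}. Fully contracting $b_xb_yb_z$ against $b^*b^*b^*$ in $\cQ^\ren$ gives $\frac N2\int V_N\overline{\varphi_t(y)\varphi_t(z)}\,b_x^*\,(-\partial_tk_t)$. This cancels the dangerous term, up to replacing $Q_t^{\otimes3}\partial_tk_t$ by $\partial_tk_t$ and up to $\Theta$-differences. Both corrections are controlled by \eqref{eq:theta_coef_estimate_difference} and \eqref{eq:theta_coef_cubic_transform_difference}, each costing $C\cN_+$ after Cauchy--Schwarz.

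\textbf{Step 3: remaining commutator terms (main obstacle).} The partially contracted terms in $[b_x^*b_xb_yb_z,\,b^*b^*b^*]$ remain, together with the commutator of $b_x^*$ with the annihilation part of $\cQ^\ren$. Their kernels are $V_N(x,y,z)\,\partial_tk_t(\cdot,\cdot,\cdot)$, with one or two variables shared, sandwiching normal-ordered operators of degree up to eight. I expect this to be the hardest part, because $\partial_tk_t$ has range $\lambda$ while $V_N$ has range $N^{-1/2}$.

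For these terms I apply Cauchy--Schwarz so that the $V_N$ factor lands on a quadratic form dominated by $\cV_\twoB$ or $\cV^\ren$. The $\partial_t k_t$ factor, which carries the cutoff $\1(|x-z|\leq\lambda)$, I estimate with the $L^pL^\infty$ bounds \eqref{eq:truncated_scattering_equation_LpLinf_estimates}. The resulting operators then have exactly the form of $\cV_\threeB$ and $\cV_\fourB$. These are bounded by Proposition~\ref{prop:general_three_and_four_body_bound_renormalised}, using \eqref{eq:general_four_body_bound_renormalised} for the highest-degree term. If some term comes with only the non-renormalised bound \eqref{eq:general_two_three_four_body_bound_non_renormalised}, I convert it using \eqref{eq:kinetic_plus_interaction_bdd_renormalised_op}, provided the prefactor is at most $N^{-1/2}$. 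Collecting all errors gives $\pm\tilde\cE\leq C(\cN^\ren+\cK^\ren+\cV^\ren+N^{1/2})$.
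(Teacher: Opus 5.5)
Your proposal is correct and follows the paper's proof. It uses the same three pieces $\partial_t\cL_1$, $\partial_t\cL_4^{(1)}$ and $\ui[\cL_4^{(1)},\cQ^\ren]$, the same substitution $b_xb_yb_z=d_{xyz}-k_t$ that turns $V_N$ into $V_Nf_N$, the same full contraction producing $-\frac N2\int V_N\overline{\varphi_t(y)\varphi_t(z)}(Q_t^{\otimes 3}\partial_tk_t)b_x^*$, and the same Cauchy--Schwarz reduction of the leftover terms to $\cV^\ren$, $\cV_\twoB$, $\tilde{\cV}_\twoB$, $\cV_\threeB$ and $\cV_\fourB$. The only difference is bookkeeping, which is the same algebra: the paper differentiates $\cL_4^{(1)}$ before inserting $d_{xyz}-k_t$, so the commutator's $\partial_tk_t$ term combines with $k_t\,\partial_t(\overline{\varphi_t(y)\varphi_t(z)})$ instead of cancelling a separate ``dangerous'' term. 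One small correction: replacing $\Theta_N$ by $1$ in that cancellation uses \eqref{eq:theta_coef_cubic_transform_almost_1}, not the difference bounds.
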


\subsubsection{Analysis of $\partial_t\cL_3^{(1)}$ and $[\cK + \cV_N,\cQ^\ren]$: proof of Lemma~\ref{lemma:generator_commutator_estimates_cubic_derivative}}

Firstly, we use the identity \eqref{eq:renormalised_annihilation_op_three_rewritten} and that $\Theta_3^{(1)}$ satisfies \eqref{eq:theta_coef_estimate_almost_1} to write
\begin{multline*}
	\partial_t\cL_3^{(1)} = -\dfrac{N^{3/2}}{2}\int_{\R^9}\big((V_N\omega_N)(x,y,z)(\partial_t\varphi_t)(x)\overline{\varphi_t(x)}\vert\varphi_t(y)\vert^2\vert\varphi_t(z)\vert^2 + \hc\big)\\
	+ \dfrac{N^{3/2}}{2}\int_{\R^9}\left(V_N(x,y,z)(\partial_t\varphi_t)(x)\varphi_t(y)\varphi_t(z)d_{xyz}^* + \hc\right) + \tilde{\cE}_3^{(1)},
\end{multline*}
for some $\tilde{\cE}_3^{(1)}$ satisfying
\begin{equation}
	\label{eq:cubic_term_derivative_error}
	\pm\tilde{\cE}_3^{(1)} \leq C(\cN^\ren + 1) + C\cV^\ren.
\end{equation}

Secondly, following essentially the same proof as for \eqref{eq:kinetic_term_renormalised}, using additionally that $\Theta_N$ satisfies \eqref{eq:theta_coef_cubic_transform_almost_1}, we can show that
\begin{align*}
	\ui\left[\cK,\cQ^\ren\right] &= -\dfrac{1}{6}\int_{\R^{12}}\big(\partial_tk_t(x,y,z)b_{x'}^*(-\Delta_{x'})\left[b_{x'},b_x^*b_y^*b_z^*\right]\Theta_N + \hc\big)\\
	&= 
	\begin{multlined}[t]
		-\dfrac{N^{3/2}}{2}\int_{\R^9}\big((V_Nf_N)(x,y,z)(\partial_t\varphi_t(x))\varphi_t(y)\varphi_t(z)d_{xyz}^* + \hc\big)\\
		+ \dfrac{N^{3/2}}{2}\int_{\R^9}(V_Nf_N\omega_N)(x,y,z)(\partial_t\varphi_t(x))\overline{\varphi_t(x)}\vert\varphi_t(y)\vert^2\vert\varphi_t(z)\vert^2 + \hc + \tilde{\cE}_\Delta
	\end{multlined}
\end{align*}
with $\tilde{\cE}_\Delta$ satisfying
\begin{equation}
	\label{eq:kinetic_term_commutator_Q_ren_error}
	\pm\tilde{\cE}_\Delta \leq C(\cN^\ren + \cK^\ren + \cV^\ren + N^{1/2}).
\end{equation}

Thirdly, we use the commutation relations \eqref{eq:creation_annihilation_op_modified_ccr} to expand
\begin{equation*}
	\ui\left[\cV_N,\cQ^\ren\right] = \tilde{\cI}_6^{(1)} + \tilde{\cI}_6^{(2)} + \tilde{\cI}_6^{(3)},
\end{equation*}
with $\tilde{\cI}_6^{(1)}, \tilde{\cI}_6^{(2)}$ and $\tilde{\cI}_6^{(3)}$ given by
\begin{align*}
	\tilde{\cI}_6^{(1)} &= -\dfrac{1}{6}\int_{\R^9}\left(V_N(x,y,z)(Q_t^{\otimes 3}\partial_t k_t)(x,y,z)b_x^*b_y^*b_z^*\Theta_N + \hc\right),\\
	\tilde{\cI}_6^{(2)} &= - \dfrac{1}{2}\int_{\R^{12}}\left(V_N(x,y,z)(Q_t\otimes Q_t\otimes\mathds{1}\partial_tk_t)(x,y,z')b_x^*b_y^*b_z^*b_{z'}^*b_z\Theta_N + \hc\right),\\
	\tilde{\cI}_6^{(3)} &= -\dfrac{1}{12}\int_{\R^{15}}\left(V_N(x,y,z)(Q_t\otimes\mathds{1}\otimes\mathds{1}\partial_tk_t)(x,y',z')b_x^*b_y^*b_z^*b_{y'}^*b_{z'}^*b_yb_z\Theta_N + \hc\right).
\end{align*}
Thanks to \eqref{eq:renormalised_annihilation_op_three_rewritten}, we can rewrite $\tilde{\cI}_6^{(1)}$ as
\begin{multline*}
	\tilde{\cI}_6^{(1)} = \dfrac{1}{2}\int_{\R^9}(V_N\omega_N^2)(x,y,z)(\partial_t\varphi_t)(x)\overline{\varphi_t(x)}\vert\varphi_t(y)\vert^2\vert\varphi_t(z)\vert^2 + \hc\\
	-\dfrac{1}{2}\int_{\R^9}\left((V_N\omega_N)(x,y,z)(\partial_t\varphi_t)(x)\varphi_t(y)\varphi_t(z)d_{xyz}^* + \hc\right) + \tilde{\cE}_6^{(1)},
\end{multline*}
for some $\tilde{\cE}_6^{(1)}$ satisfying
\begin{equation}
	\label{eq:interaction_term_commutator_drift_error}
	\pm\tilde{\cE}_6^{(1)} \leq C(\cN^\ren + 1) + C\cV^\ren.
\end{equation}
To get rid of the projection $Q_t^{\otimes 3}$, we wrote $Q_t = 1 - \vert\varphi_t\rangle\langle\varphi_t\vert$ and estimated the $\vert\varphi_t\rangle\langle\varphi_t\vert$ part using that we essentially gain an integration over an extra variable from that projection.

Next, we treat $\tilde{\cI}_6^{(2)}$ and $\tilde{\cI}_6^{(3)}$. To this end, we use the identity \eqref{eq:renormalised_annihilation_op_three_rewritten}, and the estimates \eqref{eq:truncated_scattering_equation_Lp_estimates} and \eqref{eq:truncated_scattering_equation_LpLinf_estimates} to write
\begin{equation*}
	\tilde{\cI}_6^{(2)} = - \dfrac{1}{2}\int_{\R^{9}}\Big(V_N(x,y,z)d_{xyz}^*\Big(\int_{\R^3}\d{}z'(Q_t\otimes Q_t\otimes\mathds{1}\partial_tk_t)(x,y,z')b_{z'}^*\Big)b_z\Theta_N + \hc\Big) + \tilde{\cE}_6^{(2)},
\end{equation*}
for some error $\tilde{\cE}_6^{(2)}$ such that
\begin{equation*}
	\pm\tilde{\cE}_6^{(2)} \leq C\lambda\cN_+
\end{equation*}
Then, using the Cauchy--Schwarz inequality, the CCR \eqref{eq:creation_annihilation_op_modified_ccr} and the estimate \eqref{eq:truncated_scattering_equation_LpLinf_estimates}, we get
\begin{align*}
	\pm\tilde{\cI}_6^{(2)} &\leq
	\begin{multlined}[t]
		C\cV^\ren + C\int_{\R^{12}}V_N(x,y,z)\vert\partial_tk_t(x,y,z')\vert^2b_z^*b_z\\
		+ C\int_{\R^{15}}V_N(x,y,z)\partial_tk_t(x,y,z')\overline{\partial_tk_t(x,y,z'')}b_z^*b_{z'}^*b_{z''}b_z + C\lambda\cN_+
	\end{multlined}\\
	&\leq C\cV^\ren + CN^{-1/2}\cN_+ + CN^{3/2}\int_{\R^{15}}V_N(x,y,z)\omega_{\lambda,N}(x,y,z')b_z^*b_{z'}^*b_zb_{z'} + C\lambda\cN_+.
\end{align*}
Bounding the last term using Proposition~\ref{prop:general_three_and_four_body_bound_renormalised}, we obtain
\begin{equation*}
	\pm\tilde{\cI}_6^{(2)} \leq C(\cN^\ren + \cK^\ren + \cV^\ren + N^{1/2}).
\end{equation*}
To bound $\tilde{\cI}_6^{(3)}$, we use \eqref{eq:renormalised_annihilation_op_three_rewritten} and write
\begin{equation*}
	\tilde{\cI}_6^{(3)} = \tilde{\cJ}_6^{(1)} + \tilde{\cJ}_6^{(2)},
\end{equation*}
with $\tilde{\cJ}_6^{(1)}$ and $\tilde{\cJ}_6^{(2)}$ given by
\begin{equation*}
	\tilde{\cJ}_6^{(1)} = -\dfrac{1}{12}\Big(\int_{\R^{15}}V_N(x,y,z)(Q_t\otimes\mathds{1}\otimes\mathds{1}\partial_tk_t)(x,y',z')d_{xyz}^*b_{y'}^*b_{z'}^*b_yb_z\Theta_N + \hc\Big)
\end{equation*}
and
\begin{equation*}
	\tilde{\cJ}_6^{(2)} = \dfrac{1}{12}\int_{\R^{15}}(V_N\overline{k_t})(x,y,z)(Q_t\otimes\mathds{1}\otimes\mathds{1}\partial_tk_t)(x,y',z')b_{y'}^*b_{z'}^*b_yb_z\Theta_N + \hc
\end{equation*}
Thanks to the estimates \eqref{eq:truncated_scattering_equation_Lp_estimates}~and~\ref{eq:truncated_scattering_equation_LpLinf_estimates}, and the Cauchy--Schwarz inequality, the second term can be bounded by
\begin{equation*}
	\pm\tilde{\cJ}_6^{(2)} \leq C\lambda^2\cV_\twoB + CN^{-1/2}\tilde{\cV}_\twoB
\end{equation*}
with $\cV_\twoB$ and $\tilde{\cV}_\twoB$ defined as in Propositions~\ref{prop:general_two_body_bound_renormalised}~and~\ref{prop:general_three_and_four_body_bound_renormalised}. Hence applying these propositions, we deduce
\begin{equation*}
	\pm\tilde{\cJ}_6^{(2)} \leq CN^{-1/2}(\cN^\ren + \cK^\ren + \cV^\ren + N).
\end{equation*}
To bound $\tilde{\cJ}_6^{(1)}$, we write
\begin{equation*}
	\tilde{\cJ}_6^{(1)} = -\dfrac{1}{12}\Big(\int_{\R^9}V_N(x,y,z)d_{xyz}^*\Big(\int_{\R^6}\d{}y'\d{}z'(Q_t\otimes\mathds{1}\otimes\mathds{1}\partial_tk_t)(x,y',z')b_{y'}^*b_{z'}^*\Big)b_yb_z\Theta_N + \hc\Big)
\end{equation*}
and use the Cauchy--Schwarz inequality as well as the CCR \eqref{eq:creation_annihilation_op_modified_ccr} to find
\begin{multline*}
	\pm\tilde{\cJ}_6^{(1)} \leq C\cV^\ren + C\int_{\R^{15}}V_N(x,y,z)\vert\partial_tk_t(x,y',z')\vert^2b_y^*b_z^*b_yb_z\\
	+ C\int_{\R^{18}}V_N(x,y,z)(\partial_tk_t)(x,y',z')\overline{(\partial_tk_t)(x,y',z'')}b_y^*b_z^*b_{z''}^*b_{z'}b_yb_z\\
	+ C\int_{\R^{21}}V_N(x,y,z)(\partial_tk_t)(x,y',z')\overline{(\partial_tk_t)(x,y'',z'')}b_y^*b_z^*b_{y''}^*b_{z''}^*b_{y'}b_{z'}b_yb_z,
\end{multline*}
Using the estimates \eqref{eq:truncated_scattering_equation_Lp_estimates}~and~\ref{eq:truncated_scattering_equation_LpLinf_estimates}, this can be bounded further by
\begin{equation*}
	\pm\tilde{\cJ}_6^{(1)} \leq C(\cV^\ren + N^{-1}\cV_\twoB + N^{-1/2}\cV_\threeB + \cV_\fourB),
\end{equation*}
with $\cV_\threeB$ and $\cV_\fourB$ defined as in Proposition \ref{prop:general_three_and_four_body_bound_renormalised}. Therefore, an application of Propositions \ref{prop:general_two_body_bound_renormalised} and \ref{prop:general_three_and_four_body_bound_renormalised} yields
\begin{equation*}
	\pm\tilde{\cJ}_6^{(1)} \leq C(\cN^\ren + \cK^\ren + \cV^\ren + N^{1/2}).
\end{equation*}

Summing up, we have shown that
\begin{multline*}
	\ui\left[\cV_N,\cQ^\ren\right] = \dfrac{1}{2}\int_{\R^9}(V_N\omega_N^2)(x,y,z)(\partial_t\varphi_t)(x)\overline{\varphi_t(x)}\vert\varphi_t(y)\vert^2\vert\varphi_t(z)\vert^2 + \hc\\
	- \dfrac{1}{2}\int_{\R^9}\left((V_N\omega_N)(x,y,z)(\partial_t\varphi_t)(x)\varphi_t(y)\varphi_t(z)d_{xyz}^* + \hc\right) + \tilde{\cE}_6,
\end{multline*}
with $\tilde{\cE}_6$ satisfying
\begin{equation}
	\label{eq:L6_term_commutator_Q_ren_error}
	\pm\tilde{\cE}_6 \leq C(\cN^\ren + \cK^\ren + \cV^\ren + N^{1/2}).
\end{equation}
Gathering the previous estimates, we obtain
\begin{equation*}
	\partial_t\cL_3^{(1)} + \ui\left[\cK + \cV_N,\cQ^\ren\right] = \tilde{\cE}_\Delta + \tilde{\cE}_3^{(1)} + \tilde{\cE}_6,
\end{equation*}
with $\tilde{\cE}_3^{(1)}, \tilde{\cE}_\Delta$ and $\tilde{\cE}_6$ satisfying respectively \eqref{eq:cubic_term_derivative_error}, \eqref{eq:kinetic_term_commutator_Q_ren_error} and \eqref{eq:L6_term_commutator_Q_ren_error}. This finishes the proof of Lemma~\ref{lemma:generator_commutator_estimates_cubic_derivative}.

\subsubsection{Analysis of $[\cL_3^{(1)},\cQ^\ren]$: proof of Lemma~\ref{lemma:generator_commutator_estimates_cubic_commutator}}

Define
\begin{equation*}
	\tilde{\cL}_3^{(1)} = \dfrac{N^{3/2}}{6}\int_{\R^9}V_N(x,y,z)\overline{\varphi_t(x)\varphi_t(y)\varphi_t(z)}b_xb_yb_z
\end{equation*}
and
\begin{equation}
	\label{eq:renormalised_drift_operator_def_half}
	\tilde{Q}^\ren = \dfrac{1}{6}\int_{\R^9}\overline{(\ui\partial_tk_t)}(x,y,z)b_xb_yb_z.
\end{equation}
Then, we expand the commutator $\ui[\cL_3^{(1)},\cQ^\ren]$ as
\begin{equation*}
	\ui[\cL_3^{(1)},\cQ^\ren] = \tilde{\cI}_3^{(1)} + \tilde{\cI}_3^{(2)},
\end{equation*}
with $\tilde{\cI}_3^{(1)}$ and $\tilde{\cI}_3^{(2)}$ given by
\begin{equation*}
	\tilde{\cI}_3^{(1)} = \ui\Theta_3^{(1)}[\tilde{\cL}_3^{(1)},(\tilde{\cQ}^\ren)^*]\Theta_N + \hc
\end{equation*}
and
\begin{multline}
	\tilde{\cI}_3^{(2)} = \ui([\Theta_3^{(1)},(\tilde{\cQ}^\ren)^*]\Theta_N\tilde{\cL}_3^{(1)} + \Theta_3^{(1)}\tilde{\cQ}^\ren[\tilde{\cL}_3^{(1)},\Theta_N]) + \hc\\
	+ \ui(\Theta_N[\Theta_3^{(1)},\tilde{\cQ}^\ren]\tilde{\cL}_3^{(1)} + \Theta_3^{(1)}[\tilde{\cL}_3^{(1)},\Theta_N]\tilde{\cQ}^\ren) + \hc \label{eq:I_3_2_def}
\end{multline}
The reason we separated the terms this way is because $\tilde{\cI}_3^{(1)}$ contains a contribution of order $N$ that we need to extract, whereas $\tilde{\cI}_3^{(2)}$ contains only errors, which we bound directly. To extract the main contribution from $\tilde{\cI}_3^{(1)}$, we use the CCR \eqref{eq:creation_annihilation_op_modified_ccr} to expand further
\begin{equation*}
	\tilde{\cI}_3^{(1)} = \tilde{\cJ}_3^{(1)} + \tilde{\cJ}_3^{(2)} + \tilde{\cJ}_3^{(3)},
\end{equation*}
with $\tilde{\cJ}_3^{(1)}, \tilde{\cJ}_3^{(2)}$ and $\tilde{\cJ}_3^{(3)}$ given by
\begin{align*}
	\tilde{\cJ}_3^{(1)} &= -\Theta_3^{(1)}\Theta_N\dfrac{N^3}{6}\int_{\R^9}\left(V_N(x,y,z)(Q_t^{\otimes 3}k_t)(x,y,z)\overline{\varphi_t(x)\varphi_t(y)\varphi_t(z)} + \hc\right),\\
	\tilde{\cJ}_3^{(2)} &= -\Theta_3^{(1)}\Theta_N\dfrac{N^{3/2}}{2}\int_{\R^{12}}(V_N(x,y,z)(Q_t\otimes Q_t\otimes \mathds{1}\partial_tk_t)(x,y,z')\overline{\varphi_t(x)}\overline{\varphi_t(y)}\overline{\varphi_t(z)}b_{z'}^*b_z + \hc)
\end{align*}
and
\begin{multline*}
	\tilde{\cJ}_3^{(3)} = -\Theta_3^{(1)}\Theta_N\dfrac{N^{3/2}}{12}\int_{\R^{15}}(V_N(x,y,z)(Q_t\otimes\mathds{1}\otimes\mathds{1}\partial_tk_t)(x,y',z')\overline{\varphi_t(x)}\overline{\varphi_t(y)}\overline{\varphi_t(z)}b_{y'}^*b_{z'}^*b_yb_z\\
	+ \hc).
\end{multline*}
Since $\Theta_N$ and $\Theta_3^{(1)}$ satisfy \eqref{eq:theta_coef_cubic_transform_almost_1} and \eqref{eq:theta_coef_estimate_almost_1}, we may rewrite $\tilde{\cJ}_3^{(1)}$ as
\begin{equation*}
	\tilde{\cJ}_3^{(1)} = -\dfrac{N}{6}\int_{\R^9}(V_N\omega_N)(x,y,z)\partial_t(\vert\varphi_t(x)\vert^2\vert\varphi_t(y)\vert^2\vert\varphi_t(z)\vert^2) + \tilde{\cE}_3^{(1)},
\end{equation*}
with $\tilde{\cE}_3^{(1)}$ satisfying
\begin{equation*}
	\pm\tilde{\cE}_3^{(1)} \leq C(\cN_+ + 1).
\end{equation*}
Moreover, using the Cauchy--Schwarz inequality and the estimates \eqref{eq:truncated_scattering_equation_Lp_estimates} and \eqref{eq:truncated_scattering_equation_LpLinf_estimates}, we bound $\tilde{\cJ}_3^{(2)}$ by
\begin{equation*}
	\pm\tilde{\cJ}_3^{(2)} \leq C\lambda \cN_+.
\end{equation*}
The term $\tilde{\cJ}_3^{(3)}$ is bounded just as $\tilde{\cJ}_6^{(2)}$, and we find
\begin{equation*}
	\pm\tilde{\cJ}_3^{(3)} \leq CN^{-1/2}(\cN^\ren + \cK^\ren + \cV^\ren + N).
\end{equation*}
Hence, so far we have shown that
\begin{equation*}
	\ui\big[\cL_3^{(1)},\cQ^\ren\big] = - \dfrac{N}{6}\int_{\R^9}(V_N\omega_N)(x,y,z)\partial_t(\vert\varphi_t(x)\vert^2\vert\varphi_t(y)\vert^2\vert\varphi_t(z)\vert^2) + \tilde{\cE}_3 + \tilde{\cI}_3^{(2)},
\end{equation*}
for some $\tilde{\cE}_3$ satisfying
\begin{equation}
	\label{eq:cubic_term_commutator_Q_ren_error}
	\pm\tilde{\cE}_3 \leq C\cN^\ren + CN^{-1/2}(\cK^\ren + \cV^\ren + N),
\end{equation}
and with $\tilde{\cI}_3^{(2)}$ given by \eqref{eq:I_3_2_def}.

Let us now explain briefly how to bound $\tilde{\cI}_3^{(2)}$. We focus on $\ui[\Theta_3^{(1)},(\tilde{\cQ}^\ren)^*]\Theta_N\tilde{\cL}_3^{(1)}$ for clarity. Using that $\tilde{\cQ}^\ren$ contains three creation operators and no annihilation operators, we rewrite the previous commutator as
\begin{equation*}
	[\Theta_3^{(1)},(\tilde{\cQ}^\ren)^*] = (\tilde{\cQ}^\ren)^*(\Theta_3^{(1)}(\cN_+ + 3) - \Theta_3^{(1)}(\cN_+)).
\end{equation*}
The important point here is that, as a result of \eqref{eq:theta_coef_estimate_difference}, we gain a factor $N^{-1}$ from the prefactors $\Theta_3^{(1)}$ and $\Theta_N$. This allows us to write
\begin{multline*}
	\ui[\Theta_3^{(1)},(\tilde{\cQ}^\ren)^*]\Theta_N\tilde{\cL}_3^{(1)} + \hc\\
	= -\dfrac{N^{3/2}}{36}\int_{\R^{18}}(V_N(x,y,z)(\partial_tk_t)(x',y',z')\overline{\varphi_t(x)\varphi_y(t)\varphi_t(z)}\xi_N(\cN_+)b_{x'}^*b_{y'}^*b_{z'}^*b_xb_yb_z + \hc),
\end{multline*}
for some function $\xi_N:\{0,\dots,N\}\rightarrow[0,1]$ that satisfies
\begin{equation*}
	\vert\xi_N\vert \leq CN^{-1}.
\end{equation*}
Then, we define
\begin{equation*}
	A_{x'y'} = \int_{\R^3}\d{}z'(\partial_tk_t)(x',y',z')b_{z'}^*
\end{equation*}
and use the Cauchy--Schwarz inequality to deduce
\begin{multline*}
	\pm(\ui[\Theta_3^{(1)},(\tilde{\cQ}^\ren)^*]\Theta_N\tilde{\cL}_3^{(1)} + \hc) \leq \varepsilon CN^{1/2}\int_{\R^{15}}V_N(x,y,z)b_x^*b_y^*b_z^*A_{x'y'}A_{x'y'}^*b_xb_yb_z \\
	+ \varepsilon^{-1} CN^{1/2}\int_{\R^{15}}V_N(x,y,z)\vert\varphi_t(x)\vert^2\vert\varphi_t(y)\vert^2\vert\varphi_t(z)\vert^2\1(\vert x' - y'\vert \leq C\lambda)b_{x'}^*b_{y'}^*b_{x'}b_{y'},
\end{multline*}
for all $\varepsilon > 0$. Bounding $A_{x'y'}A_{x'y'}^*$ using Lemma~\ref{lemma:three_body_kernel_second_quantisation_estimates}, applying Proposition~\ref{prop:general_three_and_four_body_bound_renormalised}, and optimising over $\varepsilon$ we obtain
\begin{equation*}
	\pm(\ui[\Theta_3^{(1)},(\tilde{\cQ}^\ren)^*]\Theta_N\tilde{\cL}_3^{(1)} + \hc) \leq CN^{-1/2}(\cN^\ren + \cK^\ren + \cV^\ren + N).
\end{equation*}
The other three terms are bounded in the same manner and we leave the detail to the reader. Finally, we have shown that
\begin{equation}
	\label{eq:cubic_term_commutator_Q_ren_error_second}
	\pm\tilde{\cI}_3^{(2)} \leq CN^{-1/2}(\cN^\ren + \cK^\ren + \cV^\ren + N).
\end{equation}

Therefore,
\begin{equation*}
	\partial_t\cL_0 + \ui\big[\cL_3^{(1)},\cQ^\ren\big] = \dfrac{N^3}{6}\int_{\R^9}(V_Nf_N)(x,y,z)\partial_t(\vert\varphi_t(x)\vert^2\vert\varphi_t(y)\vert^2\vert\varphi_t(z)\vert^2) + \tilde{\cE}_3 + \tilde{\cI}_3^{(2)}
\end{equation*}
with $\tilde{\cE}_3$ and $\tilde{\cI}_3^{(2)}$ satisfying respectively \eqref{eq:cubic_term_commutator_Q_ren_error} and \eqref{eq:cubic_term_commutator_Q_ren_error_second}. This concludes the proof of Lemma \ref{lemma:generator_commutator_estimates_cubic_commutator}.

\subsubsection{Analysis of $\partial_t\cL_1, \partial_t\cL_4^{(1)}$ and $[\cL_4^{(1)},\cQ^\ren]$: proof of Lemma~\ref{lemma:generator_commutator_estimates_drift}} Thanks to \eqref{eq:theta_coef_estimate_almost_1} and \eqref{eq:modified_creation_annihilation_op_derivative} we have
\begin{multline*}
	\partial_t\cL_1 = \dfrac{N^{5/2}}{2}\int_{\R^9}V_N(x,y,z)\partial_t(\varphi_t(x)\vert\varphi_t(y)\vert^2\vert\varphi_t(z)\vert^2)b_x^* + \hc\\
	- \dfrac{N^{5/2}}{2}\int_{\R^9}\left(V_N(x,y,z)\vert\varphi_t(x)\vert^2\vert\varphi_t(y)\vert^2\vert\varphi_t(z)\vert^2b^*(\partial_t\varphi_t) + \hc\right) + \tilde{\cE}_1,
\end{multline*}
with $\tilde{\cE}_1$ satisfying
\begin{equation*}
	\pm\tilde{\cE}_1 \leq C\cN_+.
\end{equation*}

In a similar way, using additionally the identity \eqref{eq:renormalised_annihilation_op_three_rewritten} and that $\Theta_4^{(1)}$ satisfies \eqref{eq:theta_coef_estimate_almost_1}, we find
\begin{multline*}
	\partial_t\cL_4^{(1)} = - \dfrac{N}{2}\int_{\R^9}\left((V_Nk_t)(x,y,z)\partial_t(\overline{\varphi_t(y)\varphi_t(z)})b_x^* + \hc\right)\\
	+ \dfrac{N^{5/2}}{2}\int_{\R^9}\left((V_N\omega_N)(x,y,z)\vert\varphi_t(x)\vert^2\vert\varphi_t(y)\vert^2\vert\varphi_t(z)\vert^2b^*(\partial_t\varphi_t) + \hc\right) + \tilde{\cE}_4^{(1)},
\end{multline*}
for some $\tilde{\cE}_4^{(1)}$ that satisfies
\begin{equation*}
	\pm\tilde{\cE}_4^{(1)} \leq C\cN_+ + C\cV^\ren.
\end{equation*}

We now deal with $[\cL_4^{(1)},\cQ^\ren]$. Define
\begin{equation*}
	\tilde{\cL}_4^{(1)} = \dfrac{N}{2}\int_{\R^9}V_N(x,y,z)\overline{\varphi_t(y)\varphi_t(z)}b_x^*b_xb_yb_z
\end{equation*}
and $\tilde{\cQ}^\ren$ as in \eqref{eq:renormalised_drift_operator_def_half}. Then,
\begin{equation*}
	[\cL_4^{(1)},\cQ^\ren] = [\Theta_4^{(1)}\tilde{\cL}_4^{(1)},(\tilde{\cQ}^\ren)^*\Theta_N] + [\Theta_4^{(1)}\tilde{\cL}_4^{(1)},\Theta_N\tilde{Q}^\ren] - \hc
\end{equation*}
The only commutator that we cannot bound directly and that requires identifying a cancellation with $\partial_t\cL_1$ and $\partial_t\cL_4^{(1)}$ is $[\tilde{\cL}_4^{(1)},(\tilde{\cQ}^\ren)^*]$; all the others are errors. For that reason, we split $\ui[\tilde{\cL}_4^{(1)},\tilde{\cQ}^\ren]$ in the following way:
\begin{equation*}
	\ui[\tilde{\cL}_4^{(1)},\tilde{\cQ}^\ren] = \tilde{\cI}_4^{(1)} + \tilde{\cI}_4^{(2)},
\end{equation*}
with $\tilde{\cI}_4^{(1)}$ and $\tilde{\cI}_4^{(2)}$ given by
\begin{equation*}
	\tilde{\cI}_4^{(1)} = \ui\Theta_4^{(1)}[\tilde{\cL}_4^{(1)},(\tilde{\cQ}^\ren)^*]\Theta_N + \hc
\end{equation*}
and
\begin{equation}
	\label{eq:I_4_2}
	\tilde{\cI}_4^{(2)} = \ui([\Theta_4^{(1)}\tilde{\cL}_4^{(1)},\Theta_N\tilde{Q}^\ren] + \Theta_4^{(1)}(\tilde{\cQ}^\ren)^*[\tilde{\cL}_4^{(1)},\Theta_N] + [\Theta_4^{(1)},(\tilde{\cQ}^\ren)^*]\Theta_N\tilde{\cL}_4^{(1)}) + \hc
\end{equation}
First, we analyse $\tilde{\cI}_4^{(1)}$. Using the CCR \eqref{eq:creation_annihilation_op_modified_ccr}, we expand
\begin{equation*}
	\tilde{\cI}_4^{(1)} = \tilde{\cJ}_4^{(1)} + \tilde{\cJ}_4^{(2)} + \tilde{\cJ}_4^{(3)} + \tilde{\cJ}_4^{(4)} + \tilde{\cJ}_4^{(5)},
\end{equation*}
with $\tilde{\cJ}_4^{(1)}, \tilde{\cJ}_4^{(2)}, \tilde{\cJ}_4^{(3)}, \tilde{\cJ}_4^{(4)}$ and $\tilde{\cJ}_4^{(5)}$ given by
\begin{align*}
	\tilde{\cJ}_4^{(1)} &= -\dfrac{N}{2}\int_{\R^9}(V_N(x,y,z)(Q_t^{\otimes 3}\partial_tk_t)(x,y,z)\overline{\varphi_t(y)\varphi_t(z)}\Theta_4^{(1)}b_x^*\Theta_N + \hc),\\
	\tilde{\cJ}_4^{(2)} &= -N\int_{\R^{12}}V_N(x,y,z)(Q_t\otimes Q_t\otimes\mathds{1}\partial_tk_t)(x,y,z')\overline{\varphi_t(y)\varphi_t(z)}\Theta_4^{(1)}b_x^*b_{z'}^*b_x\Theta_N + \hc),\\
	\tilde{\cJ}_4^{(3)} &= -\dfrac{N}{2}\int_{\R^{12}}(V_N(x,y,z)(\mathds{1}\otimes Q_t\otimes Q_t\partial_tk_t)(x',y,z)\overline{\varphi_t(y)\varphi_t(z)}\Theta_4^{(1)}b_x^*b_{x'}^*b_x\Theta_N + \hc),\\
	\tilde{\cJ}_4^{(4)} &= -\dfrac{N}{4}\int_{\R^{15}}(V_N(x,y,z)(Q_t\otimes\mathds{1}\otimes\mathds{1}\partial_tk_t)(x,y',z')\overline{\varphi_t(y)\varphi_t(z)}\Theta_4^{(1)}b_x^*b_{y'}^*b_{z'}^*b_yb_z\Theta_N + \hc)
\end{align*}
and
\begin{equation*}
	\tilde{\cJ}_4^{(5)} = -\dfrac{N}{2}\int_{\R^{15}}(V_N(x,y,z)(\mathds{1}\otimes\mathds{1}\otimes Q_t\partial_tk_t)(x',y',z)\overline{\varphi_t(y)\varphi_t(z)}\Theta_4^{(1)}b_x^*b_{x'}^*b_{y'}^*b_xb_y\Theta_N + \hc).
\end{equation*}
Writing $Q_t = 1 - \vert\varphi_t\rangle\langle\varphi_t\vert$ and using the fact that $\Theta_N$ and $\Theta_4^{(1)}$ satisfy \eqref{eq:theta_coef_cubic_transform_almost_1}, it is easy to see that
\begin{equation*}
	\tilde{\cJ}_4^{(1)} = -\dfrac{N}{2}\int_{\R^9}((V_N\partial_tk_t)(x,y,z)\overline{\varphi_t(y)\varphi_t(z)}b_x^* + \hc) + \tilde{\cE}_4^{(1,1)},
\end{equation*}
with $\tilde{\cE}_4^{(1,1)}$ satisfying
\begin{equation*}
	\pm\tilde{\cE}_4^{(1,1)} \leq CN^{-1/2}.
\end{equation*}
To bound $\tilde{\cJ}_4^{(2)}$, we notice that it is only nonzero when $\vert x - z'\vert \leq C\lambda$. This allows to use the first estimate in \eqref{eq:general_two_three_four_body_bound_non_renormalised} and the bounds \eqref{eq:truncated_scattering_equation_LpLinf_estimates} to deduce
\begin{equation*}
	\pm\tilde{\cJ}_4^{(2)} \leq C\cN_+ + CN^{-1/2}(\cK^\ren + \cV^\ren + N),
\end{equation*}
and likewise for $\tilde{\cJ}_4^{(3)}$. To bound $\tilde{\cJ}_4^{(4)}$, we define
\begin{equation*}
	B_x^* = \int_{\R^6}\d{}y'\d{}z'\ui\partial_tk_t(x,y',z')b_{y'}^*b_{z'}^*
\end{equation*}
and use the Cauchy--Schwarz inequality as well as $\Vert Q_t\Vert_\op = 1$ to obtain
\begin{equation*}
	\pm\tilde{\cJ}_4^{(4)} \leq \varepsilon^{-1} CN\int_{\R^9}V_N(x,y,z)b_x^*b_x + \varepsilon CN\int_{\R^9}V_N(x,y,z)b_y^*b_z^*B_xB_x^*b_yb_z,
\end{equation*}
for all $\varepsilon > 0$. The first term simplifies to $CN^{-1}\cN_+$. Regarding the second one, we use the CCR \eqref{eq:creation_annihilation_op_modified_ccr} to put the creation and annihilation in normal order, and then we apply the Cauchy--Schwarz inequality and use the estimate \eqref{eq:truncated_scattering_equation_Lp_estimates} on the $L^2$ norm of $\omega_{\lambda,N}$ to find
\begin{equation*}
	N\int_{\R^9}V_N(x,y,z)b_y^*b_z^*B_xB_x^*b_yb_z \leq C(\cV_\twoB + N^{1/2}\cV_\threeB + N\cV_\fourB).
\end{equation*}
Taking $\varepsilon = N^{-1}$ and applying Propositions~\ref{prop:general_two_body_bound_renormalised}~and~\ref{prop:general_three_and_four_body_bound_renormalised} yields
\begin{equation*}
	\pm\tilde{\cJ}_4^{(4)} \leq C(\cN^\ren + \cK^\ren + \cV^\ren + N^{1/2}).
\end{equation*}
The term $\tilde{\cJ}_4^{(5)}$ is bounded exactly the same way. Summing up, we have shown that
\begin{equation*}
	\ui\big[\cL_4^{(1)},\cQ^\ren\big] = -\dfrac{N^{5/2}}{2}\int_{\R^9}\left((V_N\omega_N)(x,y,z)\partial_t(\varphi_t(x)\vert\varphi_t(y)\vert^2\vert\varphi_t(z)\vert^2)b_x^* + \hc\right) + \tilde{\cE}_4^{(1)} + \tilde{\cI}_4^{(2)},
\end{equation*}
for some $\tilde{\cE}_4^{(1)}$ satisfying
\begin{equation*}
	\pm\tilde{\cE}_4^{(1)} \leq C(\cN^\ren + \cK^\ren + \cV^\ren + N^{1/2}),
\end{equation*}
and with $\tilde{\cI}_4^{(2)}$ defined in \eqref{eq:I_4_2}. To bound $\tilde{\cI}_4^{(2)}$, we follow the same steps as for $\tilde{\cI}_3^{(2)}$ (defined in \eqref{eq:I_3_2_def}). The last two commutators in \eqref{eq:I_4_2} are bounded in exactly the same way, but the first one is sightly more subtle. Namely, 
we need to compute $[\tilde{\cL}_4^{(1)},\tilde{\cQ}^\ren]$. This is however fairly simple, and bounding the error as in the proof of Lemma~\ref{lemma:number_operator_renormalised}, we find
\begin{equation*}
	\pm\tilde{\cI}_4^{(2)} \leq C(\cN^\ren + \cK^\ren + \cV^\ren + N^{1/2}).
\end{equation*}

Finally,
\begin{multline*}
	\partial_t\cL_1 + \partial_t\cL_4^{(1)} + \ui\big[\cL_4^{(1)},\cQ^\ren\big] = \dfrac{N^{5/2}}{2}\int_{\R^9}\left((V_Nf_N)(x,y,z)\partial_t(\varphi_t(x)\vert\varphi_t(y)\vert^2\vert\varphi_t(z)\vert^2)b_x^* + \hc\right)\\
	- \dfrac{N^{5/2}}{2}\int_{\R^9}\left((V_Nf_N)(x,y,z)\vert\varphi_t(x)\vert^2\vert\varphi_t(y)\vert^2\vert\varphi_t(z)\vert^2b^*(\partial_t\varphi_t) + \hc\right) + \tilde{\cE}_1 + \tilde{\cE}_4^{(1)} + \tilde{\cI}_4^{(2)}.
\end{multline*}

\subsubsection{Bounding the remaining commutators}

\label{subsec:generator_commutator_estimates_errors}

The only terms we are left with are the commutators of $\cQ^\ren$ with $\cL_2^{(1)}, \cL_2^{(2)}, \cL_2^{(3)}, \cL_3^{(2)}, \cL_3^{(3)}, \cL_4^{(2)}, \cL_4^{(3)}$ and $\cL_5$, which are all bounded in a similar fashion to $\tilde{\cI}_3^{(2)}$ from the proof of Lemma~\ref{lemma:generator_commutator_estimates_cubic_commutator} (see \eqref{eq:cubic_term_commutator_Q_ren_error_second} and above). Let us explain this briefly. All these terms can be written in the form
\begin{equation*}
	\cL_i^{(j)} = \Theta_i^{(j)}\tilde{\cL}_i^{(j)} + \hc
\end{equation*}
where $\Theta_i^{(j)}$ is the coefficient in the expression below \eqref{eq:hamiltonian_conjugation_excitation_map}. Similarly, $\cQ^\ren$ can be written as
\begin{equation*}
	\cQ^\ren = \Theta_N\tilde{\cQ}^\ren + \hc
\end{equation*}
(see \eqref{eq:renormalised_drift_operator_def_half}). Then, when expanding the commutator of $\cQ^\ren$ and $\cL_i^{(j)}$, we can essentially distinguish between two situations. Either we get a commutator between creation and annihilation operators, e.g. between $\tilde{\cQ}^\ren$ and $\tilde{\cL}_i^{(j)}$, or a commutator involving the prefactors $\Theta_i^{(j)}$ and $\Theta_N$, e.g. between $\tilde{\cQ}^\ren$ and $\Theta_i^{(j)}$. In both cases it is not very difficult to bound these terms using the estimates from Lemma~\ref{lemma:truncated_three_body_scattering_solution} and Propositions~\ref{prop:general_two_body_bound_renormalised}~and~\ref{prop:general_three_and_four_body_bound_renormalised}. In the second case we additionally need to use that we gain a factor $N^{-1}$ from the coefficients $\Theta_N$ and $\Theta_i^{(j)}$ due to \eqref{eq:theta_coef_estimate_difference} and \eqref{eq:theta_coef_cubic_transform_difference}. All and all, we find that these coefficients are all bounded by
\begin{equation*}
	C(\cN^\ren + \cK^\ren + \cV^\ren + N^{1/2}).
\end{equation*}

\subsubsection{Conclusion of the proof of Lemma~\ref{lemma:generator_commutator_estimates}}  Collecting the previous estimates, we obtain
\begin{multline*}
	\partial_t(\cH_N + \cQ^\ren) + \ui[\cH_N,\cQ^\ren] = \dfrac{N^3}{6}\int_{\R^9}(V_Nf_N)(x,y,z)\partial_t(\vert\varphi_t(x)\vert^2\vert\varphi_t(y)\vert^2\vert\varphi_t(z)\vert^2)\\
	\begin{aligned}[t]
		&+ \dfrac{N^{5/2}}{2}\int_{\R^9}(V_Nf_N)(x,y,z)\partial_t(\varphi_t(x)\vert\varphi_t(y)\vert^2\vert\varphi_t(z)\vert^2)b_x^* + \hc\\
		&- \dfrac{N^{5/2}}{2}\int_{\R^9}\left((V_Nf_N)(x,y,z)\vert\varphi_t(x)\vert^2\vert\varphi_t(y)\vert^2\vert\varphi_t(z)\vert^2b^*(\partial_t\varphi_t) + \hc\right)\\
		&+ \dfrac{N^{1/2}}{2}b(V)\Vert\varphi_t\Vert_6^6\left(b^*(\partial_t\varphi_t) + \hc\right) - \dfrac{N}{6}b(V)\partial_t\Vert\varphi_t\Vert_6^6\\
		&- \dfrac{N^{1/2}}{2}b(V)(b(\partial_t(\vert\varphi_t\vert^4\varphi_t)) + \hc) + \tilde{\cE},
	\end{aligned}
\end{multline*}
for some $\tilde{\cE}$ satisfying
\begin{equation*}
	\pm\tilde{\cE} \leq C(\cN^\ren + \cK^\ren + \cV^\ren + N^{1/2}).
\end{equation*}
Finally, the bound \eqref{eq:generator_commutator_Qren_estimate} follows from the estimates
\begin{equation}
	\label{eq:constant_contribution_close_GP_time_derivative}
	\pm\Big(\dfrac{N^3}{6}\int_{\R^9}(V_Nf_N)(x,y,z)\partial_t(\vert\varphi_t(x)\vert^2\vert\varphi_t(y)\vert^2\vert\varphi_t(z)\vert^2) - \dfrac{N}{6}b(V)\partial_t\Vert\varphi_t\Vert_6^6\Big) \leq CN^{1/2}
\end{equation}
and
\begin{multline}
	\label{eq:linear_terms_close_time_derivative}
	\pm\Big(\dfrac{N^{5/2}}{2}\int_{\R^9}(V_Nf_N)(x,y,z)\partial_t(\varphi_t(x)\vert\varphi_t(y)\vert^2\vert\varphi_t(z)\vert^2)b_x^* - \dfrac{N^{1/2}}{2}b(V)b^*(\partial_t(\vert\varphi_t\vert^4\varphi_t)) + \hc\Big)\\
	\leq C(\cN_+ + 1).
\end{multline}
The estimate \eqref{eq:constant_contribution_close_GP_time_derivative} follows directly from \eqref{eq:constant_contribution_close_GP_Linf_estimate}, and the estimate \eqref{eq:linear_terms_close_time_derivative} follows from \eqref{eq:constant_contribution_close_GP_Linf_estimate} and
\begin{equation*}
	\Big\Vert N^2\int_{\R^6}\d{}y\d{}z(V_Nf_N)(\cdot,y,z)(\partial_t\vert\varphi_t(y)\vert^2)\vert\varphi_t(z)\vert^2 - b(V)(\partial_t\vert\varphi_t(\cdot)\vert^2)\vert\varphi_t(\cdot)\vert^2\Big\Vert_{L^\infty(\R^3)} \leq CN^{-1/2},
\end{equation*}
whose proof is analogous to that of \eqref{eq:constant_contribution_close_GP_Linf_estimate}.

The proof of \eqref{eq:generator_commutator_Nren_estimate} is very similar and we leave the detail to the reader. This finishes the proof of Lemma~\ref{lemma:generator_commutator_estimates}.

\subsection{Proof of Proposition~\ref{prop:gronwall_bound}}

Notice first that Lemma \ref{lemma:generator_estimate} implies
\begin{equation*}
	\dfrac{1}{2}\cK^\ren + \dfrac{1}{2}\cV^\ren - C(\cN^\ren + N^{1/2}) \leq \cH_N \leq 2\cK^\ren + 2\cV^\ren + C(\cN^\ren + N^{1/2}).
\end{equation*}
Then, use that $\cK^\ren$ and $\cV^\ren$ are nonnegative operators, and apply Lemma~\ref{lemma:number_operator_renormalised} again to deduce
\begin{equation*}
	\cN_+ \leq \cH_N + \cQ^\ren + C(\cN^\ren + N^{1/2}),
\end{equation*}
which is precisely \eqref{eq:number_excitations_bound}. Next, we prove the Grönwall bound \eqref{eq:gronwall_estimate_main}. Use that $\Phi_{N,t}$ solves \eqref{eq:fluctuation_vector_equation} to evaluate
\begin{multline*}
	\partial_t\langle\cH_N + \cQ^\ren + C(\cN^\ren + N^{1/2})\rangle_{\Phi_{N,t}}\\
	= \partial_t\langle\cH_N + \cQ^\ren\rangle_{\Phi_{N,t}} + C\langle\ui[\cN^\ren,\cH_N]\rangle_{\Phi_{N,t}} + C\langle\partial_t\cN^\ren\rangle_{\Phi_{N,t}} + CN^{1/2}.
\end{multline*}
Thus, an application of Lemma~\ref{lemma:generator_commutator_estimates} yields
\begin{equation*}
	\partial_t\langle\cH_N + \cQ^\ren + C(\cN^\ren + N^{1/2})\rangle_{\Phi_{N,t}}	\leq c\langle\cH_N + \cQ^\ren + C(\cN^\ren + N^{1/2})\rangle_{\Phi_{N,t}},
\end{equation*}
which concludes the proof of Proposition~\ref{prop:gronwall_bound}.

\section{Estimates on effective interaction potentials}

\label{sec:effective_interaction_potentials_non_renormalised}

This section is devoted to the proof of the bounds from Propositions~\ref{prop:general_two_body_bound_renormalised}~and~\ref{prop:general_three_and_four_body_bound_renormalised} without renormalised quantities, namely \eqref{eq:general_two_body_bound_non_renormalised} and \eqref{eq:general_two_three_four_body_bound_non_renormalised}. The other estimates are proven by conjugating these bounds with an appropriate cubic Bogoliubov transformation; this is done in Section~\ref{sec:effective_interaction_potentials_renormalised}.

The key element of the proof is the following bound, which estimates a two-body interaction potential in terms of the number operator, the kinetic operator and a three-body interaction potential.

\begin{proposition}[Two-body estimate]
	\label{prop:general_two_body_potential_bound}
	Let $\ell_1,\ell_2,\ell_3 > 0$ such that $\ell_1 \geq \ell_2$ and $\ell_1 \geq \ell_3$. Then, there exists a universal constant $C$ such that the inequality
	\begin{multline}
		\label{eq:general_two_body_potential_bound}
		\int_{\R^6}\mathds{1}_{\{|x-y|\leq \ell_2\}}a_x^*a_y^*a_xa_y  \leq C\Big(\frac{\ell_2}{\ell_1}\Big)^3\mathcal{N} + C\ell_2^2\Big(1 + \dfrac{\ell_2}{\ell_3}\Big)\Big(\frac{\ell_1}{\ell_3}\Big)^3\d{}\Gamma(-\Delta)\\
		+ C\Big(\frac{\ell_2}{\ell_3}\Big)^3\Big(\frac{\ell_1}{\ell_3}\Big)^3\int_{\R^9}\mathds{1}_{\{|(x-y,x-z)|\leq\ell_3\}}a_x^*a_y^*a_z^*a_xa_ya_z
	\end{multline}
	holds on $\mathcal{F}$.
\end{proposition}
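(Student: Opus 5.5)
The plan is to restrict to the $n$-particle sector, where \eqref{eq:general_two_body_potential_bound} becomes an inequality between Schrödinger operators on $L^2(\R^{3n})$:
\begin{equation*}
	\sum_{i<j}\mathds{1}_{\{|x_i-x_j|\leq\ell_2\}} \lesssim \Big(\tfrac{\ell_2}{\ell_1}\Big)^3 n + \ell_2^2\Big(1+\tfrac{\ell_2}{\ell_3}\Big)\Big(\tfrac{\ell_1}{\ell_3}\Big)^3\sum_i(-\Delta_i) + \Big(\tfrac{\ell_2}{\ell_3}\Big)^3\Big(\tfrac{\ell_1}{\ell_3}\Big)^3\sum_{i<j<k}\mathds{1}_{\{|(x_i-x_j,x_i-x_k)|\leq\ell_3\}}.
\end{equation*}
Both sides scale consistently, so we may normalise one length. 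I would use two scales. The coarse scale is a lattice of cubes $B$ of side $\ell_1$, which produce the $(\ell_2/\ell_1)^3$ and kinetic terms. The fine scale is a lattice of cubes $Q$ of side $c\,\ell_3$, small enough that any three particles in one $Q$ feel the three-body indicator; these produce the three-body term. To decouple the coarse cubes I would average over translations of the $\ell_1$-lattice. If $|x_i-x_j|\leq\ell_2\leq\ell_1/2$ (otherwise enlarge $\ell_1$ by a constant factor), the pair lies in a common cube with probability bounded below. Since $-\Delta\geq\sum_B(-\Delta_B^{\rm Neu})$ for every lattice, with no localisation error, it suffices to prove the bound cube by cube. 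Here $-\Delta_B^{\rm Neu}$ is the Neumann Laplacian and the sum runs over particles in $B$.

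\emph{Step 1: few particles in a coarse cube.} For one pair in a cube $B$ of side $\ell_1$ I would use the Poincaré--Sobolev type estimate
\begin{equation*}
	\mathds{1}_{\{|x-y|\leq\ell_2\}} \leq C\Big(\tfrac{\ell_2}{\ell_1}\Big)^3 + C\ell_2^2\big(-\Delta_{x,B}^{\rm Neu}-\Delta_{y,B}^{\rm Neu}\big).
\end{equation*}
This follows by splitting $u$ into its average over $\ell_2$-cells plus a remainder; the averaged part gives the volume ratio and the remainder is controlled by Poincaré on each cell. Summing over the $k(k-1)/2$ pairs in $B$ gives
\begin{equation*}
	C(\ell_2/\ell_1)^3k^2 + C\ell_2^2\,k\sum_{i\in B}(-\Delta_i).
\end{equation*}
This is acceptable provided $k\leq K:=C(\ell_1/\ell_3)^3$, the number of fine cubes in $B$. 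In that case the kinetic coefficient is $\ell_2^2(\ell_1/\ell_3)^3$, and $k^2\leq Kk$ is handled as in Step 2. In fact I would instead bound $k^2\leq 2k+k(k-1)(k-2)$ and route the cubic part to Step 2.

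\emph{Step 2: many particles, or the cubic remainder.} With $m_Q$ the occupation of the fine cube $Q$, a pair within distance $\ell_2$ lies in cubes $Q,Q'$ at distance at most $\ell_2$. Using $m_Qm_{Q'}\leq\frac12(m_Q^2+m_{Q'}^2)$, the number of such pairs is at most $C(1+\ell_2/\ell_3)^3\sum_Q m_Q^2$. Next, $m^2\leq 2m+m(m-1)(m-2)$ and $\sum_{i<j<k}\mathds{1}_{\{\cdot\leq\ell_3\}}\geq c\sum_Q m_Q(m_Q-1)(m_Q-2)$. Hence, once the $2m_Q$ part is eliminated, the pair count is bounded by the three-body term times $(\ell_2/\ell_3)^3$.

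The extra factor $(\ell_1/\ell_3)^3$ enters through pigeonhole in a crowded coarse cube. If $k>K$, convexity gives $\sum_{Q\subset B}m_Q^3\geq k^3/K^2$. Then $k^2\lesssim K^2k^{-1}\sum_Q m_Q^3\lesssim K\sum_Q m_Q^3$, so the Step 1 term $(\ell_2/\ell_1)^3k^2$ is dominated by $(\ell_2/\ell_3)^3(\ell_1/\ell_3)^3$ times the three-body term. The kinetic term in Step 1 must then not be used for $k>K$. For such cubes I would bound the pair count directly by the fine-cube argument, and control the linear remainder $2m_Q$ by $2k\leq 2k^3/K^2\lesssim$ three-body.

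The main obstacle is the bookkeeping that makes the linear (one-body) contributions carry only the small factor $(\ell_2/\ell_1)^3$, never $(\ell_2/\ell_3)^3$. That is, the fine-cube argument must be applied only to genuinely crowded configurations, and all sparse cubes must be treated via Step 1. I would try to implement this as a case distinction on $k$ for each coarse cube, after the translation average, being careful that $k^2\leq Kk$ in the sparse case is paired with the kinetic estimate rather than with the $n$ term. If that fails to close, the fallback is to insert an intermediate scale between $\ell_3$ and $\ell_1$. Such a scale should account for the factor $(1+\ell_2/\ell_3)$ in the kinetic coefficient, coming from pairs straddling neighbouring cells.
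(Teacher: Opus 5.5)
Your architecture is the paper's: coarse cubes of side $\sim\ell_1$ with occupation threshold $K\sim(\ell_1/\ell_3)^3$, projection onto constants plus Poincaré--Sobolev in sparse cubes, and convexity/pigeonhole into cells of side $\sim\ell_3$ in crowded cubes. Your crowded case is essentially the paper's $\cI_1$ (Lemmas~\ref{lemma:convexity_estimate_number_particles_box} and~\ref{lemma:three_body_estimate_box}) and is fine, up to the caveat at the end.

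The gap is the sparse case $3\leq k\leq K$. Step~1 leaves over the term $C(\ell_2/\ell_1)^3k^2$, coming from both particles sitting in the constant mode of $B$, and neither of your treatments closes it.
\begin{itemize}
\item Writing $k^2\leq Kk$ produces $(\ell_2/\ell_3)^3k$, which is exactly the forbidden linear term. It cannot be ``paired with the kinetic estimate'', because the Neumann kinetic energy vanishes on the constant mode.
\item Routing $(\ell_2/\ell_1)^3k(k-1)(k-2)$ to Step~2 also fails. Step~2 compares multiplication operators, and pointwise the comparison is false for $k\leq K$. Take $k\sim cK$ particles in $B$ with all mutual distances larger than $\ell_3$: the three-body count vanishes while $k(k-1)(k-2)>0$. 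Your bound $\sum_Q m_Q^3\gtrsim k^3/K^2$ only helps when $k\gtrsim K$.
\end{itemize}
Let $T_B$ and $W_B$ denote the Neumann kinetic energy and the three-body count of the particles in $B$. What you need in a sparse cube is
\[
(\ell_2/\ell_1)^3k^2\lesssim(\ell_2/\ell_1)^3k+\ell_2^2\Big(1+\frac{\ell_2}{\ell_3}\Big)\Big(\frac{\ell_1}{\ell_3}\Big)^3T_B+\Big(\frac{\ell_2}{\ell_3}\Big)^3\Big(\frac{\ell_1}{\ell_3}\Big)^3W_B .
\]
This fails if you drop $W_B$ (take the constant mode) and fails if you drop $T_B$ (take the configuration above). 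It expresses the genuinely quantum fact that avoiding triple encounters costs kinetic energy. No intermediate scale or configuration bookkeeping supplies it.

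The missing ingredient is the dilute ground-state energy bound of Lemma~\ref{lemma:gse_three_body_potential}, taken from \cite[Proposition~4]{NamRicTri22b} and used with interaction range $\ell_3$. For $k\lesssim(\ell_1/\ell_3)^3$ particles in a cube of side $\sim\ell_1$ it gives $C\,T_B+\ell_3^{-2}W_B\geq c\,\ell_3^4\ell_1^{-6}k^3$. The paper writes $k^2\leq k+k^3$ as you do and multiplies this bound by $(\ell_2/\ell_1)^3\ell_1^6\ell_3^{-4}$. That turns $(\ell_2/\ell_1)^3k^3$ into $\frac{\ell_2^3}{\ell_3}(\frac{\ell_1}{\ell_3})^3T_B+(\frac{\ell_2}{\ell_3})^3(\frac{\ell_1}{\ell_3})^3W_B$. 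This is where the $\ell_2/\ell_3$ part of the kinetic coefficient comes from, not from pairs straddling cells.

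Two smaller remarks.
\begin{itemize}
\item Averaging over $\ell_2$-cells does not yield the volume ratio in your one-pair estimate. You need the projection onto constants of the whole cube $B$, together with Poincaré--Sobolev on $B$ and $\|\mathds{1}_{B(0,\ell_2)}\|_{L^{3/2}}\sim\ell_2^2$, as in \eqref{eq:application_sobolev_inequality}.
\item Your crowded bound gives the three-body term with coefficient $C(1+\ell_2/\ell_3)^3$, and the paper's $\cI_1$ gives $C(\ell_1/\ell_3)^3$. Either is dominated by the claimed coefficient only when $\ell_2\gtrsim\ell_3$. Some such restriction is unavoidable. For $\ell_1=\ell_3=1$ and $\ell_2=\delta$, take $\phi^{\otimes k}$ with $\phi$ supported in $B(0,\delta/2)$, $\|\nabla\phi\|\sim\delta^{-1}$ and $k\sim\delta^{-3/2}$. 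The left side is $\sim\delta^{-3}$ while the right side is $O(\delta^{-3/2})$. So the statement should be read with $\ell_2\gtrsim\ell_3$, e.g.\ $\ell_2\sim\ell_3\sim N^{-1/2}$ as in the main application.
\end{itemize}
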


As a consequence of Proposition~\ref{prop:general_two_body_potential_bound}, we have the following two results.

\begin{corollary}[Three-body estimate]
	\label{cor:general_three_body_potential_bound}
	Let $\ell_1,\ell_2,\ell_3 > 0$ such that $\ell_1 \geq \ell_2$ and $\ell_1 \geq \ell_3$. Then, there exists a universal constant $C > 0$ such that the inequality
	\begin{multline}
		\label{eq:general_three_body_potential_bound}
		\int_{\R^{9}}\mathds{1}_{\{\vert x - y\vert \leq \ell_2\}}\mathds{1}_{\{\vert x - z\vert \leq \ell_1\}}a_x^*a_y^*a_z^*a_xa_ya_z \leq C\Big(\dfrac{\ell_2}{\ell_3}\Big)^3\mathcal{N} + C\ell_2^2\Big(1 + \dfrac{\ell_2}{\ell_3}\Big)\Big(\dfrac{\ell_1}{\ell_3}\Big)^6\d{}\Gamma(-\Delta)\\
		+ C\Big(\dfrac{\ell_2}{\ell_3}\Big)^3\Big(\dfrac{\ell_1}{\ell_3}\Big)^6\int_{\R^9}\mathds{1}_{\left\{\vert(x-y,x-z)\vert\leq \ell_3\right\}}a_x^*a_y^*a_z^*a_xa_ya_z
	\end{multline}
	holds on $\mathcal{F}$.
\end{corollary}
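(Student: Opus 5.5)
The plan is to reduce the three-body operator to the two-body estimate of Proposition~\ref{prop:general_two_body_potential_bound}, applied twice. The factor $(\ell_1/\ell_3)^6$ in \eqref{eq:general_three_body_potential_bound}, the square of $(\ell_1/\ell_3)^3$ in \eqref{eq:general_two_body_potential_bound}, suggests this. First I would partition $\R^3$ into cubes $\{Q_\alpha\}$ of side $\ell_1$. The constraint $|x-z|\leq\ell_1$ forces $x$ and $z$ to lie in neighbouring cubes, of which there are boundedly many. Hence
\begin{equation*}
	\int_{\R^9}\mathds{1}_{\{|x-y|\leq\ell_2\}}\mathds{1}_{\{|x-z|\leq\ell_1\}}a_x^*a_y^*a_z^*a_xa_ya_z \leq C\sum_{\alpha}\int_{\tilde Q_\alpha}\d{}z\,a_z^*\Big(\int_{\tilde Q_\alpha^2}\mathds{1}_{\{|x-y|\leq\ell_2\}}a_x^*a_y^*a_xa_y\Big)a_z,
\end{equation*}
where $\tilde Q_\alpha$ denotes the union of $Q_\alpha$ with its neighbours. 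Here I use that the normal-ordered integrand is nonnegative and commutes appropriately, so the indicator in $z$ can be enlarged.

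The second step is to bound the inner two-body operator on the box $\tilde Q_\alpha$. For this I would use a localized form of Proposition~\ref{prop:general_two_body_potential_bound} on a cube of side $\sim\ell_1$. Its right-hand side consists of $(\ell_2/\ell_1)^3\cN_{\tilde Q_\alpha}$, plus $\ell_2^2(1+\ell_2/\ell_3)(\ell_1/\ell_3)^3$ times the Neumann kinetic energy on $\tilde Q_\alpha$, plus $(\ell_2/\ell_3)^3(\ell_1/\ell_3)^3$ times the three-body term restricted to $\tilde Q_\alpha$. I expect the proof of Proposition~\ref{prop:general_two_body_potential_bound} to be exactly a sum over cubes of side $\ell_1$ of such local bounds. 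If it is not literally stated that way, I would re-derive it by rerunning the same argument inside a single cube.

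Next I would sandwich the local bound between $a_z^*$ and $a_z$, integrate over $z\in\tilde Q_\alpha$, and sum over $\alpha$. This produces three terms.
\begin{itemize}
	\item The number term becomes $(\ell_2/\ell_1)^3\int\mathds{1}_{\{|x-z|\lesssim\ell_1\}}a_x^*a_z^*a_xa_z$. I apply Proposition~\ref{prop:general_two_body_potential_bound} a second time, with $\ell_2$ replaced by $C\ell_1$ and with $\ell_1$ replaced by a comparable multiple. This yields at most $C(\ell_2/\ell_1)^3\cN$, which is $\leq C(\ell_2/\ell_3)^3\cN$, plus kinetic and three-body terms. A short computation using $\ell_3,\ell_2\leq\ell_1$ shows their prefactors are dominated by those in \eqref{eq:general_three_body_potential_bound}.
	\item The three-body term becomes a four-body expression whose fourth particle lies in the same cube. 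The box where the three-body interaction acts has diameter $\lesssim\ell_1$, so it counts at most the local number of particles.
	\item The kinetic term becomes $\sum_\alpha\cN_{\tilde Q_\alpha}\,\d{}\Gamma_{\tilde Q_\alpha}(-\Delta)$, again a product of a local number of particles with a local one-body operator.
\end{itemize}

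The main obstacle is these last two products. Each carries an extra factor $\cN_{\tilde Q_\alpha}$, which must be absorbed at the cost of one more factor $(\ell_1/\ell_3)^3$. My first attempt would be to avoid extracting $z$ crudely. Instead I would estimate the number of particles in $\tilde Q_\alpha$ separately in two regimes.
\begin{itemize}
	\item When a cube holds at most $C(\ell_1/\ell_3)^3$ particles, the extra factor is simply bounded by that constant, which produces the extra $(\ell_1/\ell_3)^3$.
	\item When a cube holds more than $C(\ell_1/\ell_3)^3$ particles, subdividing it into subcubes of side $\ell_3$ forces many triples into a single subcube. Then the three-body term $\int\mathds{1}_{\{|(x-y,x-z)|\leq\ell_3\}}a^*a^*a^*aaa$ controls $\cN_{\tilde Q_\alpha}$ times the quantity at hand, in the spirit of the multi-scale counting already used for Proposition~\ref{prop:general_two_body_potential_bound}.
\end{itemize}
Checking that this dichotomy closes with exactly the exponents $(\ell_1/\ell_3)^6$ for both the kinetic and three-body terms is where I expect most of the work to lie.
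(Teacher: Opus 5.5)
\textbf{The gap is the order of operations in your final step.} Steps 1 and 3 are sound, and your number term does close with a second use of Proposition~\ref{prop:general_two_body_potential_bound} (the exponents check out). The problem is that once you apply the localised two-body estimate inside $\tilde Q_\alpha$, the loss in crowded cubes cannot be recovered.

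In a cube with $\cN_{\tilde Q_\alpha}\gg(\ell_1/\ell_3)^3$ you are left with two products: $\cN_{\tilde Q_\alpha}$ times the local kinetic energy, and $\cN_{\tilde Q_\alpha}$ times the local $\ell_3$ three-body term. Neither is bounded by the right-hand side of \eqref{eq:general_three_body_potential_bound}.
\begin{itemize}
\item That right-hand side has a fixed coefficient in front of $\d{}\Gamma(-\Delta)$, and its other two terms are multiplication operators. It therefore cannot absorb kinetic energy carrying an unbounded weight: multiply a crowded configuration by $\prod_i e^{\ui k\cdot x_i}$ and let $|k|\to\infty$.
\item If $M$ particles sit in one cube of side $\ell_3/\sqrt6$, your four-body term is of order $M^4$. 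The right-hand side, and indeed the original operator, is only $O(M^3)$.
\end{itemize}
So your second bullet cannot be proved for the quantities you actually hold at that stage. Subdividing into $\ell_3$-cubes only ever produces multiplication operators, and these cannot control weighted kinetic energy.

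\textbf{What is missing: split before any kinetic energy appears.} Make the sparse/crowded split on multiplication operators, and use Proposition~\ref{prop:general_two_body_potential_bound} only in the sparse regime. The paper does this as follows.
\begin{itemize}
\item On each $M$-particle sector it bounds the left-hand side pointwise by $\sum_{r\in\ell_2\Z^3}M_r^{(3\ell_2)}(M_r^{(3\ell_2)}-1)M_r^{(3\ell_1)}$. It then splits according to whether $M_r^{(3\ell_1)}$ is below or above $C(\ell_1/\ell_3)^3$.
\item Below the threshold, the factor $M_r^{(3\ell_1)}$ is replaced pointwise by $C(\ell_1/\ell_3)^3$. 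What remains is $C(\ell_1/\ell_3)^3$ times a global two-body potential of range $C\ell_2$. One application of Proposition~\ref{prop:general_two_body_potential_bound} then gives exactly $(\ell_2/\ell_3)^3\cN$ and the two $(\ell_1/\ell_3)^6$ coefficients.
\item Above the threshold, no kinetic energy is used at all. A second split according to $M_r^{(3\ell_2)}\gtrless C(\ell_2/\ell_3)^3$, together with Young's inequality $a^2b\leq\frac23\varepsilon^{1/2}a^3+\frac13\varepsilon^{-1}b^3$, reduces everything to cubic counts in crowded boxes. Lemmas~\ref{lemma:convexity_estimate_number_particles_box} and~\ref{lemma:three_body_estimate_box} bound these by $C(\ell_1/\ell_3)^3(\ell_2/\ell_3)^3$ times the $\ell_3$ three-body interaction.
\end{itemize}

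With this ordering you need neither a localised version of the proposition nor the second application to the number term. It also avoids a smaller problem in your sparse case. The indicator of $\{\cN_{\tilde Q_\alpha}\leq C\}$ does not commute with $\nabla$, so it cannot simply be inserted once $\d{}\Gamma(-\Delta)$ has appeared.
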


\begin{corollary}[Four-body estimate]
	\label{cor:general_four_body_potential_bound}
	Let $\ell_1,\ell_2,\ell_3 > 0$ such that $\ell_1 \geq \ell_2$ and $\ell_1 \geq \ell_3$. Then, there exists a universal constant $C > 0$ such that the inequality
	\begin{multline}
		\label{eq:general_four_body_potential_bound}
		\int_{\R^{12}}\mathds{1}_{\{\vert x - y\vert \leq \ell_2\}}\mathds{1}_{\{\vert x - z\vert \leq \ell_1\}}\mathds{1}_{\{\vert x - u\vert \leq \ell_1\}}a_x^*a_y^*a_{z}^*a_{u}^*a_xa_ya_{z}a_{u} \leq C\Big(\dfrac{\ell_1}{\ell_3}\Big)^3\Big(\dfrac{\ell_2}{\ell_3}\Big)^3\mathcal{N}\\
		+ C\dfrac{\ell_2^3}{\ell_3}\Big(\dfrac{\ell_1}{\ell_3}\Big)^9\d{}\Gamma(-\Delta) + C\frac{\ell_1^3\ell_2^3}{\ell_3^6}\Big(\mathcal{N} + \Big(\frac{\ell_1}{\ell_3}\Big)^6\Big)\int_{\R^9}\mathds{1}_{\left\{\vert(x-y,x-z)\vert\leq \ell_3\right\}}a_x^*a_y^*a_z^*a_xa_ya_z
	\end{multline}
	holds on $\mathcal{F}$.
\end{corollary}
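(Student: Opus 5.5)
The plan is to reduce the four-body operator to the three-body estimate of Corollary~\ref{cor:general_three_body_potential_bound}, applied locally in boxes of side comparable to $\ell_1$, and to treat the extra particle $u$ through the local particle number. All operators involved commute with $\cN$ and are sums of multiplication operators, so it suffices to prove the inequality on each $n$-particle sector, in first quantisation. There the left-hand side is the sum, over ordered quadruples of distinct indices $(i,j,k,l)$, of $\1(|x_i-x_j|\le\ell_2)\1(|x_i-x_k|\le\ell_1)\1(|x_i-x_l|\le\ell_1)$.

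\textbf{Step 1 (localisation).} Tile $\R^3$ by disjoint cubes $Q$ of side $\ell_1$, and let $\tilde Q$ be the union of $Q$ and its $26$ neighbours. If $x_i\in Q$, the constraints force $x_j,x_k,x_l\in\tilde Q$. Hence the left-hand side is at most
\[
\sum_Q A_Q\,(\cN_{\tilde Q}-3)_+ ,
\]
where $\cN_{\tilde Q}$ is the number of particles in $\tilde Q$ and $A_Q$ is the three-body operator of \eqref{eq:general_three_body_potential_bound} restricted to $x\in Q$ and $y,z\in\tilde Q$. The operators $A_Q$ and $\cN_{\tilde Q}$ are multiplication operators, so they commute. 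On configurations with $\cN_{\tilde Q}=m$, I would apply a local version of Corollary~\ref{cor:general_three_body_potential_bound} on $\tilde Q$, with $-\Delta$ replaced by the Neumann Laplacian of $\tilde Q$. I expect the proof of Proposition~\ref{prop:general_two_body_potential_bound} to go through on a box, since it should itself be a localisation argument. Because the boxes $\tilde Q$ overlap only boundedly, and the Neumann Laplacians on disjoint cubes sum to at most $-\Delta$, summing over $Q$ at the end only costs a universal constant.

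\textbf{Step 2 (the factor $m$).} On the sector with $\cN_{\tilde Q}=m$, the local three-body bound gives three terms, each multiplied by $m$. Set $M=(\ell_1/\ell_3)^3$, the number of subcubes of side $\ell_3$ in $\tilde Q$ up to a constant.
\begin{itemize}
\item \emph{Three-body term.} Use $m\le\cN$; this produces the $\cN\times(\text{three-body})$ contribution on the right-hand side.
\item \emph{Number term.} It becomes $(\ell_2/\ell_3)^3m^2$. Subdivide $\tilde Q$ into the $M$ subcubes of side $\ell_3$ with occupations $m_1,\dots,m_M$. By Jensen's inequality, $\sum_\alpha m_\alpha(m_\alpha-1)(m_\alpha-2)\gtrsim m^3/M^2$ whenever $m\ge 3M$, and the left-hand side is dominated by the three-body count $W_{3,\tilde Q}$ within distance $\ell_3$. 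This yields
\[
m^2\le CMm+CM^2\,m^{-1}W_{3,\tilde Q}.
\]
The first piece gives $(\ell_1/\ell_3)^3(\ell_2/\ell_3)^3\cN$ after summing over $Q$. The second is dominated by $(\ell_2/\ell_3)^3M^2W_{3,\tilde Q}$, and I expect it to fit inside the $(\ell_1/\ell_3)^6$-weighted three-body term.
\item \emph{Kinetic term.} It becomes $m\,\ell_2^2(1+\ell_2/\ell_3)M^2(-\Delta_{\tilde Q})$. I would split into two cases.
  \begin{itemize}
  \item If $m\le CM$, it is at most $C\ell_2^2(1+\ell_2/\ell_3)M^3(-\Delta_{\tilde Q})$, which I would compare with the $\ell_2^3\ell_3^{-1}M^3$ coefficient of the statement.
  \item If $m>CM$, I would not use the kinetic energy at all. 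Instead, bound the four-body count in $\tilde Q$ by $m$ times the three-body count, and use the pigeonhole bound $W_{3,\tilde Q}\gtrsim m^3/M^2$ to absorb everything into $\cN\times$ three-body, with coefficient a power of $M$ times $(\ell_2/\ell_3)^3$.
  \end{itemize}
\end{itemize}

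\textbf{Main obstacle.} I expect the exponent bookkeeping in Step 2 to be the hardest part. The issue is to ensure that the factor $m$ multiplying the kinetic and number terms never produces an uncontrolled $m^2$ or $m\,\d\Gamma(-\Delta)$, and that the case split lands exactly on the stated prefactors $\ell_2^3\ell_3^{-1}(\ell_1/\ell_3)^9$ and $\ell_1^3\ell_2^3\ell_3^{-6}\bigl(\cN+(\ell_1/\ell_3)^6\bigr)$. The comparison of $\ell_2^2(1+\ell_2/\ell_3)$ with $\ell_2^3/\ell_3$ in the regime $\ell_2<\ell_3$ is not settled by the plan above. If the crude case split does not match these constants, my first fallback is to apply Proposition~\ref{prop:general_two_body_potential_bound} directly to the pair $(x,y)$, with the two remaining particles $z,u$ treated as spectators. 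This means freezing the positions of $z$ and $u$, which amounts to a conjugation by $a_z a_u$, and then using pigeonhole with three-body collisions in $\tilde Q$ to control the spectator density, choosing the free parameter in the split optimally.
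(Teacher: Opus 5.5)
Your Step~1 is sound: the Neumann-localised operators commute with $\cN_{\tilde Q}$, so you may work at fixed $m$. The problem is Step~2, which loses exactly the factors the statement is about. Write $W_3=\int_{\R^9}\mathds{1}_{\{|(x-y,x-z)|\le\ell_3\}}a_x^*a_y^*a_z^*a_xa_ya_z$.

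\emph{The three-body term.} Multiplying the last term of Corollary~\ref{cor:general_three_body_potential_bound} by $m$ and using $m\le\cN$ gives $C(\ell_2/\ell_3)^3(\ell_1/\ell_3)^6\,\cN W_3$. The statement only allows $\ell_1^3\ell_2^3\ell_3^{-6}\,\cN W_3=(\ell_2/\ell_3)^3(\ell_1/\ell_3)^3\,\cN W_3$. The excess factor $(\ell_1/\ell_3)^3$ is $N^{1/2}$ in the application, so this is not cosmetic. Hence the three-body corollary is usable only for $m\lesssim(\ell_1/\ell_3)^3$, where your three terms are indeed fine. The regime $m\gtrsim(\ell_1/\ell_3)^3$ must be handled without it.

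\emph{The large-$m$ regime.} Your sketch there uses only pigeonhole at scale $\ell_1$. That bounds the local four-body count $m^4$ by $C(\ell_1/\ell_3)^6\,mW_{3,\tilde Q}$, which is off by $(\ell_1/\ell_2)^3$ and carries no factor $(\ell_2/\ell_3)^3$. The missing idea is to resolve occupations at the small scale $\ell_2$.

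\emph{What the paper does.} It works at scale $\ell_2$ from the outset. It bounds the left-hand side by $\sum_{r\in\ell_2\Z^3}(M_r^{(3\ell_2)})^2M_r^{(3\ell_1)}(M_r^{(3\ell_1)}-1)$ and splits according to $M_r^{(3\ell_2)}\gtrless C(\ell_2/\ell_3)^3$.
\begin{itemize}
\item If the small box is sparse, then $(M_r^{(3\ell_2)})^2\le C(\ell_2/\ell_3)^6$. After counting the $(\ell_1/\ell_2)^3$ overlaps, one is left with $C(\ell_1/\ell_3)^3(\ell_2/\ell_3)^3$ times a two-body potential of range $C\ell_1$, and Proposition~\ref{prop:general_two_body_potential_bound} applies. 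This is the only place where kinetic energy enters. Since $\ell_1^2(1+\ell_1/\ell_3)\le 2\ell_1^3/\ell_3$, it gives exactly $\ell_2^3\ell_3^{-1}(\ell_1/\ell_3)^9$.
\item If the small box is dense, Lemmas~\ref{lemma:convexity_estimate_number_particles_box} and~\ref{lemma:three_body_estimate_box} are used at both scales. When the $3\ell_1$-box is dense as well, one uses $M_r^{(3\ell_1)}\le\cN$ once, plus Young's inequality $a^2b\le\delta a^3+\delta^{-2}b^3$ with $\delta=(\ell_1/\ell_2)^3$. This balances $(\ell_2/\ell_3)^6$ against $(\ell_1/\ell_2)^3(\ell_1/\ell_3)^6$ and produces precisely $\ell_1^3\ell_2^3\ell_3^{-6}\,\cN W_3$.
\end{itemize}
You could graft this argument onto your large-$m$ case, but without it your proof does not close. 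Note also that the paper never multiplies an inequality containing $\d{}\Gamma(-\Delta)$ by a particle number. So it needs no Neumann-localised version of Corollary~\ref{cor:general_three_body_potential_bound}, which you assume without proof.

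\emph{The regime $\ell_2<\ell_3$.} Your worry here is justified, and it is a defect of the statement rather than of your plan. For $\ell_2\ge\ell_3$ one has $\ell_2^2(1+\ell_2/\ell_3)\le 2\ell_2^3/\ell_3$, so your small-$m$ kinetic bound matches. For $\ell_2\ll\ell_3$ the stated kinetic prefactor cannot hold. Take $\ell_1=\ell_3=1$, $\ell_2=\varepsilon$, and four bosons supported in a ball of radius $1/10$, with one (symmetrised) pair confined to relative distance at most $\varepsilon$. The left-hand side is at least $4$ on the support of this state. The right-hand side is $O(\varepsilon)$, since the kinetic energy is $O(\varepsilon^{-2})$, $\cN=4$ and $W_3\le 24$. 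The paper's proof tacitly needs $\ell_2\gtrsim\ell_3$ too: Lemma~\ref{lemma:convexity_estimate_number_particles_box} is applied with big box $3\ell_2$ and small box $\ell_3/\sqrt6$. That is the regime of the applications.
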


The bound \eqref{eq:general_two_body_bound_non_renormalised} will essentially follow from an application of Proposition \ref{prop:general_two_body_potential_bound} on the space $\cF_+(t)$ with $\ell_1 = (\varepsilon N^{-1})^{1/3}$, and $\ell_2$ and $\ell_3$ of order $N^{-1/2}$. Likewise, the bounds \eqref{eq:general_two_three_four_body_bound_non_renormalised} follow from Proposition~\ref{prop:general_two_body_potential_bound} and Corollaries \ref{cor:general_three_body_potential_bound}~and~\ref{cor:general_four_body_potential_bound}. Because of a subtlety in the proof of these claims, we postpone it to the end of this section.

Using similar arguments to those of the proof of Proposition~\ref{prop:general_two_body_potential_bound}, we will show the following bound, which will be useful when proving \eqref{eq:general_four_body_bound_renormalised}.

\begin{lemma}[Five-body estimate]
	\label{lemma:general_five_body_potential_bound}
	Let $\ell_1 \geq \ell_2 > 0$. Then, there exists a constant $C > 0$ (independent of $\ell_1$ and $\ell_2$) such that the inequality
	\begin{multline}
		\label{eq:general_five_body_potential_bound}
		\int_{\R^{15}}\mathds{1}_{\{\vert x - y\vert \leq \ell_1\}}\mathds{1}_{\{\vert x - z\vert \leq \ell_1\}}\mathds{1}_{\{\vert x - u\vert \leq \ell_1\}}\mathds{1}_{\{\vert x - v\vert \leq \ell_1\}}a_x^*a_y^*a_z^*a_u^*a_v^*a_xa_ya_za_ua_v\\
		\leq C\Big(\dfrac{\ell_1}{\ell_2}\Big)^{12}\cN + C\cN\Big(\dfrac{\ell_1}{\ell_2}\Big)^3\int_{\R^{12}}\1_{\{\vert x - y\vert\leq \ell_2\}}\1_{\{\vert x - z\vert\leq \ell_1\}}\1_{\{\vert x - u\vert\leq \ell_1\}}a_x^*a_y^*a_z^*a_u^*a_xa_ya_za_u
	\end{multline}
	holds on $\cF$.
\end{lemma}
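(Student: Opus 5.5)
The plan is to work sector by sector: on the $n$-particle sector the estimate becomes a pointwise inequality for configurations $(x_1,\dots,x_n)\in\R^{3n}$. Since $\cN=n$ there, no kinetic energy is needed and the proof is purely combinatorial. Set
\begin{equation*}
	M_i=\#\{j\neq i:\ |x_j-x_i|\leq\ell_1\},\qquad m_i=\#\{j\neq i:\ |x_j-x_i|\leq\ell_2\}.
\end{equation*}
The left-hand side of \eqref{eq:general_five_body_potential_bound} is then bounded by $\sum_i M_i^4$. The four-body integral on the right-hand side equals the number of ordered $4$-tuples of distinct indices $(i,j,k,l)$ with $|x_j-x_i|\leq\ell_2$ and $|x_k-x_i|,|x_l-x_i|\leq\ell_1$. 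This is at least $c\sum_i m_iM_i^2$, up to harmless lower-order corrections coming from distinctness, which I would absorb by discarding $i$ with $M_i\leq 2$ into the $\cN$ term. Writing $F=\sum_i m_iM_i^2$ and $K=(\ell_1/\ell_2)^3$, the goal is
\begin{equation*}
	\sum_i M_i^4\leq CK^4n+CK\,n\,F .
\end{equation*}

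First, localise at scale $\ell_1/2$. Cover $B(x_i,\ell_1)$ by a universal number $C_0$ of balls $B_a$ of radius $\ell_1/2$, and let $N_a$ be the number of particles in $B_a$. Then $M_i\leq\sum_a N_a$, and Hölder gives $M_i^4\leq C_0^3\sum_a N_a^4$. The point of the smaller balls is the following property: for any particle $j$ with $x_j\in B_a$, one has $B(x_j,\ell_1)\supset B_a$, and hence $M_j\geq N_a-1$.

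Second, apply pigeonhole at scale $\ell_2$ inside each $B_a$. Cover $B_a$ by at most $CK$ cubes of diameter at most $\ell_2$. Suppose first that $N_a\geq 2CK$. Then some cube $Q$ contains at least $N_a/(CK)$ particles, and each particle $j\in Q$ satisfies $m_j\geq N_a/(2CK)$ and $M_j\geq N_a/2$. Summing over $j\in Q$ gives
\begin{equation*}
	N_a^4\leq CK^2\sum_{j\in Q}m_jM_j^2\leq CK^2\sum_{j:\,|x_j-x_i|\leq 2\ell_1}m_jM_j^2 .
\end{equation*}
In the remaining case $N_a<2CK$, we simply have $N_a^4\leq CK^4$.

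Third, sum over $i$. The small-count case contributes at most $CK^4n$, which is $C(\ell_1/\ell_2)^{12}\cN$. In the large-count case, each $j$ appears in the double sum for at most $n$ indices $i$, so the total is at most $CK^2\,n\,F$. This is in fact better than the required factor $K$, but I expect the lower-order corrections from distinct indices, and the passage from $F$ to the operator integral, to eat into this gain; keeping track of them is the part that needs care.

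The main obstacle I anticipate is exactly the point handled by the scale-$\ell_1/2$ localisation. For a particle $j$ in the dense $\ell_2$-cube, the ball $B(x_j,\ell_1)$ need not contain all $M_i$ neighbours of $x_i$, so one cannot directly bound $M_j$ from below by $M_i$. Passing to balls of radius $\ell_1/2$, where every member's $\ell_1$-ball contains the whole ball, resolves this at the cost of universal constants only.
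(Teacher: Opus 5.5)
Your sector-wise reduction and the scale-$\ell_1/2$ localisation, with the property $M_j\geq N_a-1$ for $x_j\in B_a$, are sound. However, the counting at the end does not give the stated inequality. Since $\ell_1\geq\ell_2$, you have $K=(\ell_1/\ell_2)^3\geq 1$. So the bound $\sum_i M_i^4\leq CK^4n+CK^2nF$ that you reach is \emph{weaker} than the target $CK^4n+CKnF$, not better. As written, you prove the lemma with $(\ell_1/\ell_2)^6$ instead of $(\ell_1/\ell_2)^3$ in front of the four-body term, and the lower-order bookkeeping you mention cannot recover the missing factor.

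The factor matters: the lemma is used for $\tilde{\cV}_\fiveB$ with $\ell_1\sim\lambda\sim N^{-1/3}$ and $\ell_2\sim N^{-1/2}$, so that the four-body term matches $\cV_\fourB$, and there $K\sim N^{1/2}$.

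The loss sits in the pigeonhole step. Keeping only the densest $\ell_2$-cube $Q$ costs $1/K$ twice: once in $|Q|\gtrsim N_a/K$ and once in $m_j\gtrsim N_a/K$. If the $N_a$ particles are spread evenly over the $\sim K$ cubes, then $\sum_{j\in Q}m_jM_j^2\sim N_a^4/K^2$, whereas $\sum_{j\in B_a}m_jM_j^2\sim N_a^4/K$.

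The repair is to use all the cubes rather than one. Partition $B_a$ into at most $CK$ disjoint cubes $Q$ of diameter at most $\ell_2$. By Cauchy--Schwarz, $\sum_Q|Q|^2\geq N_a^2/(CK)$, so for $N_a\geq 2CK$
\[
\sum_{j\in B_a}m_j\geq\sum_Q|Q|(|Q|-1)\geq\frac{N_a^2}{2CK},
\qquad\text{hence}\qquad
N_a^4\leq 8CK\sum_{j\in B_a}m_jM_j^2 ,
\]
using $M_j\geq N_a/2$ for every $j\in B_a$. With this single change your third step gives exactly $CK^4n+CKnF$.

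This is the mechanism of the paper's proof, which works on a fixed grid of boxes of side comparable to $\ell_1$. It bounds the left-hand side by $\sum_r (M_r)^5$ and removes one power via $M_r\leq M$ in the dense case. It then applies the convexity estimate of Lemma~\ref{lemma:convexity_estimate_number_particles_box} with $\alpha=2$ to only two of the remaining four powers, so only one factor $(\ell_1/\ell_2)^3$ is lost.

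Finally, the distinctness corrections are harmless for a simpler reason than the one you give. Every $j$ that appears in the dense case has $M_j\geq N_a-1\geq 3$, so $m_jM_j^2\leq\frac92\, m_j(M_j-1)(M_j-2)$. You should therefore define $F$ as a sum over such $j$ only. Putting the indices with $M_j\leq 2$ into $F$ and then into the $\cN$-term would create a term of order $Kn^2$, which $CK^4n$ does not control.
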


\subsection{Localisation into boxes}

\label{subsec:localisation_boxes}

Let $\Lambda_1 = \left[-1/2,1/2\right)^3$ and define
\begin{equation}
	\label{eq:effective_two_body_boxes_definition}
	\Lambda_r^{(\ell)} = r + \ell\Lambda_1,
\end{equation}
the box centred $r$ with side length $\ell$. For a given number of particle $M$, define also
\begin{equation}
	\label{eq:effective_two_body_number_particle_in_box_definition}
	M_r^{(\ell)} = \sum_{i=1}^M\mathds{1}_{\Lambda_r^{(\ell)}}(x_i), \quad M_{r,j}^{(\ell)} = \sum_{\substack{i=1\\i\neq j}}^M\mathds{1}_{\Lambda_r^{(\ell)}}(x_i) \quad \textmd{and} \quad M_{r,jk}^{(\ell)} = \sum_{\substack{i=1\\i\neq j,k}}^M\mathds{1}_{\Lambda_r^{(\ell)}}(x_i),
\end{equation}
for any configuration $(x_1,\dots,x_M)\in\R^{3M}$. Though all three quantities depend on $(x_1,\dots,x_M)$, we omit this from the notation for readability. The quantity $M_r^{(\ell)}$ denotes the number of particles in the box $\Lambda_r^{(\ell)}$; $M_{r,j}^{(\ell)}$ and $M_{r,jk}^{(\ell)}$ are the same without counting the $j$-th and $k$-th particles. Further, define
\begin{equation*}
	M_{r,s}^{(\ell,\ell')} = \sum_{i = 1}^M\1_{\Lambda_r^{(\ell)}}(x_i)\1_{\Lambda_s^{(\ell')}}(x_i);
\end{equation*}
this counts the number of particles that are contained in both boxes $\Lambda_r^{(\ell)}$ and $\Lambda_s^{(\ell')}$.

In what follows, we will often distinguish between $M_r^{(\ell)} \geq C$ and $M_r^{(\ell)} < C$ for some given threshold $C$. For readability, we shall write this using the Iverson bracket $[\cdot]$, defined by
\begin{equation*}
	[A] = \left\{
	\begin{aligned}
		1 &\quad \textmd{if $A$ is true,}\\
		0 &\quad \textmd{otherwise.}
	\end{aligned}
	\right.
\end{equation*}
Namely, we write $[M_r^{(\ell)} \geq C]$ instead of $\mathds{1}(M_r^{(\ell)} \geq C)$. To avoid any confusion with commutators, we use the Iverson bracket only in this section, where no commutators are computed.

Define the projections
\begin{equation}
	\label{eq:projection_constant_function_box}
	p_r^{(\ell)} = \frac{1}{\ell^3}\vert\mathds{1}_{\Lambda_r^{(\ell)}}\rangle\langle\mathds{1}_{\Lambda_r^{(\ell)}}\vert \quad \textmd{and} \quad q_r^{(\ell)} = \mathds{1}_{\Lambda_r^{(\ell)}} - p_r^{(\ell)}.
\end{equation}
The former is the projection on constant functions in the box $\Lambda_r^{(\ell)}$ and the latter is the projection on functions orthogonal to constant functions in the same box.

An important element in the proof of Proposition~\ref{prop:general_two_body_potential_bound} is the following convexity estimate.

\begin{lemma}
	\label{lemma:convexity_estimate_number_particles_box}
	Let $\ell_1 \geq \ell_2 \geq \ell_3 > 0$ and $\alpha,\beta \geq 1$. Let $M$ be a nonnegative integer. Then, there exist nonnegative constants $C_\beta$ and $C_{\alpha,\beta}$ that depend respectively only on $\beta$ and $\alpha,\beta$, such that
	\begin{equation}
		\label{eq:convexity_estimate_number_particles_box}
		\sum_{r\in\ell_2\mathbb{Z}^3}(M_r^{(\ell_1)})^\alpha[M_r^{(\ell_1)} \geq C_\beta^{(\ell_1,\ell_3)}] \leq C_{\alpha,\beta}\Big(\frac{\ell_1}{\ell_2}\Big)^3\Big(\frac{\ell_1}{\ell_3}\Big)^{3(\alpha - 1)}\sum_{s\in\ell_3\mathbb{Z}^3}(M_s^{(\ell_3)})^\alpha[M_s^{(\ell_3)} \geq \beta]
	\end{equation}
	holds on $\mathfrak{H}^M$, where $C_\beta^{(\ell_1,\ell_3)}$ is given by
	\begin{equation}
		\label{eq:convexity_estimate_constant}
		C_{\beta}^{(\ell_1,\ell_3)} = C_\beta\Big(\dfrac{\ell_1}{\ell_3}\Big)^3.
	\end{equation}
\end{lemma}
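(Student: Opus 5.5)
This is a pointwise statement about configurations $(x_1,\dots,x_M)$: both sides are multiplication operators. So it suffices to prove the inequality for each fixed configuration. The plan is to compare each large box $\Lambda_r^{(\ell_1)}$ with the small boxes $\Lambda_s^{(\ell_3)}$ that cover it, and then use convexity of $m\mapsto m^\alpha$.

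\textbf{Step 1: covering a large box.} Fix $r\in\ell_2\Z^3$. The box $\Lambda_r^{(\ell_1)}$ is covered by the boxes $\Lambda_s^{(\ell_3)}$, $s\in\ell_3\Z^3$, that intersect it. Let $S_r$ be this set of $s$. Its cardinality is at most $K:=(\ell_1/\ell_3+2)^3\leq 27(\ell_1/\ell_3)^3$, using $\ell_1\geq\ell_3$. Since every particle lies in exactly one small box,
\[
M_r^{(\ell_1)}\leq\sum_{s\in S_r}M_s^{(\ell_3)}.
\]

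\textbf{Step 2: removing the sparse small boxes.} Split the sum into small boxes with $M_s^{(\ell_3)}<\beta$ and those with $M_s^{(\ell_3)}\geq\beta$. The first group contributes at most $\beta K\leq 27\beta(\ell_1/\ell_3)^3$. Choose $C_\beta=54\beta$. If $M_r^{(\ell_1)}\geq C_\beta^{(\ell_1,\ell_3)}$, the sparse contribution is at most $M_r^{(\ell_1)}/2$. This gives
\[
M_r^{(\ell_1)}\leq 2\sum_{s\in S_r}M_s^{(\ell_3)}[M_s^{(\ell_3)}\geq\beta].
\]
This threshold step is the only real subtlety. It explains why the Iverson bracket on the left must carry the large threshold $C_\beta(\ell_1/\ell_3)^3$: without it, many small boxes each holding fewer than $\beta$ particles could add up to a large count, and this could not be controlled by the right-hand side.

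\textbf{Step 3: convexity.} Apply Jensen's (or Hölder's) inequality to the sum of at most $K$ terms:
\[
(M_r^{(\ell_1)})^\alpha\leq 2^\alpha K^{\alpha-1}\sum_{s\in S_r}(M_s^{(\ell_3)})^\alpha[M_s^{(\ell_3)}\geq\beta].
\]
The factor $K^{\alpha-1}$ is at most $C(\ell_1/\ell_3)^{3(\alpha-1)}$.

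\textbf{Step 4: summing over large boxes.} Sum over $r\in\ell_2\Z^3$ and exchange the order of summation. A given $s$ belongs to $S_r$ only if $\Lambda_r^{(\ell_1)}$ meets $\Lambda_s^{(\ell_3)}$. This forces $|r-s|_\infty\leq(\ell_1+\ell_3)/2\leq\ell_1$, so the number of such $r\in\ell_2\Z^3$ is at most $(2\ell_1/\ell_2+1)^3\leq 27(\ell_1/\ell_2)^3$, using $\ell_1\geq\ell_2$. Combining with Step 3 gives the stated inequality with
\[
C_{\alpha,\beta}=C^\alpha.
\]
This constant depends only on $\alpha$; the $\beta$-dependence enters only through the threshold $C_\beta$. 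The hypothesis $\ell_2\geq\ell_3$ is not needed in this argument.
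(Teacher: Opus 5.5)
Your proof is correct and follows the paper's argument. You cover $\Lambda_r^{(\ell_1)}$ by the grid boxes $\Lambda_s^{(\ell_3)}$, absorb the boxes holding fewer than $\beta$ particles using the threshold $C_\beta(\ell_1/\ell_3)^3$, apply Jensen over at most $C(\ell_1/\ell_3)^3$ terms, and count how many $r\in\ell_2\Z^3$ meet a given $s$.

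The only difference is cosmetic. The paper uses the intersection counts $M_{r,s}^{(\ell_1,\ell_3)}$, which give the exact identity $M_r^{(\ell_1)}=\sum_s M_{r,s}^{(\ell_1,\ell_3)}$ and the localized intermediate bound \eqref{eq:many_particles_large_box_bounded_smaller_box_power}, later reused in the five-body lemma. You bound directly by the full counts $M_s^{(\ell_3)}$, which skips the implicit step $M_{r,s}^{(\ell_1,\ell_3)}\leq M_s^{(\ell_3)}$ at the end.
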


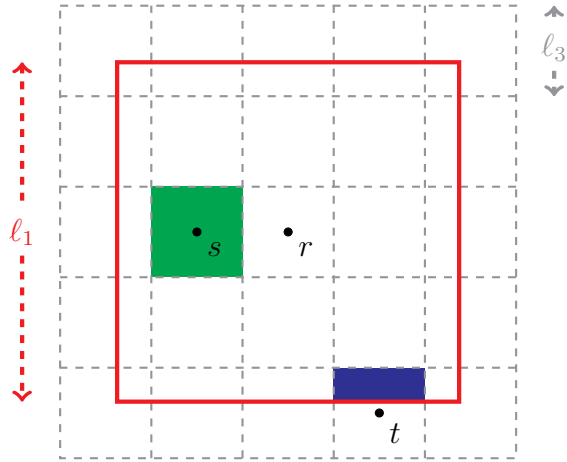
\begin{figure}[ht]
	\centering
	\begin{tikzpicture}
	\tikzmath{\w = 6; \wi = \w - \w/4; \se = \w/5;}
	
	\fill[Green] (-\w/2 + \se,-\w/2 + 2*\se) rectangle (-\w/2 + 2*\se,-\w/2 + 3*\se);
	
	\fill[Blue] (-\w/2 + 3*\se,-\wi/2) rectangle (-\w/2 + 4*\se,-\w/2 + \se);
	
	\draw[Gray, thick, dashed] (-\w/2,\w/2) -- (\w/2,\w/2) -- (\w/2,-\w/2) -- (-\w/2,-\w/2) -- cycle;
	
	\draw[Gray, thick, dashed] (\w/2,-\w/2 + \se) -- (-\w/2,-\w/2 + \se);
	\draw[Gray, thick, dashed] (\w/2,-\w/2 + 2*\se) -- (-\w/2,-\w/2 + 2*\se);
	\draw[Gray, thick, dashed] (\w/2,-\w/2 + 3*\se) -- (-\w/2,-\w/2 + 3*\se);
	\draw[Gray, thick, dashed] (\w/2,-\w/2 + 4*\se) -- (-\w/2,-\w/2 + 4*\se);
	\draw[Gray, thick, dashed] (-\w/2 + \se,-\w/2) -- (-\w/2 + \se,\w/2);
	\draw[Gray, thick, dashed] (-\w/2 + 2*\se,-\w/2) -- (-\w/2 + 2*\se,\w/2);
	\draw[Gray, thick, dashed] (-\w/2 + 3*\se,-\w/2) -- (-\w/2 + 3*\se,\w/2);
	\draw[Gray, thick, dashed] (-\w/2 + 4*\se,-\w/2) -- (-\w/2 + 4*\se,\w/2);
	
	\draw[Red, ultra thick] (-\wi/2,-\wi/2) -- (-\wi/2,\wi/2) -- (\wi/2,\wi/2) -- (\wi/2,-\wi/2) -- cycle;
	
	\filldraw[black] (0,0) circle (1.5pt) node[anchor=north west]{\large $r$};
	\filldraw[black] (-\se,0) circle (1.5pt) node[anchor=north west]{\large $s$};
	\filldraw[black] (\se,-2*\se) circle (1.5pt) node[anchor=north west]{\large $t$};
	\node[color=Red](a) at (-\w/2 - 0.5,0) {\large $\ell_1$};
	\draw[color=Red, ultra thick, dashed,<-](-\w/2 - 0.5,\wi/2) -- (a);
	\draw[color=Red, ultra thick, dashed,->](a) -- (-\w/2 -0.5,-\wi/2);
	\node[color=Gray](b) at (\w/2 + 0.5,\wi/2 + 0.2) {\large $\ell_3$};
	\draw[color=Gray, ultra thick, dashed,<-](\w/2 + 0.5,\w/2) -- (b);
	\draw[color=Gray, ultra thick, dashed,->](b) -- (\w/2 + 0.5,\w/2 - \se);
\end{tikzpicture}
	\caption{Illustration of a coverage of the box ${\color{Red}\Lambda_r^{(\ell_1)}}$ by the smaller boxes ${\color{Gray}\Lambda_s^{(\ell_3)}}$. Some of the {\color{gray}grey} boxes overlap only partially with the {\color{red}red} one. The quantities {\color{Green}$M_{r,s}^{(\ell_1,\ell_3)}$} and {\color{Blue}$M_{r,t}^{(\ell_1,\ell_3)}$} respectively count the number of particles in the {\color{Green}green} and {\color{Blue}blue} rectangles.}
	\label{fig:coverage_large_box_smaller_boxes}
\end{figure}

We say a few words about Lemma~\ref{lemma:convexity_estimate_number_particles_box} before providing its proof. Firstly, the condition
\begin{equation}
	\label{eq:convexity_estimate_constant_condition}
	M_r^{(\ell_1)} \geq C_\beta^{(\ell_1,\ell_3)}
\end{equation}
states that the box $\Lambda_r^{(\ell_1)}$ contains sufficiently many particles so that, if we consider any coverage of $\Lambda_r^{(\ell_1)}$ by disjoint smaller boxes $\Lambda_s^{(\ell_3)}$ (see Figure~\ref{fig:coverage_large_box_smaller_boxes}), and distribute the particles evenly across these smaller boxes, then each of them contains at least $\beta$ particles. Essentially, this ensures that, for any given configuration, there are few boxes $\Lambda_s^{(\ell_3)}$ that contain less than $\beta$ particles. This allows us to prove the bound
\begin{equation*}
	M_r^{(\ell_1)} \leq C\sum_{s\in\ell_3\mathbb{Z}^3}M_{r,s}^{(\ell_1,\ell_3)}[M_{r,s}^{(\ell_1,\ell_3)} \geq \beta],
\end{equation*}
under the condition \eqref{eq:convexity_estimate_constant_condition}. By a convexity argument, this bound implies
\begin{equation*}
	(M_r^{(\ell_1)})^\alpha \leq C\Big(\frac{\ell_1}{\ell_3}\Big)^{3(\alpha - 1)}\sum_{s\in\ell_3\mathbb{Z}^3}(M_{r,s}^{(\ell_1,\ell_3)})^\alpha[M_{r,s}^{(\ell_1,\ell_3)} \geq \beta],
\end{equation*}
again under the condition \eqref{eq:convexity_estimate_constant_condition}. To obtain \eqref{eq:convexity_estimate_number_particles_box}, we sum over $r\in\ell_2\Z^3$. The overlap between the different boxes $\Lambda_r^{(\ell_1)}$ is accounted for by the factor $(\ell_1/\ell_2)^3$ in \eqref{eq:convexity_estimate_number_particles_box}.

\begin{proof}[Proof of Lemma~\ref{lemma:convexity_estimate_number_particles_box}]
	Using that
	\begin{equation*}
		\big\{\Lambda_r^{(\ell_1)}\cap\Lambda_s^{(\ell_3)}:s\in\ell_3\Z^3\big\}
	\end{equation*}
	is a coverage of $\Lambda_r^{(\ell_1)}$, we can write
	\begin{equation*}
		M_r^{(\ell_1)} = \sum_{s\in\ell_3\mathbb{Z}^3}M_{r,s}^{(\ell_1,\ell_3)}
	\end{equation*}
	(see Figure~\ref{fig:coverage_large_box_smaller_boxes}). Multiplying both sides of the inequality by $[M_r^{(\ell_1)} \geq C_\beta^{(\ell_1,\ell_3)}]$ and splitting the right-hand side between $M_{r,s}^{(\ell_1,\ell_3)} \geq \beta$ and $M_{r,s}^{(\ell_1,\ell_3)} < \beta$, we get
	\begin{equation}
		\label{eq:convexity_estimate_number_particles_box_bound_1}
		M_r^{(\ell_1)}[M_r^{(\ell_1)} \geq C_\beta^{(\ell_1,\ell_3)}] \leq C\beta\Big(\frac{\ell_1}{\ell_3}\Big)^3[M_r^{(\ell_1)} \geq C_\beta^{(\ell_1,\ell_3)}] + \sum_{s\in\ell_3\mathbb{Z}^3}M_{r,s}^{(\ell_1,\ell_3)}[M_{r,s}^{(\ell_1,\ell_3)} \geq \beta],
	\end{equation}
	for some universal constant $C > 0$. By definition of $C_\beta^{(\ell_1,\ell_3)}$, the first term in the right can be bounded by
	\begin{equation*}
		C\beta\Big(\frac{\ell_1}{\ell_3}\Big)^3[M_r^{(\ell_1)} \geq C_\beta^{(\ell_1,\ell_3)}] \leq \frac{C\beta}{C_\beta}M_r^{(\ell_1)}[M_r^{(\ell_1)} \geq C_\beta^{(\ell_1,\ell_3)}].
	\end{equation*}
	Thus, taking the constant $C_\beta$ in \eqref{eq:convexity_estimate_constant} large enough, we can shift this term to the left of the equation \eqref{eq:convexity_estimate_number_particles_box_bound_1} to obtain
	\begin{equation}
		\label{eq:many_particles_large_box_bounded_smaller_box}
		M_r^{(\ell_1)}[M_r^{(\ell_1)} \geq C_\beta^{(\ell_1,\ell_3)}] \leq C\sum_{s\in\ell_3\mathbb{Z}^3}M_{r,s}^{(\ell_1,\ell_3)}[M_{r,s}^{(\ell_1,\ell_3)} \geq \beta],
	\end{equation}
	for some $C$ that depends only on $\beta$.
	
	Next, we raise \eqref{eq:many_particles_large_box_bounded_smaller_box} to the power $\alpha$, and use a convexity argument to bound the right-hand side. For this, we define
	\begin{equation*}
		D_r^{(\ell_1,\ell_3)} = \#\big\{s\in\ell_3\mathbb{Z}^3: \Lambda_r^{(\ell_1)}\cap\Lambda_s^{(\ell_3)}\neq \emptyset\big\},
	\end{equation*}
	which counts the number of boxes $\Lambda_s^{(\ell_3)}$ that intersect (at least partially) with $\Lambda_r^{(\ell_1)}$. Then, by convexity of $x \mapsto x^\alpha$, Jensen's inequality implies
	\begin{equation*}
		\bigg(\frac{1}{D_r^{(\ell_1,\ell_3)}}\sum_{s\in\ell_3\mathbb{Z}^3}M_{r,s}^{(\ell_1,\ell_3)}[M_{r,s}^{(\ell_1,\ell_3)} \geq \beta]\bigg)^\alpha \leq \frac{1}{D_r^{(\ell_1,\ell_3)}}\sum_{s\in\ell_3\mathbb{Z}^3}(M_{r,s}^{(\ell_1,\ell_3)})^\alpha[M_{r,s}^{(\ell_1,\ell_3)} \geq \beta].
	\end{equation*}
	Using that $D_r^{(\ell_1,\ell_3)} \leq C(\ell_1/\ell_3)^3$, we deduce
	\begin{equation}
		\label{eq:many_particles_large_box_bounded_smaller_box_power}
		(M_r^{(\ell_1)})^\alpha[M_r^{(\ell_1)} \geq C_\beta^{(\ell_1,\ell_3)}] \leq C\Big(\frac{\ell_1}{\ell_3}\Big)^{3(\alpha - 1)}\sum_{s\in\ell_3\mathbb{Z}^3}(M_{r,s}^{(\ell_1,\ell_3)})^\alpha[M_{r,s}^{(\ell_1,\ell_3)} \geq \beta].
	\end{equation}
	Finally, we sum over $r\in\ell_2\Z^3$ and use that, for any $s$,
	\begin{equation*}
		\#\big\{r\in\ell_2\Z^3:\Lambda_r^{(\ell_1)}\cap\Lambda_s^{(\ell_3)} \neq \emptyset \big\} \leq C\Big(\dfrac{\ell_1}{\ell_2}\Big)^3,
	\end{equation*}
	to obtain \eqref{eq:convexity_estimate_number_particles_box}.
\end{proof}

In addition to Lemma~\ref{lemma:convexity_estimate_number_particles_box}, we will also need the following two bounds to prove Proposition~\ref{prop:general_two_body_potential_bound}.

\begin{lemma}
	Let $\ell_1, \ell_2$ such that $\ell_2 \leq \ell_1/\sqrt{6}$. Let $M$ be a nonnegative integer. Then,
	\label{lemma:three_body_estimate_box}
	\begin{equation}
		\label{eq:three_body_estimate_box}
		\sum_{\substack{1\leq i,j,k\leq M\\ i\neq j\neq k\neq i}}\mathds{1}_{\left\{\vert(x_i - x_j,x_i - x_k)\vert \leq \ell_1\right\}} \geq \sum_{r\in\ell_2\mathbb{Z}^3}M_r^{(\ell_2)}(M_r^{(\ell_2)} - 1)(M_r^{(\ell_2)} - 2)
	\end{equation}
	holds on $\mfH^M$.
\end{lemma}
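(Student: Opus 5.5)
The inequality is pointwise in the configuration $(x_1,\dots,x_M)$, so I would prove it as an elementary counting statement. The sum on the right has one term for every box $\Lambda_r^{(\ell_2)}$ with $r\in\ell_2\mathbb{Z}^3$. The key observation is that these boxes are half-open translates of $\ell_2\Lambda_1 = \ell_2[-1/2,1/2)^3$ centred on the lattice $\ell_2\mathbb{Z}^3$, so they form a partition of $\R^3$. Consequently each particle $x_i$ lies in exactly one box.

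Expanding the product, $M_r^{(\ell_2)}(M_r^{(\ell_2)}-1)(M_r^{(\ell_2)}-2)$ is exactly the number of ordered triples $(i,j,k)$ of pairwise distinct indices with $x_i,x_j,x_k\in\Lambda_r^{(\ell_2)}$. I would check this by writing $M_r^{(\ell_2)}=\sum_i \mathds{1}_{\Lambda_r^{(\ell_2)}}(x_i)$ and noting that the falling factorial removes the diagonal terms; equivalently, it counts injective maps from $\{1,2,3\}$ into the set of particles in the box. Since the boxes are disjoint, a given ordered triple is counted for at most one $r$. Hence the right-hand side equals
\[
\sum_{\substack{1\leq i,j,k\leq M\\ i\neq j\neq k\neq i}}\sum_{r\in\ell_2\Z^3}\mathds{1}_{\Lambda_r^{(\ell_2)}}(x_i)\mathds{1}_{\Lambda_r^{(\ell_2)}}(x_j)\mathds{1}_{\Lambda_r^{(\ell_2)}}(x_k),
\]
where the inner sum is the indicator that $x_i,x_j,x_k$ lie in a common box.

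It therefore suffices to show, term by term, that
\[
\sum_{r\in\ell_2\Z^3}\mathds{1}_{\Lambda_r^{(\ell_2)}}(x_i)\mathds{1}_{\Lambda_r^{(\ell_2)}}(x_j)\mathds{1}_{\Lambda_r^{(\ell_2)}}(x_k)\leq \mathds{1}_{\{|(x_i-x_j,x_i-x_k)|\leq\ell_1\}}.
\]
This is a diameter estimate. If all three points lie in one cube of side $\ell_2$, then each coordinate difference has absolute value at most $\ell_2$. This gives $|x_i-x_j|^2\leq 3\ell_2^2$ and $|x_i-x_k|^2\leq3\ell_2^2$, so that
\[
|(x_i-x_j,x_i-x_k)|^2\leq 6\ell_2^2\leq\ell_1^2
\]
by the hypothesis $\ell_2\leq\ell_1/\sqrt6$. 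Summing over the ordered triples gives the claim.

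The only step requiring care is the disjointness of the boxes, which makes the left-hand side of the term-by-term bound at most $1$; this is guaranteed by the half-open convention in $\Lambda_1$. The constant $\sqrt6$ is then exactly the bound on the diameter of the pair of relative vectors computed above.
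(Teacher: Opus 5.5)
Your proposal is correct and follows essentially the same route as the paper: bound $\sum_{r}\mathds{1}_{(\Lambda_r^{(\ell_2)})^3}(x_i,x_j,x_k)$ by $1$ using disjointness of the boxes, then dominate it by $\mathds{1}_{\{|(x_i-x_j,x_i-x_k)|\leq\ell_1\}}$ via the diameter bound $\sqrt{3}\,\ell_2\leq\ell_1/\sqrt{2}$ (your $6\ell_2^2\leq\ell_1^2$ is the same estimate). Finally, sum over pairwise distinct $(i,j,k)$ to obtain the falling factorial $M_r^{(\ell_2)}(M_r^{(\ell_2)}-1)(M_r^{(\ell_2)}-2)$.
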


\begin{proof}[Proof of Lemma~\ref{lemma:three_body_estimate_box}]
	Using the bounds
	\begin{equation*}
		\mathds{1}_{\left\{\vert(x_i - x_j, x_i - x_k)\vert\leq \ell_1\right\}} \geq \mathds{1}_{\left\{\vert x_i - x_j\vert \leq \ell_1/\sqrt{2}\right\}}\mathds{1}_{\left\{\vert x_i - x_k\vert \leq \ell_1/\sqrt{2}\right\}}
	\end{equation*}
	and
	\begin{equation*}
		1 \geq \sum_{r\in\ell_2\mathbb{Z}^3}\mathds{1}_{(\Lambda_r^{(\ell_2)})^3}(x_i,x_j,x_k),
	\end{equation*}
	and the fact that two particles located in a box of side length $\ell_2$ are at a distance at most $\sqrt{3}\ell_2 \leq \ell_1/\sqrt{2}$ from one another, we obtain
	\begin{equation*}
		\mathds{1}_{\left\{\vert(x_i - x_j, x_i - x_k)\vert\leq \ell_1\right\}} \geq \sum_{r\in\ell_2\mathbb{Z}^3}\mathds{1}_{(\Lambda_r^{(\ell_2)})^3}(x_i,x_j,x_k).
	\end{equation*}
	Summing over $i,j,k$ yields
	\begin{align*}
		\sum_{\substack{1\leq i,j,k\leq M\\ i\neq j\neq k\neq i}}\mathds{1}_{\left\{\vert(x_i - x_j, x_i - x_k)\vert\leq \ell_1\right\}} &\geq \sum_{r\in\ell_2\mathbb{Z}^3}\sum_{\substack{1\leq i,j,k\leq M\\ i\neq j\neq k\neq i}}\mathds{1}_{(\Lambda_r^{(\ell_2)})^3}(x_i,x_j,x_k)\\
		&= \sum_{r\in\ell_2\mathbb{Z}^3}M_r^{(\ell_2)}(M_r^{(\ell_2)} - 1)(M_r^{(\ell_2)} - 2),
	\end{align*}
	which is precisely \eqref{eq:three_body_estimate_box}.
\end{proof}

\begin{lemma}
	\label{lemma:gse_three_body_potential}
	Let $\ell_1 \geq \ell_2 \geq 0$. Let $r\in\R^3$ and let $M$ be an integer such that
	\begin{equation}
		\label{eq:gse_three_body_potential_condition_number_particles}
		M \leq \Big(\dfrac{\ell_1}{\ell_2}\Big)^3.
	\end{equation}
	Then, there exist universal constants $C,C' > 0$ such that
	\begin{equation*}
		C\sum_{i = 1}^M-\Delta_i + \sum_{\substack{1\leq i,j,k\leq M\\ i\neq j\neq k\neq i}}\ell_2^{-2}\mathds{1}_{\left\{|(x_i - x_j, x_i - x_k)| \leq \ell_2\right\}} \geq C'\frac{\ell_2^4}{\ell_1^6}M^3
	\end{equation*}
	holds on $L_\sym^2((\Lambda_r^{(\ell_1)})^M)$.
\end{lemma}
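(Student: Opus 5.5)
We first rescale. Setting $x = r + \ell_1 y$ maps $\Lambda_r^{(\ell_1)}$ to the unit box $\Lambda_1$. The kinetic energy becomes $\ell_1^{-2}\sum_i(-\Delta_{y_i})$, with Neumann conditions, since we work in the quadratic form sense on $L^2_\sym(\Lambda^M)$. Put $\delta = \ell_2/\ell_1 \leq 1$. The interaction becomes $\ell_1^{-2}\delta^{-2}\sum\mathds{1}_{\{|(y_i - y_j,y_i - y_k)|\leq\delta\}}$. The claim then reduces to
\begin{equation*}
	H_\delta := C\sum_{i=1}^M-\Delta_i + W, \qquad W = \delta^{-2}\sum_{\substack{1\leq i,j,k\leq M\\ i\neq j\neq k\neq i}}\mathds{1}_{\{|(y_i - y_j,y_i - y_k)|\leq\delta\}},
\end{equation*}
with $H_\delta \geq C'\delta^4M^3$ on $L^2_\sym(\Lambda_1^M)$ whenever $M\leq\delta^{-3}$. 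The target $\delta^4M^3$ is exactly the order of $\langle W\rangle$ in the constant state $\mathds{1}_{\Lambda_1^M}$. There are about $M^3$ triples, and each has probability of order $\delta^6$ to lie in the support. So the statement says that the uncorrelated constant state is optimal up to constants. The natural tool is a Temple-type (second-order perturbation) lower bound around the constant function, using the Neumann gap $\pi^2$ of $\sum_i-\Delta_i$ above its ground state. This gap is independent of $M$.

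Concretely, let $P$ be the projection onto the constant function and $Q = 1 - P$. For $0\leq \eta\leq 1$ and any $\kappa>0$, we use $W\geq\eta W$, positivity of $Q(\eta W)Q$, and the Cauchy--Schwarz inequality on the cross terms. This gives
\begin{equation*}
	H_\delta \geq \eta\langle W\rangle_P\,P - \eta\kappa\langle W^2\rangle_P\,P + (C\pi^2 - \eta\kappa^{-1})Q.
\end{equation*}
Choosing $\kappa = \eta/(C\pi^2)$ yields
\begin{equation*}
	H_\delta \geq \eta\langle W\rangle_P - \eta^2\langle W^2\rangle_P/(C\pi^2).
\end{equation*}
Computing $\langle W^2\rangle_P$ by splitting pairs of triples according to the number of shared indices gives
\begin{equation*}
	\langle W^2\rangle_P \lesssim \delta^{-4}\big(M^6\delta^{12} + M^5\delta^{12} + M^4\delta^9 + M^3\delta^6\big).
\end{equation*}
The disjoint and one-index-sharing contributions are harmless. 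The two- and three-index-sharing contributions, however, are too large compared with $\langle W\rangle_P \sim M^3\delta^4$ when $\delta$ is small. The reason is that $W$ is a tall, thin potential of height $\delta^{-2}$, which forces $\eta\sim\delta^2$ and loses a factor $\delta^2$.

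To fix this, we plan to first trade part of the kinetic energy and the hard, short-range potential for a soft potential of longer range, using a three-body Dyson-type lemma as in \cite{NamRicTri23}. In the relative coordinates $\R^6$, the potential $\delta^{-2}\mathds{1}_{\{|\bx|\leq\delta\}}$ has dimensionless strength of order one. Hence its scattering hypervolume is comparable to $\delta^4$. The Dyson lemma then gives a lower bound by a potential $U_R$ of range $R\gg\delta$, amplitude about $\delta^4R^{-6}$ and $\int U_R\sim\delta^4$. We intend to take $R$ of order $1$ (the box size), or an intermediate scale if the nearest-neighbour restriction in the Dyson lemma forces it. With such an $R$ the dangerous terms in $\langle U_R^2\rangle_P$ become comparable to $\langle U_R\rangle_P^2$, so the Temple bound above can be run with $\eta$ of order one. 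Here the hypothesis $M\leq\delta^{-3}$ enters: it ensures that the remaining kinetic gap dominates $\Vert U_R\Vert$ on the relevant sectors, and that the terms with shared indices are controlled. This step produces $c\,\delta^4M^3$, which after undoing the scaling is $C'\ell_2^4\ell_1^{-6}M^3$.

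The main obstacle is this Dyson-lemma step in the many-body setting. The Dyson lemma needs the full strength of the kinetic energy of the particles involved. Naively dividing $\sum_i-\Delta_i$ among the $\sim M^2$ triples containing a given particle loses a factor $M^2$. We would first try the standard nearest-neighbour device: keep, for each particle, only the interaction with its two nearest neighbours in the soft potential, accepting a multiplicity constant. We would then check that after the Temple step this still yields the full count $M^3$, rather than $M$, of triples in $\langle U_R\rangle_P$. If this fails, the fallback is to localise into sub-boxes of side about $\delta M^{-1/6}$ or so, chosen so that the Neumann gap in each sub-box matches $\Vert W\Vert$. We would apply the Temple bound in each sub-box and recombine the particle counts using convexity, as in Lemma~\ref{lemma:convexity_estimate_number_particles_box}.
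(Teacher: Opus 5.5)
The paper does not prove this lemma by hand. It applies \cite[Proposition 4]{NamRicTri22b}, a lower bound on the Neumann ground-state energy of a dilute three-body gas in a box, to the weakened potential $\varepsilon\ell_2^{-2}\mathds{1}_{\{|(x-y,x-z)|\leq\ell_2\}}$; your $C$ plays the role of $\varepsilon^{-1}$. Small $\varepsilon$ together with $M\leq(\ell_1/\ell_2)^3$ puts the system in the dilute regime. You try instead to rebuild such a bound from Dyson's lemma and Temple's inequality. Your diagnosis of why bare Temple fails is correct, but the step meant to produce $\delta^4M^3$ cannot work as set up.

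First, your displayed consequence does not follow from your operator inequality. With $\kappa=\eta/(C\pi^2)$ the coefficient of $Q$ is zero, so you only get $H_\delta\geq\min\{0,a\}$, where $a=\eta\langle W\rangle_P-\eta^2\langle W^2\rangle_P/(C\pi^2)$. A correct choice, e.g.\ $\kappa=2\eta/(C\pi^2)$, gives $H_\delta\geq\min\{C\pi^2/2,\ \eta\langle U\rangle_P-2\eta^2\langle U^2\rangle_P/(C\pi^2)\}$. For every $\kappa$ the bound is at most the $Q$-coefficient, hence below $C\pi^2$, whatever softened potential $U$ you insert. Since $M\leq\delta^{-3}$ allows $\delta^4M^3$ up to $\delta^{-5}$, a single Temple step in the box of side $\ell_1$ proves the claim only for $M\lesssim C^{1/3}\delta^{-4/3}$. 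Your assertion that the hypothesis makes the kinetic gap dominate $\Vert U_R\Vert$ is false. This is not a corner case: the paper uses the lemma with $M$ up to $(\ell_1/\ell_3)^3\sim N^{1/2}$ and $\delta\sim N^{-1/6}$, so $\delta^4M^3\sim N^{5/6}$.

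Localising into Neumann sub-boxes and recombining by superadditivity and convexity in the particle number is therefore the core of the proof, not a fallback, and its scales are tightly constrained. Work in rescaled units with sub-boxes of side $s$, after discarding the kinetic energy between them. Recombination recovers $\delta^4M^3$ only if the typical occupancy $Ms^3$ is large; otherwise particles can sit at most two per box and the bound is zero. Temple in each box needs $\delta^4M^3s^3\ll Cs^{-2}$. The resulting window $M^{-1/3}\ll s\ll(C\delta^{-4}M^{-3})^{1/5}$ is nonempty only if $M\delta^3\ll C^{3/4}$. That is where the hypothesis, together with a large $C$ (the paper's small $\varepsilon$), actually enters.

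Your suggested scale $s\approx\delta M^{-1/6}$ fails. It gives occupancy $\delta^3M^{1/2}\ll1$ and leaves no room for a Dyson range $\delta\ll R\ll s$. Likewise, with a nearest-neighbour Dyson lemma, keeping the full count of triples in $\langle U_R\rangle_P$ forces $R\ll M^{-1/3}$, which excludes $R$ of order one for large $M$.

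Beyond this, you need superadditivity for over-full boxes, where Temple breaks down. You also need the many-body three-body Dyson lemma itself, which you leave open. This is exactly the work the citation does in the paper, so as it stands the proposal does not prove the lemma.

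Finally, the statement as written fails for $M\in\{1,2\}$: there are no triples, and the constant function has zero Neumann kinetic energy. Any argument yields $M(M-1)(M-2)$ on the right-hand side. That suffices, because the lemma is only applied when $M_r^{(3\ell_1)}\geq3$.
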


\begin{proof}
	This follows directly from an application of \cite[Proposition 4]{NamRicTri22b} to the three-body potential
	\begin{equation*}
		(x,y,z) \mapsto \varepsilon\ell_2^{-2}\mathds{1}_{\left\{|\mathcal{M}(x - y, x - z)| \leq \ell_3\right\}},
	\end{equation*}
	for some $\varepsilon > 0$ small enough. The smallness of $\varepsilon$ and condition \eqref{eq:gse_three_body_potential_condition_number_particles} on the number of particles ensure that we are in a dilute regime, i.e. that the error from \cite[Proposition 4]{NamRicTri22b} is small.
\end{proof}

\subsection{Proof of Proposition~\ref{prop:general_two_body_potential_bound}}

Thanks to the relations
\begin{equation*}
	\d{}\Gamma(-\Delta) = \bigoplus_{M\geq 0}\bigg(\sum_{i=1}^M-\Delta_i\bigg),
\end{equation*}
\begin{equation*}
	\int_{\R^6}\mathds{1}_{\{|x-y|\leq \ell_2\}}a_x^*a_y^*a_xa_y = \bigoplus_{M\geq 0}\bigg(\sum_{\substack{1\leq i,j\leq M\\ i\neq j}}\mathds{1}_{\{|x_i-x_j|\leq \ell_2\}}\bigg)
\end{equation*}
and
\begin{equation*}
	\int_{\R^9}\mathds{1}_{\{|(x-y,x-z)|\leq\ell_3\}}a_x^*a_y^*a_z^*a_xa_ya_z = \bigoplus_{M\geq0}\bigg(\sum_{\substack{1\leq i,j,k\leq M\\ i\neq j\neq k\neq i}}\mathds{1}_{\left\{|(x_i-x_j,x_i-x_k)|\leq \ell_3\right\}}\bigg),
\end{equation*}
it is clear that proving \eqref{eq:general_two_body_potential_bound} on $\mathcal{F}$ is equivalent to proving
\begin{multline}
	\label{eq:general_two_body_potential_bound_equivalent_statement}
	\sum_{\substack{1\leq i,j\leq M\\ i\neq j}}\mathds{1}_{\{|x_i-x_j|\leq \ell_2\}} \leq C\Big(\frac{\ell_2}{\ell_1}\Big)^3M + C\ell_2^2\Big (1 + \dfrac{\ell_2}{\ell_3}\Big)\Big(\frac{\ell_1}{\ell_3}\Big)^3\sum_{i=1}^M-\Delta_i\\
	+ C\Big(\frac{\ell_2}{\ell_3}\Big)^3\Big(\frac{\ell_1}{\ell_3}\Big)^3\sum_{\substack{1\leq i,j,k\leq M\\ i\neq j\neq k\neq i}}\mathds{1}_{\{|(x_i-x_j,x_i-x_k)|\leq \ell_3\}}
\end{multline}
on $\mathfrak{H}^M$ for any nonnegative integer $M$.
	
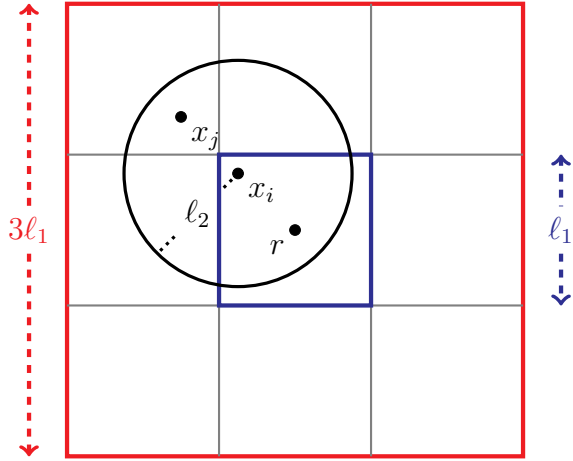
\begin{figure}[ht]
	\centering
	\begin{tikzpicture}
	\tikzmath{\w = 6; \se = \w/3;}
	
	\draw[Red, ultra thick] (-\w/2,-\w/2) -- (-\w/2,\w/2) -- (\w/2,\w/2) -- (\w/2,-\w/2) -- cycle;
	
	\draw[gray, thick] (-\w/2,-\w/2 + \se) -- (\w/2,-\w/2 + \se);
	\draw[gray, thick] (-\w/2,-\w/2 + 2*\se) -- (\w/2,-\w/2 + 2*\se);
	\draw[gray, thick] (-\w/2 + \se,-\w/2) -- (-\w/2 + \se,\w/2);
	\draw[gray, thick] (-\w/2 + 2*\se,-\w/2) -- (-\w/2 + 2*\se,\w/2);
	
	\draw[Blue, ultra thick] (-\w/2 + \se,-\w/2 + \se) -- (-\w/2 + \se,-\w/2 + 2*\se) -- (-\w/2 + 2*\se,-\w/2 + 2*\se) -- (-\w/2 + 2*\se,-\w/2 + \se) -- cycle;
	
	\filldraw[black] (0,0) circle (2pt) node[anchor=north east]{\large $r$};
	\draw[color=black, very thick](-\w/8,\w/8) circle (\w/4);
	\node[color=black](a) at ({-\w/8 - \w/(2*4*sqrt(2))},{\w/8- \w/(2*4*sqrt(2))}) {\large $\ell_2$};
	\draw[color=black, very thick,dotted](-\w/8,\w/8) -- (a);
	\draw[color=black, very thick,dotted](a) -- ({-\w/8 - \w/(4*sqrt(2))},{\w/8- \w/(4*sqrt(2))});
	
	\filldraw[black] (-\w/8,\w/8) circle (2pt) node[anchor=north west]{\large $x_i$};
	\filldraw[black] (-\w/4,\w/4) circle (2pt) node[anchor=north west]{\large $x_j$};
	
	\node[color=Red](a) at (-\w/2 - 0.5,0) {\large $3\ell_1$};
	\draw[color=Red, ultra thick, dashed,<-](-\w/2 - 0.5,\w/2) -- (a);
	\draw[color=Red, ultra thick, dashed,->](a) -- (-\w/2 - 0.5,-\w/2);
	\node[color=Blue](b) at (\w/2 + 0.5,0) {\large $\ell_1$};
	\draw[color=Blue, ultra thick, dashed,<-](\w/2 + 0.5,\se/2) -- (b);
	\draw[color=Blue, ultra thick, dashed,->](b) -- (\w/2 + 0.5,-\se/2);
\end{tikzpicture}
	\caption{Illustration of the boxes ${\color{Blue}\Lambda_r^{(\ell_1)}}$ and ${\color{Red}\Lambda_r^{(3\ell_1)}}$. Because the interaction has range $\ell_2$, which is less than $\ell_1$, particles located in the box $\Lambda_r^{(\ell_1)}$ can only interact with particles located in the box $\Lambda_r^{(3\ell_1)}$.}
	\label{fig:fig02}
\end{figure}

Let us rewrite the left-hand side of  \eqref{eq:general_two_body_potential_bound_equivalent_statement}. Firstly, we decompose the space $\R^3$ into boxes of side length $\ell_1$. Namely, we write
\begin{equation}
	\label{eq:effective_two_body_potential_decomposition_full_space_boxes}
	\R^3 = \bigcup_{r\in \ell_1\mathbb{Z}^3}\Lambda_r^{(\ell_1)},
\end{equation}
where we recall that $\Lambda_r^{(\ell_1)}$ was defined in \eqref{eq:effective_two_body_boxes_definition}. Hence,
\begin{align*}
	\mathds{1}_{\{|x_i-x_j|\leq \ell_2\}} = \sum_{r\in \ell_1\mathbb{Z}^3}\mathds{1}_{\{|x_i-x_j|\leq \ell_2\}}\mathds{1}_{\Lambda_r^{(\ell_1)}}(x_i) &= \sum_{r\in \ell_1\mathbb{Z}^3}\mathds{1}_{\{|x_i-x_j|\leq \ell_2\}}\mathds{1}_{\Lambda_r^{(\ell_1)}}(x_i)\mathds{1}_{\Lambda_r^{(3\ell_1)}}(x_j)\\
	&\leq \sum_{r\in \ell_1\mathbb{Z}^3}\mathds{1}_{\{|x_i-x_j|\leq \ell_2\}}\mathds{1}_{\Lambda_r^{(3\ell_1)}}(x_i)\mathds{1}_{\Lambda_r^{(3\ell_1)}}(x_j). \numberthis \label{eq:effective_two_body_potential_rewritten1}
\end{align*}
In the first equality, we used \eqref{eq:effective_two_body_potential_decomposition_full_space_boxes}. In the second, we used that, since the two-body interactions have range $\ell_2$, which is less than $\ell_1$, any particle located in a box $\Lambda_r^{(\ell_1)}$ can only interact with particles located in the larger box $\Lambda_r^{(3\ell_1)}$ (see Figure~\ref{fig:fig02}). In the third inequality we used the trivial inclusion $\Lambda_r^{(\ell_1)}\subset \Lambda_r^{(3\ell_1)}$. Note that, since the $r$'s we consider are in $\ell_1\mathbb{Z}^3$, the boxes $\Lambda_r^{(3\ell_1)}$ are not disjoint and because of the overlap we have
\begin{equation}
	\label{eq:box_recombination}
	\sum_{r\in \ell_1\mathbb{Z}^3}M_r^{(3\ell_1)} = 27\sum_{r\in \ell_1\mathbb{Z}^3}M_r^{(\ell_1)} = 27M,
\end{equation}
where we recall that $M_r^{(\ell_1)}$ was defined in \eqref{eq:effective_two_body_number_particle_in_box_definition}. We now distinguish between boxes that contain many particles, and the others. Namely, we distinguish between $M_r^{(3\ell_1)}$ smaller or larger than a threshold $C_{\ell_1,\ell_3}$ defined by
\begin{equation*}
	C_{\ell_1,\ell_3} = C\Big(\frac{\ell_1}{\ell_3}\Big)^3,
\end{equation*}
for some universal constant $C$ taken slightly larger than the constant $C_\beta$ in Lemma~\ref{lemma:convexity_estimate_number_particles_box} with $\beta = 3$. Define also
\begin{equation*}
	\cI_1 = \sum_{r\in \ell_1\mathbb{Z}^3}[M_r^{(3\ell_1)} \geq C_{\ell_1,\ell_3}]\sum_{\substack{1\leq i,j\leq M\\ i\neq j}}\mathds{1}_{\{|x_i-x_j|\leq \ell_2\}}\mathds{1}_{(\Lambda_r^{(3\ell_1)})^2}(x_i,x_j)
\end{equation*}
and
\begin{equation*}
	\cI_2 = \sum_{r\in \ell_1\mathbb{Z}^3}[M_r^{(3\ell_1)} < C_{\ell_1,\ell_3}]\sum_{\substack{1\leq i,j\leq M\\ i\neq j}}\mathds{1}_{\{|x_i-x_j|\leq \ell_2\}}\mathds{1}_{(\Lambda_r^{(3\ell_1)})^2}(x_i,x_j).
\end{equation*}
Then, using \eqref{eq:effective_two_body_potential_rewritten1} we get
\begin{equation*}
	\sum_{\substack{1\leq i,j\leq M\\ i\neq j}}\mathds{1}_{\{|x_i - x_j|\leq \ell_2\}} \leq \cI_1 + \cI_2.
\end{equation*}
Next, we bound $\cI_1$ and $\cI_2$ separately.
\\

\noindent
\textbf{Analysis of $\cI_1$.} It follows directly from the definition \eqref{eq:effective_two_body_number_particle_in_box_definition} that
\begin{equation*}
	\sum_{\substack{1\leq i,j\leq M\\ i\neq j}}\mathds{1}_{(\Lambda_r^{(3\ell_1)})^2}(x_i,x_j) = M_r^{(3\ell_1)}(M_r^{(3\ell_1)} - 1) \leq (M_r^{(3\ell_1)})^2.
\end{equation*}
Therefore,
\begin{align*}
	\cI_1 &\leq \sum_{r\in\ell_1\mathbb{Z}^3}[M_r^{(3\ell_1)} \geq C_{\ell_1,\ell_3}](M_r^{(3\ell_1)})^2 \leq C\Big(\frac{\ell_3}{\ell_1}\Big)^3\sum_{r\in\ell_1\mathbb{Z}^3}[M_r^{(3\ell_1)} \geq C_{\ell_1,\ell_3}](M_r^{(3\ell_1)})^3.
\end{align*}
Applying Lemma~\ref{lemma:convexity_estimate_number_particles_box} with $\alpha = 2$ and $\beta = 3$, we find
\begin{equation*}
	\cI_1 \leq C\Big(\frac{\ell_1}{\ell_3}\Big)^3\sum_{s\in\tilde{\ell}_3\mathbb{Z}^3}(M_s^{(\tilde{\ell}_3)})^3[M_s^{(\tilde{\ell}_3)} \geq 3] \leq C\Big(\frac{\ell_1}{\ell_3}\Big)^3\sum_{s\in\tilde{\ell}_3\mathbb{Z}^3}M_s^{(\tilde{\ell}_3)}(M_s^{(\tilde{\ell}_3)} - 1)(M_s^{(\tilde{\ell}_3)} - 2),
\end{equation*}
where we defined $\tilde{\ell}_3 = \ell_3/\sqrt{6}$. Thanks to Lemma~\ref{lemma:three_body_estimate_box}, we finally obtain
\begin{equation}
	\label{eq:two_body_effective_potential_I1_bound2}
	\cI_1 \leq C\Big(\frac{\ell_1}{\ell_3}\Big)^3\sum_{\substack{1\leq i,j,k\leq M\\ i\neq j\neq k\neq i}}\mathds{1}_{\left\{\vert(x_i - x_j, x_i - x_k)\vert \leq \ell_3\right\}}.
\end{equation}
\\

\noindent
\textbf{Analysis of $\cI_2$.} To bound $\cI_2$, we wish to use the Sobolev inequality in the box $\Lambda_r^{(3\ell_1)}$. However, if we do so directly we get a large error term due to the boundary of $\Lambda_r^{(3\ell_1)}$. To avoid that, we apply the inequality only on the part of $\cI_2$ that acts on functions orthogonal to constant functions. For this, we use the resolution of the identity $\mathds{1}_{\Lambda_{r}^{(3\ell_1)}}(x_i) = p_{r,i}^{(3\ell_1)} + q_{r,i}^{(3\ell_1)}$ and the Cauchy--Schwarz inequality to bound
\begin{equation*}
	\sum_{\substack{1\leq i,j\leq M\\ i\neq j}} \mathds{1}_{\{|x_i-x_j|\leq \ell_2\}}\mathds{1}_{(\Lambda_r^{(3\ell_1)})^2}(x_i,x_j)[M_r^{(3\ell_1)} < C_{\ell_1,\ell_3}] \leq \mathcal{J}_{1,r}^{(3\ell_1)} + \mathcal{J}_{2,r}^{(3\ell_1)} + \mathcal{J}_{3,r}^{(3\ell_1)} + \mathcal{J}_{4,r}^{(3\ell_1)},
\end{equation*}
with $\mathcal{J}_{1,r}^{(3\ell_1)}, \mathcal{J}_{2,r}^{(3\ell_1)}, \mathcal{J}_{3,r}^{(3\ell_1)}$ and $\mathcal{J}_{4,r}^{(3\ell_1)}$ given by
\begin{align*}
	\mathcal{J}_{1,r}^{(3\ell_1)} &= 2\sum_{\substack{1\leq i,j\leq M\\ i\neq j}} q_{r,i}^{(3\ell_1)}\mathds{1}_{\{|x_i-x_j|\leq \ell_2\}}\mathds{1}_{\Lambda_r^{(3\ell_1)}}(x_j)[M_r^{(3\ell_1)} < C_{\ell_1,\ell_3}],\\
	\mathcal{J}_{2,r}^{(3\ell_1)} &= 4\sum_{\substack{1\leq i,j\leq M\\ i\neq j}} q_{r,i}^{(3\ell_1)}p_{r,j}^{(3\ell_1)}\mathds{1}_{\{|x_i-x_j|\leq \ell_2\}}[M_r^{(3\ell_1)} < C_{\ell_1,\ell_3}]p_{r,j}^{(3\ell_1)}q_{r,i}^{(3\ell_1)},\\
	\mathcal{J}_{3,r}^{(3\ell_1)} &= 4\sum_{\substack{1\leq i,j\leq M\\ i\neq j}} p_{r,i}^{(3\ell_1)}p_{r,j}^{(3\ell_1)}\mathds{1}_{\{|x_i-x_j|\leq \ell_2\}}[3 \leq M_r^{(3\ell_1)} < C_{\ell_1,\ell_3}]p_{r,j}^{(3\ell_1)}p_{r,i}^{(3\ell_1)}
\end{align*}
and
\begin{equation*}
	\mathcal{J}_{4,r}^{(3\ell_1)} = 4\sum_{\substack{1\leq i,j\leq M\\ i\neq j}}p_{r,i}^{(3\ell_1)}p_{r,j}^{(3\ell_1)}\mathds{1}_{\{|x_i-x_j|\leq \ell_2\}}[M_r^{(3\ell_1)} = 2]p_{r,j}^{(3\ell_1)}p_{r,i}^{(3\ell_1)},
\end{equation*}
and where $p_{r,i}^{(3\ell_1)}$ and $q_{r,i}^{(3\ell_1)}$ denote the projections defined in \eqref{eq:projection_constant_function_box} acting on the $i$-th variable.
\\

\noindent
\textit{Analysis of $\mathcal{J}_{1,r}^{(3\ell_1)}$ and $\mathcal{J}_{2,r}^{(3\ell_1)}$.} The two terms are dealt with similarly, so we only provide the detail for $\mathcal{J}_{1,r}^{(3\ell_1)}$. Using $[M_r^{(3\ell_1)} < C_{\ell_1,\ell_3}] \leq [M_{r,i}^{(3\ell_1)} < C_{\ell_1,\ell_3}]$ (with $M_{r,i}^{(3\ell_1)}$ defined in \eqref{eq:effective_two_body_number_particle_in_box_definition}), Hölder's inequality and the Poincaré--Sobolev inequality (see for example \cite[Theorem 8.12]{LiebLos01}), we find
\begin{align*}
	q_{r,i}^{(3\ell_1)} \mathds{1}_{\{|x_i-x_j|\leq \ell_2\}}[M_r^{(3\ell_1)} < C_{\ell_1,\ell_3}] q_{r,i}^{(3\ell_1)} & \leq \big(q_{r,i}^{(3\ell_1)} \mathds{1}_{\{|x_i-x_j|\leq \ell_2\}} q_{r,i}^{(3\ell_1)}\big) [M_{r,i}^{(3\ell_1)} < C_{\ell_1,\ell_3}]\\
	& \leq C \ell^{2}_2 \nabla_{x_i}^* \mathds{1}_{\Lambda_{r}^{(3\ell_1)}}(x_i) \nabla_{x_i} [M_{r,i}^{(3\ell_1)} < C_{\ell_1,\ell_3}], \numberthis \label{eq:application_sobolev_inequality}
\end{align*}
from which we deduce
\begin{equation*}
	\mathcal{J}_{1,r}^{(3\ell_1)} \leq C \ell^2_2  \sum_{i = 1}^M [M_{r,i}^{(3\ell_1)} < C_{\ell_1,\ell_3}]M_{r,i}^{(3\ell_1)} \nabla_{x_i}^* \mathds{1}_{\Lambda_{r}^{(3\ell_1)}} (x_i) \nabla_{x_i} \leq C \ell^2_2 C_{\ell_1,\ell_3}  \sum_{i = 1}^M \nabla_{x_i}^* \mathds{1}_{\Lambda_{r}^{(3\ell_1)}} (x_i) \nabla_{x_i},
\end{equation*}
where we used that $M_{r,i}^{(3\ell_1)}$ and $\mathds{1}_{\Lambda_r^{(3\ell_1)}}(x_j)$ commute. Proceeding analogously, using additionally that $M_{r,i}^{(3\ell_1)}$ and $p_{r,j}^{(3\ell_1)}$ commute, we get
\begin{equation*}
	\mathcal{J}_{2,r}^{(3\ell_1)} \leq C\ell_2^2C_{\ell_1,\ell_3}\sum_{i = 1}^M\nabla_{x_i}^* \mathds{1}_{\Lambda_{r}^{(3\ell_1)}} (x_i) \nabla_{x_i}.
\end{equation*}
Recombining the boxes using to \eqref{eq:box_recombination} we finally obtain
\begin{equation}
	\label{eq:two_body_effective_potential_J1_bound}
	\sum_{r\in\ell_1\mathbb{Z}^3}\big(\mathcal{J}_{1,r}^{(3\ell_1)} + \mathcal{J}_{2,r}^{(3\ell_1)}\big) \leq C\ell_2^2\Big(\frac{\ell_1}{\ell_3}\Big)^3\sum_{i=1}^M-\Delta_i.
\end{equation}
\\

\noindent
\textit{Analysis of $\mathcal{J}_{3,r}^{(3\ell_1)}$.} Using again that $p_r^{(3\ell_1)}$ commutes with $M_{r}^{(3\ell_1)}$, we find
\begin{align*}
	\mathcal{J}_{3,r}^{(3\ell_1)} 
	&\leq  C\Big(\frac{\ell_2}{\ell_1}\Big)^3 \sum_{\substack{1\leq i,j\leq M\\ i\neq j}} p_{r,i}^{(3\ell_1)}p_{r,j}^{(3\ell_1)}[3 \leq M_r^{(3\ell_1)} < C_{\ell_1,\ell_3}] \\
	&\leq  C\Big(\frac{\ell_2}{\ell_1}\Big)^3[3 \leq M_r^{(3\ell_1)} < C_{\ell_1,\ell_3}] \Big(M_r^{(3\ell_1)}\Big)^2.
\end{align*}
Thanks to Young's inequality for products, this becomes
\begin{equation*}
	\cJ_{3,r}^{(3\ell_1)} \leq C\Big(\frac{\ell_2}{\ell_1}\Big)^3[3 \leq M_r^{(3\ell_1)} < C_{\ell_1,\ell_3}]\Big(M_{r}^{(3\ell_1)} + \Big(M_{r}^{(3\ell_1)}\Big)^3\Big).
\end{equation*}
We can now use Lemma~\ref{lemma:gse_three_body_potential} to obtain
\begin{multline*}
	\mathcal{J}_{3,r}^{(3\ell_1)} \leq C\Big(\frac{\ell_2}{\ell_1}\Big)^3 M_{r}^{(3\ell_1)} + C\frac{\ell_2^3}{\ell_3}\Big(\frac{\ell_1}{\ell_3}\Big)^3\sum_{i = 1}^M \nabla_i^* \mathds{1}_{\Lambda_r^{(3\ell_1)}}(x_i)\nabla_i\\
	+ C\Big(\frac{\ell_1}{\ell_3}\Big)^3\Big(\frac{\ell_2}{\ell_3}\Big)^3\sum_{\substack{1\leq i,j,k\leq M\\ i\neq j\neq k\neq i}}\mathds{1}_{\Lambda_r^{(3\ell_1)}}(x_i)\mathds{1}_{\left\{\left|(x_i - x_j,x_i - x_k)\right| \leq \ell_3\right\}}.
\end{multline*}
Summing over $r\in\ell_1\mathbb{Z}^3$ and recombining the boxes with \eqref{eq:box_recombination}, we get
\begin{multline}
	\label{eq:two_body_effective_potential_J2_bound2}
	\sum_{r\in\ell_1\mathbb{Z}^3}\mathcal{J}_{3,r}^{(3\ell_1)} \leq C\Big(\frac{\ell_2}{\ell_1}\Big)^3M + \frac{\ell_2^3}{\ell_3}\Big(\frac{\ell_1}{\ell_3}\Big)^3\sum_{i = 1}^M-\Delta_i\\
	+ C\Big(\frac{\ell_1}{\ell_3}\Big)^3\Big(\frac{\ell_2}{\ell_3}\Big)^3\sum_{\substack{1\leq i,j,k\leq M\\ i\neq j\neq k\neq i}}\mathds{1}_{\{|(x_i-x_j,x_i-x_k)| \leq\ell_3\}}.
\end{multline}
\\

\noindent
\textit{Analysis of $\mathcal{J}_{4,r}^{(3\ell_1)}$.} Following the same reasoning as for $\cJ_{3,r}^{(3\ell_1)}$, we can show that
\begin{align*}
	\mathcal{J}_{4,r}^{(3\ell_1)} 
	\leq C\Big(\frac{\ell_2}{\ell_1}\Big)^3 \sum_{\substack{1\leq i,j\leq M\\ i\neq j}} p_{r,i}^{(3\ell_1)}p_{r,j}^{(3\ell_1)}[M_r^{(3\ell_1)} = 2] \leq C\Big(\frac{\ell_2}{\ell_1}\Big)^3 \sum_{i=1}^M\mathds{1}_{\Lambda_{r}^{(3\ell_1)}}(x_i),
\end{align*}
which implies
\begin{equation}
	\label{eq:two_body_effective_potential_J3_bound}
	\sum_{r\in\ell_1\mathbb{Z}^3}\mathcal{J}_{4,r}^{(3\ell_1)} \leq C\Big(\frac{\ell_2}{\ell_1}\Big)^3M.
\end{equation}

Combining \eqref{eq:two_body_effective_potential_I1_bound2}--\eqref{eq:two_body_effective_potential_J3_bound} yields \eqref{eq:general_two_body_potential_bound}, thereby concluding the proof of Proposition~\ref{prop:general_two_body_potential_bound}.

\subsection{Proof of Corollaries~\ref{cor:general_three_body_potential_bound}~and~\ref{cor:general_four_body_potential_bound}~and~lemma~\ref{lemma:general_five_body_potential_bound}}

\label{subsec:proof_non_renormalasied_estimates_3B_4B}

\begin{proof}[Proof of Corollary~\ref{cor:general_three_body_potential_bound}]
	We prove \eqref{eq:general_three_body_potential_bound} by showing
	\begin{multline*}
		\sum_{\substack{1\leq i,j,k\leq M\\ i\neq j\neq k\neq i}}\mathds{1}_{\{\vert x_i - x_j\vert \leq \ell_2\}}\mathds{1}_{\{\vert x_i - x_k\vert \leq \ell_1\}} \leq C\Big(\frac{\ell_1}{\ell_3}\Big)^3\Big(\frac{\ell_2}{\ell_3}\Big)^3M + C\ell_2^2\Big(1 + \dfrac{\ell_2}{\ell_3}\Big)\Big(\dfrac{\ell_1}{\ell_3}\Big)^3\sum_{i=1}^M-\Delta_i\\
		+ C\Big(\frac{\ell_1}{\ell_3}\Big)^3\Big(\frac{\ell_2}{\ell_3}\Big)^3\sum_{\substack{1\leq i,j,k\leq M\\ i\neq j\neq k\neq i}}\mathds{1}_{\left\{\vert(x_i - x_j,x_i - x_k)\vert \leq \ell_3\right\}}
	\end{multline*}
	on $\mfH^M,$ for any nonnegative integer $M$. Dividing the space $\R^3$ into boxes of side length $\ell_2$ and using that $\ell_2$ is less than $\ell_1$, we find
	\begin{equation*}
		\sum_{\substack{1\leq i,j,k\leq M\\ i\neq j\neq k\neq i}}\mathds{1}_{\{\vert x_i - x_j\vert \leq \ell_2\}}\mathds{1}_{\{\vert x_i - x_k\vert \leq \ell_1\}} \leq \sum_{r\in\ell_2\mathbb{Z}^3}M_r^{(3\ell_2)}(M_r^{(3\ell_2)} - 1)M_r^{(3\ell_1)}.
	\end{equation*}
	Define
	\begin{equation*}
		C_{\ell_1,\ell_3} = C\Big(\frac{\ell_1}{\ell_3}\Big)^3,
	\end{equation*}
	for some constant $C$ taken sufficiently large; in particular larger than the constant $C_\beta$ in Lemma~\ref{lemma:convexity_estimate_number_particles_box} with $\beta = 3$. Define as well
	\begin{equation*}
		\cI_1 = \sum_{r\in\ell_2\mathbb{Z}^3}M_r^{(3\ell_2)}(M_r^{(3\ell_2)} - 1)M_r^{(3\ell_1)}[M_r^{(3\ell_1)} < C_{\ell_1,\ell_3}]
	\end{equation*}
	and
	\begin{equation*}
		\cI_2 = \sum_{r\in\ell_2\mathbb{Z}^3}M_r^{(3\ell_2)}(M_r^{(3\ell_2)} - 1)M_r^{(3\ell_1)}[M_r^{(3\ell_1)} \geq C_{\ell_1,\ell_3}].
	\end{equation*}
	Thus,
	\begin{equation*}
		\sum_{\substack{1\leq i,j,k\leq M\\ i\neq j\neq k\neq i}}\mathds{1}_{\{\vert x_i - x_j\vert \leq \ell_2\}}\mathds{1}_{\{\vert x_i - x_k\vert \leq \ell_1\}} \leq \cI_1 + \cI_2.
	\end{equation*}
	We bound $\cI_1$ and $\cI_2$ separately.
	\\
	
	\noindent
	\textbf{Analysis of $\cI_1$.} By definition of $M_r^{(3\ell_2)}$, we can find some constant $C$ large enough (and independent of $\ell_2$) such that
	\begin{equation*}
		M_r^{(3\ell_2)}(M_r^{3\ell_2} - 1) \leq \sum_{\substack{1\leq i,j\leq M\\ i\neq j}}\mathds{1}_{\{\vert x_i - x_j\vert \leq C\ell_2\}}\mathds{1}_{\Lambda_r^{(3\ell_2)}}(x_i).
	\end{equation*}
	Moreover, for all $x\in\R^3$,
	\begin{equation*}
		\sum_{r\in\ell_2\Z^3}\mathds{1}_{\Lambda_r^{(3\ell_2)}}(x) = 27.
	\end{equation*}
	Hence, $\cI_1$ is bounded by
	\begin{equation*}
		\cI_1 \leq C\Big(\dfrac{\ell_1}{\ell_3}\Big)^3\sum_{\substack{1\leq i,j\leq M\\ i\neq j}}\mathds{1}_{\{\vert x_i - x_j\vert \leq C\ell_2\}}.
	\end{equation*}
	Thus, we may apply Proposition~\ref{prop:general_two_body_potential_bound} to obtain
	\begin{equation*}
		\cI_1 \leq C\Big(\dfrac{\ell_2}{\ell_3}\Big)^3M + C\ell_2^2\Big(1 + \dfrac{\ell_2}{\ell_3}\Big)\Big(\dfrac{\ell_1}{\ell_3}\Big)^6\sum_{i = 1}^M-\Delta_i + C\Big(\dfrac{\ell_2}{\ell_3}\Big)^3\Big(\dfrac{\ell_1}{\ell_3}\Big)^6\sum_{\substack{1 \leq i,j,k\leq M\\ i\neq j \neq k \neq i}}\mathds{1}_{\left\{\vert(x_i - x_j,x_i - x_k)\vert \leq \ell_3\right\}}.
	\end{equation*}
	\\
	
	\noindent
	\textbf{Analysis of $\cI_2$.} We make a distinction between $M_r^{(3\ell_2)} < C_{\ell_2,\ell_3}$ and $M_r^{(3\ell_2)} \geq C_{\ell_2,\ell_3}$, where $C_{\ell_2,\ell_3}$ is given by
	\begin{equation*}
		C_{\ell_2,\ell_3} = C\Big(\frac{\ell_1}{\ell_3}\Big)^3,
	\end{equation*}
	for some $C$ large enough. More specifically, we define
	\begin{equation*}
		\cJ_1 = \sum_{r\in\ell_2\mathbb{Z}^3}(M_r^{(3\ell_2)})^2[M_r^{(3\ell_2)} < C_{\ell_2,\ell_3}]M_r^{(3\ell_1)}[M_r^{(3\ell_1)} \geq C_{\ell_1,\ell_3}]
	\end{equation*}
	and
	\begin{equation*}
		\cJ_2 = \sum_{r\in\ell_2\mathbb{Z}^3}(M_r^{(3\ell_2)})^2[M_r^{(3\ell_2)} \geq C_{\ell_2,\ell_3}]M_r^{(3\ell_1)}[M_r^{(3\ell_1)} \geq C_{\ell_1,\ell_3}],
	\end{equation*}
	and write
	\begin{equation*}
		\cI_2 \leq \cJ_1 + \cJ_2.
	\end{equation*}
	On the one hand, by definition of $C_{\ell_1,\ell_3}$ and $C_{\ell_2,\ell_3}$, we have
	\begin{equation*}
		\cJ_1 \leq C\Big(\dfrac{\ell_2}{\ell_1}\Big)^6\sum_{r\in\ell_2\Z^3}(M_r^{(3\ell_1)})^3[M_r^{(3\ell_1)} \geq C_{\ell_1,\ell_3}].
	\end{equation*}
	Using Lemmas~\ref{lemma:convexity_estimate_number_particles_box}~and~\ref{lemma:three_body_estimate_box}, we get
	\begin{equation*}
		\mathcal{J}_1 \leq C\Big(\dfrac{\ell_1}{\ell_3}\Big)^3\Big(\dfrac{\ell_2}{\ell_3}\Big)^3\sum_{r\in\tilde{\ell}_3\mathbb{Z}^3}(M_s^{(\tilde{\ell}_3)})^3[M_s^{(\tilde{\ell}_3)} \geq 3] \leq C\Big(\frac{\ell_1}{\ell_3}\Big)^3\Big(\frac{\ell_2}{\ell_3}\Big)^3\sum_{\substack{1\leq i,j,k\leq M\\ i\neq j\neq k\neq i}}\mathds{1}_{\left\{\vert(x_i - x_j, x_i - x_k)\vert \leq \ell_3\right\}},
	\end{equation*}
	where we intermediately defined $\tilde{\ell}_3 = \ell_3/\sqrt{6}$. On the other hand, thanks to Young's inequality for products, we obtain
	\begin{equation*}
		\cJ_2 \leq \dfrac{2\varepsilon^{1/2}}{3}\sum_{r\in\ell_2\mathbb{Z}^3}(M_r^{(3\ell_2)})^3[M_r^{(3\ell_2)} \geq C_{\ell_2,\ell_3}] + \dfrac{1}{3\varepsilon}\sum_{r\in\ell_2\mathbb{Z}^3}(M_r^{(3\ell_1)})^3[M_r^{(3\ell_1)} \geq C_{\ell_1,\ell_3}],
	\end{equation*}
	for all $\varepsilon > 0$. Again, we can use Lemmas~\ref{lemma:convexity_estimate_number_particles_box}~and~\ref{lemma:three_body_estimate_box} to show
	\begin{equation*}
		\cJ_2 \leq C\Big(\frac{\ell_1}{\ell_3}\Big)^3\Big(\frac{\ell_2}{\ell_3}\Big)^3\sum_{\substack{1\leq i,j,k\leq M\\ i\neq j\neq k\neq i}}\mathds{1}_{\left\{\vert(x_i - x_j, x_i - x_k)\vert \leq \ell_3\right\}}.
	\end{equation*}
	This finishes the proof of Corollary~\ref{cor:general_three_body_potential_bound}.
\end{proof}

\begin{proof}[Proof of Corollary~\ref{cor:general_four_body_potential_bound}]
	We prove \eqref{eq:general_four_body_potential_bound} by showing
	\begin{multline*}
		\sum_{\substack{1\leq i,j,k,\ell\leq M\\ i\neq j\neq k\neq i\\ \ell\neq i,j,k}}\mathds{1}_{\{\vert x_i - x_j\vert \leq \ell_2\}}\mathds{1}_{\{\vert x_i - x_k\vert \leq \ell_1\}}\mathds{1}_{\{\vert x_i - x_\ell\vert \leq \ell_1\}} \leq C\Big(\frac{\ell_1}{\ell_3}\Big)^3\Big(\frac{\ell_2}{\ell_3}\Big)^3M\\
		+ C\ell_2^2\Big(1 + \dfrac{\ell_2}{\ell_3}\Big)\Big(\frac{\ell_1}{\ell_3}\Big)^9\sum_{i = 1}^M-\Delta_i + C\Big(\frac{\ell_1}{\ell_3}\Big)^3\Big(\frac{\ell_2}{\ell_3}\Big)^3\Big(M + \Big(\frac{\ell_1}{\ell_3}\Big)^6\Big)\sum_{\substack{1\leq i,j,k\leq M\\ i\neq j\neq k\neq i}}\mathds{1}_{\left\{\vert(x_i - x_j,x_i - x_k)\vert \leq \ell_3\right\}}
	\end{multline*}
	on $\mfH^M,$ for any nonnegative integer $M$. Dividing the space $\R^3$ into boxes of side length $\ell_2$, as was done in the proof of Proposition~\ref{prop:general_two_body_potential_bound}, and using that $\ell_1$ is larger than $\ell_2$, we find
	\begin{equation*}
		\sum_{\substack{1\leq i,j,k,\ell\leq M\\ i\neq j\neq k\neq i\\ \ell\neq i,j,k}}\mathds{1}_{\{\vert x_i - x_j\vert \leq \ell_2\}}\mathds{1}_{\{\vert x_i - x_k\vert \leq \ell_1\}}\mathds{1}_{\{\vert x_i - x_\ell\vert \leq \ell_1\}} \leq \sum_{r\in\ell_2\mathbb{Z}^3}(M_r^{(3\ell_2)})^2M_r^{(3\ell_1)}(M_r^{(3\ell_1)} - 1).
	\end{equation*}
	Define
	\begin{equation*}
		C_{\ell_2,\ell_3} = C\Big(\frac{\ell_2}{\ell_3}\Big)^3,
	\end{equation*}
	for some constant $C$ large enough. Define as well
	\begin{equation*}
		\cI_1 = \sum_{r\in\ell_2\mathbb{Z}^3}[M_r^{(3\ell_2)} < C_{\ell_2,\ell_3}](M_r^{(3\ell_2)})^2M_r^{(3\ell_1)}(M_r^{(3\ell_1)} - 1)
	\end{equation*}
	and
	\begin{equation*}
		\cI_2 = \sum_{r\in\ell_2\mathbb{Z}^3}[M_r^{(3\ell_2)} \geq C_{\ell_2,\ell_3}](M_r^{(3\ell_2)})^2M_r^{(3\ell_1)}(M_r^{(3\ell_1)} - 1),
	\end{equation*}
	Thus,
	\begin{equation*}
		\sum_{\substack{1\leq i,j,k,\ell\leq M\\ i\neq j\neq k\neq i\\ \ell\neq i,j,k}}\mathds{1}_{\{\vert x_i - x_j\vert \leq \ell_2\}}\mathds{1}_{\{\vert x_i - x_k\vert \leq \ell_1\}}\mathds{1}_{\{\vert x_i - x_\ell\vert \leq \ell_1\}} \leq \cI_1 + \cI_2.
	\end{equation*}
	Let us deal with $\cI_1$ and $\cI_2$ separately.
	\\
	
	\noindent
	\textbf{Analysis of $\cI_1$.} By definition of the $M_r^{(3\ell_1)}$, we can find some universal constant $C$ large enough so that
	\begin{equation*}
		M_r^{(3\ell_1)}(M_r^{(3\ell_1)} - 1) \leq \sum_{\substack{1\leq i,j\leq M\\ i\neq j}}\mathds{1}_{\{\vert x_i - x_j\vert \leq C\ell_1\}}\mathds{1}_{\Lambda_r^{(3\ell_1)}}(x_i).
	\end{equation*}
	Thus, $\cI_1$ is bounded by
	\begin{equation*}
		\cI_1 \leq C\Big(\frac{\ell_2}{\ell_3}\Big)^6\sum_{\substack{1\leq i,j\leq M\\ i\neq j}}\mathds{1}_{\left\{\vert x_i - x_j\vert \leq C\ell_1\right\}}\sum_{r\in\ell_2\mathbb{Z}^3}\mathds{1}_{\Lambda_r^{(3\ell_1)}}(x_i).
	\end{equation*}
	Using that
	\begin{equation*}
		\#\left\{r\in\ell_2\Z^3: x\in\Lambda_r^{(3\ell_1)}\right\} \leq C\Big(\frac{\ell_1}{\ell_2}\Big)^3,
	\end{equation*}
	for all $x\in\R^3$,	we obtain
	\begin{equation*}
		\cI_1 \leq C\Big(\frac{\ell_1}{\ell_3}\Big)^3\Big(\frac{\ell_2}{\ell_3}\Big)^3\sum_{\substack{1\leq i,j\leq M\\ i\neq j}}\mathds{1}_{\left\{\vert x_i - x_j\vert \leq C\ell_1\right\}}.
	\end{equation*}
	In other words, $\cI_1$ is bounded from above by a two-body potential of range $C\ell_1$. Finally, an application of Proposition~\ref{prop:general_two_body_potential_bound} yield
	\begin{equation}
		\label{eq:four_body_effective_potential_I1_bound}
		\cI_1 \leq C\Big(\frac{\ell_2^3}{\ell_3}\Big)^3\Big(\frac{\ell_1}{\ell_3}\Big)^3M + C\frac{\ell_2^3}{\ell_3}\Big(\frac{\ell_1}{\ell_3}\Big)^9\sum_{i=1}^M-\Delta_i + C\Big(\frac{\ell_2^3}{\ell_3}\Big)^3\Big(\frac{\ell_1}{\ell_3}\Big)^9\sum_{\substack{1\leq i,j,k \leq M\\ i\neq j\neq k\neq i}}\mathds{1}_{\left\{\vert(x_i - x_j,x_i - x_k)\vert\leq \ell_3\right\}}.
	\end{equation}
	\\
	
	\noindent
	\textbf{Analysis of $\cI_2$.} We split between $M_r^{(3\ell_1)} < C_{\ell_1,\ell_3}$ and $M_r^{(3\ell_1)} \geq C_{\ell_1,\ell_3}$, where $C_{\ell_1,\ell_3}$ is given by
	\begin{equation*}
		C_{\ell_1,\ell_3} = C\Big(\frac{\ell_1}{\ell_3}\Big)^3,
	\end{equation*}
	for some $C$ large enough. Namely, we define
	\begin{equation*}
		\cJ_1 = \sum_{r\in\ell_2\mathbb{Z}^3}(M_r^{(3\ell_2)})^2[M_r^{(3\ell_2)} \geq C_{\ell_2,\ell_3}](M_r^{(3\ell_1)})^2[M_r^{(3\ell_1)} < C_{\ell_1,\ell_3}]
	\end{equation*}
	and
	\begin{equation*}
		\cJ_2 = \sum_{r\in\ell_2\mathbb{Z}^3}(M_r^{(3\ell_2)})^2[M_r^{(3\ell_2)} \geq C_{\ell_2,\ell_3}](M_r^{(3\ell_1)})^2[M_r^{(3\ell_1)} \geq C_{\ell_1,\ell_3}],
	\end{equation*}
	and write
	\begin{equation*}
		\cI_2 \leq \cJ_1 + \cJ_2.
	\end{equation*}
	On the one hand, by definition of $C_{\ell_1,\ell_3}$ and $C_{\ell_2,\ell_3}$, we have
	\begin{equation*}
		\cJ_1 \leq C\Big(\frac{\ell_1}{\ell_3}\Big)^6\Big(\frac{\ell_3}{\ell_2}\Big)^3\sum_{r\in\ell_2\mathbb{Z}^3}(M_r^{(3\ell_2)})^3[M_r^{(3\ell_2)} \geq C_{\ell_2,\ell_3}].
	\end{equation*}
	Thanks to Lemmas~\ref{lemma:convexity_estimate_number_particles_box}~and~\ref{lemma:three_body_estimate_box}, this implies
	\begin{equation*}
		\mathcal{J}_1 \leq C\Big(\frac{\ell_1}{\ell_3}\Big)^6\Big(\frac{\ell_2}{\ell_3}\Big)^3\sum_{r\in\tilde{\ell}_3\mathbb{Z}^3}(M_s^{(\tilde{\ell}_3)})^3[M_s^{(\tilde{\ell}_3)} \geq 3] \leq C\Big(\frac{\ell_1}{\ell_3}\Big)^6\Big(\frac{\ell_2}{\ell_3}\Big)^3\sum_{\substack{1\leq i,j,k\leq M\\ i\neq j\neq k\neq i}}\mathds{1}_{\left\{\vert(x_i - x_j, x_i - x_k)\vert \leq \ell_3\right\}},
	\end{equation*}
	where we defined $\tilde{\ell}_3 = \ell_3/\sqrt{6}$. On the other hand, using the estimate $M_r^{(3\ell_1)} \leq M$ and Young's inequality for products, we can write
	\begin{equation*}
		\mathcal{J}_2 \leq CM\sum_{r\in\ell_2\mathbb{Z}^3}\Big(\varepsilon(M_r^{(3\ell_2)})^3[M_r^{(3\ell_2)} \geq C_{\ell_2,\ell_3}] + \varepsilon^{-1}(M_r^{(3\ell_1)})^3[M_r^{(3\ell_1)} \geq C_{\ell_1,\ell_3}]\Big),
	\end{equation*}
	Applying Lemmas~\ref{lemma:convexity_estimate_number_particles_box}~and~\ref{lemma:three_body_estimate_box} just as for $\cJ_1$, we obtain
	\begin{equation*}
		\mathcal{J}_2 \leq CM\Big(\frac{\ell_1}{\ell_3}\Big)^3\Big(\frac{\ell_2}{\ell_3}\Big)^3\sum_{\substack{1\leq i,j,k\leq M\\ i\neq j\neq k\neq i}}\mathds{1}_{\left\{\vert(x_i - x_j, x_i - x_k)\vert \leq \ell_3\right\}}.
	\end{equation*}
	Summing up, we have shown that
	\begin{equation}
		\label{eq:four_body_effective_potential_I2_bound}
		\cI_2 \leq C\Big(\frac{\ell_1}{\ell_3}\Big)^3\Big(\frac{\ell_2}{\ell_3}\Big)^3\Big(M + \Big(\frac{\ell_1}{\ell_3}\Big)^3\Big)\sum_{\substack{1\leq i,j,k\leq M\\ i\neq j\neq k\neq i}}\mathds{1}_{\left\{\vert(x_i - x_j,x_i - x_k)\vert \leq \ell_3\right\}}.
	\end{equation}
	
	Combining \eqref{eq:four_body_effective_potential_I1_bound} and \eqref{eq:four_body_effective_potential_I2_bound} concludes the proof of Corollary~\ref{cor:general_four_body_potential_bound}.
\end{proof}

\begin{proof}[Proof of Lemma~\ref{lemma:general_five_body_potential_bound}]
	We show
	\begin{multline*}
		\sum_{1\leq i,j,k,\ell,m \leq M}\mathds{1}_{\{\vert x_i - x_j\vert \leq \ell_1\}}\mathds{1}_{\{\vert x_i - x_k\vert \leq \ell_1\}}\mathds{1}_{\{\vert x_i - x_\ell\vert \leq \ell_1\}}\mathds{1}_{\{\vert x_i - x_m\vert \leq \ell_1\}}\\
		\leq C\Big(\dfrac{\ell_1}{\ell_2}\Big)^{12}M + CM\Big(\dfrac{\ell_1}{\ell_2}\Big)^3\sum_{1\leq i,j,k,\ell\leq M}\1_{\{\vert x_i - x_j\vert \leq \ell_2\}}\1_{\{\vert x_i - x_k\vert \leq \ell_1\}}\1_{\{\vert x_i - x_\ell\vert \leq \ell_1\}}
	\end{multline*}
	on $\mfH^M$, for any nonnegative integer $M$. We divide the space $\R^3$ into boxes of side length $\ell_1$ to find
	\begin{equation*}
		\sum_{1\leq i,j,k,\ell,m \leq M}\mathds{1}_{\{\vert x_i - x_j\vert \leq \ell_1\}}\mathds{1}_{\{\vert x_i - x_k\vert \leq \ell_1\}}\mathds{1}_{\{\vert x_i - x_\ell\vert \leq \ell_1\}}\mathds{1}_{\{\vert x - x_m\vert \leq \ell_1\}} \leq \sum_{r\in\ell_1\Z^3}(M_r^{(3\ell_1)})^5.
	\end{equation*}
	To estimate the right-hand side we start with the bound
	\begin{equation*}
		\sum_{r \in \ell_1\Z^3}(M_r^{(3\ell_1)})^5 \leq C\sum_{r \in \tilde{\ell}_1\Z^3}(M_r^{(\tilde{\ell}_1)})^5 + C,
	\end{equation*}
	where $\tilde{\ell}_1 = \ell_1/\sqrt{3}$. The reasons behind the choice of $\tilde{\ell}_1$ is that any two particles located in a box of side length $\tilde{\ell}_1$ are at a distance at most $\ell_1$ from one another. To prove this bound, we make a distinction between $M_r^{(3\ell_1)} \geq C$ and $M_r^{(3\ell_1)} < C$ for some $C$ large enough, and apply Lemma~\ref{lemma:convexity_estimate_number_particles_box} in the former case. Next, we split between $M_r^{(3\ell_1)} < C_{\ell_1,\ell_2}$ and $M_r^{(3\ell_1)} \geq C_{\ell_1,\ell_2}$ with $C_{\ell_1,\ell_2}$ given by
	\begin{equation*}
		C_{\ell_1,\ell_2} = C\Big(\dfrac{\ell_1}{\ell_2}\Big)^3,
	\end{equation*}
	for some $C$ large enough. Estimating the term $M_r^{(3\ell_1)} < C_{\ell_1,\ell_2}$ rather crudely, we find
	\begin{equation*}
		\sum_{r \in \ell_1\Z^3}(M_r^{(3\ell_1)})^5 \leq C\Big(\dfrac{\ell_1}{\ell_2}\Big)^{12}M + CM\sum_{r \in \tilde{\ell}_1\Z^3}(M_r^{(\tilde{\ell}_1)})^4[M_r^{(\tilde{\ell}_1)} \geq C_{\ell_1,\ell_2}],
	\end{equation*}
	Thanks to \eqref{eq:many_particles_large_box_bounded_smaller_box_power}, we deduce
	\begin{equation*}
		\sum_{r \in \ell_1\Z^3}(M_r^{(3\ell_1)})^5 \leq C\Big(\dfrac{\ell_1}{\ell_2}\Big)^{12}M + CM\Big(\dfrac{\ell_1}{\ell_2}\Big)^3\sum_{r \in \tilde{\ell}_1\Z^3}(M_r^{(\tilde{\ell}_1)})^2\sum_{s \in \tilde{\ell}_2\Z^3}(M_{r,s}^{(\tilde{\ell}_1,\tilde{\ell}_2)})^2[M_{r,s}^{(\tilde{\ell}_1,\tilde{\ell}_2)} \geq 4].
	\end{equation*}
	Following an analogous proof to that of Lemma~\ref{lemma:three_body_estimate_box}, we can bound the last term by
	\begin{equation*}
		CM\Big(\dfrac{\ell_1}{\ell_2}\Big)^3\sum_{1\leq i,j,k,\ell\leq M}\1_{\{\vert x_i - x_j\vert \leq \ell_2\}}\1_{\{\vert x_i - x_k\vert \leq \ell_1\}}\1_{\{\vert x_i - x_\ell\vert \leq \ell_1\}},
	\end{equation*}
	which concludes the proof of Lemma~\ref{lemma:general_five_body_potential_bound}.
\end{proof}

\subsection{Proof of the estimates \eqref{eq:general_two_body_bound_non_renormalised} and \eqref{eq:general_two_three_four_body_bound_non_renormalised}}

We prove that Proposition \ref{prop:general_two_body_potential_bound} implies the bound \eqref{eq:general_two_body_bound_non_renormalised} and leave the proof of \eqref{eq:general_two_three_four_body_bound_non_renormalised} to the reader because it is similar.

Notice that the claim is immediate for $V$ of class $L^\infty$ by taking $\ell_1 = (\varepsilon N^{-1})^{1/3}$, $\ell_2$ of order $N^{-1/2}$ and $\ell_3 = R_0N^{-1/2}$, where $R_0$ is the radius of the positive core from Assumption \ref{ass:potential}.

Next, we prove the claim for $V$ such that $x \mapsto \Vert V(x,\cdot)\Vert_{L^1}$ is of class $L^{3/2 + \eta}$ for some $\eta > 0$. To shorten some notation, define
\begin{equation*}
	\tilde{V}(x) = \Vert V(x,\cdot)\Vert_{L^1(\R^3)}.
\end{equation*}
Thanks to the layer cake representation (see \cite[Theorem 1.13]{LiebLos01}), we can rewrite this as
\begin{equation*}
	\tilde{V}(x) = \int_0^\ii\dd{t}\1(\tilde{V}(x) > t).
\end{equation*}
Then, thanks to the assumption that $\tilde{V}$ is radially symmetric decreasing, we can find a function $t \mapsto R(t)$ such that
\begin{equation*}
	\tilde{V}(x) = \int_0^\ii\dd{t}\1(\vert x\vert \leq R(t)).
\end{equation*}
Furthermore, the $L^p$ norm of $\tilde{V}$ can be expressed as
\begin{equation}
	\label{eq:layer_cake_Lp_norm}
	\Vert\tilde{V}\Vert_p^p = p\int_0^\ii\dd{t}t^{p - 1}\vert\{x\in\R^3: \tilde{V}(x) > t\}\vert = p\vert\bbB^3\vert\int_0^\ii\dd{t}t^{p - 1}\1(\vert x\vert \leq R(t)),
\end{equation}
for any $p \geq 1$. Here, $\vert\bbB^3\vert$ denotes the volume of the unit ball in dimension three.

Now, we apply Proposition \ref{prop:general_two_body_potential_bound} to deduce
\begin{equation*}
	\iint_{\R^6}N^{1/2}\1(\vert x - y\vert \leq N^{-1/2}R(t))b_x^*b_y^*b_xb_y \leq \varepsilon^{-1}CR(t)^3 + \varepsilon R(t)^2(1 + R(t))\cK + \varepsilon R(t)^3\cV_N,
\end{equation*}
for all $\varepsilon > 0$. In view of \eqref{eq:layer_cake_Lp_norm}, integrating the quantity $R(t)^3$ simply gives the $L^1$ norm of $\tilde{V}$, and, thus, the only subtle term in the right-hand side is the factor $R(t)^2$ in front of the kinetic operator $\cK$. Thanks to Young’s inequality for products, this can be bounded by
\begin{equation}
	\label{eq:constant_in_front_kinetic_term}
	\int_0^\ii\dd{t}R(t)^2 \leq \dfrac{2}{3}\int_0^\ii\dd{t}(1 + t)^{1/2 + \eta}R(t)^3 + \dfrac{1}{3}\int_0^\ii\dd{}t(1 + t)^{-1 - 2\eta},
\end{equation}
which is a constant that depends only on $\eta$ and $\tilde{V}$. This finishes the proof of \eqref{eq:general_two_body_bound_non_renormalised} for $\tilde{V}$ belonging to $L^{3/2 + \eta}$.

Finally, we explain how the proof should be adapted for $\tilde{V}$ of class $L^{3/2}$. The only part of the above proof that breaks down is that the constant \eqref{eq:constant_in_front_kinetic_term} is infinite. Moreover, this term comes from the application of Sobolev inequality (i.e. in \eqref{eq:application_sobolev_inequality}), which can be done with $\tilde{V}(N^{1/2}(x_i - x_j))$ instead of $1(|x_i - x_j| \leq \ell_2)$. Namely, this implies
\begin{equation*}
	\sum_{r \in \ell_1\Z^3}\sum_{\substack{1\leq i,j\leq M\\ i\neq j}} q_{r,i}^{(3\ell_1)} N^{1/2}\tilde{V}(N^{1/2}(x_i - x_j))[M_r^{(3\ell_1)} < C_{\ell_1,\ell_3}] q_{r,i}^{(3\ell_1)} \leq C\Vert\tilde{V}\Vert_{3/2}\sum_{i = 1}^M-\Delta_i,
\end{equation*}
for $\ell_1$ of order $N^{-1/3}$ and $\ell_3$ of order $N^{-1/2}$. The rest of the proof is exactly the same.

\section{Estimates with renormalised quantities: proof of Propositions~\ref{prop:general_two_body_bound_renormalised}~and~\ref{prop:general_three_and_four_body_bound_renormalised}}

\label{sec:effective_interaction_potentials_renormalised}

As explained at the beginning of Section~\ref{sec:effective_interaction_potentials_non_renormalised}, the estimates from Propositions \ref{prop:general_two_body_bound_renormalised} and \ref{prop:general_three_and_four_body_bound_renormalised} that do not involve renormalised operators follow directly from Proposition \ref{prop:general_two_body_potential_bound} and Corollaries \ref{cor:general_three_body_potential_bound} and \ref{cor:general_four_body_potential_bound}. To prove the estimates \eqref{eq:general_two_body_bound_renormalised} and \eqref{eq:general_four_body_bound_renormalised}, we conjugate the estimates \eqref{eq:general_two_body_bound_non_renormalised} and \eqref{eq:general_two_three_four_body_bound_non_renormalised} with the cubic transform $e^{K_t}$ from Section~\ref{subsec:removing_correlations}. Recall that the cubic operator $K_t$ is defined by
\begin{equation*}
	K_t = -\dfrac{1}{6}\int_{\R^9}(k_t(x,y,z)b_x^*b_y^*b_z^*\Theta_N - \hc),
\end{equation*}
with $\Theta_N$ given by
\begin{equation*}
	\Theta_N = \sqrt{1 - \dfrac{\cN_+}{N}}\sqrt{1 - \dfrac{\cN_+ + 1}{N}}\sqrt{1 - \dfrac{\cN_+ + 2}{N}}
\end{equation*}
and $k_t$ being the cubic kernel 
\begin{equation*}
	k_t(x,y,z) = N^{3/2}\omega_{\lambda,N}(x,y,z)\varphi_t(x)\varphi_t(y)\varphi_t(z),
\end{equation*}
where $\omega_{\lambda,N}$ is the scattering solution defined in Section~\ref{subsec:scattering_energy}. Note that the modified creation and annihilation operators $b_x^*,b_x$ and the factor $\Theta_N$ ensure that $K_t$ maps $\cF_+^{\leq N}(t)$ to itself, and so does $e^{K_t}$.

\subsection{Proof of Proposition~\ref{prop:general_two_body_bound_renormalised}}

Since $e^{K_t}$ is a unitary operator, we can conjugate the estimate \eqref{eq:general_two_body_bound_non_renormalised} to obtain
\begin{equation}
	\label{eq:cubic_conjugation_two_body_estimate}
	e^{K_t}\cV_\twoB e^{-K_t} \leq C_\varepsilon e^{K_t}\cN_+e^{-K_t} + \varepsilon e^{K_t}(\cK + \cV_N)e^{-K_t},
\end{equation}
for all $\varepsilon > 0$. Then, the claim follows from the bounds
\begin{equation}
	\label{eq:number_excitations_conjugated_cubic}
	e^{K_t}\cN_+e^{-K_t} \leq C(\cN^\ren + 1),
\end{equation}
\begin{equation}
	\label{eq:hamiltonian_conjugated_cubic}
	e^{K_t}(\cK + \cV_N)e^{-K_t} \leq C(\cN^\ren + \cK^\ren + \cV^\ren + N^{1/2})
\end{equation}
and
\begin{equation}
	\label{eq:effective_two_body_conjugated_cubic}
	e^{K_t}\cV_\twoB e^{-K_t} \geq C\cV_\twoB - CN^{-1/2}(\cN^\ren + \cK^\ren + \cV^\ren + N).
\end{equation}
\\

\noindent
\textbf{Analysis of $e^{K_t}\cN_+e^{-K_t}$.} Thanks to the Duhamel formula, we can write
\begin{equation*}
	e^{K_t}\cN_+e^{-K_t} = \cN_+ + \int_0^1\d{}se^{sK_t}\left[K_t,\cN_+\right]e^{-sK_t}.
\end{equation*}
Then, we use the CCR \ref{eq:creation_annihilation_op_modified_ccr}, the identity \eqref{eq:number_excitation_op_int_form} and the fact that $\Theta_N$ commutes with $\cN_+$, to expand the commutator as
\begin{equation*}
	\left[K_t,\cN_+\right] = -\dfrac{1}{6}\int_{\R^{12}}k_t(x,y,z)\left[b_x^*b_y^*b_z^*,b_{x'}^*b_{x'}\right]\Theta_N + \hc = \cN^\ren - \cN_+.
\end{equation*}
Next, we apply Lemma~\ref{lemma:number_operator_renormalised} to deduce
\begin{equation*}
	e^{K_t}\cN_+e^{-K_t} \leq C(\cN^\ren + 1) + C\int_0^1\d{}se^{sK_t}\cN_+e^{-sK_t}.
\end{equation*}
Thus, an application of the Grönwall lemma gives \eqref{eq:number_excitations_conjugated_cubic}. The same analysis can be used to show
\begin{equation}
	\label{eq:number_excitations_conjugated_cubic_partial_power}
	e^{sK_t}\cN_+e^{-sK_t} \leq C(\cN^\ren + 1),
\end{equation}
for all $s \in (0,1)$.
\\

\noindent
\textbf{Analysis of $e^{K_t}\cK e^{-K_t}$.} Using the Duhamel formula, we find
\begin{equation*}
	e^{K_t}\cK e^{-K_t} = \cK + \int_0^1\dd se^{sK_t}\left[K_t,\cK\right]e^{-sK_t}.
\end{equation*}
Thanks to the CCR \eqref{eq:creation_annihilation_op_modified_ccr}, the commutator can be rewritten as
\begin{equation*}
	\left[K_t,\cK\right] = \dfrac{1}{2}\int_{\R^9}(-\Delta_1 k_t)(x,y,z)b_x^*b_y^*b_z^*\Theta_N + \hc
\end{equation*}
This was already analysed in the proof of Lemma~\ref{lemma:generator_estimate_kinetic_cubic_interaction} (see \eqref{eq:kinetic_term_renormalised_first_term_rewritten} and above), where it was shown that
\begin{equation*}
	[K_t,\cK] = \cI_\Delta + \cE_\Delta^{(1)},
\end{equation*}
with $\cI_\Delta$ given by
\begin{equation*}
	\cI_\Delta = \dfrac{N^{3/2}}{3}\int_{\R^9}(V_Nf_N)(x,y,z)\varphi_t(x)\varphi_t(y)\varphi_t(z)b_x^*b_y^*b_z^*\Theta_N + \hc
\end{equation*}
and for some $\cE_\Delta^{(1)}$ satisfying
\begin{equation*}
	\pm\cE_\Delta^{(1)} \leq C\cK + CN^{-1/2}(\cN_+ + \cV_N + N).
\end{equation*}
Then, applying the Duhamel formula a second time, we deduce
\begin{equation*}
	e^{K_t}\cK e^{-K_t} = \cK + \cI_\Delta + \int_0^1\dd s\int_0^s\dd u e^{uK_t}[K_t,\cI_\Delta] e^{-uK_t} + \int_0^1\d{}se^{sK_t}\cE_\Delta^{(1)} e^{-sK_t}.
\end{equation*}
We claim that
\begin{equation*}
	[K_t,\cI_\Delta] = \dfrac{N^3}{3}\int_{\R^9}(V_N\omega_Nf_N)(x,y,z)\vert\varphi_t(x)\vert^2\vert\varphi_t(y)\vert^2\vert\varphi_t(z)\vert^2 + \cE_\Delta^{(2)},
\end{equation*}
for some $\cE_\Delta^{(2)}$ that satisfies
\begin{equation*}
	\pm\cE_\Delta^{(2)} \leq C\cN_+ + CN^{-1/2}(\cK + \cV_N + N).
\end{equation*}
The proof is essentially just a combination of the proof of \eqref{eq:kinetic_term_renormalised_second_term_rewritten} and of the analysis of the error term \eqref{eq:I_3_2_def}. Gathering the previous estimates and using \eqref{eq:kinetic_term_renormalised_first_term_rewritten} and \eqref{eq:kinetic_term_renormalised_second_term_rewritten}, we obtain
\begin{equation}
	\label{eq:cubic_conjugation_kinetic_operator_gronwall}
	e^{K_t}\cK e^{-K_t} = \cK^\ren + \cE_\Delta^{(3)} + \int_0^1\dd se^{-sK_t}\cE_\Delta^{(4)}e^{sK_t},
\end{equation}
for some $\cE_\Delta^{(3)}$ and $\cE_\Delta^{(4)}$ that satisfy
\begin{equation*}
	\pm\cE_\Delta^{(3)} \leq C\cN^\ren + + N^{-1/2}(\cK^\ren + \cV^\ren + N)
\end{equation*}
and
\begin{equation*}
	\pm\cE_\Delta^{(4)} \leq C\cN_+ + N^{-1/2}(\cK + \cV_N).
\end{equation*}
\\

\noindent
\textbf{Analysis of $e^{K_t}\cV_Ne^{-K_t}$.} Using Duhamel's formula a first time, we get
\begin{equation*}
	e^{K_t}\cV_N e^{-K_t} = \cV_N + \int_0^1\d{}se^{sK_t}\left[K_t,\cV_N\right]e^{-sK_t}.
\end{equation*}
Thanks to the CCR \eqref{eq:creation_annihilation_op_modified_ccr}, we expand the commutator in the right to find
\begin{equation*}
	\left[K_t,\cV_N\right] = \cI_6^{(1)} + \cI_6^{(2)} + \cI_6^{(3)},
\end{equation*}
with $\cI_6^{(1)},\cI_6^{(2)}$ and $\cI_6^{(3)}$ given by
\begin{equation*}
	\cI_6^{(1)} = \dfrac{1}{6}\int_{\R^9}(V_Nk_t)(x,y,z)b_x^*b_y^*b_z^*\Theta_N + \hc,
\end{equation*}
\begin{equation*}
	\cI_6^{(2)} = \dfrac{1}{2}\int_{\R^{12}}V_N(x,y,z)k_t(x,y,z')b_x^*b_y^*b_z^*b_{z'}^*b_z\Theta_N + \hc
\end{equation*}
and
\begin{equation*}
	\cI_6^{(3)} = \dfrac{1}{4}\int_{\R^{15}}V_N(x,y,z)k_t(x,y',z')b_x^*b_y^*b_z^*b_{y'}^*b_{z'}^*b_yb_z\Theta_N + \hc
\end{equation*}
The terms $\cI_6^{(2)}$ and $\cI_6^{(3)}$ were  already analysed, up to small variations, in the proof of Lemma~\ref{lemma:generator_commutator_estimates} (below \eqref{eq:kinetic_term_commutator_Q_ren_error}), where it was shown that
\begin{equation*}
	\left[K_t,\cV_N\right] = \cI_6^{(1)} + \cE_6^{(1)},
\end{equation*}
with $\cE_6^{(1)}$ satisfying
\begin{equation*}
	\pm\cE_6^{(1)} \leq C(\cN_+ + \cK + \cV_N).
\end{equation*}
In fact, the proof is simpler in the present case we do not need an error in terms of $\cK^\ren$ and $\cV^\ren$, which was the case in Lemma~\ref{lemma:generator_commutator_estimates} to get the rate $N^{-1/2}$ in Theorem~\ref{th:main_result}. Next, we apply Duhamel's formula a second time to deduce
\begin{equation*}
	e^{sK_t}\cI_6^{(1)}e^{-sK_t} = \cI_6^{(1)} + \int_0^s\d{}ue^{uK_t}[K_t,\cI_6^{(1)}]e^{-uK_t}.
\end{equation*}
Using the CCR \eqref{eq:creation_annihilation_op_modified_ccr}, we expand
\begin{multline*}
	[K_t,\cI_6^{(1)}] = \dfrac{1}{6}\int_{\R^9}(V_Nk_t)(x,y,z)(Q_t^{\otimes 3}\overline{k_t})(x,y,z) + \hc\\
	+ \dfrac{1}{2}\int_{\R^{12}}(V_Nk_t)(x,y,z)(Q_t\otimes Q_t\otimes \1\overline{k_t})(x,y,z')b_z^*b_{z'} + \hc\\
	+ \dfrac{1}{4}\int_{\R^{15}}(V_Nk_t)(x,y,z)(Q_t\otimes\1\otimes\1\overline{k_t})(x,y',z')b_y^*b_z^*b_{y'}b_{z'} + \hc
\end{multline*}
Again, this was already studied in the proof of Lemma~\ref{lemma:generator_commutator_estimates_cubic_derivative} (below \eqref{eq:I_3_2_def}) and following the same steps gives
\begin{equation*}
	[K_t,\cI_6^{(1)}] = \dfrac{1}{3}\int_{\R^9}(V_N\omega_N^2)(x,y,z)\vert\varphi_t(x)\vert^2\vert\varphi_t(y)\vert^2\vert\varphi_t(z)\vert^2 + \cE_6^{(2)},
\end{equation*}
for some $\cE_6^{(2)}$ that satisfies
\begin{equation*}
	\pm\cE_6^{(2)} \leq C\cN_+ + CN^{-1/2}(\cK^\ren + \cV^\ren + N).
\end{equation*}
Summing up, we have shown that
\begin{equation}
	\label{eq:cubic_conjugation_interaction_gronwall}
	e^{K_t}\cV_Ne^{-K_t} = \cV^\ren + \int_0^1\d{}se^{sK_t}\cE_6e^{-sK_t},
\end{equation}
for some $\cE_6$ that satisfies
\begin{equation*}
	\pm\cE_6 \leq C(\cN_+ + \cK + \cV_N).
\end{equation*}
\\

\noindent
\textbf{Bounding $e^{K_t}(\cK + \cV_N)e^{-K_t}$.} Combining \eqref{eq:cubic_conjugation_kinetic_operator_gronwall} and \eqref{eq:cubic_conjugation_interaction_gronwall}, and using \eqref{eq:number_excitations_conjugated_cubic_partial_power}, we find
\begin{equation*}
	e^{K_t}(\cK + \cV_N)e^{-K_t} \leq C(\cN^\ren + \cK^\ren + \cV^\ren + N^{1/2}) + \int_0^1\d{}se^{sK_t}(\cK + \cV_N)e^{-sK_t}.
\end{equation*}
Finally, an application of the Grönwall lemma gives \eqref{eq:hamiltonian_conjugated_cubic}. Following a similar analysis, we can show that, for all $s \in (0,1)$,
\begin{equation}
	\label{eq:kinetic_plus_interaction_bdd_renormalised_op_partial_power}
	e^{sK_t}(\cK + \cV_N)e^{-sK_t} \leq C(\cK^\ren + \cV^\ren + N).
\end{equation}
\\

\noindent
\textbf{Analysis of $e^{K_t}\cV_\twoB e^{-K_t}$.} The Duhamel formula implies
\begin{equation*}
	e^{K_t}\cV_\twoB e^{-K_t} = \cV_\twoB + \int_0^1\d{}se^{sK_t}[K_t,\cV_\twoB]e^{-sK_t}.
\end{equation*}
Thanks to the CCR \eqref{eq:creation_annihilation_op_modified_ccr}, we can expand the commutator in the right-hand side as
\begin{align*}
	[K_t,\cV_\twoB] &= \cI_\twoB^{(1)} + \cI_\twoB^{(2)},
\end{align*}
with $\cI_\twoB^{(1)}$ and $\cI_\twoB^{(2)}$ given by
\begin{equation*}
	\cI_\twoB^{(1)} = N^{1/2}\int_{\R^9}\Vert V(N^{1/2}(x - y),\cdot)\Vert_{L^1}k_t(x,y,x')b_x^*b_y^*b_{x'}^* + \hc
\end{equation*}
and
\begin{equation*}
	\cI_\twoB^{(2)} = N^{1/2}\int_{\R^{12}}\Vert V(N^{1/2}(x - y),\cdot)\Vert_{L^1}k_t(y,x',y')b_x^*b_y^*b_{x'}^*b_{y'}^*b_x + \hc
\end{equation*}
To bound $\cI_\twoB^{(1)}$, write
\begin{equation*}
	\cI_\twoB^{(1)} = N^{1/2}\int_{\R^6}\Vert V(N^{1/2}(x - y),\cdot)\Vert_{L^1}b_x^*b_y^*\int_{\R^3}\d{}x'k_t(x,y,x')b_{x'}^* + \hc
\end{equation*}
Then, it is easy to see that the Cauchy--Schwarz inequality and the bounds \eqref{eq:truncated_scattering_equation_Lp_estimates} and \eqref{eq:truncated_scattering_equation_LpLinf_estimates} imply
\begin{equation*}
	\pm\cI_\twoB^{(1)} \leq \cV^\twoB + CN^{-1/2}(\cN_+ + N).
\end{equation*}
To bound $\cI_\twoB^{(2)}$, we define
\begin{equation*}
	B_y = \int_{\R^6}\d{}y'\d{}z'k_t(y,x',y')b_{x'}^*b_{y'}^*
\end{equation*}
and use the Cauchy--Schwarz inequality as follows:
\begin{equation*}
	\pm\cI_\twoB^{(2)} \leq \cV_\twoB + \varepsilon CN^{1/2}\int_{\R^6}\Vert V(N^{1/2}(x - y),\cdot)\Vert_{L^1}b_x^*B_yB_y^*b_x.
\end{equation*}
Using the CCR \eqref{eq:creation_annihilation_op_modified_ccr} and the estimates \eqref{eq:truncated_scattering_equation_Lp_estimates} and \eqref{eq:truncated_scattering_equation_LpLinf_estimates}, we deduce
\begin{equation*}
	\pm\cI_\twoB^{(2)} \leq \cV_\twoB + CN^{-1}(\cN_+ + \tilde{\cV}_\twoB + \tilde{\cV}_\threeB),
\end{equation*}
where
\begin{equation*}
	\tilde{\cV}_\twoB = \int_{\R^6}\1(\vert x - y\vert \leq \lambda)b_x^*b_y^*b_xb_y
\end{equation*}
and
\begin{equation*}
	\tilde{\cV}_\threeB = \int_{\R^9}\1(\vert x - y\vert \leq \lambda)\1(\vert x - z\vert \leq \lambda)b_x^*b_y^*b_z^*b_xb_yb_z.
\end{equation*}
An application of Proposition \ref{prop:general_two_body_potential_bound} and Corollary \ref{cor:general_three_body_potential_bound} gives
\begin{equation*}
	\pm\cI_\twoB^{(2)} \leq \cV_\twoB + CN^{-1/2}(\cN_+ + \cK + \cV_N).
\end{equation*}
Gathering the previous bounds and using \eqref{eq:number_excitations_conjugated_cubic_partial_power} and \eqref{eq:kinetic_plus_interaction_bdd_renormalised_op_partial_power}, we obtain
\begin{equation*}
	e^{K_t}\cV_\twoB e^{-K_t} \geq \cV_\twoB - CN^{-1/2}(\cN^\ren + \cK^\ren + \cV^\ren + N) - 2\int_0^1\dd se^{sK_t}\cV_\twoB e^{-sK_t}.
\end{equation*}
Thus, the inequality \eqref{eq:effective_two_body_conjugated_cubic} follows from an application of the Grönwall lemma.

This concludes the proof of Proposition~\ref{prop:general_two_body_bound_renormalised}.

\subsection{Proof of Proposition~\ref{prop:general_three_and_four_body_bound_renormalised}}

We begin by conjugating \eqref{eq:general_two_three_four_body_bound_non_renormalised} with $e^{K_t}$ to obtain
\begin{equation*}
	e^{K_t}\cV_{\textmd{4B}}e^{-K_t} \leq Ce^{K_t}(\cN_+ + \cK + \cV_N)e^{-K_t}.
\end{equation*}
The right-hand side was already analysed in \eqref{eq:number_excitations_conjugated_cubic} and \eqref{eq:hamiltonian_conjugated_cubic}, and we thus only need to treat the left-hand side. To shorten some notation, define
\begin{equation*}
	\tilde{V}(x) = \Vert V(x,\cdot)\Vert_{L^1(\R^3)}.
\end{equation*}
Using the Duhamel formula, we have
\begin{equation*}
	e^{K_t}\cV_\fourB e^{-K_t} = \cV_\fourB + \int_0^1\d{}se^{sK_t}\left[K_t,\cV_\fourB\right]e^{-sK_t}.
\end{equation*}
Using the CCR \eqref{eq:creation_annihilation_op_modified_ccr}, we expand
\begin{equation*}
	\left[K_t,\cV_\fourB\right] = \sum_{i = 1}^7\cI_\fourB^{(i)},
\end{equation*}
where
\begin{align*}
	\cI_\fourB^{(1)} &= \dfrac{24}{\sqrt{N}}\int\tilde{V}(N^{1/2}(x - y))\mathds{1}_{\{\vert x - x'\vert \leq \lambda\}}\mathds{1}_{\{\vert x -  y'\vert \leq \lambda\}}\overline{k_t(x,y,x')}b_{y'}^*b_xb_yb_{x'}b_{y'} + \hc,\\
	\cI_\fourB^{(2)} &= \dfrac{24}{\sqrt{N}}\int\tilde{V}(N^{1/2}(x - y))\mathds{1}_{\{\vert x - x'\vert \leq \lambda\}}\mathds{1}_{\{\vert x -  y'\vert \leq \lambda\}}\overline{k_t(x,x',y')}b_{y}^*b_xb_yb_{x'}b_{y'} + \hc,\\
	\cI_\fourB^{(3)} &= \dfrac{12}{\sqrt{N}}\int\tilde{V}(N^{1/2}(x - y))\mathds{1}_{\{\vert x - x'\vert \leq \lambda\}}\mathds{1}_{\{\vert x -  y'\vert \leq \lambda\}}\overline{k_t(x,y,x'')}b_{x'}^*b_{y'}^*b_{x''}b_xb_yb_{x'}b_{y'} + \hc,\\
	\cI_\fourB^{(4)} &= \dfrac{12}{\sqrt{N}}\int\tilde{V}(N^{1/2}(x - y))\mathds{1}_{\{\vert x - x'\vert \leq \lambda\}}\mathds{1}_{\{\vert x -  y'\vert \leq \lambda\}}\overline{k_t(x',y',x'')}b_x^*b_y^*b_{x''}b_xb_yb_{x'}b_{y'} + \hc,\\
	\cI_\fourB^{(5)} &= \dfrac{12}{\sqrt{N}}\int\tilde{V}(N^{1/2}(x - y))\mathds{1}_{\{\vert x - x'\vert \leq \lambda\}}\mathds{1}_{\{\vert x -  y'\vert \leq \lambda\}}\overline{k_t(x,x',x'')}b_y^*b_{y'}^*b_{x''}b_xb_yb_{x'}b_{y'} + \hc,\\
	\cI_\fourB^{(6)} &= \dfrac{6}{\sqrt{N}}\int\tilde{V}(N^{1/2}(x - y))\mathds{1}_{\{\vert x - x'\vert \leq \lambda\}}\mathds{1}_{\{\vert x -  y'\vert \leq \lambda\}}\overline{k_t(x',x'',y'')}b_x^*b_y^*b_{y'}^*b_{x''}b_{y''}b_xb_yb_{x'}b_{y'} + \hc,
\end{align*}
and
\begin{equation*}
	\cI_\fourB^{(7)} = \dfrac{6}{\sqrt{N}}\int\tilde{V}(N^{1/2}(x - y))\mathds{1}_{\{\vert x - x'\vert \leq \lambda\}}\mathds{1}_{\{\vert x -  y'\vert \leq \lambda\}}\overline{k_t(x,x'',y'')}b_y^*b_{x'}^*b_{y'}^*b_{x''}b_{y''}b_xb_yb_{x'}b_{y'} + \hc
\end{equation*}
We explain how to bound  $\cI_\fourB^{(6)}$ and $\cI_\fourB^{(7)}$, and we leave the rest to the reader because it is done in a similar way. We start with $\cI_\fourB^{(6)}$. Define
\begin{equation*}
	A_{x'} = \int_{\R^6}\d{}x''\d{}y''\overline{k_t(x',x'',y'')}b_{x''}b_{y''}
\end{equation*}
and write
\begin{equation*}
	\cI_\fourB^{(6)} = \dfrac{6}{\sqrt{N}}\int_{\R^{12}}\tilde{V}(N^{1/2}(x - y))\mathds{1}_{\{\vert x - x'\vert \leq \lambda\}}\mathds{1}_{\{\vert x -  y'\vert \leq \lambda\}}b_x^*b_y^*b_{y'}^*A_{x'}b_xb_yb_{x'}b_{y'} + \hc
\end{equation*}
Then, we use the Cauchy--Schwarz inequality to split
\begin{equation*}
	\pm\cI_\fourB^{(6)} \leq \dfrac{C}{\sqrt{N}}\int_{\R^{12}}\tilde{V}(N^{1/2}(x - y))\mathds{1}_{\{\vert x - x'\vert \leq \lambda\}}\mathds{1}_{\{\vert x -  y'\vert \leq \lambda\}}b_x^*b_y^*b_{y'}^*A_{x'}A_{x'}^*b_xb_yb_{y'} + C\cV_\fourB.
\end{equation*}
Next, we use the commutation relations \eqref{eq:creation_annihilation_op_modified_ccr} to put the first term in normal order, and apply the estimates \eqref{eq:truncated_scattering_equation_Lp_estimates} and \eqref{eq:truncated_scattering_equation_LpLinf_estimates} to deduce
\begin{multline*}
	\pm N^{-1/2}\int_{\R^{12}}\tilde{V}(N^{1/2}(x - y))\mathds{1}_{\{\vert x - x'\vert \leq \lambda\}}\mathds{1}_{\{\vert x -  y'\vert \leq \lambda\}}b_x^*b_y^*b_{y'}^*A_{x'}A_{x'}^*b_xb_yb_{y'}\\
	\leq CN^{-3/2}\cV_\threeB + CN^{-1}\cV_\fourB + CN^{-1}\cV_\fiveB,
\end{multline*}
where $\cV_\threeB$ and $\cV_\fourB$ are defined as in Proposition~\ref{prop:general_three_and_four_body_bound_renormalised}, and $\cV_\fiveB$ is given by
\begin{equation*}
	\cV_\fiveB = N^{-1/2}\int_{\R^{15}}\tilde{V}(N^{1/2}(x - y))\mathds{1}_{\{\vert x - z\vert \leq \lambda\}}\mathds{1}_{\{\vert x -  u\vert \leq \lambda\}}\mathds{1}_{\{\vert x -  v\vert \leq \lambda\}}b_x^*b_y^*b_z^*b_u^*b_v^*b_xb_yb_zb_ub_v.
\end{equation*}
Bounding $\mathds{1}(\vert x - v\vert \leq \lambda)$ by $1$, integrating over $v$ using \eqref{eq:number_excitation_op_int_form} and $\cN_+ \leq N$ on $\cF_+^{\leq N}$, we obtain
\begin{equation*}
	\cV_\fiveB \leq CN\cV_\fourB.
\end{equation*}
Hence, may apply \eqref{eq:general_two_three_four_body_bound_non_renormalised} to deduce
\begin{equation*}
	\pm\cI_\fourB^{(6)} \leq CN^{-3/2}(\cN_+ + \cK + \cV_N) + C\cV_\fourB.
\end{equation*}

Regarding $\cI_\fourB^{(7)}$, following the same strategy gives
\begin{equation*}
	\pm\cI_\fourB^{(7)} \leq C\cV_\fourB + CN^{-2}(\tilde{\cV}_\threeB + \tilde{\cV}_\fourB + \tilde{\cV}_\fiveB),
\end{equation*}
where $\tilde{\cV}_\threeB, \tilde{\cV}_\fourB$ and $\tilde{\cV}_\fiveB$ are given by
\begin{equation*}
	\tilde{\cV}_\threeB = \int_{\R^9}\mathds{1}_{\{\vert x - y\vert \leq C\lambda\}}\mathds{1}_{\{\vert x - z\vert \leq C\lambda\}}b_x^*b_y^*b_z^*b_xb_yb_z,
\end{equation*}
\begin{equation*}
	\tilde{\cV}_\fourB = \int_{\R^{12}}\mathds{1}_{\{\vert x - y\vert \leq C\lambda\}}\mathds{1}_{\{\vert x - z\vert \leq C\lambda\}}\mathds{1}_{\{\vert x - u\vert \leq C\lambda\}}b_x^*b_y^*b_z^*b_u^*b_xb_yb_zb_u
\end{equation*}
and
\begin{equation*}
	\tilde{\cV}_\fiveB = \int_{\R^{15}}\mathds{1}_{\{\vert x - y\vert \leq C\lambda\}}\mathds{1}_{\{\vert x - z\vert \leq C\lambda\}}\mathds{1}_{\{\vert x - u\vert \leq C\lambda\}}\mathds{1}_{\{\vert x - v\vert \leq C\lambda\}}b_x^*b_y^*b_z^*b_u^*b_v^*b_xb_yb_zb_ub_v.
\end{equation*}
An application of Corollaries \ref{cor:general_three_body_potential_bound}~and~\ref{cor:general_four_body_potential_bound} and of Lemma~\ref{lemma:general_five_body_potential_bound} implies
\begin{equation*}
	\pm\cI_\fourB^{(7)} \leq CN^{-1}(\cK + \cV_N) + C(\cN_+ + \cV_\fourB).
\end{equation*}

Bounding $\cI_\fourB^{(1)}, \dots,\cI_\fourB^{(5)}$ similarly as above, we finally find
\begin{equation*}
	e^{K_t}\cV_\fourB e^{-K_t} \geq \cV_\fourB - CN^{-1}\int_0^1\d{}se^{sK_t}(\cK + \cV_N)e^{-sK_t} - C\int_0^1\d{}se^{sK_t}(\cN_+ + \cV_\fourB)e^{-sK_t}.
\end{equation*}
The conclusion of the proof is the same as for Proposition~\ref{prop:general_two_body_bound_renormalised}.

\subsection{Bound for a quadratic term: proof of \eqref{eq:bound_L21}}

\label{subsec:quadratic_term_estimate}

Recall that
\begin{equation*}
	\mathcal{L}_2^{(1)} = \Theta_2^{(1)}\dfrac{N^2}{2}\int_{\R^9} V_N(x,y,z)|\varphi_t(z)|^2 \overline{\varphi_t(x) \varphi_t(y)} b_xb_y + \hc
\end{equation*}
Let us show
\begin{equation}
	\label{eq:quadratic_term_lower_bound_non_renormalised}
	\pm\cL_2^{(1)} \leq \varepsilon\cK + C\cN_+ + C_\varepsilon N^{1/2},
\end{equation}
for all $\varepsilon > 0$. Then we shall conjugate this inequality by $e^{K_t}$ to deduce the bound
\begin{equation}
	\label{eq:quadratic_term_lower_bound_renormalised}
	\pm\cL_2^{(1)} \leq \varepsilon\cK^\ren + \varepsilon\cV^\ren + C\cN^\ren + C_\varepsilon N^{1/2},
\end{equation}
for all $\varepsilon > 0$.

\begin{proof}[Proof of \eqref{eq:quadratic_term_lower_bound_non_renormalised}]
	Using that $\Theta_2^{(1)}$ satisfies \eqref{eq:theta_coef_estimate_almost_1}, we can show that
	\begin{equation*}
		\mathcal{L}_2^{(1)} = \dfrac{N^2}{2}\int_{\R^9} V_N(x,y,z)|\varphi_t(z)|^2 \overline{\varphi_t(x) \varphi_t(y)}b_xb_y + \hc + \cE_2^{(1)},
	\end{equation*}
	for some $\cE_2^{(1)}$ such that
	\begin{equation*}
		\pm\cE_2^{(1)} \leq C\cN_+.
	\end{equation*}	
	Next, define the kernel
	\begin{equation*}
		\eta_t(x,y) = \dfrac{N^2}{2}\int_{\R^3}\d{}z V_N(x,y,z)|\varphi_t(z)|^2 \overline{\varphi_t(x) \varphi_t(y)},
	\end{equation*}
	as well as the quadratic Hamiltonian
	\begin{equation*}
		\bbH = \cN + \varepsilon\d{}\Gamma(-\Delta) - \int_{\R^6}\left(\eta_t(x,y)a_xa_y + \hc\right).
	\end{equation*}
	Then, an application of \cite[Theorem 2]{NamNapSol16} yields
	\begin{equation}
		\label{eq:quadratic_hamiltonian_abstract_lower_bound}
		\bbH \geq -\tr{\left(\eta_t(1 + \varepsilon(-\Delta))^{-1}\eta_t^*\right)}/2.
	\end{equation}
	Using that $(1 + \varepsilon(-\Delta))^{-1} \leq \varepsilon^{-1}(-\Delta)^{-1}$ in the quadratic form sense, and the green function of the Laplacian (see e.g. \cite[Theorem 6.20]{LiebLos01}), we can bound the right-hand side by
	\begin{equation*}
		\tr{\left(\eta_t(1 + \varepsilon(-\Delta))^{-1}\eta_t^*\right)} \leq C\varepsilon^{-1}N^{1/2}.
	\end{equation*}
	Injecting this into \eqref{eq:quadratic_hamiltonian_abstract_lower_bound} and restricting the quadratic form domain to $\cF_+^{\leq N}(t)$ proves the claim.
\end{proof}

\begin{proof}[Proof of \eqref{eq:quadratic_term_lower_bound_renormalised}]
	To prove \eqref{eq:quadratic_term_lower_bound_renormalised}, we conjugate \eqref{eq:quadratic_term_lower_bound_non_renormalised} with the cubic transform $e^{K_t}$, which gives
	\begin{equation*}
		\pm e^{K_t}\cL_2^{(1)}e^{-K_t} \leq Ce^{K_t}\cN_+e^{-K_t} + C_\varepsilon N^{1/2} + \varepsilon e^{K_t}\cK e^{-K_t}.
	\end{equation*}
	The right-hand side is bounded using \eqref{eq:number_excitations_conjugated_cubic} and \eqref{eq:hamiltonian_conjugated_cubic} (note that since $e^{K_t}\cV_Ne^{-K_t}$ is nonnegative it can be added for free), and, therefore, we only need to take care of the left-hand side. Thanks to the Duhamel formula, we have
	\begin{equation*}
		e^{K_t}\cL_2^{(1)}e^{-K_t} = \cL_2^{(1)} + \int_0^1\dd{s}e^{sK_t} [K_t,\cL_2^{(1)}]e^{-sK_t}.
	\end{equation*}
	The analysis of the commutator in the right-hand side is rather straightforward and similar to that of the error terms in Section~\ref{subsec:generator_commutator_estimates_errors}, so we leave it to the reader. We find
	\begin{equation*}
		\pm[K_t,\cL_2^{(1)}] \leq C\cN_+ + CN^{-1/2}(\cK + \cV_N + N).
	\end{equation*}
	Then, we may conclude the proof of \eqref{eq:quadratic_term_lower_bound_renormalised} just as we did in the proof of Proposition~\ref{prop:general_two_body_bound_renormalised}.
\end{proof}

\appendix

\section{Properties of the Gross--Pitaevskii equation}

\label{sec:prop_gp_equation}

In the following proposition, we collect some important properties of the solution to the Gross--Pitaevskii equation \eqref{eq:gross_pitaevskii_equation}, which were shown in \cite{ColliandKeeStaTak08}.

\begin{proposition}
	Let $\varphi_0$ be normalised and such that $\cE^\GP[\varphi_0] < \infty,$ where $\cE^\GP$ is the energy functional defined in \eqref{eq:gross_pitaevskii_energy}. Then, there exists a unique global solution $t\mapsto\varphi_t\in C\left(\R,H^1(\R^3)\right)$ to \eqref{eq:gross_pitaevskii_equation} with initial data $\varphi_0$. The solution $\varepsilon_t$ is normalised and has energy $\cE^\GP[\varphi_t] = \cE^\GP[\varphi_0]$, which implies
	\begin{equation}
		\label{eq:gross_pitaevskii_solution_estimate_H1_L6}
		\sup_{t\in\R}\Vert\varphi_t\Vert_{H^1} \leq C \quad \textmd{and} \quad \sup_{t\in\R}\Vert\varphi_t\Vert_{L^6} \leq C,
	\end{equation}
	for some constant $C > 0$ that only depends on $\cE^\GP[\varphi_0]$. Moreover, if $\varphi_0\in H^6(\R^3)$, then $\varphi_t \in H^6(\R^3)$ and the uniform bounds
	\begin{equation}
		\label{eq:gross_pitaevskii_solution_estimate_H4}
		\sup_{t\in\R^3}\Vert\varphi_t\Vert_{H^6} \leq C,
	\end{equation}
	\begin{equation}
		\label{eq:gross_pitaevskii_solution_estimate_Linf}
		\sup_{t\in\R}\Vert\varphi_t\Vert_{L^\infty} \leq C, \quad \sup_{t\in\R}\Vert\partial_t\varphi_t\Vert_{L^\infty} \leq C \quad \textmd{and} \quad \sup_{t\in\R}\Vert\partial_t^2\varphi_t\Vert_{L^\infty} \leq C,
	\end{equation}
	hold for some constant $C > 0$ that depends only on $\Vert\varphi_0\Vert_{H^6}$.
\end{proposition}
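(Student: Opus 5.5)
The statement just above is the proposition collecting properties of the GP flow. The plan splits it into three parts: energy-critical global well-posedness, energy-based a priori bounds, and propagation of regularity.

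\emph{Step 1: global well-posedness.} Since the quintic nonlinearity in \eqref{eq:gross_pitaevskii_equation} is defocusing ($b(V)>0$), we invoke the global well-posedness and scattering result of \cite{ColliandKeeStaTak08} in $\dot H^1(\R^3)$. The gauge term $\mu_t\varphi_t$ is harmless. If $\psi_t$ solves the ungauged equation, then
\begin{equation*}
	\varphi_t=e^{\ui\int_0^t\mu_s\dd{s}}\psi_t ,
\end{equation*}
with $\mu_s=\frac{b(V)}{3}\Vert\psi_s\Vert_6^6$, is a solution of \eqref{eq:gross_pitaevskii_equation}. This phase is well defined because $\Vert\psi_s\Vert_6$ is continuous in $s$. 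Since $\varphi_0\in L^2$, standard persistence of $L^2$ regularity (Strichartz estimates on the global spacetime bound) gives $\varphi_t\in C(\R,H^1)$ and uniqueness.

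\emph{Step 2: conservation laws and \eqref{eq:gross_pitaevskii_solution_estimate_H1_L6}.} For sufficiently regular data we check $\partial_t\Vert\varphi_t\Vert_2^2=0$ and $\partial_t\cE^\GP[\varphi_t]=0$. The gauge term contributes $\mu_t$ times the time derivative of the mass, hence nothing. Density and continuity of the flow in $H^1$ extend both identities to all data. Then $\Vert\nabla\varphi_t\Vert_2^2\le\cE^\GP[\varphi_0]$ and $\Vert\varphi_t\Vert_6^6\le 6\cE^\GP[\varphi_0]/b(V)$, together with $\Vert\varphi_t\Vert_2=1$, give \eqref{eq:gross_pitaevskii_solution_estimate_H1_L6}.

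\emph{Step 3: uniform $H^6$ bounds; this is the main obstacle.} Persistence of $H^6$ regularity is standard, but uniformity in $t\in\R$ requires the global spacetime bound
\begin{equation*}
	\Vert\varphi\Vert_{L^{10}_{t,x}(\R\times\R^3)}\le C(\cE^\GP[\varphi_0])
\end{equation*}
from \cite{ColliandKeeStaTak08}. I would partition $\R$ into $J=J(\cE^\GP[\varphi_0])$ intervals on which $\Vert\varphi\Vert_{L^{10}_{t,x}}$ is small. On each interval I apply Strichartz estimates to $\langle\nabla\rangle^6\varphi$, using the fractional Leibniz rule for the nonlinearity:
\begin{equation*}
	\Vert\langle\nabla\rangle^6(|\varphi|^4\varphi)\Vert_{L^2_tL^{6/5}_x}\lesssim\Vert\varphi\Vert_{L^{10}}^4\Vert\langle\nabla\rangle^6\varphi\Vert_{L^{10}_tL^{30/13}_x}.
\end{equation*}
This bounds the Strichartz norm of $\langle\nabla\rangle^6\varphi$ on the interval by twice its $H^6$ size at the left endpoint. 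Iterating over the $J$ intervals gives a bound $C^J\Vert\varphi_0\Vert_{H^6}$. Since the gauge is a pure phase, it does not affect these norms. The case $t<0$ is symmetric.

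\emph{Step 4: \eqref{eq:gross_pitaevskii_solution_estimate_Linf}.} The bound on $\Vert\varphi_t\Vert_\infty$ follows from Sobolev embedding $H^2\subset L^\infty$. For the time derivative, the equation gives
\begin{equation*}
	\partial_t\varphi_t=-\ui\bigl(-\Delta\varphi_t+\tfrac{b(V)}{2}|\varphi_t|^4\varphi_t-\mu_t\varphi_t\bigr).
\end{equation*}
Bounding this in $H^2$ uses that $H^2$ (indeed $H^4$) is an algebra, together with $|\mu_t|\le C$. Differentiating the equation once more expresses $\partial_t^2\varphi_t$ through $\Delta^2\varphi_t$, products of $\varphi_t$ and $\partial_t\varphi_t$, and $\partial_t\mu_t$. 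The term $\partial_t\mu_t$ is bounded since $\partial_t\Vert\varphi_t\Vert_6^6\lesssim\Vert\varphi_t\Vert_\infty^5\Vert\partial_t\varphi_t\Vert_1$-type quantities, or more simply through $H^1$ estimates. Hence $\partial_t^2\varphi_t\in H^2$ uniformly in $t$, which gives the $L^\infty$ bound on $\partial_t^2\varphi_t$. This step is why $H^6$ is needed: it leaves enough room to control the $H^2$ norm of $\partial_t^2\varphi_t$, which involves four spatial derivatives of $\varphi_t$.
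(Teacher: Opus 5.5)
Your proposal is correct and follows the paper's route. The paper's proof cites \cite{ColliandKeeStaTak08} for global well-posedness, the conservation laws and the uniform $H^6$ bound; your Steps 1--3 reconstruct the standard persistence-of-regularity argument from the global $L^{10}_{t,x}$ bound that lies behind that citation. The paper then obtains the $L^\infty$ bounds from the equation and Sobolev embedding, exactly as in your Step 4.

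There is one small slip. Bounding $\partial_t\mu_t$ through $\Vert\partial_t\varphi_t\Vert_1$ does not work, because $\partial_t\varphi_t$ need not be integrable. Use instead $\vert\partial_t\mu_t\vert\lesssim\Vert\varphi_t\Vert_\infty^4\Vert\varphi_t\Vert_2\Vert\partial_t\varphi_t\Vert_2$, which is what your fallback to $H^1$-type estimates amounts to.
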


\begin{proof}
	The global existence and uniqueness of $\varphi_t$, as well as the higher order regularity estimate \eqref{eq:gross_pitaevskii_solution_estimate_H4} were shown in \cite{ColliandKeeStaTak08}. The estimates \eqref{eq:gross_pitaevskii_solution_estimate_Linf} directly follow from the the Gross--Pitaevskii equation \eqref{eq:gross_pitaevskii_equation} and the Sobolev embedding theorem (see \cite[Theorem 8.8]{LiebLos01}).
\end{proof}

\section{Useful operator bounds}

\label{app:op_bounds}

The following lemma is a simple adaptation of \cite[Lemma 24]{NamRicTri23}.

\begin{lemma}
	\label{lemma:three_body_kernel_second_quantisation_estimates}
	Let $V$ satisfy Assumption~\ref{ass:potential} and let $\omega_{\lambda,N}$ and $\varepsilon_\lambda$ be as in Lemma~\ref{lemma:truncated_three_body_scattering_solution}. Let $\varphi\in L^\infty(\R^3)$. Define
	\begin{equation*}
		T_\lambda(f)(z) = \varphi(z)\int_{\R^6}\d{}x\d{}yN^{3/2}\omega_{\lambda,N}(x,y,z)\varphi(x)\varphi(y)f(x,y) + \hc,
	\end{equation*}
	\begin{equation*}
		U_\lambda(f)(z) = \varphi(z)\int_{\R^6}\d{}x\d{}yN^{3/2}(\nabla_x\omega_{\lambda,N}(x,y,z))\cdot(\nabla\varphi)(x)\varphi(y)f(x,y) + \hc
	\end{equation*}
	and
	\begin{equation*}
		S_\lambda(f)(z) = \varphi(z)\int_{\R^6}\d{}x\d{}yN^{-1/2}\varepsilon_{\lambda}(x,y,z)\varphi(x)\varphi(y)f(x,y) + \hc
	\end{equation*}
	Then, there exists a constant $C > 0$ depending only on $V$ such that
	\begin{equation}
		\label{eq:three_body_kernel_op_norm_estimate}
		\Vert T_\lambda\Vert_{\op} \leq C\Vert\varphi\Vert_\ii^3\lambda^{1/2}N^{-1/2} \quad \textmd{and} \quad \Vert S_\lambda\Vert_{\op} \leq C\Vert\varphi\Vert_\ii^3\lambda^{-3/2}N^{-1/2}.
	\end{equation}
	Moreover, for all $\varepsilon > 0$ small, there exists a constant $C_\varepsilon > 0$ depending only on $V$ and $\varepsilon$ such that 
		\begin{equation}
		\label{eq:three_body_kernel_gradient_op_norm_estimate}
		\Vert U_\lambda\Vert_{\op} \leq C_\varepsilon\Vert\nabla\varphi\Vert_\ii\Vert\varphi\Vert_\ii^2\lambda^{\varepsilon}N^{-1/2 + \varepsilon}.
	\end{equation}
	Consequently,
	\begin{equation}
		\label{eq:three_body_truncated_kernel_second_quantisation_estimate}
		\d{}\Gamma(T_\lambda^*T_\lambda) \leq C\Vert\varphi\Vert_\ii^6\lambda N^{-1}\cN_+, \quad \d{}\Gamma(T_\lambda T_\lambda^*) \leq C\Vert\varphi\Vert_\ii^6\lambda N^{-1}\cN_+^2
	\end{equation}
	\begin{equation}
		\label{eq:three_body_truncated_kernel_second_quantisation_estimate_gradient}
		\d{}\Gamma(U_\lambda^*U_\lambda) \leq C_\varepsilon\Vert\nabla\varphi\Vert_\ii^2\Vert\varphi\Vert_\ii^4\lambda^\varepsilon N^{-1 + \varepsilon}\cN_+, \quad \d{}\Gamma(U_\lambda U_\lambda^*) \leq C_\varepsilon\Vert\nabla\varphi\Vert_\ii^2\Vert\varphi\Vert_\ii^4\lambda^\varepsilon N^{-1 + \varepsilon}\cN_+^2,
	\end{equation}
	\begin{equation}
		\label{eq:three_body_truncated_kernel_second_quantisation_estimate_error}
		\d{}\Gamma(S_\lambda^*S_\lambda) \leq C\Vert\varphi\Vert_\ii^6\lambda^{-3} N^{-1}\cN_+ \quad \textmd{and} \quad \d{}\Gamma(S_\lambda S_\lambda^*) \leq C\Vert\varphi\Vert_\ii^6\lambda^{-3} N^{-1}\cN_+^2
	\end{equation}
	hold on $\cF_+$, for all $\varepsilon > 0$ small.
\end{lemma}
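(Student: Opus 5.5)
The plan is to bound the operator norms by Schur's test, or by Cauchy--Schwarz on the explicit kernels. The second-quantisation bounds then follow from general facts about $\d{}\Gamma$ of a one-to-two-body operator. Note that $T_\lambda$ maps two-body functions $f(x,y)$ to one-body functions of $z$, so $T_\lambda^*T_\lambda$ acts on $L^2(\R^6)$ and $T_\lambda T_\lambda^*$ acts on $L^2(\R^3)$. By the three-body symmetry and the fact that $\varphi\in L^\ii$, I write
\[
\vert T_\lambda f(z)\vert \leq \Vert\varphi\Vert_\ii^3 N^{3/2}\int_{\R^6}\omega_{\lambda,N}(x,y,z)\vert f(x,y)\vert .
\]
I then apply Cauchy--Schwarz in $(x,y)$ with the kernel split as $\omega_{\lambda,N}^{1/2}\cdot\omega_{\lambda,N}^{1/2}$. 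Using $\sup_z\int\omega_{\lambda,N}(x,y,z)\,\d{}x\d{}y$ and $\sup_{x,y}\int\omega_{\lambda,N}\,\d{}z$, Schur's test gives
\[
\Vert T_\lambda\Vert_\op^2 \leq \Vert\varphi\Vert_\ii^6 N^3\,\Vert\omega_{\lambda,N}\Vert_{L^1}\,\Vert\omega_{\lambda,N}\Vert_{L^1L^\ii}.
\]
Here the first norm is the $L^1(\R^6)$ norm in relative coordinates, and the second is the mixed norm integrating in one variable with the other fixed. By \eqref{eq:truncated_scattering_equation_Lp_estimates} and \eqref{eq:truncated_scattering_equation_LpLinf_estimates} this is at most $N^3\cdot\lambda^2N^{-2}\cdot N^{-3/2}=\lambda^2N^{-1/2}$. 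That is too weak as stated, so I will instead use the refined bound $\sup_{x}\int\omega_{\lambda,N}(x-y,x-z)\,\d{}y\,\d{}z\le C\lambda^2N^{-2}$ for one factor. The pointwise decay $\omega_{\lambda,N}\le C N^{-2}|\bx|^{-4}\mathds 1(|\bx|\le\lambda)$ from \eqref{eq:truncated_scattering_equation_pointwise_estimate} lets me run a weighted Schur test with weights $|\bx|^{\pm a}$, aiming at $\lambda N^{-1}$ for $\Vert T_\lambda\Vert^2$. I expect this weighted Schur test to be the crux.

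The bound on $S_\lambda$ is identical, with $N^{-1/2}\varepsilon_\lambda$ in place of $N^{3/2}\omega_{\lambda,N}$. I use \eqref{eq:truncated_scattering_equation_error_pointwise_estimate}: the kernel is supported on the shell $\lambda/2\le|\bx|\le\lambda$ and bounded by $\lambda^{-6}$. An unweighted Schur test there gives $N^{-1}\lambda^{-6}\cdot\lambda^3=N^{-1}\lambda^{-3}$, which is what is claimed.

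The gradient operator $U_\lambda$ is the main obstacle. Its kernel $N^{3/2}|\nabla\omega_{\lambda,N}|\lesssim N^{2}(1+|N^{1/2}\bx|)^{-5}$ is borderline non-integrable in the relevant mixed norms, being only logarithmically summable. I would use the weighted Schur test with weight $|\bx|^{\pm\varepsilon}$, or equivalently interpolate via Hölder between $L^1L^\ii$ and $L^{1+\varepsilon}$ norms. This loses $(\lambda N^{1/2})^{\varepsilon}$, giving $\Vert U_\lambda\Vert\le C_\varepsilon\lambda^\varepsilon N^{-1/2+\varepsilon}$ as stated. The $\varepsilon$-loss is what absorbs the logarithm.

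For the consequences I use the standard identity for a bounded $A:L^2(\R^6)\to L^2(\R^3)$. On $\cF_+$, let
\[
\d{}\Gamma(A^*A)=\int (A^*A)(x,y;x',y')\,b_x^*b_y^*b_{x'}b_{y'}.
\]
Then $\d{}\Gamma(A^*A)\le \Vert A^*A\Vert_\op\,\cN_+(\cN_+-1)$ as a two-body operator. The sharper bound $\le\Vert A\Vert_\op^2\,\cN_+$ comes from writing the operator as $\int (A b b)^*_z(Abb)_z$ and bounding it particle-by-particle with the one-body norm. For $AA^*$ the roles swap and the count gives $\cN_+^2$. Inserting the three norm bounds yields \eqref{eq:three_body_truncated_kernel_second_quantisation_estimate}--\eqref{eq:three_body_truncated_kernel_second_quantisation_estimate_error}.
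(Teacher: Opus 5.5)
Your $S_\lambda$ argument is fine: the unweighted Schur test on the shell gives $N^{-1}\lambda^{-3}$, consistent with the paper's ``similarly''. The other steps have genuine gaps.

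\textbf{The bound on $T_\lambda$.} You leave the step you call the crux undone, and the device you indicate does not work as stated. Schur weights must be a function of $z$ alone and a function of $(x,y)$ alone, whereas $|\bx|$ involves all three variables. Your ``refined'' bound $\sup_x\int\omega_{\lambda,N}\,\d{}y\,\d{}z\le C\lambda^2N^{-2}$ is just $\Vert\omega_{\lambda,N}\Vert_{L^1(\R^6)}$, which you had already used. The loss sits in the other factor: $\int N^{3/2}\omega_{\lambda,N}\,\d{}z\sim(1+N^{1/2}|x-y|)^{-1}$ has supremum of order one, attained only for $|x-y|\lesssim N^{-1/2}$. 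The paper's idea is to use the decay in one variable at fixed $(y,z)$. From $N^{3/2}\omega_{\lambda,N}\le CN^{-1/2}(|x-y|^2+|y-z|^2)^{-2}\1(|y-z|\le C\lambda)$, Cauchy--Schwarz in $x$ gives
\[
|T_\lambda f(z)|\le C\Vert\varphi\Vert_\ii^3N^{-1/2}\int\d{}y\,\frac{\1(|y-z|\le C\lambda)}{|y-z|^{5/2}}\,\Vert f(\cdot,y)\Vert_{L^2}.
\]
Young's inequality with $\int_{|w|\le C\lambda}|w|^{-5/2}\d{}w\sim\lambda^{1/2}$ then finishes. In Schur language, admissible weights are $p\equiv1$ on the $z$-side and $q(x,y)=|x-y|^{-1}$ on the pair side, with constants $CN^{-1/2}\lambda$ and $CN^{-1/2}$.

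\textbf{The bound on $U_\lambda$.} No argument can succeed here, because the gradient bound is false for small $\varepsilon$. Your diagnosis of a logarithmic borderline is also wrong: $\int N^{3/2}|\nabla\omega_{\lambda,N}|\,\d{}z$ is of order $N^{-1/2}\min(|x-y|^{-2},N)$, a power-law blow-up down to scale $N^{-1/2}$. For a counterexample, take the following data:
\begin{itemize}
\item $\varphi=2+e\cdot x$ on a unit ball, bounded and smooth elsewhere;
\item $G$ a fixed smooth bump inside that ball;
\item $f(x,y)=N^{3/4}\sigma(N^{1/2}(x-y))G(y)$ with $\sigma$ bounded and compactly supported, so $\Vert f\Vert_2\lesssim1$.
\end{itemize}
With $u=x-y$ and $v=y-z$ one has $\nabla_x[\omega_{\lambda,N}(x-y,x-z)]=\nabla_u[\omega_{\lambda,N}(u,u+v)]$. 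Integrating first in $v$ yields, up to $o(N^{-1/4})$ errors,
\[
U_\lambda f(z)=N^{-1/4}\varphi(z)^2G(z)\int_{\R^3}\sigma(s)\,e\cdot\nabla\tilde W(s)\,\d{}s,\qquad \tilde W(s)=\int_{\R^3}\omega(s,w)\,\d{}w.
\]
Since $\tilde W\ge0$ is not identically zero and vanishes at infinity, $\nabla\tilde W\not\equiv0$. Hence $\Vert U_\lambda\Vert_\op\gtrsim N^{-1/4}$, whereas the claim gives at most $C_\varepsilon N^{-1/2+\varepsilon}$. Schur with $p\equiv1$ and $q=|x-y|^{-3/2}$ gives the matching upper bound $N^{-1/4}$.

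The paper's own proof fails at the same point. After $N^2(1+N|\bx|^2)^{-5/2}\le N^{-1/2+\varepsilon}|\bx|^{-5+2\varepsilon}$, Cauchy--Schwarz in $x$ produces $|y-z|^{-7/2+2\varepsilon}$, which is not locally integrable in $\R^3$ unless $\varepsilon>1/4$. Fortunately $U_\lambda$ does not seem to be used elsewhere; the term $\cI_\Delta^{(3)}$, where it would naturally appear, is handled by integrating by parts.

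\textbf{The second-quantisation step.} Your reasoning here is backwards. $\int\d{}z\,(Abb)_z^*(Abb)_z$ is exactly the two-body operator $\d{}\Gamma(A^*A)$, and it is bounded only by $\Vert A\Vert^2\cN_+(\cN_+-1)$. For $A=T_\lambda$, $n$ excitations concentrated in a ball of radius $\lambda$ already give order $n^2\lambda N^{-1}$. The linear bound $\Vert A\Vert^2\cN_+$ belongs to the one-body operator $\d{}\Gamma(AA^*)$. So the $\cN_+$ and $\cN_+^2$ in the statement are interchanged. This appears harmless for the paper's applications, since $\cN_+\le N$ on $\cF_+^{\leq N}$ absorbs the extra factor.
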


\begin{proof}
	Using \eqref{eq:truncated_scattering_equation_pointwise_estimate}, we can write
	\begin{equation*}
		\vert T_\lambda(f)(z)\vert \leq C\Vert\varphi\Vert_\infty^3 N^{-1/2}\int_{\R^3}\d{}y\mathds{1}_{\{\vert y - z\vert \leq C\lambda\}}\left(\int_{\R^3}\d{}x\dfrac{\vert f(x,y)\vert}{(\vert x - y\vert^2 + \vert y - z\vert^2)^2}\right).
	\end{equation*}
	Moreover, thanks to the Cauchy--Schwarz inequality we have
	\begin{align*}
		\int_{\R^3}\d{}x\dfrac{\vert f(x,y)\vert}{(\vert x - y\vert^2 + \vert y - z\vert^2)^2} &\leq \left(\int_{\R^3}\dfrac{\d{}u}{(\vert u\vert^2 + \vert y - z\vert^2)^4}\right)^{1/2}\left(\int_{\R^3}\d{}x\vert f(x,y)\vert^2\right)^{1/2}\\
		&\leq C\dfrac{1}{\vert y - z\vert^{5/2}}\Vert f(\cdot,y)\Vert_{L^2(\R^3)}.
	\end{align*}
	Hence, using Young's inequality, we obtain
	\begin{align*}
		\langle g,T(f)\rangle_{L^2(\R^3)} &\leq CN^{-1/2}\Vert\varphi_t\Vert_\infty^3\int_{\R^6}\d{}y\d{}z \vert g(z)\vert\dfrac{\mathds{1}_{\{\vert y - z\vert \leq C\lambda\}}}{\vert y - z\vert^{5/2}}\Vert f(\cdot,y)\Vert_{L^2(\R^3)}\\
		&\leq C\lambda^{1/2}N^{-1/2}\Vert\varphi_t\Vert_\ii^3\Vert g\Vert_2\Vert f\Vert_{L^2(\R^6)},
	\end{align*}
	for all $g\in L^2(\R^3),$ which implies $\Vert T_\lambda\Vert_\op \leq C\Vert\varphi_t\Vert_\ii^3\lambda^{1/2} N^{-1/2}$. To prove \eqref{eq:three_body_kernel_gradient_op_norm_estimate}, we use \eqref{eq:truncated_scattering_equation_pointwise_estimate} to deduce
	\begin{align*}
		\vert U_\lambda(f)(z)\vert &\leq C\Vert\nabla\varphi\Vert_\infty\Vert\varphi\Vert_\infty^2N^2\int_{\R^3}\d{}y\1(\vert y - z\vert\leq C\lambda)\int_{\R^3}\d{}x\dfrac{\vert f(x,y)\vert}{(1 + N^{1/2}(\vert x - y\vert^2 + \vert y - z\vert^2))^{5/2}}\\
		&\leq C\Vert\nabla\varphi\Vert_\infty\Vert\varphi\Vert_\infty^2N^{-1/2 + \varepsilon}\int_{\R^3}\d{}y\1(\vert y - z\vert\leq C\lambda)\int_{\R^3}\d{}x\dfrac{\vert f(x,y)\vert}{(\vert x - y\vert^2 + \vert y - z\vert^2)^{5/2 - \varepsilon}}.
	\end{align*}
	Then, we proceed just as above to obtain \eqref{eq:three_body_kernel_gradient_op_norm_estimate}. The estimate on $\Vert S_\lambda\Vert_\op$ is proven similarly using \eqref{eq:truncated_scattering_equation_error_pointwise_estimate}. The estimates \eqref{eq:three_body_truncated_kernel_second_quantisation_estimate}, \eqref{eq:three_body_truncated_kernel_second_quantisation_estimate_gradient} and \eqref{eq:three_body_truncated_kernel_second_quantisation_estimate_error} follow immediately.
\end{proof}

\section{Analysis of the prefactors in \eqref{eq:hamiltonian_conjugation_excitation_map}}

\label{app:theta_coefficients}

The coefficients $\Theta_j$ and $\Theta_j^{(i)}$ in \eqref{eq:hamiltonian_conjugation_excitation_map} are given by
\begin{align*}
	\Theta_0 &=  \left(1-\dfrac{\cN_+}{N}\right) \left(1-\dfrac{\cN_++1}{N}\right)\left(1-\dfrac{\cN_++2}{N}\right),\\
	\Theta_1 &=  \sqrt{1- \dfrac{\cN_+}{N}} \left(1-\dfrac{\cN_+ + 1}{N}\right)\left(1-\dfrac{\cN_+ + 2}{N}\right),\\
	\Theta_2^{(1)} &= \left(1-\dfrac{\cN_+}{N}\right)\sqrt{1- \dfrac{\cN_++1}{N}} \sqrt{1- \dfrac{\cN_++2}{N}},\\
	\Theta_2^{(2)} &= \Theta_2^{(3)} = \left(1-\dfrac{\cN_+}{N}\right)\left(1-\dfrac{\mathcal{N}_+ + 1}{N}\right), \\
	\Theta_3^{(1)} &= \sqrt{1-\dfrac{\mathcal{N}_+}{N}}\sqrt{1-\dfrac{\mathcal{N}_++1}{N}}\sqrt{1-\dfrac{\mathcal{N}_++2}{N}},\\
	\Theta_3^{(2)} &= \Theta_3^{(3)} = \sqrt{1-\dfrac{\mathcal{N}_+}{N}}\left(1-\dfrac{\cN_+ - 1}{N}\right),\\
	\Theta_4^{(1)} &= \sqrt{1-\dfrac{\mathcal{N}_+}{N}}\sqrt{1-\dfrac{\mathcal{N}_+ - 1}{N}},\\
	\Theta_4^{(2)} &= \Theta_4^{(3)} = 1 - \dfrac{\mathcal{N}_+}{N}, \quad \Theta_5 = \sqrt{1 -\dfrac{\mathcal{N}_+}{N}}.
\end{align*}

To prove \eqref{eq:theta_coef_estimate_almost_1}, we multiply and divide the left-hand side by $\Theta_j^{(i)} + 1$ to deduce
\begin{equation*}
	\vert\Theta_j^{(i)} - 1\vert = \dfrac{\vert(\Theta_j^{(i)})^2 - 1\vert}{\Theta_j^{(i)} + 1} \leq \vert(\Theta_j^{(i)})^2 - 1\vert \leq C_p(\cN_+ + 1)/N.
\end{equation*}

Let us prove \eqref{eq:theta_coef_estimate_difference} for $\Theta_1$. First, we write
\begin{align*}
	\Theta_1(\cN_+ + p) &= \left(1 - \dfrac{\cN_+ + 1 + p}{N}\right)\left(1 - \dfrac{\cN_+ + 2 + p}{N}\right)\sqrt{1 - \dfrac{\cN_+ + p}{N}}\\
	&= \left(1 - \dfrac{\cN_+ + 1 }{N}\right)\left(1 - \dfrac{\cN_+ + 2}{N}\right)\sqrt{1 - \dfrac{\cN_+ + p}{N}}\\
	&\phantom{=} + \left(-\dfrac{2p}{N} + \dfrac{2p\cN_+ + 3p + p^2}{N^2}\right)\sqrt{1 - \dfrac{\cN_+ + p}{N}}.
\end{align*}
Then, using the triangle inequality we deduce
\begin{align*}
	\vert\Theta_1(\cN_+ + p) - \Theta_1(\cN_+)\vert &\leq \left(1 - \dfrac{\cN_+ + 1 }{N}\right)\left(1 - \dfrac{\cN_+ + 2}{N}\right)\left\vert\sqrt{1 - \dfrac{\cN_+ + p}{N}} - \sqrt{1 - \dfrac{\cN_+}{N}}\right\vert + \dfrac{C_p}{N}.
\end{align*}
To bound the first term, we multiply and divide it by $\sqrt{1 - (\cN_+ + p)/N} + \sqrt{1 - \cN_+/N}$ and use
\begin{equation*}
	\left\vert\sqrt{1 - \dfrac{\cN_+ + p}{N}} - \sqrt{1 - \dfrac{\cN_+}{N}}\right\vert = \dfrac{p/N}{\sqrt{1 - \dfrac{\cN_+ + p}{N}} + \sqrt{1 - \dfrac{\cN_+}{N}}} \leq \dfrac{p/N}{\sqrt{1 - \dfrac{\cN_+}{N}}}.
\end{equation*}
Combining this with the estimate
\begin{equation*}
	\Theta_N/\sqrt{1 - \dfrac{\cN_+}{N}} \leq 1
\end{equation*}
gives \eqref{eq:theta_coef_estimate_difference} for $\Theta_1$. The other coefficients $\Theta_j$ and $\Theta_j^{(i)}$ are dealt with similarly.

\printbibliography

\end{document}